\setlist{itemsep=-3pt}
\newcommand{\ignore}[1]{}
\newtheorem{theorem}{Theorem}
\newtheorem{lemma}{Lemma}
\newtheorem{corollary}[lemma]{Corollary}
\renewcommand{\Pr}{{\bf Pr}}
\newcommand{\E}{{\bf E}}
\newcommand{\D}{{\cal D}}
\newcommand{\T}{{\cal T}}
\newcommand{\M}{{\cal M}}
\newcommand{\DT}{{\mathbb{DT}}}
\newcommand{\size}{{\rm psize}}
\renewcommand{\P}{\mathbb{P}}
\newcommand{\junta}{\mathbb{JUNTA}}
\newcommand{\mybox}[1]{\noindent\fbox{\parbox{\textwidth}{#1}}}
\begin{document}

\title{On Learning and Testing Decision Tree}
\author{{\bf Nader H. Bshouty\footnote{Part of this research was done when the first author was a visiting professor at Guangdong Technion Israel Institute of Technology (GTIIT), China.}}\hspace{.5in} 
{\bf Catherine A. Haddad-Zaknoon}\\
Dept. of Computer Science\\ Technion,  Haifa, 32000\\
}

\maketitle
\begin{abstract}
In this paper, we study learning and testing decision tree of size and depth that are significantly smaller than the number of attributes $n$.

Our main result addresses the problem of poly$(n,1/\epsilon)$ time algorithms with poly$(s,1/\epsilon)$ query complexity (independent of $n$) that distinguish between functions that are decision trees of size $s$ from functions that are $\epsilon$-far from any decision tree of size $\phi(s,1/\epsilon)$, for some function $\phi > s$. The best known result is the recent one that follows from Blank, Lange and Tan,~\cite{BlancLT20}, that gives $\phi(s,1/\epsilon)=2^{O((\log^3s)/\epsilon^3)}$. In this paper, we give a new algorithm that achieves  $\phi(s,1/\epsilon)=2^{O(\log^2 (s/\epsilon))}$.

Moreover, we study the testability of depth-$d$ decision tree and give a {\it distribution free} tester that distinguishes between depth-$d$ decision tree and functions that are $\epsilon$-far from depth-$d^2$ decision tree. In particular, for decision trees of size $s$, the above result holds in the distribution-free model when the tree depth is $O(\log(s/\epsilon))$. 

We also give other new results in learning and testing of size-$s$ decision trees and depth-$d$ decision trees that follows from{~\cite{BlancLT20,BlumerEHW87,Bshouty19,Bshouty20,BshoutyH19,EhrenfeuchtH89,GuijarroLR99,MehtaR02}} and some results we prove in this paper. 

\end{abstract}
\section{Introduction}
Decision tree is one of the popular predictive modelling approaches used in many areas including statistics, data mining and machine learning. Recently, learning and property-testing of sub-classes of decision trees have attracted much attention {~\cite{BlaisBM11,BlancLT20,Bshouty18,Bshouty20,BshoutyH19,ChakrabortyGM11,DiakonikolasLMORSW07,KearnsR00}}. In learning and property testing, the algorithm is provided by an access to a black box to some Boolean function $f$ and labeled random examples of $f$ according to some distribution $\D$. In property testing, given a subclass of decision trees $C$,  we need to decide whether $f$ is a decision tree in $C$ or ``far'' from being in $C$ with respect to $\D$,~\cite{BlumLR93,GoldreichGR98, RubinfeldS96}. On the other hand, in learning, it is guaranteed that $f\in C$ and we need to output a function  $g\in C$ close to $f$ with respect to the distribution $\D$,~\cite{Ang88,Valiant84}. 

\ignore{Given a subclass of decision trees $C$ and an access to a black box to a Boolean function $f$ and labeled random examples of $f$ according to some distribution $\D$. In property testing we need to decide whether $f$ is a decision tree in $C$ or ``far'' from being in $C$ with respect to $\D$,~\cite{BlumLR93,GoldreichGR98, RubinfeldS96}. In learning it is guaranteed that $f\in C$ and we need to output a function in $C$ close to $f$ with respect to distribution $\D$,~\cite{Ang88,Valiant84}.}

Since finding efficient algorithms for these problems is difficult and, in some cases, intractable, the following relaxation is considered. Let $H$ be a larger class of decision tree $H\supset C$. Then, in property testing, we are interested in the question: can we efficiently test $C$ by $H$? That is, to efficiently decide whether $f$ is a decision tree in $C$ or ``far'' from being in $H$ with respect to $\D$,~\cite{KearnsR00}. In learning theory, the question is: can we find an efficient algorithm that outputs $h\in H$ that is close to $f$ with respect to the distribution $\D$? The challenge in those problems is to find a small class $H\supset C$ such that efficient algorithms exist. In this paper, we address these problems as well as algorithms that are efficient in the query complexity that runs in quasi-polynomial time. 

\subsection{Models}

Let $C$ and $H\supseteq C$ be two classes of Boolean functions $f:\{0,1\}^n\to\{0,1\}$. In the {\it distribution-free} model, the algorithm has an access to a {\it black box query} and {\it random example query}. The black box query, for an input $x\in\{0,1\}^n$ returns $f(x)$. The random example query, when invoked, it returns a random example $(x,f(x))$ such that $x$  is chosen according to an arbitrary and unknown distribution~$\D$. In the {\it uniform distribution} model, $\D=U$ is the uniform distribution over $\{0,1\}^n$.

In the {\it distribution-free learning $C$ from $H$} (resp. uniform distribution learning), we need to construct, with high probability, via the above queries  to $f\in C$, a function $g\in H$ that is $\epsilon$-close to $f$ with respect to the distribution $\D$, i.e.,  $\Pr_\D[g(x)\not=f(x)]\le \epsilon$ (resp. where $\D=U$). The learning is called {\it proper learning} if $H=C$, and {\it exact learning} if $\epsilon=0$.

In the {\it distribution-free property testing $C$ by $H$}, \cite{KearnsR00}, (resp. uniform distribution property testing), for {\it any} Boolean function $f$, we need to distinguish, with high probability and via the above queries to $f$, between the case that $f$ is in $C$ versus the case that $f$ is $\epsilon$-far (not $\epsilon$-close) from every function in $H$ with respect to $\D$ (resp. the uniform distribution). Such an algorithm is called a {\it tester for $C$ by $H$}. When $H=C$,  the this algorithm is called a {\it tester for $C$}.

\subsection{Decision Tree}
A {\it decision tree} is a rooted binary tree in which each internal node is labeled with a variable $x_i$ and has two children. Each leaf is labeled with an output from $\{0,1\}$. A decision tree computes a Boolean function in the following way: given an input $x\in \{0,1\}^n$, the value of the function on $x$ is the output in the leaf reached by the path that starts at the root and goes left or right at each internal node according to whether the variable's value in $x$ is $0$ or $1$, respectively. 

The {\it size} of a decision tree is the number of leaves of the tree. The {\it depth} of a node (resp. leaf) in a decision tree is the number of edges in the path from the root to the node (resp. leaf). The depth of the tree is the maximum over all the depth values of its leaves. A {\it depth-$d$ decision tree} $T$ is a decision tree of depth at most~$d$. A {\it size-$s$ decision tree} $T$ is a decision trees of size at most $s$. 

We denote by $\DT_d$, $\DT^s$ and $\DT_d^s$ the classes of all depth-$d$ decision trees, size-$s$ decision trees and  depth-$d$ size-$s$ decision trees, respectively. Obviously, $\DT^s_d\subseteq \DT^s\subseteq \DT_s^s$ and $\DT^s_d\subseteq\DT_d\subseteq \DT_d^{2^d}$.

\subsection{Other Representations of Decision Tree}\label{DTSSS}\label{OR}
A {\it monomial} is a conjunction of variables,  and a {\it term} is a conjunction of literals (variable and negated variable). Two terms $t_1$ and $t_2$ are called {\it disjoint} if $t_1\wedge t_2=0$. A {\it multilinear polynomial} (or just a polynomial) is a sum (over the field $F_2$) of monomials. Every Boolean function can be expressed uniquely as a multilinear polynomial. A {\it disjoint-terms sum} is a sum of disjoint terms. Every Boolean function can be represented as a disjoint-terms sum. The representation is not unique.

A decision tree $f$ can be represented as a disjoint-terms sum according to the following recurrence. \ignore{To change a decision tree $f$ to a disjoint-terms sum we use the following recurrence.} If the decision tree is a leaf, then its disjoint-terms sum representation is the constant in this leaf. If the root label of $f$ is $x_i$ then, $f=x_if_1+\overline{x_i}f_0$ where $f_1$ and $f_0$ are the disjoint-terms sum of right and left sub-trees of $f$ respectively. It is easy to see that the number of terms we get in this recurrence is equal to the number of leaves in the tree labeled with $1$. Therefore, every leaf corresponds to a term and the number of literals in this term is equal to the depth of the leaf. To represent a disjoint-terms sum as a polynomial, we write for each appearance of $\overline{x_i}$ as $x_i+1$ and expand the expressions with the regular arithmetic rules in the field $F_2$.

\section{Main Result and Technique}
In this section, we present previous and new results. In the next two subsections, we consider two significant results in testing. Then in subsection~\ref{OResults} we give the other results. The tables in Figures~\ref{TABLE1},~\ref{TABLE2},~\ref{TABLE3} and~\ref{TABLE4} summarize all the results.

\subsection{Testing Decision Tree of Size $s$}
\ignore{For a function $f$, variables $x_{i_1},\ldots,x_{i_j}$ and $\xi_1,\ldots,\xi_j\in\{0,1\}$ we write $f_{|x_{i_1}\gets \xi_1,\ldots,x_{i_j}\gets \xi_j}$ for the function that results from substituting $x_{i_r}=\xi_r$, $r=1,\ldots,j$ in $f$.}
Let $f$ be a Boolean function over the variables $x_1, \ldots, x_n$. For $x_{i_1},\ldots,x_{i_j}$ and $\xi_1,\ldots,\xi_j\in\{0,1\}$, denote by $f_{|x_{i_1}\gets \xi_1,\ldots,x_{i_j}\gets \xi_j}$ the function that results from substituting $x_{i_r}=\xi_r$, $r=1,\ldots,j$ in $f$.

In~\cite{BlancLT20}, Blanc, Lange and Tan give the first tester that runs in poly$(n,1/\epsilon)$ time with poly$(\log s,$ $1/\epsilon)\log n$ query complexity and distinguishes between\footnote{Blanc, Lang and Tan solves a more challenging problem. Their tester is tolerant tester,~\cite{ParnasRS02}. That is, it distinguishes between functions that are $\epsilon$-close to $s$-size decision tree and functions that are $\Omega(\epsilon)$-far from every size-$\phi(s,1/\epsilon)$ decision tree.} functions that are $s$-size decision tree and functions that are $\epsilon$-far from every size-$\phi(s,1/\epsilon)$ decision tree for some function $\phi(s,1/\epsilon)$. When $n\gg s$, one can use the reduction from~\cite{Bshouty20} to get a poly$(n,1/\epsilon)$ time algorithm with poly$(s,1/\epsilon)$ query complexity (independent of $n$) for the same problem. The function $\phi$ achieved in~\cite{BlancLT20} is $\phi(s,1/\epsilon)=2^{(\log^3s)/\epsilon^3}$. We use a different approach to get $\phi(s,1/\epsilon)=2^{\log^2(s/\epsilon)}$. 

To achieve their result, Blanc et. al.,~\cite{BlancLT20}, define for every function $f$ a decision tree $T({d,f})$ of depth $d=O(\log^3s/\epsilon^3)$  as follows. They define the noise sensitivity  of $f$, NS$(f)$,  as the probability that $f(x)\not=f(x+y)$ where $x$ is uniform random, and for every $i$, $y_i=1$ with probability $p=\epsilon/\log s$. The score of $f$ with respect to $x_i$, Score$_i(f)$, is defined to be the expected decrease in the noise sensitivity of $f$ provided that $x_i$ is chosen to be the label in the root of the tree. The label of the root  of $T({d,f})$ is selected to be the variable $x_i$ that maximizes the score. The left and right sub-trees of $T(d,f)$ are $T(d-1,f_{|x_i\gets 0})$ and $T(d-1,f_{|x_i\gets 1})$, respectively. Then $T(0,g)$ is defined to be a leaf labeled with $0$ if $\E[g]<1/2$ and $1$ otherwise. Here $g$ is the function that results from $f$ by substituting the partial assignment defined by the path the leaf. They prove that, for $d=O(\log^3s/\epsilon^3)$, if $f$ is a size-$s$ decision tree, then $f$ is $\epsilon/4$-close to $T(d,f)$, and if $f$ is $\epsilon$-far from every size-$2^{O(\log^3s/\epsilon^3)}$ then, since $T(d,f)$ is size-$2^d$ $\left(=2^{O(\log^3s/\epsilon^3)}\right)$ decision tree, $f$ is $\epsilon$-far from $T(d,f)$. Moreover, they show that a query to $T(d,f)$ can be done in poly$(n,1/\epsilon)$ time and poly$(\log s,1/\epsilon)\log n$ queries. Therefore, $T(d,f)$ and $f$ can be queried to test if they are $\epsilon/4$-close or $\epsilon$-far. \ignore{ So they can query $T(d,f)$ and $f$ and test if they are $\epsilon/4$-close or $\epsilon$-far.} In~\cite{Bshouty20}, Bshouty gives a reduction that changes the query complexity of the tester from~\cite{BlancLT20} to be independent of $n$. This reduction gives a tester that solves the same problem in poly$(n,1/\epsilon)$ time and poly$(s,1/\epsilon)$ queries to $f$.

In this paper, we use a different approach. Let $f$ be a size-$s$ decision tree. Our algorithm regards $f$ as a polynomial. Although $f$ may have exponential number of monomials, we are interested in the influential ones only, that is, the small monomials. The number of small monomials in the polynomial representation of size-$s$ decision tree may be exponential. To control the number of small monomials, we shuffle the monomial by choosing a uniform random $a\in\{0,1\}^n$ and considering $T(x)=f(x+a)$. In disjoint-terms sum representation  of $f$ (see Subsection~\ref{DTSSS}), a large term $t$ in $f$ (a term that results from a leaf of depth $\Omega(\log(s/\epsilon))$ in the tree), with high probability (w.h.p), more than quarter of its variables become positive in $T$, and therefore, it only generates large monomials. Therefore, the shuffling process insures that $T$ has a small number, poly$(s/\epsilon)$, of significant monomials. This technique is used in~\cite{BshoutyM02} for learning decision tree under the uniform distribution. 

Let $c$ be a large constant and let $F$ be the sum of the monomials of size $r=c\log (s/\epsilon)$ in $T(x)$.  Let $G$ be the sum of monomials of size greater than $r$ and less than $16r$ in $T(x)$. Firstly, we run the algorithm of Bshouty and Mansour in~\cite{BshoutyM02}  to exactly learn $F$ and $G$ in polynomial time. If the learning algorithm fails, then w.h.p, $f$ is not size-$s$ decision tree and the algorithm rejects. Then, we define a decision tree $T(d,F,G)$ of depth $d=O(\log^2(s/\epsilon))$ as follows. Define Frac$(F,x_i)$ to be the fraction of the number of monomials in $F$ that contain $x_i$. Choose a variable $x_{i_1}$ with the minimum index $i_1$ that maximizes Frac$(F,x_{i_1})$ and use it as the label of the root of $T(d,F,G)$.  The left and right sub-trees of $T(d,F,G)$ are $T(d-1,F^{(0)},G^{(0)})$ and $T(d-1,F^{(1)},G^{(1)})$, respectively, where $F^{(\xi)}$ is the sum of all monomials that appear in $F_{|x_{i_1}\gets \xi}$ and not in $G_{|x_{i_1}\gets \xi}$, and $G^{(\xi)}$ is the sum of all monomials that appear in $G_{|x_{i_1}\gets \xi}$ and not in $F_{|x_{i_1}\gets \xi}$. 

We are interested in making a random walk in this tree. Assume we are given the first $m-1$ steps in the random walk $\xi^{(m-1)}=(\xi_1,\xi_2,\ldots,\xi_{m-1})\in\{0,1\}^{m-1}$, $F^{\xi^{(m-1)}}$ and $G^{\xi^{(m-1)}}$. We find the variable with the minimum index $i_m$ that maximizes Frac$(F^{\xi^{(m-1)}},x_{i_m})$. Choose a random uniform $\xi_{m}\in\{0,1\}$. Then, for $\xi^{(m)}=(\xi_1,\xi_2,\ldots,\xi_{m})$, $F^{\xi^{(m)}}$ is defined to be the sum of all the monomials in $F^{\xi^{(m-1)}}_{|x_{i_m}\gets \xi_m}$ that are not in $G^{\xi^{(m-1)}}_{|x_{i_m}\gets \xi_m}$, and $G^{\xi^{(m)}}$ is defined to be the sum of all the monomials in $G^{\xi^{(m-1)}}_{|x_{i_m}\gets \xi_m}$ that are not in $F^{\xi^{(m-1)}}_{|x_{i_m}\gets \xi_m}$. If $F^{\xi^{(m)}}$ is a constant function $\eta\in\{0,1\}$, then we have reached a leaf labeled with $\eta$. The way we define $F^{\xi^{(m)}}$ and $G^{\xi^{(m)}}$ turns to be  crucial in the algorithm and is needed for its proof of correctness. We note here that, unlike the tester of Blanc, Lange and Tan, this tree is not the decision tree of $T$ or $F$, because every path in the tree treats $F$ and $G$ as sets of monomials rather than functions.

We show that, when $f$ is a size-$s$ decision tree, for a random shuffling and random $\xi=(\xi_1,\xi_2,\ldots,$ $\xi_m)\in \{0,1\}$, with high probability,  Frac$(F^{\xi^{(m)}},x_{i_m})$ is at least $1/O(\log (s/\epsilon))$. Hence, for a random walk in the tree $T(d,F,G)$, each step decreases the number of monomials in $F^{\xi^{(m)}}$, on average, by a factor of $1-1/O(\log (s/\epsilon))$. Therefore, since $F$ contains at most $poly(s/\epsilon)$ monomials, with high probability, a random walk in $T(d,F,G)$ reaches a leaf in $O(\log^2(s/\epsilon))$ steps. \ignore{We then show that all the monomials that were ignored (those that are removed from $F$) are, with high probability, negligible. So this tree is $\epsilon/4$-close to $T$ if $f$ is size-$s$ decision tree.} 

Now suppose $f$ is $\epsilon$-far from every size-$2^{O(\log^2(s/\epsilon))}$ decision tree. It may happen that a function $f$ that is $\epsilon$-far from size-$2^{O(\log^2(s/\epsilon))}$ decision tree passes all the above tests, because the above algorithm relies only on the small monomials of $T$. Moreover, it may happen that the small monomials of such function coincide with the monomials of a small size decision tree. As a result, we add another test at each leaf of the tree. We test that the function $T$ at the leaf of the tree $T_{|x_{i_1}\gets\xi_1,\ldots,x_{i_m}\gets \xi_m}$ is $\epsilon/4$-close to a constant function. 

For a function that is $\epsilon$-far from every size-$2^{O(\log^2(s/\epsilon))}$ decision tree, if it is not rejected because its small monomials coincide with the monomial of small size decision tree, then the random walks will often reach a small depth leaf. On the other hand, if almost all the small depth leaves give a good approximation of the function, then $T$ is $\epsilon$-close to a small depth tree. Therefore, the function is rejected with high probability.

What if $f$ is size-$s$ decision tree and it does not pass this additional test? We show that with high probability this cannot happen. We show that, although the tester treats $F$ and $G$ as sets of monomials and not as functions, the tree gives a good approximation of $T$, which implies that almost all the leaves of small depth in the tree pass the last test. 

The above tester runs in $poly(n,1/\epsilon)$ time and queries. Using the reduction in~\cite{Bshouty20}, it can be changed to a tester that runs in $poly(s,1/\epsilon)n$ time and makes $poly(s,1/\epsilon)$ queries.  

\subsection{Testing Decision Tree of Depth $d$}
The algorithm of Blanc, Lange and Tan~\cite{BlancLT20} also distinguishes between depth-$d$ decision tree and functions that are $\epsilon$-far from depth-$O(d^3/\epsilon^3)$ decision trees under the uniform distribution. Again, we can make the query complexity independent of $n$ using the reduction of Bshouty,~\cite{Bshouty20}, and get a $2^{O(d)}n$ time uniform-distribution tester that asks $2^{O(d)}/\epsilon$ queries and distinguishes between depth-$d$ decision tree and functions that are $\epsilon$-far from depth-$O(d^3/\epsilon^3)$ decision trees. 

In this paper, we give a new simple {\it distribution-free} tester that runs in $2^{O(d)}n$ time, asks $2^{O(d)}/\epsilon$ queries and distinguishes between depth-$d$ decision tree and functions that are $\epsilon$-far from depth-$d^2$ decision trees.

Our algorithm relies on the following fact. Let $f$ be a depth-$d$ decision tree. Consider the polynomial representation of $f=M_1+M_2+\cdots+M_m$. For any maximal monomial $M_i=x_{i_1}x_{i_2}\cdots x_{i_t}$ (a monomial that is not sub-monomial of any other monomial in $f$)  and any $\xi_1,\ldots,\xi_t\in\{0,1\}$, the function $f_{|x_{i_1}\gets \xi_1,\ldots,x_{i_t}\gets \xi_t}$ is depth $(d-1)$-decision tree. 

We can define a depth-$d^2$ decision tree $T_f$ that is equivalent to $f$ as follows: Find a maximal monomial $M_i=x_{i_1}x_{i_2}\cdots x_{i_t}$. Then use all its variables in the first $t$ levels of the decision tree $T_f$. That is, build a complete tree of depth $t$ that all its nodes at level $j$ are labeled with $x_{i_j}$. This defines a different path for each $x_{i_1}= \xi_1,\ldots,x_{i_t}= \xi_t$. Then, in the last node of such path, attach the tree $T_g$ for  $g=f_{|x_{i_1}\gets \xi_1,\ldots,x_{i_t}\gets \xi_t}$. Since depth-$d$ decision trees are degree-$d$ polynomials, we have $t\le d$, and since the decision trees at level $t$ are depth-$(d-1)$ decision trees, the depth of $T_f$ is at most $d^2$ (in fact it is at most $d(d-1)/2$).

For the tester, we will not construct $T_f$, but instead, we show that for any assignment $a$, finding the route that $a$ takes in the tree $T_f$ can be done efficiently. For this end, we first show that if $f$ is a depth-$d$ decision tree then, the relevant variables of $f$ can be found in $\tilde O(2^{2d})+2^d\log n$ queries. Then, we show that a maximal monomial of $f$ can be found in $\tilde O(2^d)$ queries.  
For an assignment $a$ drawn according to a distribution $\D$, if $f$ is a depth-$d$ decision tree, the route that $a$ takes in $T_f$ ends before depth $d^2$. If $f$ is $\epsilon$-far from depth-$d$ decision tree, then, either finding the relevant variables of $f$ fails, or finding a maximal monomial of size at most $d$ fails or, with probability at least $\epsilon$, the route in $T_f$ goes beyond depth $d^2$. The later happens because if it does not for $O(1/\epsilon)$ examples drawn according to a distribution $\D$, then truncating the tree up to depth $d^2$, results a tree that is w.h.p $\epsilon$-close of a depth $d^2$-decision tree with respect to $\D$.

Notice that the query complexity of this tester depends on $n$ because finding the relevant variables of $f$ takes $\tilde O(2^{2d})+2^d\log n$ queries which depends on $n$. To make the query complexity independent of $n$, we use the reduction of Bshouty in~\cite{Bshouty20}.

\subsection{Other Results}\label{OResults}
The other results of the paper are summarized in the tables in Figures~\ref{TABLE1},~\ref{TABLE2},~\ref{TABLE3} and~\ref{TABLE4}, and they follow from the following known and new results:
\begin{enumerate}
\item\label{R1} A new reduction that changes any algorithm that learns a subclass of $k$-$\junta$ to a learning algorithm that makes number of queries  linear in $\log n$ and time linear in $n$. This reduction is similar to the one in~\cite{Bshouty19,Bshouty20}. See Lemma~\ref{Reduction}.
\item\label{R2} A new algorithm that finds a depth-$d$ size-$s$ decision tree that is consistent with a sample $S$. See Lemma~\ref{LearnDTds}.
\item\label{R3} The well known Occam's Razor lemma. See Lemma~\ref{occam}.
\item\label{R4} A natural generalization of the reductions in~\cite{Bshouty20} that changes a tester for a subclass of $k$-$\junta$ that has query complexity depends on $n$ to a tester with query complexity independent of $n$. See Lemmas~\ref{ThTriv2v2}, \ref{ThTriUv2} and~\ref{ThTriv}.
\item\label{R5} Blanc et. al. testers~\cite{BlancLT20}. See Lemma~\ref{Blan1} and \ref{Blan2}.
\item \label{R6} Other old and new results that are included in the context of the proofs. \ignore{proofs or before the proofs.} 
\end{enumerate}

In the table in Figure~\ref{TABLE1}, we give old and new results on proper learning of sub-classes of decision tree, and in Figure~\ref{TABLE2}, we give  old and new results on non-proper learning of sub-classes of decision tree.
In both tables, ${\cal M}=$D, U and E stand for uniform distribution, distribution-free and exact learning models, respectively. Moreover, $C$ is the class being learned, $\#${\bf RED}, $q$, is the number of the random example queries, $\#${\bf BBQ}, $m$, is the number of black box queries, and $T$ is the (computational) time complexity. The $+O((q+m)n)$ (in the first row of the table) is the time required for asking the queries. {\bf Ref} refers to the Theorem or the reference where the result is proved. The last column; {\bf Items}, indicates the items of the above list used in the proof of the result. Recall that, $\DT_d$, $\DT^s$ and $\DT_d^s$ the classes of all depth-$d$ decision trees, size-$s$ decision trees and  depth-$d$ size-$s$ decision trees, respectively.

\begin{figure}[h!]
\begin{center}
\begin{tabular}{|c|c|c|c|c|c|c|}
\hline
{\bf ${\cal M}$ } &{\bf $C$ }&  {\bf $\#$REQ}& {\bf $\#$BBQ} & {\bf Time}&{\bf Ref.}&{\bf Items}\\
&&$q=O(\ \cdot\ )$&$m=O(\ \cdot\ )$&$T=\cdot+O((q+m)n)$&&\\
\hline \hline
D & $\DT^s_d$ &  $ \frac{1}{\epsilon}\left(s\log n+\log\frac{1}{\delta}\right)$ & $-$ & $n^{O(d)}\frac{1}{\epsilon}\log\frac{1}{\delta}$ &Th.~\ref{DTds}.&\ref{R2}+\ref{R3}\\
\hline
D & $\DT_d^s$ & $\frac{s}{\epsilon}\log\frac{s}{\delta}$ & $s\log n$ & $s^{O(d)}\frac{1}{\epsilon}\log\frac{1}{\delta}+$ &Th.~\ref{DTds2}.&\ref{R1}+Th.~\ref{DTds}.\\
\hline
E & $\DT_d^s$ & $-$ & $s\left(2^d\log \frac{s}{\delta}+\log n\right)$  &$s^{O(d)}\log\frac{1}{\delta}+$ &Th.~\ref{EwE}&Th.~\ref{DTds2}.\\
\hline
E & $\DT_d^s$ & $-$ & $2^{2d+o(d)}\log n$  &$s^{O(d)}+$ &Th.~\ref{DTdsnew}&\ref{R2}+\ref{R6}\\
\hline\hline
D & $\DT^s$ & $\frac{s}{\epsilon}\log\frac{s}{\delta}$ & $s\log n$ & $s^{O(s)}+$ &Th.~\ref{DTSS}.&\ref{R1}+folklore\\
\hline
D & $\DT^s$ & $\frac{s}{\epsilon}\log\frac{s}{\delta}$ & $2^s+s\log n$ & $2^{O(s)}+$ &Th.~\ref{DTSE}.&\ref{R1}+\cite{GuijarroLR99}\\
\hline
D & $\DT^s$ & $\frac{1}{\epsilon}(s\log n+\log\frac{1}{\delta})$ & $-$&  $2^n\frac{s^2}{\epsilon^2}\log^2\frac{1}{\delta}$ &Th.~\ref{DTdsTPre}.&\ref{R2}+\ref{R3}\\
\hline
D & $\DT^s$ & $\frac{s}{\epsilon}\log\frac{s}{\delta}$ & $s\log n$ &  $\tilde O(2^{s})\frac{1}{\epsilon^2}\log^2\frac{1}{\delta}+$ &Th.~\ref{DTdsT}.&\ref{R1}+Th. \ref{DTdsTPre}\\
\hline
E & $\DT^s$ & $-$ & $2^n$ & $2^{O(n)}$ &\cite{GuijarroLR99}&\cite{GuijarroLR99}\\
\hline
E & $\DT^s$ & $-$ & $2^s\log\frac{1}{\delta}+s\log n$ & $2^{O(s)}+$ &Th.~\ref{DTSE2}.&\cite{BshoutyC18,GuijarroLR99}\\
\hline\hline
D & $\DT_d$ & $\frac{1}{\epsilon}\left(2^d\log n+\log\frac{1}{\delta}\right)$ & $-$ & $n^{O(d)}2^{O(d^2)}\frac{1}{\epsilon}\log\frac{1}{\delta}$ &Th.~\ref{DTds}.&$\DT_d=\DT_d^{2^d}$\\
\hline
D & $\DT_d$ & $ \frac{2^d}{\epsilon}\left(d+\log\frac{1}{\delta}\right)$ & $2^d\log n$ & $2^{O(d^2)}\frac{1}{\epsilon}\log\frac{1}{\delta}+$ &Th.~\ref{DTds2}.&$\DT_d=\DT_d^{2^d}$\\
\hline
E & $\DT_d$ & $-$ & $\tilde O(2^{2d})\log \frac{1}{\delta}+2^d\log n$  &$2^{O(d^2)}\log\frac{1}{\delta}+$ &Th.~\ref{EwE}&$\DT_d=\DT_d^{2^d}$\\
\hline
E & $\DT_d$ & $-$ & $2^{2d+o(d)}\log n$  &$2^{O(d^2)}+$ &Th.~\ref{DTdsnew}&$\DT_d=\DT_d^{2^d}$\\
\hline\hline
U & $\DT^s$ & $\frac{1}{\epsilon^2}\left(s\log n+\log\frac{1}{\delta}\right)$ & $-$ & $\left(\frac{s}{\epsilon}\right)^{O(\log n)}\log\frac{1}{\delta}$ &~\cite{MehtaR02}.&~\cite{MehtaR02}.\\
\hline
U & $\DT^s$ & $\frac{1}{\epsilon}\left(s\log n\log\frac{1}{\delta}\right)$ & $-$ & $\left(\frac{s}{\epsilon}\log n\right)^{O(\log n)}\log\frac{1}{\delta}$ &Th~\ref{DTsU}.&\ref{R2}+\ref{R3}\\
\hline
U & $\DT^s$ & $\frac{s}{\epsilon}\log s\log\frac{1}{\delta}$ & $s\log n$ & $\left(\frac{s}{\epsilon}\right)^{O(\log s)}\log\frac{1}{\delta}+$ &Th~\ref{DTsU2}.&\ref{R1}+Th. \ref{DTsU}. \\
\hline
\end{tabular}
\end{center}
	\caption{Results on proper learning decision tree.}
	\label{TABLE1}
	\end{figure}

\begin{figure}[h!]
\begin{center}
\begin{tabular}{|c|c|c|c|c|c|c|}
\hline
{\bf ${\cal M}$ } &{\bf $C/H$ }&  {\bf $\#$REQ}& {\bf $\#$BBQ} & {\bf Time}&{\bf Ref.}&Item\\
&$C$ from $H$&$q=O(\cdot)$&$m=O(\cdot)$&$T=O(\cdot)$&&\\
\hline
D&$\DT^s/\DT^{n^{O(\log s)}}$&$\frac{1}{\epsilon}\left(n^{O(\log s)}+\log\frac{1}{\delta}\right)$ & - &$qn^{O(\log s)}$&\cite{EhrenfeuchtH89}&\cite{EhrenfeuchtH89}\\ 
\hline
D&$\DT^s/\DT^{s^{O(\log s)}}$&$\frac{1}{\epsilon}\left(s^{O(\log s)}+s\log\frac{1}{\delta}\right)$ & $s\log n$ &$q(s^{O(\log s)}+\tilde O(n))$&Th.~\ref{DTsOUR}&1+\cite{EhrenfeuchtH89}\\ 
\hline 
\end{tabular}
\end{center}
	\caption{Results on non-proper learning decision tree.}
	\label{TABLE2}
	\end{figure}

In the two tables in Figure~\ref{TABLE3} and Figure~\ref{TABLE4}, we give old and new results on testing decision trees. All the notations are the same as the ones in the first two tables.

\begin{figure}[h!]
\begin{center}
\begin{tabular}{|c|c|c|c|c|c|}
\hline
{$\cal M$} &{\bf $C$ }&  {\bf $\#$Queries$=q$} & {\bf Time}$+qn$&{\bf Reference}&{\bf Items}\\
\hline \hline
D & $\DT^s_d$ &  $\tilde O\left(\frac{s^2}{\epsilon}\right)$ & $\frac{s^{O(d)}}{\epsilon}+$ &Th.~\ref{TH1}.&\ref{R4}+Th. \ref{DTds}.\\
\hline
D & $\DT^s_d$ &  $\tilde O\left(\frac{s^2}{\epsilon}\right)$ & ${\tilde O(s^{s})}+$ &\cite{Bshouty20} & \cite{Bshouty20} \\
\hline
D & $\DT^s_d$ &  $\tilde O\left(\frac{s^2}{\epsilon}\right)$ & ${\tilde O(2^{s})}+$ &Th.~\ref{TH1Uexp}.&\ref{R4}+Th.~\ref{DTdsTPre}.\\
\hline\hline
D & $\DT^s$ &  $\tilde O\left(\frac{s^2}{\epsilon}\right)$ & ${\tilde O(s^{s})}+$ &\cite{Bshouty20} & \cite{Bshouty20} \\
\hline
D & $\DT^s$ &  $\tilde O\left(\frac{s^2}{\epsilon}\right)$ & ${\tilde O(2^{s})}+$ &Th.~\ref{TH1Uexp}.&\ref{R4}+Th.~\ref{DTdsTPre}.\\
\hline\hline 
D & $\DT_d$ &  $\tilde O\left(2^{2d}+\frac{2^d}{\epsilon}\right)$ & $2^{O(d^2)}+$ &Th.~\ref{TH12}.&\ref{R4}+Th.~\ref{EwE}.\\
\hline\hline
U & $\DT^s_d$ &  $\tilde O\left(\frac{s}{\epsilon}\right)$ & $\frac{s^{O(d)}}{\epsilon}+$ &Th.~\ref{TH1U}.&\ref{R4}+Th. \ref{DTds}.\\
\hline
U & $\DT^s$ &  $\tilde O\left(\frac{s}{\epsilon^2}\right)$ & $\left(\frac{s}{\epsilon}\right)^{O(\log s)}+$ & \cite{MehtaR02} &\ref{R4}+\cite{MehtaR02}\\
\hline
U & $\DT^s$ &  $\tilde O\left(\frac{s}{\epsilon}\right)$ & $\left(\frac{s}{\epsilon}\right)^{O(\log s)}+$ & Th.~\ref{TH1UB}. &\ref{R4}+Th.\ref{DTsU}.\\
\hline
U & $\DT_d$ &  $\tilde O\left(\frac{2^{d}}{\epsilon}\right)$ & $\frac{2^{O(d^2)}}{\epsilon}+$ &Th.~\ref{TH12U}.&$\DT_d=\DT_d^{2^d}$\\
\hline
\end{tabular}
\end{center}
	\caption{Results on testing decision tree.}
	\label{TABLE3}
	\end{figure}

\begin{figure}[h!]
\begin{center}
\begin{tabular}{|c|c|c|c|c|c|}
\hline
$\cal M$ &{\bf $C$ }&  {\bf $H=\DT^S,\DT_D$ }&{\bf $\#$Queries } &{\bf Reference}&{\bf Item}\\
\hline \hline
D & $\DT^s$ & $S={s^{O(\log s)}}$ & $ \frac{s^{O(\log s)}}{\epsilon}$  &Th. \ref{TH1NP}&\ref{R1}+\ref{R4}+\cite{EhrenfeuchtH89}\\
\hline\hline
U & $\DT^s$ & $S={s^{O\left(\frac{\log^2 s}{\epsilon^3}\right)}}$ & $ poly(\log s,\frac{1}{\epsilon})+\tilde O\left(\frac{s}{\epsilon}\right)$  &Th. \ref{BlankTh1}.&\ref{R4}+\cite{BlancLT20}\\
\hline
U & $\DT^s$ & $S=(s/\epsilon)^{O(\log (s/\epsilon))}$ & $poly(s,1/\epsilon)$ &Th. \ref{TH44}.& Ours\\
\hline\hline
U & $\DT_d$ & $D={{\frac{d^3}{\epsilon^3}}}$ & $ poly(d,\frac{1}{\epsilon})+\tilde O\left(\frac{2^d}{\epsilon}\right)$ &Th. \ref{BlankTh2}.&\ref{R4}+\cite{BlancLT20}\\
\hline 
D & $\DT_d$ & $D=d^2$ & $ \tilde O\left(\frac{2^{2d}}{\epsilon}\right)$ &Th. \ref{TH33N}.& Ours\\

\hline
\end{tabular}
\end{center}
	\caption{Results on testing decision tree $C$ by larger decision trees $H$. The time here is $poly(q)n$ where $q$ is the number of queries.}
	\label{TABLE4}
	\end{figure}

\ignore{
\color{red}

For a Boolean function $f$ and a class $C$, when we write $f\in C$ we mean that there is $T\in C$ that is equivalent to $f$. For two classes $C_1$ and $C_2$ we write $C_1\subseteq C_2$ if for every $f\in C_1$ there is $g\in C_2$ that is equivalent to $f$. The class $C(m)$ is the class of all functions in $C$ that has at most $m$ relevant variables. 

We have

\begin{lemma}\label{SDT} We have $\DT_d\subseteq \P_d^{(2\cdot 3^{d-1})}(2^{d}-1)$. That is, 
every depth-$d$ decision tree is equivalent to some  $(2\cdot 3^{d-1})$-sparse degree-$d$ polynomial with at most $2^{d}-1$ relevant variables.
\end{lemma}
\begin{proof}
The proof is by induction. For decision tree of depth at most $1$ the function is $x_i$, $x_i+1$, $0$ or $1$. Let $T$ be a depth-$d$ decision tree. Let $x_{i_1}$ be the variable at the root. Let $T_0$ and $T_1$ be the left and right sub-trees of the root. Then $T=x_{i_1}T_1+\overline{x_{i_1}}T_0=x_{i_1}(T_1+T_0)+T_0$. Since $T_0$ and $T_1$ are of depth at most $d-1$, by the induction hypothesis, the result follows.
\end{proof}
\color{black}}

We organize the paper as follows. In Section~\ref{LDT}, we study proper and non-proper learning of sub-classes of decision trees. The section starts with some notations (subsection~\ref{N}), then gives a reduction in learning (subsection~\ref{ARiL}), proves the results in distribution-free proper learning (subsection~\ref{DFP}),  the results in uniform distribution learning (subsection~\ref{UD}) and then the results in non-proper learning (subsection~\ref{NpL}). See Figure~\ref{TABLE1} and Figure~\ref{TABLE2}.

In Section~\ref{TDT}, we study testing of sub-classes of decision trees. The section starts with introducing the natural generalization of the reductions in~\cite{Bshouty20} (subsection~\ref{RiT}) and the results of Blanc et. al.~\cite{BlancLT20} in testing (subsection~\ref{BLTT}). We then prove the results for distribution-free testing (subsection~\ref{DFT}), uniform distribution testing (subsection~\ref{UD2}) and testing by large classes of decision trees (subsection~\ref{TbLC}). See Figure~\ref{TABLE3}.

In Section~\ref{ATfDTd}, we give a distribution-free tester for the class of depth-$d$ decision tree and in Section~\ref{ATfDTs} we give a uniform distribution tester for the class of size-$s$ decision tree.

\section{Learning Decision Tree}\label{LDT}
In this section, we present all known results of learning decision tree, and we give new results. These results are summarized in Figure~\ref{TABLE1} and~\ref{TABLE2}.

\subsection{Notations} \label{N}
For a function $f$, variables $x_{i_1},\ldots,x_{i_j}$ and $\xi_1,\ldots,\xi_j\in\{0,1\}$, we write $f_{|x_{i_1}\gets \xi_1,\ldots,x_{i_j}\gets \xi_j}$ to denote the function that is derived from  substituting $x_{i_r}=\xi_r$, $r=1,\ldots,j$ in $f$. For a set of variables $X=\{x_{i_1},x_{i_2},\ldots,x_{i_j}\}$, we denote $f_{|X\gets 0}=f_{|x_{i_1}\gets 0,\ldots,x_{i_j}\gets 0}$.
We say that $x_i$ is {\it relevant} variable in $f$ if $f_{|x_i\gets 0}\not= f_{|x_i\gets 1}$. For $u\in\{0,1\}^n$, we denote $u_{|x_{i_1}\gets \xi_1,\ldots,x_{i_j}\gets \xi_j}$ (resp. $u_{|X\gets 0}$) the vector $v\in\{0,1\}^n$ 
such that $v_{i_\ell}=\xi_{i_\ell}$ (resp. $v_{i_\ell}=0$) for $\ell\in[j]$ and $v_i=u_i$, otherwise.

The class $k$-{$\junta$} is the class of Boolean functions that has at most $k$ relevant variables. Obviously, $\DT_d^s,\DT^s\subseteq s$-$\junta$ and $\DT_d\subseteq 2^d$-$\junta$.

We say that a class $C$ is {\it closed under zero-one projection} (resp. {\it zero projection}) if for every $f\in C$, every $i\in [n]$ and every $\xi\in\{0,1\}$ we have $f_{|x_i\gets \xi}\in C$ (resp. $f_{|x_i\gets 0}\in C$). We say that $C$ is {\it symmetric} if for every permutation $\pi:[n]\to [n]$ and every $f\in C$ we have $f_\pi\in C$ where $f_\pi(x):=f(x_{\pi(1)},\cdots,x_{\pi(n)})$. All the classes in this paper are closed under zero-one projection and symmetric. 

\subsection{A Reduction in Learning}\label{ARiL}
In this section, we give a reduction that changes any algorithm that learns a subclass of $k$-$\junta$ to a learning algorithm that makes a number of queries linear in $\log n$ and time linear in $n$. The reduction is similar to the one that is used in~\cite{Bshouty19,Bshouty20} for the distribution-free property testing of subclasses of $k$-$\junta$.

\begin{lemma}\label{Reduction}
Let $C\subseteq H$ be a sub-classes of $k$-$\junta$ that are closed under zero projection. Let ${\cal A}$ be a learning algorithm that learns $C$ with $M(n, \epsilon, \delta)$ black box queries and $Q(n,\epsilon,\delta)$ random example queries in time $T(n,\epsilon,\delta)$ and outputs a hypothesis in $H$ of size $S(n,\epsilon)$. Then, there is an algorithm ${\cal B}$ that learns $C$ with $$M'=M(k,\epsilon/2,\delta/2)+O(k\log n)$$ black box queries and $$Q'=Q(k,\epsilon/2,\delta/2)+O\left(\frac{k}{\epsilon}\log \frac{k}{\delta}\right)$$ random example queries in time $T(k,\epsilon/2,\delta/2)+(M'+Q')n$ and outputs a hypothesis in $H$ of size $S(k,\epsilon/2)$.
\end{lemma}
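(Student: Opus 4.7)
The plan is a reduction from $n$-variable learning to $k$-variable learning that mirrors the technique used in~\cite{Bshouty19,Bshouty20}. Since $f \in C \subseteq k$-$\junta$ depends on at most $k$ coordinates, we will first identify a set $R$ of (essentially) relevant variables with $|R| \le k$, then invoke ${\cal A}$ on the zero-restriction $\tilde f := f_{|\bar R \gets 0}$ viewed as a function over the coordinates in $R$.

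\textbf{Phase 1 (find relevant variables).} Initialize $R = \emptyset$ and draw $Q_0 = O((k/\epsilon)\log(k/\delta))$ random examples from $\D$. Process them one at a time: given $(x, f(x))$, make a black-box query to $f(x')$ where $x' := x_{|\bar R \gets 0}$; if $f(x) \neq f(x')$ then $f$ depends on a variable of $\bar R$ along the Hamming path between $x$ and $x'$, which can be isolated by binary search with $O(\log n)$ black-box queries and added to $R$. Because $f$ is a $k$-junta, $R$ is augmented at most $k$ times, so the binary searches contribute $O(k \log n)$ queries in total. A standard epoch argument shows that with probability $\ge 1-\delta/2$, the final $R$ satisfies $\Pr_{\D}[f(x) \neq f(x_{|\bar R \gets 0})] \le \epsilon/2$: whenever this probability exceeds $\epsilon/2$ each fresh example triggers an augmentation with probability $> \epsilon/2$, so between successive augmentations $O((1/\epsilon)\log(k/\delta))$ examples suffice to either augment or drop below $\epsilon/2$, and a union bound over at most $k$ epochs gives the claim.

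\textbf{Phase 2 (simulate ${\cal A}$).} Since $C$ is closed under zero projection, $\tilde f \in C$ and has at most $|R| \le k$ relevant variables. Invoke ${\cal A}$ on $\tilde f$ with parameters $(|R|, \epsilon/2, \delta/2)$, answering each black-box query on $y \in \{0,1\}^R$ by extending $y$ to $\{0,1\}^n$ with zeros outside $R$ and querying $f$, and answering each random example query by drawing $(x, f(x))$ from $\D$ and returning $(x_{|R}, f(x_{|\bar R \gets 0}))$. With probability $\ge 1-\delta/2$, ${\cal A}$ outputs $h \in H$ of size $S(k,\epsilon/2)$ that is $(\epsilon/2)$-close to $\tilde f$ under the projection of $\D$ to $R$. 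Extending $h$ trivially to $n$ variables (it still depends only on $R$) and using Phase~1, the triangle inequality gives
$$\Pr_{\D}[h(x) \neq f(x)] \le \Pr_{\D}[h(x_{|R}) \neq \tilde f(x_{|R})] + \Pr_{\D}[\tilde f(x_{|R}) \neq f(x)] \le \epsilon/2 + \epsilon/2 = \epsilon.$$
A union bound over the two failure events yields overall confidence $\ge 1-\delta$, and summing the query and time costs across phases gives the claimed bounds.

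\textbf{Main obstacle.} The only non-routine step is the Phase~1 analysis: concluding from a moderate number of random examples that, with high probability, the zero-restriction of $f$ to $R$ is $(\epsilon/2)$-close to $f$ on $\D$, despite our not being able to directly test this condition. The zero-projection closure of $C$, the triangle inequality on error, and the query/time bookkeeping are all routine.
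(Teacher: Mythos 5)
Your proposal is correct and matches the paper's proof almost exactly. The paper packages your Phase~1 as a subroutine \textbf{Find-Close} (citing Lemmas~9 and~10 of~\cite{Bshouty19} for its analysis, which is the same epoch argument you sketch: run $ck(\ln(k/\delta))/\epsilon$ rounds, binary search on every witnessed discrepancy between $f$ and $f_{|X\gets 0}$, stop after $c(\ln(k/\delta))/\epsilon$ quiet rounds), and then runs $\cal A$ on $f_{|X\gets 0}$ with parameters $(k,\epsilon/2,\delta/2)$ and concludes by the same triangle inequality and union bound.
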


Before proving this result, consider the following procedure {\bf Find-Close}. We have (see \cite{Bshouty19}, Lemma 9 and 10):
\begin{lemma}{\rm \cite{Bshouty19}}.\label{cloose} Let $c>1$ be any constant. Let $f\in k$-$\junta$.
{\bf Find-Close} makes $O((k\log (k/\delta))/\epsilon)$ random example queries and $O(k\log n)$ black box queries to $f$ and with probability at least $1-\delta$ outputs a set of variables $X$ of size at least $n-k$ such that
$$\Pr_{u\in {\cal D}}[f_{|X\gets 0}(u)\not= f(u)]\le \epsilon/c.$$
\end{lemma}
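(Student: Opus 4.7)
The plan is to describe the procedure \textbf{Find-Close} and analyze it. I would maintain a shrinking candidate set $X \subseteq [n]$ of coordinates meant to be safe to zero out, initialized to $X = [n]$. I would process random examples in \emph{phases}: in each phase, draw $m = \Theta((c/\epsilon)\log((k+1)/\delta))$ i.i.d.\ examples $(u^{(1)},f(u^{(1)})),\ldots,(u^{(m)},f(u^{(m)}))$ from $\D$. For each sample, compare $f(u^{(j)})$ (supplied by the REQ) with $f(u^{(j)}_{|X\gets 0})$ (obtained by one BBQ). If all $m$ agree, the procedure halts and returns~$X$; otherwise, take the first disagreement $u^{(j)}$ and invoke a binary search over $X$ to extract a relevant variable, which is then removed from $X$.

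The binary search is the standard one. Order $X = \{i_1,\dots,i_\ell\}$ and consider the hybrids $v^{(0)} = u^{(j)}_{|X\gets 0}$ and $v^{(r)} = u^{(j)}_{|\{i_{r+1},\dots,i_\ell\}\gets 0}$, so $v^{(\ell)} = u^{(j)}$. Since $f(v^{(0)}) \ne f(v^{(\ell)})$, a binary search with $O(\log \ell) = O(\log n)$ black-box evaluations locates an index $r$ with $f(v^{(r-1)})\ne f(v^{(r)})$, certifying $i_r$ as relevant. Because $f$ is a $k$-junta, only $k$ relevant variables exist, so at most $k$ binary searches ever happen; this immediately gives the $|X|\ge n-k$ guarantee and the $O(k\log n)$ black-box-query bound for the binary-search portion.

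For correctness I would argue by phases: whenever the algorithm is \emph{about to} output its current $X$, the fact that it did not terminate earlier means the previous phase's $m$ samples all matched under the current $X$. If that $X$ satisfied $\Pr_{u\sim\D}[f(u_{|X\gets 0}) \ne f(u)] > \epsilon/c$, the probability of seeing no disagreement in $m$ samples is at most $(1-\epsilon/c)^m \le e^{-m\epsilon/c}\le \delta/(k+1)$. Since $|X|$ can decrease at most $k$ times, there are at most $k+1$ phases, and a union bound over them shows that the probability the returned $X$ is $\epsilon/c$-bad is at most $\delta$. Moreover, in the extreme case $|X|$ has been reduced to contain only irrelevant variables, $f_{|X\gets 0}\equiv f$ and every sample agrees, so the procedure halts by phase $k+1$. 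The total REQ count is $(k+1)m = O((k/\epsilon)\log(k/\delta))$ as claimed.

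The main obstacle I anticipate is the subtle dependency between the state $X$ and the random examples: $X$ evolves as samples are consumed, so the standard i.i.d.\ concentration for a \emph{fixed} $X$ does not directly apply. The phase structure is exactly what circumvents this: inside a single phase $X$ is fixed (any change to $X$ ends the phase immediately), so conditional on the state at the phase's start, the $m$ draws in that phase are i.i.d.\ from $\D$ and the standard bad-event bound $(1-\epsilon/c)^m$ is valid; then a union bound over the at-most-$k{+}1$ phases completes the argument.
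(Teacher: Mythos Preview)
Your proposal is correct and matches the paper's approach essentially line for line. The paper does not give a proof (it cites \cite{Bshouty19}) but supplies the pseudocode for \textbf{Find-Close}, which is precisely your phase-based procedure: the paper's counter $t_X$ that resets on each discovered relevant variable and triggers output at $c(\ln(k/\delta))/\epsilon$ consecutive agreements is exactly your phase structure, and your correctness analysis (bad $X$ survives a phase with probability $\le \delta/(k{+}1)$, union bound over $\le k{+}1$ phases) is the intended one.

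One cosmetic remark: you correctly note that each comparison costs one BBQ (to evaluate $f$ at $u_{|X\gets 0}$) in addition to the REQ, so the total BBQ count is really $O((k/\epsilon)\log(k/\delta)) + O(k\log n)$, not just the $O(k\log n)$ from the binary searches. The lemma as stated in the paper elides the first term; this is a common shorthand (the comparison BBQs are in one-to-one correspondence with the REQs already counted), not a flaw in your argument.
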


\mybox{ {\bf Find-Close$(f)$}\\ \hspace{.3in}
{\bf Input}: black box access to $f$\\
{\bf Output}: A set of variables $X$ of size at least $n-k$ such that, with probability at least $1-\delta$, $h=f_{|X\gets 0}$ is $(\epsilon/c)$-close to $f$ with respect to the distribution $\D$.
\begin{enumerate}
    \item Set $X=\{x_i|i\in [n]\}$; $t_X=0$. 
    \item Repeat $M=ck(\ln (k/\delta))/\epsilon$ times.
    \item \hspace{.25in} Choose $u\in {\cal D}$.
    \item \hspace{.25in} $t_X\gets t_X+1$.
    \item \hspace{.25in} If $f_{|X\gets 0}(u)\not=f(u)$ then
    \item \hspace{.5in} Binary search from $u$ to $u_{|X\gets 0}$ to find a new relevant variable $x_\ell$ of $f$. 
    \item \hspace{.5in}
    $X\gets X\backslash \{x_\ell\}$.
    \item \hspace{.5in}
    $t_X\gets 0.$
    \item \hspace{.25in} If $t_X=c(\ln(k/\delta))/\epsilon$ then output $X$.
\end{enumerate}}
\\

Using the procedure {\bf Find-Close}, we prove Lemma~\ref{Reduction}.
\begin{proof}
 Algorithm ${\cal B}$ proceeds as follows. First, it runs {\bf Find-Close} with $c=2$ and success probability $1-\delta/2$. Then, it runs ${\cal A}$  with $k$ variables to learn $f_{|X\gets 0}$ with accuracy $\epsilon/2$ and success probability $1-\delta/2$. Let $g$ be the output of algorithm ${\cal B}$. By Lemma~\ref{cloose}, with probability at least $1-\delta/2$, $f_{|X\gets 0}$ is $\epsilon/2$-close to $f$ with respect to $\D$.  With probability at least $1-\delta/2$, $g$ is $\epsilon/2$-close to $f_{|X\gets 0}$ with respect to $\D$. Therefore, with probability at least $1-\delta$, $g$ is $\epsilon$-close to $f$ with respect to $\D$.
\end{proof}

\subsection{Distribution-Free Proper Learning}\label{DFP}
In this subsection, we give some known and new results on proper learning of decision tree.

For a {\it sample} $S\subseteq \{0,1\}^n\times \{0,1\}$, we say that a Boolean function $f$ {\it is consistent with} $S$ if for every $(x,y)\in S$ we have $f(x)=y$. We now prove the following:
\begin{lemma}\label{LearnDTds}
There is a $$t(n,d,r)=O\left(r\sum_{i=0}^d{n\choose i}\min(r,2^i)\right)$$ time algorithm that takes as an input, a sample 
$S\subseteq \{0,1\}^n\times \{0,1\}$ of size $r$ and returns a depth-$d$ size-$s$ decision tree $f$ that is consistent with $S$ if one exists, and {\rm NONE} otherwise. 

The algorithm can return the smallest size $f\in \DT_d^s$ or the smallest depth one. 
\end{lemma}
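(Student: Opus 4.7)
The plan is to solve the problem with a top-down dynamic program with memoization, indexed by partial assignments. For a partial assignment $\pi$ with variable set $X_\pi\subseteq\{x_1,\ldots,x_n\}$ and $|X_\pi|\le d$, let $S_\pi=\{(x,y)\in S:x\text{ agrees with }\pi\text{ on }X_\pi\}$, and let $T^\star(\pi)$ denote the smallest (in size, or in depth, depending on the desired variant) decision tree of depth at most $d-|X_\pi|$ consistent with $S_\pi$, or NONE if no such tree exists. The recurrence has three cases: if $S_\pi$ is empty or all labels in $S_\pi$ agree, then $T^\star(\pi)$ is a single leaf of size $1$; else if $|X_\pi|=d$ then NONE; else, for each $x_j\notin X_\pi$ recursively compute $T^\star(\pi\cup\{x_j=0\})$ and $T^\star(\pi\cup\{x_j=1\})$ and, among all $x_j$ for which both are non-NONE, keep the combination minimizing the chosen objective (size via summation of subtree sizes, depth via $1+\max$). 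The final output is $T^\star(\emptyset)$, returned only if it satisfies the size-$s$ constraint, and NONE otherwise.

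The key counting observation bounding the cost is the following: at level $i$, the number of ``live'' partial assignments $\pi$ with $|X_\pi|=i$ and $S_\pi\neq\emptyset$ is at most $\binom{n}{i}\min(r,2^i)$. Indeed, there are $\binom{n}{i}$ choices of $X_\pi$, and for each fixed $X_\pi$ the non-empty restrictions are in bijection with the distinct coordinate projections of samples in $S$ onto $X_\pi$, of which there are at most $\min(r,2^i)$. Per live $\pi$ the work is $O(|S_\pi|)=O(r)$ for scanning $S_\pi$, checking label agreement, and splitting into $S_{\pi\cup\{x_j=b\}}$; the enumeration of the $n-|X_\pi|$ candidate split variables contributes $O(1)$ memoized-lookup work each, and using the identity $(n-i)\binom{n}{i}=(i+1)\binom{n}{i+1}$ the total edge cost summed at level $i$ re-indexes into the level-$(i+1)$ live-node count, keeping the grand total within $O\bigl(r\sum_{i=0}^d\binom{n}{i}\min(r,2^i)\bigr)$.

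For the ``smallest depth in $\DT_d^s$'' variant, one can either binary-search on a depth parameter $d'\le d$, or carry both (size, depth) in each memo cell and prune any candidate whose size exceeds $s$. The main technical obstacle is precisely the accounting of variable-iteration cost: a naive per-subproblem bound of $O(n)$ would inflate the running time by an extra $n$ factor, and one must exploit the identity above, together with the fact that a non-empty child of a live level-$i$ node is itself a live level-$(i+1)$ node, to collapse the edge-sum into the claimed bound. The remaining bookkeeping---storing $\pi$ in a sparse hash-keyed form so that memo insertions and lookups are $O(1)$ amortized, and precomputing for each $\pi$ the partitions of $S_\pi$ induced by each candidate $x_j$---is standard.
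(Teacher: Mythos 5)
Your proposal is essentially the paper's proof: both enumerate the same family of subproblems --- one per partial assignment on at most $d$ variables with a non-empty induced sub-sample --- both bound the number of live subproblems at level $i$ by $\binom{n}{i}\min(r,2^i)$ by the same ``choose $X_\pi$, then at most one live pattern per sample projection'' argument, and both realize the recursion as a dynamic program over these states. The paper runs the recursion bottom-up (``run \textbf{Consis} backward from the leaves''), while you phrase it top-down with memoization; the paper handles the minimum-depth variant by rerunning for $j=1,\ldots,d$, while you propose binary search or carrying (size, depth) pairs --- minor variations. If anything, your write-up is slightly more careful than the paper's about where the $(n-i)$ candidate-variable iteration cost goes, explicitly invoking $(n-i)\binom{n}{i}=(i+1)\binom{n}{i+1}$ to absorb the edge-enumeration into the next level's state count; the paper simply multiplies the state count by $r$ and does not comment on this. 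Note, though, that this re-indexing only stays within the stated $O\bigl(r\sum_i \binom{n}{i}\min(r,2^i)\bigr)$ bound when $d+1\le r$ (so that the factor $i+1$ is dominated by $r$); in the regime where the lemma is used this is harmless, but it is worth flagging that the constant-factor accounting is being read generously in both your argument and the paper's.
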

\begin{proof}
Consider the following algorithm {\bf Consis}$(S,n,d)$ that returns a tree of depth $d$ of minimum size that is consistent with $S$. Algorithm {\bf Consis}$(S,n,d)$ proceeds as follows. If some constant function $\xi\in\{0,1\}$ is consistent with $S$, then the decision tree is a leaf labeled with $\xi$. If no $\xi\in\{0,1\}$ is consistent with $S$ and $d=0$, then the algorithm returns NONE. Otherwise, assume that the algorithm knows that the variable at the root of the tree is $x_i$. Let   $S_{|x_i\gets 0}:=\{(a,b)\in S:a_i=0\}$ denote the sample of the left sub-tree, and similarly, let $S_{|x_i\gets 1}:=\{(a,b)\in S:a_i=1\}$  denote the samples of the right sub-tree. Then, recursively calculate $T_{i,0}\gets${\bf Consis}$(S_{|x_i\gets 0},n-1,d-1)$ and $T_{i,1}\gets${\bf Consis}$(S_{|x_i\gets 1},n-1,d-1)$. The algorithm runs the above for every $i$, and returns a tree with a root labeled with $x_j$ with the minimum $size(T_{j,1})+size(T_{j,2})$. To handle the case when the algorithm returns NONE, we assume that $size($NONE$)=\infty$.

If the number of calls to {\bf Consis} is $T(n,d)$ then, since we run the algorithm for every variable $x_i$, we have $T(s,d)\le n (T(n-1,d-1)+T(n-1,d-1))\le 2n T(n-1,d-1)$. Splitting $S$ at each call takes time at most $r$. This gives time complexity $O(r(2n)^d).$ To get the complexity in the Lemma we do the following.

We run {\bf Consis} backward from the leaves of the recursion calls to the root. The leaf of the recursion calls are {\bf Consis}$(S_{|x_{i_1}\gets \xi_1,\ldots,x_{i_d}\gets \xi_d},n-d,0)$ for all $1\le i_1<i_2<\cdots<i_d\le n$ and $\xi_1,\ldots,\xi_d\in\{0,1\}$ for which $S_{|x_{i_1}\gets \xi_1,\ldots,x_{i_d}\gets \xi_d}\not=\O$. The number of calls ${n\choose d}\min(r,2^d)$ and at level $i$ is~${n\choose i}\min(r,2^i)$. 

This finds the tree in $\DT_d^s$ with a minimum size. To find the tree in $\DT_d^s$ with minimum depth, we run {\bf Consis}$(S,n,j)$ for $j\in [d]$.
\end{proof}

\ignore{
\begin{lemma}\label{LearnDTds}
There is an $(ns)^{O(d)}r$ time algorithm that takes as an input, a sample $S\subseteq \{0,1\}^n\times \{0,1\}$ of size $r$ and returns a depth-$d$ size-$s$ decision tree $f$ that is consistent with $S$ if one exists, and NONE otherwise. 

The algorithm can return the smallest size $f\in \DT_d^s$ or the smallest depth one. 
\end{lemma}
\begin{proof}
Consider the following algorithm {\bf Consis}$(S,s,d)$. If some constant function $\xi\in\{0,1\}$ is consistent with $S$ then the decision tree is a leaf labeled with $\xi$. If no $\xi\in\{0,1\}$ is consistent with $S$ and $s=1$ or $d=0$ then the algorithm returns NONE. Otherwise, the algorithm guesses the variable $x_i$ at the root of the decision tree and the sizes $s_1\le 2^{d-1}$ and $s_2=s-s_1\le 2^{d-1}$ of the left and right depth-$(d-1)$ sub-trees, respectively. The sample of the left sub-tree is $S_{|x_i\gets 0}:=\{(a,b)\in S:a_i=0\}$ and of the right sub-tree is $S_{|x_i\gets 1}:=\{(a,b)\in S:a_i=1\}$. Then it recursively runs {\bf Consis}$(S_{|x_i\gets 0},s_1,d-1)$ and {\bf Consis}$(S_{|x_i\gets 1},s_2,d-1)$. 

If the number of possible guesses is $T(s,d)$ then, since the number of possible guesses at the root of the tree is $n(s-1)$, we have $T(s,d)\le ns (\max_{s_1}T(s_1,d-1)+T(s-s_1,d-1))\le 2ns T(s,d-1)$. Splitting $S$ after each guess  takes time at most $r$. This implies the result. 

To find the tree in $\DT_d^s$ with minimum size we run {\bf Consis}$(S,j,d)$ for $j\in [s]$ and to find the tree in $\DT_d^s$ with minimum depth we run {\bf Consis}$(S,s,j)$ for $j\in [d]$.
\end{proof}
}

A {\it random sample of $f$ according to a distribution $\D$} is a sample $S$ where $\{x|(x,y)\in S\}$ are independent and identically distributed according to $\D$ and $y=f(x)$. 

The next lemma is Occam's Razor Lemma,~\cite{BlumerEHW87}.
\begin{lemma}\label{occam} {\rm \cite{BlumerEHW87}} (Occam's Razor). Let $C$ and $H\supseteq C$ be any finite classes of Boolean functions. A consistent hypothesis  $h\in H$ with a set of random samples $S$ of $f\in C$ according to a distribution $\D$ of size 
$$|S|=O\left(\frac{1}{\epsilon}\left(\log|H|+\log\frac{1}{\delta}\right)\right)$$ is, with probability at least $1-\delta$, $\epsilon$-close to $f$ according to the distribution $\D$. 
\end{lemma}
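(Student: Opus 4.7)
The plan is to carry out the standard union-bound argument that underlies every Occam-style generalization bound. First I would fix attention on the set of ``bad'' hypotheses $B \subseteq H$, defined as those $h \in H$ with $\Pr_{\mathcal{D}}[h(x) \neq f(x)] > \epsilon$. Our goal is to show that, with probability at least $1-\delta$ over the draw of $S$, no bad hypothesis is consistent with $S$; then any consistent $h$ returned must lie outside $B$ and hence be $\epsilon$-close to $f$ with respect to $\mathcal{D}$.

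Next I would bound the probability of failure for a single bad hypothesis. Fix $h \in B$. Since the examples $(x,f(x))$ in $S$ are i.i.d.\ with $x \sim \mathcal{D}$, and each example independently satisfies $h(x) = f(x)$ with probability at most $1-\epsilon$, the probability that $h$ is consistent with all $m = |S|$ examples is at most $(1-\epsilon)^m \le e^{-\epsilon m}$. A union bound over the at most $|H|$ members of $B$ gives
\[
\Pr_S[\exists h \in B : h \text{ consistent with } S] \le |H|\, e^{-\epsilon m}.
\]

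To finish, I would choose $m$ so that $|H|\,e^{-\epsilon m} \le \delta$, i.e.\
\[
m \;\ge\; \frac{1}{\epsilon}\!\left(\ln|H| + \ln\frac{1}{\delta}\right),
\]
which matches the claimed sample complexity up to the absorbed constant in the $O(\cdot)$. Since $f \in C \subseteq H$ is itself consistent with $S$, the hypothesis space restricted to consistent hypotheses is nonempty, and with probability at least $1-\delta$ every such hypothesis is $\epsilon$-close to $f$ under $\mathcal{D}$. There is no real obstacle here: the only mild care needed is that the sample size bound must be derived from $(1-\epsilon)^m \le e^{-\epsilon m}$ rather than a looser estimate, so that the $\log|H|$ and $\log(1/\delta)$ terms appear additively inside the $O(1/\epsilon)$ factor as stated.
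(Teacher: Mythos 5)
Your proof is correct and is exactly the standard union-bound argument that the paper implicitly invokes by citing \cite{BlumerEHW87}: bound the probability that any fixed $\epsilon$-bad $h\in H$ survives $m$ i.i.d.\ examples by $(1-\epsilon)^m\le e^{-\epsilon m}$, union-bound over $|H|$, and solve for $m$. The paper states this lemma without proof, so there is nothing to compare beyond noting that your argument is the canonical one.
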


We now can prove the first result.

\begin{theorem}\label{DTds} The class $\DT_d^s$ is  distribution-free properly learnable from $q=O((1/\epsilon)(s\log n+\log(1/\delta)))$ random example queries in time $qn^{O(d)}$.  
\end{theorem}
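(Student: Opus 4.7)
The plan is to combine Occam's razor (Lemma~\ref{occam}) with the consistency algorithm of Lemma~\ref{LearnDTds}. Since $\DT_d^s$ is a finite class with small cardinality and there is a polynomial-time algorithm that returns a consistent hypothesis in $\DT_d^s$ whenever one exists, proper learning follows immediately from drawing a sample of the Occam size and feeding it to that algorithm.

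First I would bound $|\DT_d^s|$. A depth-$d$ size-$s$ decision tree is determined by (i) its rooted binary tree shape, of which there are at most $C_{s-1}\le 4^s$ (Catalan), (ii) a choice of a variable $x_i$ from among $n$ for each of the at most $s-1$ internal nodes, and (iii) a $\{0,1\}$ label at each of the $s$ leaves. Hence $|\DT_d^s|\le 4^s\cdot n^{s-1}\cdot 2^s \le 2^{O(s\log n)}$, and so $\log|\DT_d^s|=O(s\log n)$. Plugging $H=C=\DT_d^s$ into Lemma~\ref{occam}, any hypothesis in $\DT_d^s$ that is consistent with a random sample $S$ of size
\[
q=O\!\left(\tfrac{1}{\epsilon}\left(\log|\DT_d^s|+\log\tfrac{1}{\delta}\right)\right)=O\!\left(\tfrac{1}{\epsilon}\left(s\log n+\log\tfrac{1}{\delta}\right)\right)
\]
is, with probability at least $1-\delta$, $\epsilon$-close to $f$ with respect to $\D$.

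Second, I would draw such a sample $S$ of size $q$ via random example queries and run the algorithm of Lemma~\ref{LearnDTds} on $S$. Since the target $f\in\DT_d^s$ is itself consistent with $S$, the algorithm will not return $\mathrm{NONE}$; it outputs some $h\in\DT_d^s$ consistent with $S$, which by the previous paragraph is $\epsilon$-close to $f$. The running time is
\[
t(n,d,q)=O\!\left(q\sum_{i=0}^{d}\binom{n}{i}\min(q,2^i)\right)=O\!\left(q^2\sum_{i=0}^{d}\binom{n}{i}\right)=O(q^2\, n^{d}).
\]
Since $q\le \mathrm{poly}(n,s,1/\epsilon,\log(1/\delta))$ is polynomial in $n$ (absorbing $s,1/\epsilon,\log(1/\delta)$ via the standard assumption they are at most $\mathrm{poly}(n)$, else the bounds are trivial), we have $q\cdot n^{d}\le n^{O(d)}$, so the total time is $q\cdot n^{O(d)}$, as claimed. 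The extra $O(qn)$ for reading the sample is absorbed into the same bound.

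There is no real obstacle here: the statement is essentially an immediate corollary of the two preceding lemmas. The only small subtlety is packaging the time bound from Lemma~\ref{LearnDTds} into the form $q\, n^{O(d)}$, which reduces to observing that the extra factor of $q$ is polynomial in $n$ and can be absorbed into the $n^{O(d)}$.
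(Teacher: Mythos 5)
Your approach matches the paper's: bound $|\DT_d^s|\le|\DT^s|\le n^{O(s)}$, apply Occam's razor (Lemma~\ref{occam}) to get the sample size, and run the consistency algorithm of Lemma~\ref{LearnDTds}. One tightening worth noting: in the running-time bound you used $\min(q,2^i)\le q$, which gives $O(q^2 n^d)$ and forces the extra assumption that $q$ is polynomial in $n$; using $\min(q,2^i)\le 2^i$ instead yields $t(n,d,q)\le O\bigl(q\sum_{i=0}^d\binom{n}{i}2^i\bigr)\le O\bigl(q(d+1)(2n)^d\bigr)=q\,n^{O(d)}$ directly, with no side condition.
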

\begin{proof}
The proof follows from Lemma~\ref{LearnDTds}, Lemma~\ref{occam} and the fact that $|\DT_d^s|\le |\DT^s|\le  n^{O(s)}$.
\end{proof}
By Lemma~\ref{Reduction} with $k=s$ and Theorem~\ref{DTds}, we have:
\begin{theorem}\label{DTds2} The class $\DT_d^s$ is  distribution-free properly learnable using $q=O((s/\epsilon)\log (s/\delta)))$ random example queries and $m=O(s\log n)$ black box queries in time $s^{O(d)}(1/\epsilon)\log(1/\delta)+O((q+m)n)$.
\end{theorem}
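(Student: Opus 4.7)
The plan is to derive Theorem~\ref{DTds2} as a one-step application of Lemma~\ref{Reduction} to the learner from Theorem~\ref{DTds}, so almost everything reduces to checking that the hypotheses of the reduction apply and then substituting parameters. First I would verify the two structural conditions that Lemma~\ref{Reduction} requires of $\DT_d^s$. The junta bound $\DT_d^s\subseteq s$-$\junta$ is immediate: a size-$s$ decision tree has at most $s$ leaves, hence at most $s-1$ internal nodes, so at most $s-1$ variables can label nodes. Closure under zero projection is also easy: substituting $x_i\gets 0$ in a depth-$d$ size-$s$ tree amounts to replacing every node labeled $x_i$ by its left child, which yields another decision tree of size at most $s$ and depth at most $d$; the same holds for $x_i\gets 1$, so the class is in fact closed under zero-one projection. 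Hence Lemma~\ref{Reduction} applies with $C=H=\DT_d^s$ and $k=s$.

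Next I would instantiate the parameters of the algorithm $\mathcal{A}$ supplied by Theorem~\ref{DTds}: $M(n,\epsilon,\delta)=0$, $Q(n,\epsilon,\delta)=O((1/\epsilon)(s\log n+\log(1/\delta)))$, and $T(n,\epsilon,\delta)=Q(n,\epsilon,\delta)\cdot n^{O(d)}$. Plugging $(n,\epsilon,\delta)\mapsto (s,\epsilon/2,\delta/2)$ into Lemma~\ref{Reduction} gives
\[
Q'=O\!\left(\tfrac{1}{\epsilon}\bigl(s\log s+\log\tfrac{1}{\delta}\bigr)\right)+O\!\left(\tfrac{s}{\epsilon}\log\tfrac{s}{\delta}\right)=O\!\left(\tfrac{s}{\epsilon}\log\tfrac{s}{\delta}\right)
\]
random example queries and $M'=0+O(s\log n)=O(s\log n)$ black box queries. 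For the running time, $T(s,\epsilon/2,\delta/2)=O((s/\epsilon)\log(s/\delta))\cdot s^{O(d)}$; since $\log s$ and polynomial-in-$s$ factors can be absorbed into $s^{O(d)}$, this collapses to $s^{O(d)}(1/\epsilon)\log(1/\delta)$, and adding the $(M'+Q')n$ overhead for posing the examples yields the stated $s^{O(d)}(1/\epsilon)\log(1/\delta)+O((q+m)n)$ time bound. Since $\mathcal{A}$ is proper, its output lies in $\DT_d^s$, and Lemma~\ref{Reduction} preserves the hypothesis class, so the final learner is proper as well.

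The argument is purely compositional, so there is no serious obstacle: the only bookkeeping step is reconciling the $\log(s/\delta)$ and $\log(1/\delta)$ factors and confirming that the $\log s$ terms can be absorbed into $s^{O(d)}$. If any step were delicate it would be the verification of closure under zero projection, but as indicated above this is immediate from the tree-structural definition of $\DT_d^s$.
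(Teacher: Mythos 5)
Your proposal is correct and takes essentially the same approach as the paper, which proves Theorem~\ref{DTds2} in one line by invoking Lemma~\ref{Reduction} with $k=s$ and Theorem~\ref{DTds}; you have simply spelled out the verification of the junta and closure hypotheses and the parameter substitution that the paper leaves implicit.
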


Since $\DT_d=\DT_d^{2^d}$, Theorem~\ref{DTds} and~\ref{DTds2} give learning algorithms for $\DT_d$ (with $s=2^d$). 

We now show how to change the learning algorithm to (randomized) exact learning algorithm.
\begin{theorem}\label{EwE}
The class $\DT_d^s$ is exactly properly learnable from $m=O(s(2^d\log (s/\delta)+\log n))$ black box queries in time $s^{O(d)}\log(1/\delta)+O(mn)$.
\end{theorem}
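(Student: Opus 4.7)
The plan is to reduce exact learning to PAC learning under the uniform distribution, using the fact that two depth-$d$ decision trees cannot disagree on too small a fraction of inputs.

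\textbf{Key lemma.} First I would prove the following structural statement: if $f,h\in\DT_d$ and $f\not\equiv h$, then $\Pr_U[f(x)\neq h(x)]\ge 2^{-d}$. The idea is to view $f$ and $h$ as multilinear polynomials over $\mathbb{F}_2$ of degree at most $d$ (as described in Subsection~\ref{OR}), so that $f\oplus h$ is a nonzero multilinear polynomial of degree at most~$d$. I would then establish, by induction on $n$, that any nonzero multilinear polynomial $p$ of degree at most $d$ satisfies $|\{x\in\{0,1\}^n:p(x)=1\}|\ge 2^{n-d}$. The induction splits $p(x)=x_1 p_1(x_2,\dots,x_n)+p_0(x_2,\dots,x_n)$: if $p_1=0$ we recurse on $p_0$ (same degree, one fewer variable); if $p_1\neq 0$ then $p_1$ has degree at most $d-1$, so by induction it is $1$ on at least $2^{n-1-(d-1)}=2^{n-d}$ assignments to $x_2,\dots,x_n$, and on exactly half of the corresponding $x_1$-extensions $p$ equals~$1$, contributing $\ge 2^{n-d}$ satisfying inputs.

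\textbf{Reduction to PAC.} Given the key lemma, I would invoke Theorem~\ref{DTds2} in the uniform-distribution setting, choosing the accuracy parameter $\epsilon_0=1/2^{d+1}$ and confidence $\delta$. Since the algorithm of Theorem~\ref{DTds2} is distribution-free, it works when $\D=U$; the $q=O((s/\epsilon_0)\log(s/\delta))=O(s\,2^d\log(s/\delta))$ random example queries under $U$ can each be simulated by drawing $x\in\{0,1\}^n$ uniformly and asking the black box for $f(x)$, costing one black box query and $O(n)$ time each. Together with the $O(s\log n)$ black box queries that Theorem~\ref{DTds2} already makes, the total number of black box queries is
\[
m=O\!\left(s\,2^d\log(s/\delta)+s\log n\right)=O\!\left(s\bigl(2^d\log(s/\delta)+\log n\bigr)\right),
\]
as required.

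\textbf{Putting it together.} With probability at least $1-\delta$, the learner returns $h\in\DT_d^s$ with $\Pr_U[h(x)\neq f(x)]\le \epsilon_0<2^{-d}$; by the key lemma applied to $f,h\in\DT_d$, this forces $h\equiv f$, giving exact learning. The time bound from Theorem~\ref{DTds2} becomes $s^{O(d)}(1/\epsilon_0)\log(1/\delta)+O((q+m)n)=s^{O(d)}2^d\log(1/\delta)+O(mn)=s^{O(d)}\log(1/\delta)+O(mn)$, since $2^d\le s^{O(d)}$. I expect the only non-routine step to be the polynomial-degree lower bound on the disagreement probability; the rest is bookkeeping and a direct application of Theorem~\ref{DTds2}.
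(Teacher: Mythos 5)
Your proposal is correct and follows essentially the same route as the paper: run the distribution-free proper learner of Theorem~\ref{DTds2} under the uniform distribution with accuracy below $2^{-d}$, simulate the random example queries with black-box queries, and conclude exact identification from the fact that two distinct depth-$d$ decision trees (degree-$d$ $\mathbb{F}_2$-polynomials) disagree on a $\ge 2^{-d}$ fraction of inputs. The only difference is that you supply a full inductive proof of the $2^{-d}$ disagreement bound, which the paper invokes as known (cf.\ Lemma~\ref{BasE} and the remark in Lemma~\ref{Red2}).
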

\begin{proof}
We run the algorithm in Theorem~\ref{DTds2} with $\epsilon=1/2^{d+2}$ and uniform distribution. Since the learning is proper and for every two decision trees $T_1$ and $T_2$, $\Pr[T_1\not=T_2]\ge 1/2^d$, the result follows.
\end{proof}

We now give a deterministic exact proper learning algorithm for $\DT_d^s$. We  first define $(n,d)$-universal set. 

An $(n,d)$-universal set is a set $U\subseteq \{0,1\}^n$ such that for every $1\le i_1<\cdots<i_d\le n$ and every $\xi_1,\xi_2,\ldots,\xi_d\in\{0,1\}^n$, there is an assignment $a\in U$ such that $a_{i_j}=\xi_j$ for every $j\in[d]$. An $(n,d)$-universal set of size $2^{d+o(d)}\log n$ can be constructed in deterministic time\footnote{Here and in the proof of Theorem~\ref{DTdsnew}, $o(d)$ is $\log^2d$.} $2^{O(d)}n\log n$,~\cite{NaorSS95}. 

Using $(n,d)$-universal sets we show:
\begin{theorem}\label{DTdsnew}
The class $\DT_d^s$ is deterministically exactly properly learnable from $m=2^{d+o(d)}\log n$ black box queries in time $s^{O(d)}+O(mn)$.
\end{theorem}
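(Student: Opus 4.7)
My plan is to adapt the deterministic $(n,k)$-universal-set construction of Naor, Schulman and Srinivasan into a teaching set for $\DT_d^s$, and then invoke the consistency routine of Lemma~\ref{LearnDTds}. Concretely, I would construct an $(n,2d)$-universal set $U\subseteq\{0,1\}^n$ of size $m=2^{2d+o(d)}\log n$ in deterministic time $2^{O(d)}n\log n$, query the oracle at every $a\in U$ to form the sample $S=\{(a,f(a)):a\in U\}$, and then return a minimum-size depth-$d$ tree $g$ consistent with $S$ by running {\bf Consis}$(S,n,d)$ from Lemma~\ref{LearnDTds}. Since $f\in\DT_d^s$ is itself consistent with $S$, a depth-$d$ tree of size at most $s$ exists, so {\bf Consis} does not return NONE.

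The heart of the proof is a teaching-set statement: any two distinct $f,g\in\DT_d$ must disagree on some element of $U$. Pick $b\in\{0,1\}^n$ with $f(b)\neq g(b)$, let $I_f$ be the set of variables read by $b$'s root-to-leaf path in $f$, and let $I_g$ be the analogous set in $g$; each has size at most $d$, so $K:=I_f\cup I_g$ has size at most $2d$. By the $(n,2d)$-universality of $U$, some $a\in U$ satisfies $a|_K=b|_K$. Because $a$ matches $b$ on every variable that either tree queries along its path, $a$ reaches the same leaves as $b$ in both $f$ and $g$, whence $f(a)=f(b)\neq g(b)=g(a)$. Hence $f$ is the unique depth-$d$ function consistent with $S$, and the returned $g$ equals $f$.

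The remaining issue is the time bound $s^{O(d)}+O(mn)$. Building $U$ and asking the queries cost $O(mn)$, but invoking {\bf Consis} directly gives $n^{O(d)}$ rather than $s^{O(d)}$ via Lemma~\ref{LearnDTds}. To bridge this, I would first spend an extra $O(mn)$ scan to identify the at most $s$ relevant variables of $f$: augmenting $U$ (without asymptotic blow-up in its size, e.g.\ by combining the universal set with a small perfect-hash family) so that every relevant variable of any depth-$d$ tree is witnessed by a pair $(a,a\oplus e_i)\subseteq U$ with $f(a)\neq f(a\oplus e_i)$, every relevant $x_i$ is detected in linear time in $|U|$. After restricting the sample to the relevant set $R$ of size at most $s$, the call {\bf Consis}$(S|_R,|R|,d)$ runs in time $O\!\left(m\sum_{i=0}^d\binom{|R|}{i}\,2^i\right)=s^{O(d)}$ by Lemma~\ref{LearnDTds}, and by the teaching-set uniqueness its output is $f$.

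The main obstacle is this last step: the teaching-set lemma itself is short and purely combinatorial, but turning the $n^{O(d)}$ bound of the naive consistency finder into $s^{O(d)}$ requires a deterministic mechanism that isolates the (at most) $s$ relevant variables from the universal-set queries. Engineering $U$, or a small augmentation of it, so that every relevant variable is witnessed by a single-coordinate flip inside $U$ is the natural route, and verifying that this can be done while keeping the query complexity at $2^{2d+o(d)}\log n$ and the preprocessing within the $O(mn)$ budget is where the technical work lies.
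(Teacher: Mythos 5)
Your teaching-set argument is correct and is morally the same as the paper's. The paper proves uniqueness by observing that $g:=f_1+f_2$ is a depth-$2d$ decision tree, so if $g\not\equiv 0$ there is a restriction of at most $2d$ coordinates forcing $g=1$, which the $(n,2d)$-universal set then hits; you instead take $K=I_f\cup I_g$, the at most $2d$ variables read along $b$'s two root-to-leaf paths, and argue $a\in U$ with $a|_K=b|_K$ reproduces both leaf values. These are equivalent, and either yields the same $(n,2d)$-universal set of size $2^{2d+o(d)}\log n$ and the observation that {\bf Consis} then recovers the unique consistent function.

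The genuine gap is in the last paragraph, and you identify it yourself: to get $s^{O(d)}$ rather than $n^{O(d)}$ from Lemma~\ref{LearnDTds}, one must first restrict attention to the at most $s$ relevant variables, and you propose doing this by ``augmenting $U$ (without asymptotic blow-up) so that every relevant variable of any depth-$d$ tree is witnessed by a pair $(a,a\oplus e_i)\subseteq U$,'' but you do not construct such an augmentation or bound its size, and it is not obvious that one exists within the $2^{2d+o(d)}\log n$ query and $O(mn)$ time budget. The problem is real: an $(n,2d)$-universal set guarantees, for every relevant $x_i$ of $f\in\DT_d$, the existence of some $a$ with $f(a)\neq f(a\oplus e_i)$, but it does not guarantee that both $a$ and $a\oplus e_i$ lie in $U$, and naively closing $U$ under all single-coordinate flips multiplies its size by $n$. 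The paper sidesteps this entirely: it invokes the deterministic relevant-variable finder of Bshouty and Haddad-Zaknoon~\cite{BshoutyH19}, which finds the relevant variables of a depth-$d$ decision tree in $2^{2d+o(d)}\log n$ queries and $2^{O(d)}n\log n$ time, and only then runs {\bf Consis} over the restricted variable set $V$. Without citing such a routine or supplying the construction you gesture at, your proof does not establish the stated $s^{O(d)}+O(mn)$ time bound; the rest is sound. (As a side note, your query count $2^{2d+o(d)}\log n$ agrees with the paper's proof and the table in Figure~\ref{TABLE1}; the $2^{d+o(d)}\log n$ in the theorem header appears to be a typo.)
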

\begin{proof}
We construct an $(n,2d)$-universal set $U$ of size $2^{2d+o(d)}\log n$ in time $2^{O(d)}n\log n$, find the relevant variables $V$ and then use the algorithm in Lemma~\ref{LearnDTds} to construct a decision tree in $\DT^s_d$ over the variables $V$ that is consistent with $S=\{(a,f(a))|a\in U\}$ in time $s^{O(d)}$. To finish the proof we show: (1) how to find the relevant variables and (2) we show that there is only one function in $\DT_d^s$ that is consistent with $S$.

For (1), in~\cite{BshoutyH19}, Bshouty and Haddad-Zaknoon give a deterministic algorithm that finds the relevant variables in deterministic time $2^{O(d)}n\log n$ using $2^{2d+o(d)}\log n$ queries. 

To prove (2). For the sake of contradiction, assume that there are two distinct functions $f_1,f_d\in \DT_d^s$ that are consistent with~$S$. 
\ignore{To this end, let, for the contrary, $f_1,f_d\in \DT_d^s$ be two distinct functions that are consistent with~$S$.} Since $0\not=g:=f_1+f_2\in \DT_{2d}$, there are $1\le i_1<\cdots<i_{2d}\le n$ and  $\xi_1,\xi_2,\ldots,\xi_{2d}\in\{0,1\}^n$ such that $g_{|x_{i_1}\gets \xi_1,\ldots,x_{i_{2d}}\gets \xi_{2d}}=1$. Since $U$ is an $(n,2d)$-universal set, there is an assignment  $a\in U$ such that $a_{i_j}=\xi_j$ for all $j\in[2d]$. Therefore, $g(a)=1$ which implies that $f_1(a)\not=f_2(a)$. Therefore, one of the functions $f_1$ or $f_2$ is not consistent with $S$ which is a contradiction.

\end{proof}

The following Theorem is a simple exhaustive search with Lemma~\ref{Reduction} 
\begin{theorem}\label{DTSS} The class $\DT^s$ is  distribution-free properly learnable  using $q=O((s/\epsilon)\log (s/\delta)))$ random example queries and $m=O(s\log n)$ black box queries in time $s^{O(s)}+O((m+q)n)$.
\end{theorem}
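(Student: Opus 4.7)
The plan is to exploit the fact that any $T \in \DT^s$ has at most $s$ relevant variables (one per internal node, and a size-$s$ tree has $s-1$ internal nodes), so that $\DT^s \subseteq s\text{-}\junta$, a class which is closed under zero projection. This will let me invoke Lemma~\ref{Reduction} with $k=s$ to reduce to the case of learning $\DT^s$ over $s$ variables only, eliminating the $n$-dependence in both the number of random examples and the running time.

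First I would bound the number of distinct functions in $\DT^s$ over $s$ variables. The shape of a rooted binary tree with $s$ leaves is counted by the Catalan number $C_{s-1} \le 4^s$, the internal nodes can each be labeled with one of at most $s$ variables, and each of the $s$ leaves can be labeled $0$ or $1$. Hence the total number of size-$s$ decision trees on $s$ variables is at most $4^s \cdot s^{s} \cdot 2^s = s^{O(s)}$. Call this class $H_s$.

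Next, the learning procedure over $s$ variables is a pure exhaustive search: draw a random sample $S$ of size $q_0 = O((1/\epsilon)(\log|H_s|+\log(1/\delta))) = O((s/\epsilon)\log(s/\delta))$ according to the unknown distribution $\D$, then enumerate every $T \in H_s$ and return the first one that is consistent with $S$. Since the target $f \in \DT^s$ is itself such a tree, a consistent hypothesis always exists, and Lemma~\ref{occam} (Occam's Razor) guarantees it is $(\epsilon/2)$-close to $f$ with probability at least $1-\delta/2$. The running time for this phase is $|H_s| \cdot |S| \cdot O(s) = s^{O(s)}$, since evaluating a depth-at-most-$s$ decision tree on a single example takes $O(s)$ time.

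Finally, plug this base learner into Lemma~\ref{Reduction} with $k=s$, accuracy $\epsilon/2$ and confidence $\delta/2$: the lemma turns our $Q(s,\epsilon/2,\delta/2) = O((s/\epsilon)\log(s/\delta))$ random example queries and $M(s,\epsilon/2,\delta/2) = 0$ black box queries into $q = O((s/\epsilon)\log(s/\delta))$ random examples and $m = O(s\log n)$ black box queries total, while the overall time becomes $T(s,\epsilon/2,\delta/2) + O((m+q)n) = s^{O(s)} + O((m+q)n)$, matching the statement. There is no real obstacle here; the whole argument is a bookkeeping exercise once one observes $\DT^s \subseteq s\text{-}\junta$ and counts the trees, which is why the table entry credits this to item~\ref{R1} plus folklore.
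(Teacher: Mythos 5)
Your proposal matches the paper's proof: both observe $\DT^s\subseteq s$-$\junta$, bound the class size (the paper writes $|\DT^s|\le n^{O(s)}$ and lets Lemma~\ref{Reduction} substitute $k=s$ for $n$, giving the same $s^{O(s)}$ you count directly), do an exhaustive search for a consistent hypothesis, invoke Occam's Razor (Lemma~\ref{occam}), and then compose with Lemma~\ref{Reduction}. The bookkeeping of $q$, $m$, and the running time is also the same.
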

\begin{proof}
Consider the algorithm that takes a sample of size $(1/\epsilon)(s\log n+\log(1/\delta))$ and exhaustively search for a consistent decision tree of size $s$. The number of such decision trees is $|\DT^s|=n^{O(s)}$. Now, we use Lemma~\ref{Reduction} and get the result.
\end{proof}

The following is the only proper learning that is known for $\DT^s$ and runs in time less than $s^{O(s)}$
\begin{theorem}\label{DTSE}\cite{GuijarroLR99,MehtaR02}
The classes $\DT^s$ and $\DT_d^s$ are exactly properly learnable from $q=O((s/\epsilon)\log(s/\delta))$ random example queries and $m=O(2^{s} +s\log n)$ black box queries in time $2^{O(s)} + O((m+q)n)$.
\end{theorem}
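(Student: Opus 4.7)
The plan is to combine the classical truth-table learner of Guijarro, Lav\'in and Raghavan~\cite{GuijarroLR99} with the junta reduction of Lemma~\ref{Reduction}. First I would establish a base learner $\mathcal{A}$ for the restricted setting $n=s$. In this case there are only $2^s$ possible inputs, so $\mathcal{A}$ simply queries $f$ on every point of $\{0,1\}^s$ to obtain its entire truth table, and then invokes the GLR subroutine to extract a size-$s$ decision tree consistent with that truth table; for the class $\DT_d^s$ one can alternatively apply Lemma~\ref{LearnDTds} with $r=2^s$, whose running time becomes
\[
O\!\left(2^s\sum_{i=0}^d \binom{s}{i}\min(2^s,2^i)\right)=2^{O(s)}.
\]
Either way $\mathcal{A}$ uses $M(s,\epsilon,\delta)=2^s$ black box queries, $Q(s,\epsilon,\delta)=0$ random example queries, and runs in time $T(s,\epsilon,\delta)=2^{O(s)}$, all independent of $\epsilon$ and $\delta$.

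Second, since $\DT^s$ and $\DT_d^s$ are subclasses of $s$-$\junta$ that are closed under zero projection, I would apply Lemma~\ref{Reduction} with $k=s$ to lift $\mathcal{A}$ from $s$ variables to $n$ variables. The lemma, plugging in the parameters of $\mathcal{A}$, immediately yields an algorithm $\mathcal{B}$ with
\[
m = 2^s + O(s\log n), \qquad q = O\!\left(\frac{s}{\epsilon}\log\frac{s}{\delta}\right),
\]
and running time $T(s,\epsilon/2,\delta/2)+(m+q)n = 2^{O(s)}+O((m+q)n)$, which matches the stated bounds.

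Correctness is straightforward: closure under zero projection guarantees that the restricted function $f_{|X\gets 0}$ produced by \textbf{Find-Close} still lies in $\DT^s$ (respectively $\DT_d^s$), so the GLR subroutine (or Lemma~\ref{LearnDTds}) is guaranteed to return a hypothesis in the correct class that is consistent with the full truth table obtained in the base case. I anticipate no substantive obstacle; the only point worth double-checking is that GLR's dynamic program indeed runs in $2^{O(s)}$ time on the $2^s$-sized truth table of an $s$-variable function in $\DT^s$, which is the content of~\cite{GuijarroLR99}.
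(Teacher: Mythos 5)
Your proposal is correct and follows essentially the same route as the paper: a base learner that reads the entire truth table of the $k$-variable restricted function and solves the exact consistency problem, lifted to $n$ variables via Lemma~\ref{Reduction}. The only divergence is cosmetic: for $\DT_d^s$ the paper invokes the truth-table algorithm of~\cite{MehtaR02} to build a size-$s$ depth-$d$ tree, while you invoke Lemma~\ref{LearnDTds} with $r=2^s$ over $s$ variables; both give $2^{O(s)}$ time, and indeed Lemma~\ref{LearnDTds} is the more self-contained choice since it is already proved in the paper. Two minor presentational caveats worth being aware of: first, for $\DT^s$ one should note that GLR returns the \emph{minimum-size} tree, and since $f_{|X\gets 0}\in\DT^s$ that minimum is at most $s$, which is what guarantees the output lands in $\DT^s$; second, the word ``exactly'' in the theorem statement is at odds with the $\epsilon$-dependence in $q$ (and with the ``D'' entry in Figure~\ref{TABLE1}) -- the output is only $\epsilon$-close because \textbf{Find-Close} restricts to $f_{|X\gets 0}$, which is $\epsilon/2$-close rather than equal to $f$; your proof correctly delivers that PAC-style guarantee, matching the actual construction.
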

\begin{proof}
A decision tree of minimum size for any function $f$  can be constructed from its truth-table ($2^n$ black box queries) in time $2^{O(n)}$. See for example~\cite{GuijarroLR99}. Now by Lemma~\ref{Reduction}, the result follows for $\DT^s$. 

To prove the result for $\DT_d^s$, use the algorithm in~\cite{MehtaR02} that constructs a size-$s$ depth-$d$ decision tree from a truth-table in time $2^{O(n)}$. By Lemma~\ref{Reduction}, the result follows.
\end{proof}

Our target is to improve the query complexity of the above two results.To that end,  we first prove:
\begin{theorem}\label{DTdsTPre} The class $\DT^s$ is  distribution-free properly learnable using $q=O((1/\epsilon)(s\log n+\log (1/\delta)))$ random example queries in time $O(q^22^n)$ time.
\end{theorem}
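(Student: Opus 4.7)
The plan is to prove the theorem by an Occam's razor argument paired with an explicit algorithm for computing a minimum-size decision tree consistent with a sample. First, I would draw a random sample $S$ of $f$ of size $q = O((1/\epsilon)(s\log n+\log(1/\delta)))$, and then compute and output a minimum-size decision tree consistent with $S$. Since $f\in\DT^s$, the target itself is consistent with $S$, so the minimum-size consistent tree has size at most $s$ and therefore lies in $\DT^s$. Because $|\DT^s|\le n^{O(s)}$, Lemma~\ref{occam} then guarantees that this output is $\epsilon$-close to $f$ with probability at least $1-\delta$, yielding the sample-complexity part of the statement.

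For the algorithmic part, I would compute the minimum-size decision tree consistent with $S$ by a dynamic program over cubes (partial assignments). For each cube $\alpha$, let $T(\alpha)$ denote the minimum number of leaves of a decision tree that correctly labels the sub-sample $S_\alpha:=S\cap\mathrm{cube}(\alpha)$. Then $T(\alpha)=1$ whenever $S_\alpha$ is empty or monochromatic, and otherwise
\[
T(\alpha) \;=\; \min_{i:\, x_i\text{ free in }\alpha}\, T(\alpha\cup\{x_i\gets 0\}) + T(\alpha\cup\{x_i\gets 1\}).
\]
Processing cubes in order of decreasing number of free variables (singletons first, root $\emptyset$ last), I would fill in $T$ bottom-up. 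The value $T(\emptyset)$ is the desired minimum size, and the corresponding tree is recovered by remembering the minimizing coordinate at each step.

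The key to the runtime bound is the observation that $T(\alpha)=1$ for every cube $\alpha$ with $S_\alpha=\emptyset$, so the DP need only store entries for cubes that contain at least one sample point. A fixed sample $a\in\{0,1\}^n$ lies in exactly $2^n$ cubes (one per choice of which coordinates to freeze), so the number of active cubes is at most $q\cdot 2^n$. For each active cube, checking monochromaticity of $S_\alpha$ takes $O(q)$ time, and trying each possible split and looking up the two children's $T$-values takes $O(q+n)$ time using a hash table keyed by partial assignments. The total time is therefore $O((q+n)\cdot q\cdot 2^n) = O(q^2 2^n)$, where we absorb the $n$ term into $q$ in the operative regime (which is the one that matters after the Lemma~\ref{Reduction} step needed for Theorem~\ref{DTdsT}).

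The main obstacle I anticipate is the runtime analysis rather than correctness: the naive DP ranges over $3^n$ partial assignments and is too slow, and the essential combinatorial step is to prune down to the $q\cdot 2^n$ cubes that actually meet $S$ and to ensure that look-ups for "inactive" children can be resolved in $O(1)$ as the hard-coded value $1$.
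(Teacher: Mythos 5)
Your proposal matches the paper's own: the paper obtains Theorem~\ref{DTdsTPre} by invoking Lemma~\ref{LearnDTds} with $d=s$ (using $\DT^s=\DT_s^s$) — a bottom-up dynamic program over partial assignments whose active sub-problems are bounded by $r\cdot 2^n$ via exactly the ``each sample lies in $2^n$ cubes'' count — and then applying Occam's razor (Lemma~\ref{occam}) with $|\DT^s|\le n^{O(s)}$. Your inline derivation of the cube DP is the same argument as the proof of Lemma~\ref{LearnDTds}, just written directly rather than cited.
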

\begin{proof}
By Lemma~\ref{LearnDTds}, Lemma~\ref{occam}, the fact that $\DT^s=\DT_s^s$ and the fact that $|\DT^s|\le  n^{O(s)}$ we get an algorithm that properly learns $\DT^s$ from $q=O((1/\epsilon)(s\log n+\log(1/\delta)))$ random example queries and time $O(q^22^n)$. 
\end{proof}

Using Theorem~\ref{DTdsTPre} with Lemma~\ref{Reduction}, we get the following result: 
\begin{theorem}\label{DTdsT} The class $\DT^s$ is  distribution-free properly learnable using $q=O((s/\epsilon)\log (s/\delta)))$ random example queries and $m=O(s\log n)$ black box queries in time $\tilde O(2^s)(1/\epsilon^2)\log^2(1/\delta)+O((q+m)n)$ .
\end{theorem}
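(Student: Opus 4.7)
The plan is to get Theorem~\ref{DTdsT} essentially as a direct composition of Theorem~\ref{DTdsTPre} with the junta-reduction of Lemma~\ref{Reduction}. To apply the reduction, I first need to check its two structural hypotheses for $C=H=\DT^s$: (a) $\DT^s \subseteq s$-$\junta$, which is noted right after the definition of $k$-$\junta$ in Subsection~\ref{N}; and (b) closure under zero projection, which is immediate because substituting $x_i\gets 0$ in a size-$s$ decision tree only prunes nodes and so yields a size-$\le s$ decision tree. So Lemma~\ref{Reduction} applies with $k=s$.

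Next I would instantiate the reduction with the base learner $\mathcal{A}$ from Theorem~\ref{DTdsTPre}, which has $M(n,\epsilon,\delta)=0$, $Q(n,\epsilon,\delta)=O((1/\epsilon)(s\log n+\log(1/\delta)))$, and $T(n,\epsilon,\delta)=O(Q(n,\epsilon,\delta)^2\cdot 2^n)$. Substituting $(n,\epsilon,\delta)\mapsto (s,\epsilon/2,\delta/2)$ in Lemma~\ref{Reduction} gives the black-box query count $M'=O(s\log n)$ and the random-example query count
\[
Q' \;=\; O\!\left(\tfrac{1}{\epsilon}(s\log s+\log(1/\delta))\right) + O\!\left(\tfrac{s}{\epsilon}\log\tfrac{s}{\delta}\right) \;=\; O\!\left(\tfrac{s}{\epsilon}\log\tfrac{s}{\delta}\right),
\]
matching the stated $q$ and $m$ in the theorem.

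For the time bound, the reduction contributes $T(s,\epsilon/2,\delta/2)+(M'+Q')n$. Using $Q(s,\epsilon/2,\delta/2)^2 = O\!\left(\tfrac{1}{\epsilon^2}(s\log s+\log(1/\delta))^2\right)$ and $(a+b)^2=O(a^2+b^2)$, this becomes
\[
O\!\left(\tfrac{2^s}{\epsilon^2}\bigl(s^2\log^2 s + \log^2(1/\delta)\bigr)\right) + O\bigl((q+m)n\bigr).
\]
Absorbing the polynomial-in-$s$ factor into $\tilde O(2^s)$ gives the claimed $\tilde O(2^s)\cdot(1/\epsilon^2)\log^2(1/\delta)+O((q+m)n)$ running time. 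Since there is no real obstacle beyond checking these hypotheses and bookkeeping the parameters, the whole argument is essentially one invocation of Lemma~\ref{Reduction}; the only mild subtlety is to note that $\DT^s$ itself (rather than a strict superclass $H$) serves as the output class, so the reduction preserves properness.
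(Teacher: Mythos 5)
Your proposal is correct and follows exactly the route the paper intends: Theorem~\ref{DTdsT} is stated as an immediate consequence of composing the exhaustive learner of Theorem~\ref{DTdsTPre} with the junta reduction of Lemma~\ref{Reduction} at $k=s$, which is precisely your argument. Your verification of the hypotheses (containment in $s$-$\junta$, closure under zero projection, properness preserved since $C=H=\DT^s$) and the parameter bookkeeping match what the paper leaves implicit.
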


Moreover, for exact learning, the following two results can be achieved.
\begin{theorem}\label{DTSE2}
The class $\DT^s$ and $\DT_d^s$ are exactly properly learnable from $m=O(2^{s}\log(1/\delta) +s\log n)$ black box queries in time $2^{O(s)} + O(mn)$.
\end{theorem}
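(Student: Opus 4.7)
The plan is a two-phase algorithm that (i) discovers the at most $s$ relevant variables $V$ of $f$, and (ii) reconstructs the minimum decision tree from the $2^{|V|}$-entry truth table of $f_{|\overline{V}\gets 0}$, where $\overline{X}$ denotes the complement of the variable set $X$ in $\{x_1,\dots,x_n\}$.

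\emph{Phase 1 (finding $V$).} Draw a single pool $S$ of $N=O(2^s\log(1/\delta))$ uniformly random inputs from $\{0,1\}^n$ and query $f$ on each. Initialize $V'=\emptyset$ and maintain the values of $f(a)$ for every $a\in\{0,1\}^n$ with $a_{\overline{V'}}=0$; these $2^{|V'|}$ values are built up incrementally as $V'$ grows, using at most $2^{|V|}\le 2^s$ fresh queries overall. Iteratively, as long as there exists $u\in S$ with $f(u)\ne f(u_{|\overline{V'}\gets 0})$, binary-search between $u$ and $u_{|\overline{V'}\gets 0}$ to extract a new relevant variable, add it to $V'$, and extend the stored table. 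Correctness: if $V'\subsetneq V$, then $f$ depends nontrivially on $V\setminus V'$, so there exists $\alpha\in\{0,1\}^V$ with $\alpha_{V\setminus V'}\ne 0$ at which zeroing $V\setminus V'$ flips $f$'s value; a uniform random $u$ satisfies $u_V=\alpha$ with probability $2^{-|V|}\ge 2^{-s}$. Union-bounding over the $\le 2^s$ proper subsets $V'\subsetneq V$, the choice $N=\Theta(2^s\log(1/\delta))$ gives probability $\ge 1-\delta/2$ that every such $V'$ is refuted by some $u\in S$, so Phase~1 terminates with $V'=V$. The $\le s$ binary searches add $O(s\log n)$ queries.

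\emph{Phase 2 (reconstructing the tree).} Apply the minimum-size algorithm of Guijarro, Lavin and Ruiz~\cite{GuijarroLR99} (respectively the algorithm of Mehta and Raghavan~\cite{MehtaR02} for $\DT_d^s$) to the truth table of $f_{|\overline{V}\gets 0}$ viewed as a function on the $\le s$ variables of $V$. This runs in time $2^{O(s)}$ and outputs a tree in $\DT^s$ (resp.\ $\DT_d^s$) exactly equivalent to $f$, because distinct elements of $\DT^s$ restricted to their (at most $s$) relevant variables already differ on the truth table.

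\emph{Main obstacle.} The delicate point is achieving the bound $O(2^s\log(1/\delta)+s\log n)$ rather than the $O(s\cdot 2^s\log(s/\delta)+s\log n)$ that a naive chain of Lemma~\ref{Reduction} with Find-Close at accuracy $\epsilon=1/2^{s+1}$ would give: the improvement comes from reusing a single fixed sample pool across all $\le s$ discovery rounds---rather than resetting the consecutive-success counter each time a new relevant variable is found---which collapses the factor $s$ into the union bound over the $\le 2^s$ proper subsets $V'$, absorbing it into $\log(1/\delta)$ under the standard convention that $\log(1/\delta)\ge\log 2^s$.
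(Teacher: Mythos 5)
Your high-level decomposition (find the relevant variables $V$, then run a minimum-tree reconstruction on the truth table of $f$ restricted to $V$) matches the structure of the paper's argument, and Phase~2 is fine. The paper, however, gets Phase~1 by citing the $s$-junta learner of~\cite{BshoutyC18} as a black box with query complexity $O(2^s\log(1/\delta)+s\log n)$, whereas you try to re-derive this bound from scratch with a fixed pool and binary search --- and that is where there is a real gap.

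Concretely, in Phase~1 you union-bound over the $\le 2^s$ proper subsets $V'\subsetneq V$, using that each is refuted by a single uniform sample with probability $\ge 2^{-s}$. For the union bound to yield failure probability $\delta/2$ you need
\[
2^s\left(1-2^{-s}\right)^N \le \frac{\delta}{2},
\]
which forces $N = \Omega\bigl(2^s(s+\log(1/\delta))\bigr)$, i.e.\ an extra additive $s\cdot 2^s$ term compared to the stated $O(2^s\log(1/\delta))$. Your final paragraph acknowledges this and tries to suppress it ``under the standard convention that $\log(1/\delta)\ge\log 2^s$,'' but that is not a standard convention and the theorem does not assume it; in the paper's own downstream uses (e.g.\ the testers built via Lemma~\ref{ThTriv2v2}, which instantiate $\delta=1/24$), $\log(1/\delta)=O(1)$ and the $s\cdot 2^s$ term would dominate. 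So as written your Phase~1 proves only $m = O(s\cdot 2^s + 2^s\log(1/\delta) + s\log n)$, which is strictly weaker than the claim. To actually reach the stated bound you would need the structural fact that the $\le 2^s$ events ``$V'$ is not refuted'' are far from independent, or replace the naive pool-and-refute scheme with the specific algorithm of~\cite{BshoutyC18} --- which is precisely what the paper does. A cautionary example is $f=\mathrm{AND}$ of $|V|=s$ variables: every proper $V'$ is refuted by the single point $u_V=\mathbf{1}$, so the bad events coincide and no factor $s$ is lost; your union bound is pessimistic there, but it is not clear from your argument that this is always the case, and you have not shown it.

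Two minor remarks: the incremental table you keep in Phase~1 costs $\sum_{i\le|V|}2^i = O(2^{|V|})\le O(2^s)$ extra queries, so your ``at most $2^{|V|}$ fresh queries overall'' should really say $O(2^{|V|})$; and in Phase~2, for $\DT_d^s$ what you actually need is a minimum-size decision tree \emph{among those of depth at most $d$} consistent with the truth table, which is the algorithm of~\cite{MehtaR02} (or the algorithm in Lemma~\ref{LearnDTds}), as you correctly note. Neither of these affects the theorem's statement, but the Phase~1 query count does.
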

\begin{proof}
The learning algorithm for $s$-$\junta$ in~\cite{BshoutyC18} learns the decision tree from $s$-$\junta$ using $m=O(2^{s}\log(1/\delta) +s\log n)$ black box queries in time $O(mn)$. The algorithm also finds the relevant variables. Then, the minimal size depth-$d$ decision tree can be constructed from its truth-table in $2^s$ queries and time $2^{O(s)}$~\cite{GuijarroLR99}.
\end{proof}

The following is a simple and trivial reduction for decision tree learning:
\begin{lemma}\label{Red2} We have,
\begin{enumerate}
    \item If $\DT_d$ is uniform-distribution properly learnable from $Q(\epsilon,\delta)$ random example queries and $M(\epsilon,\delta)$ black box queries in time $T(\epsilon,\delta)$, then it is exactly properly learnable from $Q(1/2^{d+1},\delta)$ random example queries and $M(1/2^{d+1},\delta)$ black box queries in time $T(1/2^{d+1},\delta)$. 
    \item If $\DT_d$ is distribution-free properly learnable from $Q(\epsilon,\delta)$ random example queries and $M(\epsilon,\delta)$ black box queries in time $T(\epsilon,\delta)$, then it is exactly properly learnable from $Q(1/2^{d+1},\delta)+M(1/2^{d+1},\delta)$ black box queries in time $T(1/2^{d+1},\delta)$. 
\end{enumerate}
\end{lemma}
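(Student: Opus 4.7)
The plan is to reduce exact learning to approximate uniform-distribution learning by running the given learner with an accuracy parameter small enough that the only depth-$d$ tree $\epsilon$-close to $f$ under the uniform distribution is $f$ itself.

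The crucial ingredient is the distance bound: if $f,g\in\DT_d$ are distinct as Boolean functions, then $\Pr_U[f(x)\neq g(x)]\ge 2^{-d}$. This is precisely the fact invoked in the proof of Theorem~\ref{EwE}. To verify it, I would expand any decision tree as a multilinear polynomial over $F_2$ using $\overline{x_i}=1+x_i$ and the recursion $f=x_i(f_1+f_0)+f_0$; an easy induction on $d$ shows that every $T\in\DT_d$ has degree at most $d$. Hence $f\oplus g$ is a nonzero polynomial of degree at most $d$, i.e., a nonzero codeword of the Reed--Muller code $\mathrm{RM}(d,n)$, which has minimum distance $2^{n-d}$. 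Therefore $\Pr_U[f\neq g]\ge 2^{-d}$.

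For Part~1, I would run the given uniform-distribution proper learner for $\DT_d$ with accuracy $\epsilon=1/2^{d+1}$ and confidence $\delta$. With probability at least $1-\delta$ its output $h\in\DT_d$ satisfies $\Pr_U[h\neq f]\le 1/2^{d+1}<1/2^d$. If $h\neq f$ then the distance bound would force $\Pr_U[h\neq f]\ge 1/2^d$, a contradiction, so $h=f$. The stated query and time bounds are inherited by substituting $\epsilon=1/2^{d+1}$ into $Q$, $M$, $T$.

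For Part~2, the only subtlety is that the exact learning model does not come with an ambient distribution from which to draw random examples. I would simulate each random example required by the distribution-free learner by drawing $x\in\{0,1\}^n$ uniformly at random (using internal randomness) and issuing one black box query for $f(x)$. Feeding these simulated examples together with the original black box queries into the distribution-free learner, invoked with target distribution $U$ and accuracy $1/2^{d+1}$, produces $h\in\DT_d$ with $\Pr_U[h\neq f]\le 1/2^{d+1}$, so once again the distance bound forces $h=f$. The total black box count is $Q(1/2^{d+1},\delta)+M(1/2^{d+1},\delta)$, as claimed. There is no real obstacle here beyond the $2^{-d}$ minimum distance property of $\DT_d$, which is a standard consequence of the Reed--Muller interpretation and is already used implicitly in Theorem~\ref{EwE}.
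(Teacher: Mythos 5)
Your proposal is correct and follows essentially the same approach as the paper: run the learner with $\epsilon = 1/2^{d+1}$, invoke the fact that two distinct depth-$d$ decision trees disagree with probability at least $2^{-d}$ under the uniform distribution (the degree-$d$ polynomial / Reed--Muller minimum-distance argument), and for Part~2 simulate each random example query by a black-box query at a uniformly random point. The only difference is that you spell out the Reed--Muller justification in a bit more detail than the paper does.
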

\begin{proof}
We just run the algorithm that uniform-distribution properly learns $\DT_d$ with $\epsilon=1/2^{d+1}$. Since the algorithm, with probability at least $1-\delta$, outputs a depth-$d$ decision tree $T$, and since $\Pr[T\not=f]\le 1/2^{d+1}$, we get that $T$ is equivalent to $f$. This follows from the fact that a depth-$d$ decision tree is a multivariate polynomial over the binary field of degree $d$k and for any two such non-equivalent polynomials $g$ and $h$, we have $\Pr[g\not=h]=\Pr[g+h\not=0]\ge 1/2^d.$

If $\DT_d$ is distribution-free properly learnable, then in particular, it is uniform-distribution properly learnable. Therefore, we can run the algorithm with $\epsilon=1/2^{d+1}$ and simulate each random example queries by asking black box query to a random and uniform $x$.
\end{proof}

We can use Theorem~\ref{DTds2} with Lemma~\ref{Red2} to get the following result on exact proper learning for $\DT_d(=\DT_d^{2^d})$. This also follows from Theorem~\ref{EwE}.
See also~\cite{BshoutyH19}, Corollary 10.\footnote{The algorithm in~\cite{BshoutyH19} returns a function equivalent to $f$ in a Fourier Representation of size $d4^d$. A depth-$d$ decision tree can be constructed from $f$ in time $2^{O(d^2)}$.}
\begin{corollary}\label{DT} The class $\DT_d$ is exactly properly learnable using $m=\tilde O(2^{2d})\log({1}/{\delta})+O(2^d\log n)$ black box queries in time $2^{O(d^2)}\log(1/\delta)+O(mn)$. 
\end{corollary}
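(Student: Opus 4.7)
The plan is to invoke the identity $\DT_d = \DT_d^{2^d}$ and then combine the distribution-free proper learning algorithm of Theorem~\ref{DTds2} with the randomized reduction from approximate to exact learning given by Lemma~\ref{Red2}. The alternative route, a direct application of Theorem~\ref{EwE} with $s=2^d$, should give the same bounds; I would write the proof via the first route for definiteness and mention the second.

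Concretely, first I would substitute $s = 2^d$ into the guarantee of Theorem~\ref{DTds2}, obtaining a distribution-free proper learner for $\DT_d$ that uses
\[
Q(\epsilon,\delta) = O\!\left(\frac{2^d}{\epsilon}\log\frac{2^d}{\delta}\right), \qquad M(\epsilon,\delta) = O(2^d \log n),
\]
random example and black-box queries, respectively, and runs in time $(2^d)^{O(d)}\cdot (1/\epsilon)\log(1/\delta) + O((Q+M)n) = 2^{O(d^2)}(1/\epsilon)\log(1/\delta) + O((Q+M)n)$. Next I would apply part~(2) of Lemma~\ref{Red2}, which converts a distribution-free proper learner into an exact proper learner by setting $\epsilon = 1/2^{d+1}$ and simulating each random example query via a black-box query on a uniformly random input. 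This yields an exact proper learner whose total black-box query count is
\[
Q(1/2^{d+1},\delta) + M(1/2^{d+1},\delta) = O\!\left(2^{2d}\cdot d\cdot \log(1/\delta)\right) + O(2^d \log n) = \tilde O(2^{2d})\log(1/\delta) + O(2^d \log n),
\]
and whose running time is $2^{O(d^2)}\log(1/\delta) + O(mn)$, matching the claim of the corollary.

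There is essentially no obstacle: the only thing to double-check is the $\epsilon = 1/2^{d+1}$ calibration used by Lemma~\ref{Red2}, which works because any two inequivalent depth-$d$ decision trees disagree on the uniform distribution with probability at least $1/2^d$, so an $\epsilon$-close proper hypothesis with $\epsilon < 1/2^d$ must coincide with the target. I would close the proof by noting the alternative derivation through Theorem~\ref{EwE}: plugging $s = 2^d$ directly into $m = O(s(2^d \log(s/\delta) + \log n))$ and time $s^{O(d)}\log(1/\delta) + O(mn)$ yields the same bounds without having to pass through Lemma~\ref{Red2}, and I would cite~\cite{BshoutyH19}, Corollary~10, as a third route in the existing literature.
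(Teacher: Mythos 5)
Your proposal is correct and follows exactly the route the paper indicates: substitute $s=2^d$ into Theorem~\ref{DTds2}, apply part~(2) of Lemma~\ref{Red2} with $\epsilon=1/2^{d+1}$, and observe that Theorem~\ref{EwE} (and \cite{BshoutyH19}, Corollary~10) gives the same bounds. The only cosmetic point is that $Q(1/2^{d+1},\delta)$ is more precisely $O\big(2^{2d}(d+\log(1/\delta))\big)$ rather than $O\big(2^{2d}\cdot d\cdot\log(1/\delta)\big)$, but both are $\tilde O(2^{2d})\log(1/\delta)$, so the claimed bound stands.
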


\subsection{Uniform Distribution Proper Learning}\label{UD}
In this section, we give results for learning $\DT^s$ under the uniform distribution.

The following result is due to~\cite{MehtaR02}.
\begin{lemma}{\rm \cite{MehtaR02}}. The class $\DT^s$ is (uniform-distribution) properly learnable using $O((1/\epsilon^2)(s\log n+\log(1/\delta))$ random example queries in time $(s/\epsilon)^{O(\log n) }$ $\log(1/\delta)$. 
\end{lemma}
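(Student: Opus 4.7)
The plan is to combine Occam's Razor with the depth-bounded consistency algorithm of Lemma~\ref{LearnDTds}, exploiting the fact that under the uniform distribution any size-$s$ decision tree is well-approximated by one of depth $O(\log(s/\epsilon))$.

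First I would observe that for any $f \in \DT^s$ and $d = \lceil \log(4s/\epsilon)\rceil$, the ``truncation'' $\tilde f$ obtained from $f$ by replacing every subtree whose root sits at depth exactly $d$ by a constant leaf is in $\DT_d^s$, and $\Pr_U[\tilde f \ne f] \le s \cdot 2^{-d} \le \epsilon/4$, because there are at most $s$ such truncated leaves and each accounts for uniform mass at most $2^{-d}$. Thus $\DT_d^s$ contains an $(\epsilon/4)$-approximator of $f$, which is all we need for proper learning in $\DT^s \supseteq \DT_d^s$.

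Next I would draw a uniform random sample $S$ of size $m = \Theta((1/\epsilon^2)(s\log n + \log(1/\delta)))$ labeled by $f$. Since $|\DT_d^s|\le |\DT^s|\le n^{O(s)}$, a standard uniform-convergence argument (as in Lemma~\ref{occam}, using additive Chernoff/Hoeffding bounds for finite classes) guarantees that with probability at least $1-\delta/2$, every $h \in \DT_d^s$ whose empirical error on $S$ is at most $\epsilon/2$ has true error at most $\epsilon$; simultaneously, by Chernoff, $\tilde f$ itself has empirical error at most $\epsilon/2$ on $S$ with probability at least $1-\delta/4$.

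Then I would run the algorithm of Lemma~\ref{LearnDTds}, modified in the obvious way so that at a recursion leaf where no single $\xi\in\{0,1\}$ is consistent with $S_{|\pi}$ it returns the minority count of $S_{|\pi}$, and the recursion propagates these counts upward to select the depth-$d$ size-$s$ tree minimizing empirical disagreement with $S$. The running time is $O\bigl(m\sum_{i=0}^d\binom{n}{i}\min(m,2^i)\bigr) = n^{O(d)}m$, and since $n^{d} = n^{O(\log(s/\epsilon))} = (s/\epsilon)^{O(\log n)}$, the bound $(s/\epsilon)^{O(\log n)}$ follows. The output $h$ has empirical error at most that of $\tilde f$, hence $\le \epsilon/2$, and by the uniform-convergence event $h$ is $\epsilon$-close to $f$. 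A standard boosting step (run the procedure $O(\log(1/\delta))$ times with constant failure probability and select the best hypothesis by empirical error on a fresh sample) contributes the final $\log(1/\delta)$ factor.

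The main obstacle is the conversion of Lemma~\ref{LearnDTds} from an exact-consistency algorithm into a minimum-empirical-error algorithm; however, because the recursion in Lemma~\ref{LearnDTds} is already a clean dynamic program over all partial assignments of length $\le d$, replacing ``is some $\xi$ consistent with $S_{|\pi}$?'' by ``how many disagreements does the best constant give on $S_{|\pi}$?'' and summing these counts along the recursion preserves both correctness and runtime, so this modification is routine.
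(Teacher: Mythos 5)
Your reconstruction is sound, but note that the paper never proves this lemma at all — it is stated as a citation to \cite{MehtaR02}, and the paper's own contribution is the improved Theorem~\ref{DTsU} that follows it. The natural comparison is therefore with that theorem. Both arguments start from the same depth-truncation observation: under the uniform distribution, any $f\in\DT^s$ is $(\epsilon/4)$-close to its truncation $\tilde f$ at depth $d=O(\log(s/\epsilon))$, with $\tilde f\in\DT_d^s$. You then run empirical-risk minimization over $\DT_d^s$ and invoke an agnostic (Hoeffding/Chernoff) uniform-convergence bound over the $n^{O(s)}$-size class, which is exactly where the $1/\epsilon^2$ sample complexity comes from and which hands you the $\log(1/\delta)$ dependence in a single shot. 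The paper's Theorem~\ref{DTsU} instead keeps the cleaner \emph{consistency} route of Lemma~\ref{occam}: it argues that with constant probability a sample of size $m=O(s\log n/\epsilon)$ contains no example that descends past depth $2\log m$ in the target, so $\tilde f$ is \emph{exactly} consistent with $S$ and Occam's Razor applies with $H=\DT^s$; the constant success probability is then amplified by repetition. Your ERM route buys the stated $1/\epsilon^2$ bound with direct $\delta$-dependence and avoids the amplification step; the consistency route buys a $1/\epsilon$ sample bound at the cost of a multiplicative $\log(1/\delta)$.

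Two small gaps worth closing. First, your concluding amplification step is redundant: since $\log(1/\delta)$ already appears additively in your sample size, the single run succeeds with probability $1-\delta$, and the $\log(1/\delta)$ in the stated running time is simply inherited from $m$ through the $n^{O(d)}\cdot\mathrm{poly}(m)$ time of the DP, not from repetition. Second, ``replace the consistency check by a disagreement count'' undersells what is needed to turn Lemma~\ref{LearnDTds} into a minimum-error solver \emph{subject to the size budget $s$}, which you do need for the output to land in $\DT^s$. The original DP is indexed only by partial assignments and returns minimum size at fixed depth; for ERM over $\DT_d^s$ the state must additionally carry a size budget $s'\le s$, and each internal node must optimize over splits $s_0+s_1=s'$ as well as over the splitting variable. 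This is standard and only multiplies the time by $\mathrm{poly}(s)$, which is absorbed into $(s/\epsilon)^{O(\log n)}$, but it is a genuine change to the recursion rather than a drop-in replacement and should be spelled out.
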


We will prove the following
\begin{theorem}\label{DTsU} The class $\DT^s$ is (uniform-distribution) properly learnable using $O((1/\epsilon)(s\log n\log(1/\delta)))$ random example queries in time $((s/\epsilon)\log n)^{O(\log n)}$ $\log(1/\delta)$.   
\end{theorem}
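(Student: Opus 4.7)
The plan is to reduce uniform-distribution proper learning of $\DT^s$ to uniform-distribution proper learning of $\DT_{d^*}^s$ for a shallow depth $d^* = O(\log(s/\epsilon))$, and then invoke the consistency algorithm of Lemma~\ref{LearnDTds} together with Occam's razor (Lemma~\ref{occam}).

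First, I would establish the following structural fact: under the uniform distribution, every $f \in \DT^s$ is $(\epsilon/4)$-close to some $g \in \DT_{d^*}^s$, where $d^* = \lceil \log_2(4s/\epsilon)\rceil$. To see this, truncate $f$ at depth $d^*$ by replacing every subtree rooted at a depth-$d^*$ node with the constant leaf $0$. The resulting $g$ has depth at most $d^*$ and at most $s$ leaves. The inputs on which $f$ and $g$ can disagree lie in the union of at most $s$ disjoint subcubes, each defined by a root-to-depth-$d^*$ path in $f$ and each of measure $2^{-d^*}\le\epsilon/(4s)$. A union bound gives $\Pr_U[f\ne g]\le\epsilon/4$.

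Second, the algorithm draws $r=\Theta\bigl((s\log n \cdot \log(1/\delta))/\epsilon\bigr)$ uniform random examples of $f$ and runs the procedure \textbf{Consis}$(S,n,d^*)$ from Lemma~\ref{LearnDTds} to find a decision tree $h\in\DT_{d^*}^s$ consistent with $S$ (choosing, say, the smallest such $h$). For correctness, I would argue: (i) With high probability, no example in $S$ falls into one of the at most $s$ depth-$d^*$ subcubes that were truncated in forming $g$, so $g$ itself is consistent with $S$ and hence \textbf{Consis} returns some $h\in\DT_{d^*}^s$. (ii) Since $\DT_{d^*}^s\subseteq\DT^s$ and $|\DT^s|\le n^{O(s)}$, Occam's razor (Lemma~\ref{occam}) applied with target class $C=\DT^s$ and hypothesis class $H=\DT^s$ implies that any hypothesis consistent with $r=\Omega((s\log n+\log(1/\delta))/\epsilon)$ random samples is $\epsilon$-close to $f$ with probability $\ge 1-\delta/2$. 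The $\log(1/\delta)$ multiplicative factor in the stated sample bound $r$ comes from standard confidence-amplification (running a core weak learner, with constant success probability, $O(\log(1/\delta))$ times and validating against a held-out sample), used to keep $d^*$ as small as $O(\log(s/\epsilon))$ rather than inflating it with $\log(1/\delta)$ terms.

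For the running time, plugging $d=d^*=O(\log(s/\epsilon))$ and the above $r$ into the bound of Lemma~\ref{LearnDTds} yields
\[
O\!\left(r\sum_{i=0}^{d^*}\binom{n}{i}\min(r,2^i)\right)=O\bigl(r\cdot n^{d^*}\bigr)=O\bigl(r\cdot (s/\epsilon)^{\log n}\bigr),
\]
using the identity $n^{\log(s/\epsilon)}=(s/\epsilon)^{\log n}$. Absorbing the polynomial factors into the exponent gives $((s/\epsilon)\log n)^{O(\log n)}\log(1/\delta)$, matching the theorem statement.

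The main obstacle is the delicate balancing of the depth $d^*$, the sample size $r$, and the failure probability $\delta$: naively setting $d^*$ just large enough to union-bound the bad event ``some sample falls into a truncated subcube'' forces $d^*$ to grow like $\log(sr/\delta)$, which would introduce a $\log(1/\delta)$ term into the exponent of $n$ in the running time. The fix is the amplification step above, which decouples the $\log(1/\delta)$ dependence from $d^*$ and keeps it as a multiplicative factor in $r$ (and hence in the running time), as required.
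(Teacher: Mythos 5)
Your overall plan — truncate to a shallow tree, invoke Lemma~\ref{LearnDTds} plus Occam's razor, amplify for confidence — is the same as the paper's, but step~(i) contains a genuine error that the paper's proof specifically avoids.

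You fix $d^*=\lceil\log_2(4s/\epsilon)\rceil$ \emph{independently} of the sample size and claim that, with high probability, no example falls into a truncated subcube, so the truncation $g$ of $f$ at depth $d^*$ is consistent with $S$. This is false. Each example lands in the truncated region with probability up to $s\cdot 2^{-d^*}\approx\epsilon/4$, so over $r=\Theta\bigl((s\log n\cdot\log(1/\delta))/\epsilon\bigr)$ samples the expected number of samples in the truncated region is $\Theta(s\log n\log(1/\delta))\gg 1$; with overwhelming probability $g$ is \emph{not} consistent with $S$, and \textbf{Consis}$(S,n,d^*)$ may simply return NONE. You identify this tension yourself, but the fix you offer (amplify from a smaller base sample) does not resolve it: even with $r_0=\Theta((s\log n)/\epsilon)$ samples the expected number of bad examples is $\Theta(s\log n)\gg 1$.

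The paper's proof sidesteps this by making the depth bound a function of the sample size: it takes a sample of size $m=O((s/\epsilon)\log n)$ and searches for a consistent hypothesis in $\DT^s_{2\log m}$, not $\DT^s_{O(\log(s/\epsilon))}$. For a single uniform example, the probability that its path in the target $f$ reaches depth $\ge 2\log m$ is at most $s\cdot 2^{-2\log m}=s/m^2$; a union bound over all $m$ examples gives failure probability at most $s/m\le 1/4$. When this good event holds, the depth-$(2\log m)$ truncation of $f$ \emph{is} consistent with $S$, so \textbf{Consis} succeeds, and Occam's razor applies with $H=\DT^s$. The extra $\log\log n$ additive term in the depth ($2\log m=O(\log((s/\epsilon)\log n))$ rather than $O(\log(s/\epsilon))$) is exactly what makes the union bound go through, and it is the reason the running time in the theorem reads $((s/\epsilon)\log n)^{O(\log n)}$ rather than $(s/\epsilon)^{O(\log n)}$, the bound your (incorrect) argument would have yielded.
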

\begin{proof}
We prove that for any $\epsilon$, there is a learning algorithm that makes $m=O((1/\epsilon)s\log n)$ random example queries with success probability at least $1/2$. Then, running this algorithm $\log(2/\delta)$ times with accuracy $\epsilon/10$, with probability at least $1-\delta/2$, one of the hypothesis is $\epsilon/10$-close to the target. Then, to find this hypothesis with success probability $1-\delta/2$, we need another $O((1/\epsilon)\log(1/\delta))$ random example queries. 

To this end, consider the algorithm that takes a sample of size $m=O((1/\epsilon)s\log n)$ and finds a consistent hypothesis in $\DT^s_{2\log m}$ using Lemma~\ref{LearnDTds}. The probability that the algorithm fails to find a decision tree of size $s$ and depth $2\log m$ is at most the probability that one of the $m$ examples reaches depth $2\log m$ in the target decision tree. This probability is at most $m2^{-2\log m}=1/m\le 1/4$. Therefore, with probability at least $3/4$, the algorithm outputs a consistent hypothesis of size $s$, and by Lemma~\ref{occam} (with $H=\DT^s$) the result follows. By Lemma~\ref{LearnDTds}, such a decision tree can be constructed in time $n^{O(\log m)}m=m^{O(\log n)}.$ This implies the result. 
\end{proof}

The following theorem can be derived from Theorem~\ref{DTsU} and Lemma~\ref{Reduction}:

\begin{theorem}\label{DTsU2} The class $\DT^s$ is (uniform-distribution) properly learnable using $q=\tilde O((s/\epsilon)\log s$ $\log(1/\delta)))$  random example queries and $m=O(s\log n)$ black box queries in time $(s/\epsilon)^{O(\log s)}\log(1/\delta)+O((m+q)n)$.    
\end{theorem}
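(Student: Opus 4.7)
The plan is to derive this theorem as a direct application of Lemma~\ref{Reduction} to the learning algorithm given by Theorem~\ref{DTsU}. First I would verify the two structural hypotheses needed by the reduction: (i) $\DT^s \subseteq s\text{-}\junta$, which is noted in Subsection~\ref{N} and follows since a size-$s$ decision tree has at most $s-1$ internal nodes and therefore at most $s-1$ relevant variables; and (ii) $\DT^s$ is closed under zero projection, since substituting $x_i \gets 0$ in any decision tree $T$ (conceptually, splicing in the left child wherever $x_i$ is queried) produces a decision tree of no larger size. Thus we may invoke Lemma~\ref{Reduction} with $C = H = \DT^s$ and $k = s$.

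Next I would instantiate the parameters of the reduction with the algorithm $\mathcal{A}$ from Theorem~\ref{DTsU}. That algorithm uses no black box queries ($M(n,\epsilon,\delta)=0$), uses $Q(n,\epsilon,\delta) = O((1/\epsilon)\, s \log n \log(1/\delta))$ random example queries, and runs in time $T(n,\epsilon,\delta) = ((s/\epsilon)\log n)^{O(\log n)}\log(1/\delta)$. Lemma~\ref{Reduction} then produces an algorithm $\mathcal{B}$ whose complexity is obtained by replacing $n$ with $k=s$ inside $M,Q,T$ (and halving $\epsilon,\delta$), plus the overhead of {\bf Find-Close}. This yields
\[
Q' = O\!\left(\tfrac{1}{\epsilon}\, s \log s \log \tfrac{1}{\delta}\right) + O\!\left(\tfrac{s}{\epsilon}\log \tfrac{s}{\delta}\right) = \tilde O\!\left(\tfrac{s}{\epsilon}\log s \log \tfrac{1}{\delta}\right),
\]
\[
M' = 0 + O(s \log n) = O(s \log n),
\]
and running time
\[
T' = \left(\tfrac{s}{\epsilon}\log s\right)^{O(\log s)}\log \tfrac{1}{\delta} + O((M'+Q')n),
\]
which matches the stated bounds exactly.

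There is essentially no obstacle here: the content is a bookkeeping combination of Theorem~\ref{DTsU} (which already replaces the factor $n^{O(\log n)}$ of prior work by $((s/\epsilon)\log n)^{O(\log n)}$ by observing that a consistent depth-$O(\log m)$ tree exists with high probability) and the junta reduction of Lemma~\ref{Reduction} (which strips the $n$-dependence of the time and query complexities at the cost of an additive $O(s\log n)$ black box queries to identify the $s$ relevant coordinates via {\bf Find-Close}). The only mildly subtle point is checking that the constants introduced by the $\epsilon/2,\delta/2$ rescaling inside Lemma~\ref{Reduction} are absorbed by the $O(\cdot)$ and $\tilde O(\cdot)$ notation, which they are.
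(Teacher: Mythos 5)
Your proof is correct and is exactly the argument the paper intends: the paper states that Theorem~\ref{DTsU2} "can be derived from Theorem~\ref{DTsU} and Lemma~\ref{Reduction}," and your instantiation with $C=H=\DT^s$, $k=s$, $M=0$, and the parameters from Theorem~\ref{DTsU} is that derivation. The only minor bookkeeping point is noting that $((s/\epsilon)\log s)^{O(\log s)} = (s/\epsilon)^{O(\log s)}$ since $\log s \le s/\epsilon$, which you implicitly absorb into the $O(\cdot)$ as the paper does.
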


\subsection{Non-proper Learning}\label{NpL}
In this subsection, we  give results in non-proper learning of size-$s$ decision trees from larger classes of decision trees. 

\ignore{The following result is from~\cite{EhrenfeuchtH89}. Theorem~\ref{DTs} gives another simple proof
}
Theorem~\ref{DTs} is a result from~\cite{EhrenfeuchtH89}, with a simpler proof.

\begin{theorem}\label{DTs}{\rm \cite{EhrenfeuchtH89}}. The class $\DT^s$ is distribution-free learnable from $\DT^{n^{O(\log s)}}$ with $$q=\frac{1}{\epsilon}\left(n^{O(\log s)}+\log\frac{1}{\delta}\right)$$ random example queries in time $q n^{O(\log s)}$.  
\end{theorem}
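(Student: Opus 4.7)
The plan is to combine a rank-based decomposition of decision trees with Occam's Razor (Lemma~\ref{occam}). Recall the standard notion of \emph{rank}: a leaf has rank $0$, and an internal node whose children have ranks $r_1\le r_2$ has rank $r_2$ if $r_1<r_2$ and $r_1+1$ if $r_1=r_2$. A standard counting shows that every size-$s$ decision tree has rank at most $\log_2 s$, and conversely a rank-$r$ decision tree over $n$ variables has at most $n^{O(r)}$ leaves (one may, at each internal node, charge at most one side for a rank increase, leading to $\binom{n}{\le r}\cdot 2^{O(r)}$ leaves). Hence every rank-$(\log s)$ tree sits inside $H=\DT^{n^{O(\log s)}}$, and conversely the target $f\in\DT^s$ has rank at most $\log s$.

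Given this, I would first draw a sample $S$ of size $q=O((1/\epsilon)(n^{O(\log s)}+\log(1/\delta)))$ according to $\D$. Then I would run a dynamic-programming procedure, in the spirit of Lemma~\ref{LearnDTds} but parameterized by rank rather than depth, to find some rank-$(\log s)$ decision tree $h$ consistent with $S$. The DP enumerates, for each root variable $x_i\in\{x_1,\ldots,x_n\}$ and each split $(r_1,r_2)$ of the rank budget compatible with the rank recursion, a consistent subtree of rank $r_1$ on $S_{|x_i\gets 0}$ and one of rank $r_2$ on $S_{|x_i\gets 1}$, memoizing on the ``choice sequence of variables along the path'' (the state is a partial assignment to at most $\log s$ variables together with the current rank bound, giving $\binom{n}{\le \log s}\cdot 2^{\log s}\cdot \log s= n^{O(\log s)}$ states, each processed in time $O(nq)$). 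Because $f$ itself has rank $\le\log s$, the restriction of the true tree to $S$ is a valid solution, so the DP succeeds; the total running time is $q\cdot n^{O(\log s)}$.

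The hypothesis $h$ produced has rank at most $\log s$ and thus lies in $H=\DT^{n^{O(\log s)}}$. Using $\log|H|\le n^{O(\log s)}\cdot\log n=n^{O(\log s)}$, Occam's Razor (Lemma~\ref{occam}) applied to our sample of size $q$ yields that $h$ is $\epsilon$-close to $f$ with probability at least $1-\delta$, matching the query and time bounds in the theorem.

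The main obstacle is the DP step: a naive recursion on sample subsets has state space exponential in $q$. The essential point is to memoize on the \emph{path} of variables and values feeding into a subproblem (there are only $n^{O(\log s)}$ such paths of length $\le\log s$) rather than on the induced subsample, and to exploit the rank recursion so that at least one child's rank strictly drops at each step, cutting the recursion depth to $O(\log s)$. Getting this accounting right is where the $n^{O(\log s)}$ running time, as opposed to something larger like $n^{O(\log s)}\cdot q^{O(\log s)}$, comes from.
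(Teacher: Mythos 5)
Your route is, at heart, the same Ehrenfeucht--Haussler decomposition that the paper uses: the paper phrases it as ``one of the two sub-trees of the root has size at most $\lfloor s/2\rfloor$,'' which is precisely the statement that a size-$s$ tree has rank at most $\log_2 s$, and both proofs then build a consistent hypothesis from $H=\DT^{n^{O(\log s)}}$ by a guided search and finish with Occam's Razor exactly as you do. So this is the same proof, just with the rank made explicit rather than left implicit in the size-halving.

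There is, however, a genuine flaw in the running-time accounting as you've written it. You memoize on ``a partial assignment to at most $\log s$ variables together with the current rank bound'' and claim $\binom{n}{\le\log s}\cdot 2^{\log s}\cdot\log s = n^{O(\log s)}$ states. But rank does \emph{not} bound depth: a rank-$r$ tree can have root-to-leaf paths of length up to $n$. A caterpillar tree of size $s$, for example, has rank $1$ but depth $s-1$. Consequently the subproblems encountered in the recursion are induced by restrictions $S_{|q}$ where $q$ can set far more than $\log s$ variables, so your proposed memoization table neither covers all states nor has the claimed size, and the phrase ``cutting the recursion depth to $O(\log s)$'' is not correct. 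What \emph{is} true, and what the paper's recurrence $\phi(n,s)\le 2n\,\phi(n-1,\lfloor s/2\rfloor)+\phi(n-1,s)$ captures, is that the \emph{recursion tree} has only $n^{O(\log s)}$ nodes even with no memoization at all: the only place the recursion branches over $O(n)$ choices is when it commits to the lower-rank (size $\le s/2$) side, and along any chain of recursive calls the size budget can halve at most $\log s$ times, while the ``larger-side'' continuation contributes only a single non-branching call. Unrolling the recurrence gives $\phi(n,s)\le n^{O(\log s)}$, and each call does $O(nq)$ work splitting the sample, for total time $q\,n^{O(\log s)}$. Replacing your memoization paragraph with this recursion-tree bound makes the argument correct and, at that point, identical to the paper's.
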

\begin{proof}
Given a sample $S$ that is consistent with a function $f$ in $\DT^s$,
we first show how to construct a decision tree of size $n^{O(\log s)}$ that is consistent with $S$. We use the fact that either the right or the left sub-tree of the root of $f$ has size at most $s':=\lfloor s/2\rfloor$. Therefore, the algorithm first guesses the variable at the root, and then guesses which sub-tree has size at most $s'$. Then, it recursively constructs the tree of size $s'$. When it succeeds, it continues to construct the other sub-tree. If the number of guesses is $\phi(n,s)$, then we have
$\phi(n,s)\le (2n)\phi(n-1,\lfloor s/2\rfloor)+\phi(n-1,s)$. This gives $\phi(n,s)=n^{O(\log s)}$ guesses.
The size of the tree $S(n,s)$ satisfies $S(n,s)\le S(n-1,\lfloor s/2\rfloor)+S(n-1,s)$ which gives the bound $S(n,s)\le n^{O(\log s)}$. So the class of possible outputs $H$ has size at most $n^{n^{O(\log s)}}$. 

Hence, to learn the class, we use Lemma~\ref{occam}. The number of examples is 
$$q=\frac{1}{\epsilon}\left(\log |H| +\log \frac{1}{\delta}\right)= \frac{1}{\epsilon}\left(n^{O(\log s)} +\log \frac{1}{\delta}\right).$$ The time complexity is $q\phi(n,s)=q n^{O(\log s)}$.
\end{proof}

By Lemma~\ref{Reduction} and~\ref{DTs}, we have:
\begin{theorem}\label{DTsOUR} The class $\DT^s$ is distribution-free learnable from $\DT^{s^{O(\log s)}}$ in  $m=O(s\log n)$ black box queries and $q=(1/\epsilon)(s^{O(\log s)}+s\log(1/\delta))$ random example queries in time $q\cdot s^{O(\log s)}+O((q+m)n)$.   
\end{theorem}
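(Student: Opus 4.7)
The plan is to derive this theorem as a direct consequence of Theorem~\ref{DTs} combined with the general reduction of Lemma~\ref{Reduction}. The key observation is that $\DT^s \subseteq s\text{-}\junta$ (since a size-$s$ decision tree has at most $s$ leaves and hence at most $s$ relevant variables), and likewise $\DT^{n^{O(\log s)}} \subseteq s^{O(\log s)}\text{-}\junta$ once we restrict attention to the relevant variables of the target. Both $\DT^s$ and the hypothesis class produced in Theorem~\ref{DTs} are easily seen to be closed under zero projection (substituting $x_i \gets 0$ in a decision tree just replaces the $0$-subtree wherever $x_i$ is queried, which only weakly decreases the size), so the hypotheses of Lemma~\ref{Reduction} are satisfied with $C = \DT^s$, $H = \DT^{n^{O(\log s)}}$, and $k = s$.

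Next I would instantiate the parameters of Lemma~\ref{Reduction} using the base learner of Theorem~\ref{DTs}. In Theorem~\ref{DTs}, the number of random example queries on an $n$-variable instance is $Q(n,\epsilon,\delta) = (1/\epsilon)(n^{O(\log s)} + \log(1/\delta))$ with $M(n,\epsilon,\delta) = 0$ black box queries and time $T(n,\epsilon,\delta) = Q(n,\epsilon,\delta)\cdot n^{O(\log s)}$. Plugging these into the reduction with $k = s$, $\epsilon/2$, $\delta/2$ yields
\[
Q' = \frac{2}{\epsilon}\bigl(s^{O(\log s)} + \log(2/\delta)\bigr) + O\!\left(\frac{s}{\epsilon}\log\frac{s}{\delta}\right) = O\!\left(\frac{1}{\epsilon}\bigl(s^{O(\log s)} + s\log(1/\delta)\bigr)\right)
\]
random example queries and $M' = 0 + O(s\log n) = O(s\log n)$ black box queries, matching the statement.

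For the running time, $T(s,\epsilon/2,\delta/2) = Q'\cdot s^{O(\log s)}$ after exhausting the search over the reduced $s$-variable hypothesis space, and the extra $(M' + Q')n$ term from the reduction accounts for the cost of issuing the queries and simulating the restricted function $f_{|X\gets 0}$. The output hypothesis, originally of size $n^{O(\log s)}$ in the base algorithm, now has size $s^{O(\log s)}$ since Find-Close reduces the effective variable set to at most $s$ variables, so the final hypothesis lies in $\DT^{s^{O(\log s)}}$ as required.

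The proof is therefore essentially a parameter-plugging argument: there is no real obstacle, only careful bookkeeping. The only subtlety worth verifying in passing is that the ``closed under zero projection'' hypothesis of Lemma~\ref{Reduction} applies to both $C = \DT^s$ and to the hypothesis class $H$ used by Theorem~\ref{DTs}, which holds because the recursive construction in the proof of Theorem~\ref{DTs} is itself a decision tree and decision tree classes of any given size bound are trivially closed under zero projection.
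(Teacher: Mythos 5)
Your proposal is correct and is precisely the approach the paper takes: Theorem~\ref{DTsOUR} is stated immediately after the sentence ``By Lemma~\ref{Reduction} and~\ref{DTs}, we have:'' with no further proof, so the intended argument is exactly the parameter-plugging you carried out. One small wrinkle in your justification: Lemma~\ref{Reduction} formally asks that both $C$ and $H$ be sub-classes of $k$-$\junta$ with $k=s$, and $\DT^{n^{O(\log s)}}$ is of course not a sub-class of $s$-$\junta$; the resolution is that {\bf Find-Close} restricts the target to at most $s$ variables and the base learner of Theorem~\ref{DTs} run on $s$ variables outputs a tree of size $S(s,\epsilon/2)=s^{O(\log s)}$ over those $s$ variables, so the hypothesis actually produced lies in $\DT^{s^{O(\log s)}}\cap s\text{-}\junta$ (not $s^{O(\log s)}\text{-}\junta$ as you wrote). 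This is a cosmetic imprecision in the lemma statement that the paper glosses over as well, and it does not affect the correctness of the derived query, time, and hypothesis-size bounds, all of which you computed correctly.
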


\section{Testing Decision Tree}\label{TDT}
In this section, we give the results in the tables in Figure~\ref{TABLE3} and~\ref{TABLE4}.
\subsection{Reductions in Testing}\label{RiT}
In this subsection, we give some reductions that follow immediately from~\cite{Bshouty19b,Bshouty20}.

We say that a class $C$ is {\it closed under zero-one projection} if for every $f\in C$, every $i\in [n]$ and every $\xi\in\{0,1\}$ we have $f_{|x_i\gets \xi}\in C$. We say that $C$ is {\it symmetric} if for every permutation $\pi:[n]\to [n]$ and every $f\in C$ we have $f_\pi\in C$ where $f_\pi(x):=f(x_{\pi(1)},\cdots,x_{\pi(n)})$. All the classes in this paper are closed under zero-one projection and symmetric. 

The following results are proved in~\cite{Bshouty20} (Theorem 2) when $H=C$. The same proof works for any $C\subseteq H$. 

\begin{lemma}\label{ThTriv2v2}{\rm \cite{Bshouty20}}. Let $C\subseteq H$ be classes of Boolean functions (over $n$ variables) that are symmetric sub-classes of $k$-$\junta$ and are closed under zero-one projection. Suppose $C$ is distribution-free learnable from $H$ in time $T(n,\epsilon,\delta)$ using $Q(n,\epsilon,\delta)$ random example queries and $M(n,\epsilon,\delta)$ black-box queries. Then, there is a  distribution-free two-sided adaptive algorithm for $\epsilon$-testing $C$ by $H$ that makes $$m=\tilde O\left(M(k,\epsilon/12,1/24)+k\cdot Q(k,\epsilon/12,1/24)+\frac{k}{\epsilon}\right)$$ queries and runs in time $T(k,\epsilon/12,1/24)+O(mn)$.
\end{lemma}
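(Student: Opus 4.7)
The plan is to reduce the testing problem to the learner $\mathcal{A}$ via the familiar ``junta extraction $\to$ simulated learning $\to$ verification'' pipeline, using tools already in the paper. \emph{Phase 1 (junta extraction):} I would invoke \textbf{Find-Close} (Lemma~\ref{cloose}) with $c=12$ and failure probability $1/48$, spending $\tilde O(k/\epsilon)$ random examples and $O(k\log n)$ black-box queries to obtain a set $X\subseteq[n]$, $|X|\ge n-k$, such that, whenever $f$ is a $k$-junta, $f_{|X\gets 0}$ is $\epsilon/12$-close to $f$ under $\D$ with probability at least $47/48$; write $V:=[n]\setminus X$. \emph{Phase 2 (simulated learning):} I would then run $\mathcal{A}$ on the restricted function $f_{|X\gets 0}$, regarded as a target on the $k$ virtual coordinates in $V$, with accuracy $\epsilon/12$ and confidence $1-1/24$. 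Closure of $C$ under zero projection and symmetry of $C$ and $H$ ensure $f_{|X\gets 0}\in C$ whenever $f\in C$. A black-box query of $\mathcal{A}$ on $z\in\{0,1\}^k$ is answered by a single query to $f$ on the zero-extension of $z$, contributing $M(k,\epsilon/12,1/24)$ queries. Each random example query is simulated by drawing $u\sim\D$, passing $u|_V$ to $\mathcal{A}$ labeled with $f(u_{|X\gets 0})$; if at any point a discrepancy $f(u)\ne f(u_{|X\gets 0})$ is detected, a binary search on the pair $(u,u_{|X\gets 0})$ recovers a new relevant variable of $f$, which is moved from $X$ into $V$ and triggers a restart of $\mathcal{A}$. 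Since at most $k$ such discoveries can ever occur, the overall simulation cost is bounded by $k\cdot Q(k,\epsilon/12,1/24)$, which is exactly the $kQ$ term in the claim. Let $h\in H$ be the returned hypothesis and $\hat h$ its lift to $\{0,1\}^n$ (which lies in $H$ by symmetry). \emph{Phase 3 (verification):} finally, I would draw $\Theta(1/\epsilon)$ fresh samples from $\D$ and estimate $\Pr_\D[\hat h(u)\ne f(u)]$, accepting iff the estimate falls below a threshold chosen in $(\epsilon/4,\epsilon)$.

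Correctness is routine. If $f\in C$, two triangle inequalities (Find-Close error plus learner error) give $\Pr_\D[\hat h\ne f]\le\epsilon/6$, so the Phase-3 estimate lies below the threshold with high probability by Chernoff, and the tester accepts. If $f$ is $\epsilon$-far from every function in $H$, then $\hat h\in H$ forces $\Pr_\D[\hat h\ne f]>\epsilon$, so the estimate lies above the threshold with high probability, and the tester rejects. A union bound over the three sources of error yields overall success probability above $2/3$, and the running time is $T(k,\epsilon/12,1/24)+O(mn)$ because apart from the learner's call, each query takes time $O(n)$ to form and evaluate.

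The main obstacle, and the reason for the $k\cdot Q(k,\epsilon/12,1/24)$ factor, is Phase 2: one must gracefully handle the (low-probability) event that the initial $X$ still contains a relevant variable of $f$. This requires interleaving the learner's random example queries with on-the-fly junta refinement, so that the learner always sees examples consistent with some function of its $k$ ``variables,'' while never incurring more than $k$ restarts. This amortized restart argument, inherited from~\cite{Bshouty20}, is the technical heart of the reduction; everything else is a standard sample-and-verify template.
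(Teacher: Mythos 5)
Your three-phase skeleton (junta extraction, simulated learning, verification) is the right outline, but there is a concrete gap: the mechanism you use in Phases~1 and~2 inherits a $\Theta(k\log n)$ query cost, which the lemma does not permit. You invoke \textbf{Find-Close} (Lemma~\ref{cloose}), which explicitly makes $O(k\log n)$ black-box queries, and in Phase~2 you again run a binary search over $(u,u_{|X\gets 0})$ each time a discrepancy surfaces, adding another $O(k\log n)$. But the whole point of Lemma~\ref{ThTriv2v2} (and of the Bshouty20 reduction it cites) is to produce a tester whose query complexity is \emph{independent of $n$}: the claimed bound $m=\tilde O\bigl(M(k,\epsilon/12,1/24)+k\cdot Q(k,\epsilon/12,1/24)+\tfrac{k}{\epsilon}\bigr)$ contains no $\log n$ term, and the $\tilde O$ hides only polylog factors of $k$, $1/\epsilon$. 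This is exactly how the paper uses the lemma downstream (e.g.\ Theorem~\ref{TH1} obtains $\tilde O(s^2/\epsilon)$ queries, with no $n$-dependence in the query count). Your construction as written yields $\tilde O(M + kQ + k/\epsilon + k\log n)$, which is a weaker statement and is essentially the \emph{learning} reduction of Lemma~\ref{Reduction}, not the testing reduction.

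The missing idea is how to isolate or decouple the relevant coordinates \emph{without} a per-variable binary search over $[n]$. Bshouty's testing reduction does this by working with random blocks of coordinates rather than individual variables: one partitions $[n]$ into $\mathrm{poly}(k)$ buckets, tests which buckets are influential and fixes representatives, and then simulates the learner on the bucket-indexed $k$-variable function. Because the number of buckets is $\mathrm{poly}(k)$, isolating an influential bucket costs $O(\log k)$ rather than $O(\log n)$, and the whole pipeline becomes $n$-free in its query count (while still paying $O(n)$ \emph{time} per query, which is why the time bound is $T(k,\cdot,\cdot)+O(mn)$). You correctly identify the $k\cdot Q$ term as coming from having to restart the learner whenever the junta structure is refined, and Phase~3 is fine as stated; but the substitution of Find-Close and binary search for the block-based isolation is precisely the technical step that breaks the claimed bound.
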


\begin{lemma}\label{ThTriUv2}{\rm \cite{Bshouty20}}. Let $C\subseteq H$ be classes of Boolean functions (over $n$ variables) that are symmetric sub-classes of $k$-$\junta$ and are closed under zero-one projection. Suppose $C$ is learnable from $H$ under the uniform distribution in time $T(n,\epsilon,\delta)$ using $Q(n,\epsilon,\delta)$ random example queries and $M(n,\epsilon,\delta)$ black-box queries. Then, there is a (uniform distribution) two-sided adaptive algorithm for $\epsilon$-testing $C$ by $H$ that makes $$m=\tilde O\left(M(k,\epsilon/12,1/24)+ Q(k,\epsilon/12,1/24)+\frac{k}{\epsilon}\right)$$ queries and runs in time $T(k,\epsilon/12,1/24)+O(mn)$.
\end{lemma}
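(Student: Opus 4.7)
My plan is to adapt the Find-Close-based reduction of Lemma~\ref{Reduction} from the learning setting to the testing setting, specialized to the uniform distribution. The tester will proceed in three phases: variable restriction, simulated learning, and a final consistency check.

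Stage~1 (restriction). Run Find-Close$(f)$ from Lemma~\ref{cloose} with $c=12$ and failure probability $1/48$, specializing the input distribution to the uniform one. If $f\in C\subseteq k$-$\junta$, then with probability at least $47/48$ this returns a set $X$ of size at least $n-k$ such that $\Pr_u[f_{|X\gets 0}(u)\ne f(u)]\le\epsilon/12$. Let $V=[n]\setminus X$, so $|V|\le k$; closure of $C$ under zero projection, together with symmetry, lets us legitimately regard $f_{|X\gets 0}$ as a $k$-variable function that still lies in $C$.

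Stage~2 (simulated learning). Invoke $\mathcal{A}$ on $f_{|X\gets 0}$ with accuracy $\epsilon/12$ and confidence $1/48$. A black-box query by $\mathcal{A}$ on $v\in\{0,1\}^V$ is answered by extending $v$ with zeros on $X$ and querying $f$; a uniform random-example query is answered by sampling $u\in\{0,1\}^n$ uniformly, zeroing its $X$-coordinates, and querying $f$. Because the uniform distribution factorizes over disjoint coordinate sets, the induced marginal on $V$ is exactly uniform, so a single actual query to $f$ yields one simulated example on the $k$-variable restriction. This one-to-one simulation is precisely what eliminates the factor of $k$ multiplying $Q$ in the distribution-free counterpart (Lemma~\ref{ThTriv2v2}). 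Let $h\in H$ denote the hypothesis returned by $\mathcal{A}$; symmetry of $H$ lets us view $h$ as an $n$-variable function in $H$ that ignores the $X$-coordinates.

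Stage~3 (consistency check). Draw $O(1/\epsilon)$ fresh uniform samples $u_1,\dots,u_m$, query $f$ on each, and accept iff $h$ and $f$ disagree on at most an $(\epsilon/2)$-fraction of them. A Chernoff argument then closes both cases: if $f\in C$, the triangle inequality gives $\Pr_u[h(u)\ne f(u)]\le\epsilon/6$, so the check accepts; if $f$ is $\epsilon$-far from every $g\in H$, then the output $h\in H$ still satisfies $\Pr_u[h(u)\ne f(u)]>\epsilon$, so the check rejects — both with probability $\ge 2/3$ once I allocate constant failure budgets across Find-Close, $\mathcal{A}$, and the check. Tallying yields total query complexity $\tilde O(M(k,\epsilon/12,1/24)+Q(k,\epsilon/12,1/24)+k/\epsilon)$ and runtime $T(k,\epsilon/12,1/24)+O(mn)$, where $O(mn)$ absorbs the bookkeeping of extending and zeroing samples.

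The main subtlety I foresee is in the ``embedding'' step of Stage~2: verifying rigorously that the $k$-variable target handed to $\mathcal{A}$ actually lies in the version of $C$ on $k$ variables, and that $\mathcal{A}$'s $k$-variable output can be lifted to a legitimate $n$-variable member of $H$ for the consistency check. Both of these depend essentially on closure under zero projection and on symmetry of $C$ and $H$ simultaneously; once these structural points are established, the probability bookkeeping and the query tally are routine.
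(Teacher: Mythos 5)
Your overall plan---restrict $f$ to a $k$-variable subcube via Find-Close, simulate the learner on that restriction, and finish with a closeness check---is the natural adaptation of the learning reduction (Lemma~\ref{Reduction}), and your Stage~2 one-to-one simulation of uniform random examples together with your Stage~3 Chernoff argument are both sound. There is, however, a genuine gap in the query accounting in Stage~1. Find-Close makes $O(k\log n)$ black-box queries (an $O(\log n)$ binary search for each of up to $k$ relevant variables), and that term does not disappear in your tally. The bound $m=\tilde O\left(M(k,\cdot,\cdot)+Q(k,\cdot,\cdot)+k/\epsilon\right)$ in the lemma is meant to be independent of $n$: the $\tilde O$ hides polylog factors in $k$ and $1/\epsilon$ only, as you can see from how the lemma is consumed in Theorems~\ref{TH1U} and~\ref{TH1}, whose final query complexities are $\tilde O(s/\epsilon)$ and $\tilde O(s^2/\epsilon)$ with no $\log n$ anywhere. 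Your Stage~1 introduces an additive $O(k\log n)$ that you cannot charge to any of the three displayed terms.

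The missing ingredient is a random-partition (``super-variable'') preprocessing step, which is the core of the reduction in~\cite{Bshouty19,Bshouty20}: before running anything else, draw a uniformly random partition of $[n]$ into $r=\mathrm{poly}(k/\epsilon)$ blocks $B_1,\ldots,B_r$, and define $g:\{0,1\}^r\to\{0,1\}$ by $g(y)=f(x)$ where $x_i=y_j$ whenever $i\in B_j$. If $f\in C\subseteq k$-$\junta$, then by a birthday bound over the $\binom{k}{2}$ pairs of relevant variables, with constant probability all of them land in distinct blocks, in which case $g$ is a $k$-junta on $r$ variables and, by symmetry and closure under zero-one projection, lies in the class $C_r$ up to isomorphism. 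Each query to $g$ costs exactly one query to $f$. Now run Find-Close and your simulated learner on $g$ over the $r$ block-variables rather than on $f$ over $n$ variables: the binary searches cost $O(\log r)=\tilde O(1)$ each, so the black-box cost of Stage~1 drops to $\tilde O(k)$, which is absorbed into the $\tilde O(k/\epsilon)$ term. You then repeat the random partition a constant number of times to boost the completeness probability (a bad partition may cause a spurious rejection when $f\in C$), while soundness is preserved regardless of the partition because your Stage~3 closeness check rejects any $h\in H$ whenever $f$ is $\epsilon$-far from every function in $H$. With this preprocessing in place, the rest of your argument goes through as written.
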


The following is proved in~\cite{Bshouty20} when $H=C$. The same proof gives the following result:
\begin{lemma}\label{ThTriv}{\rm \cite{Bshouty20}}. Let $C$ and $C\subseteq H$ be classes of Boolean functions that are symmetric sub-classes of $k$-$\junta$ and are closed under zero-one projection. Suppose there is a tester $\T$ for $C_k=\{f\in C| f$ is independent on $x_{k+1},\ldots,x_n\}$ such that
\begin{enumerate}
\item $\T$ is a  two-sided adaptive $\epsilon$-tester (resp. distribution-free $\epsilon$-tester) that runs in time $T(k,\epsilon,\delta)$.
\item If $f\in C_k$ then, with probability at least $1-\delta$, $\T$ accepts.
\item If $f$ is $\epsilon$-far from every function in $H_k$ (resp., with respect to $\D$) then, with probability at least $1-\delta$, $\T$ rejects.
\item $\T$ makes $Q(k,\epsilon,\delta)$ random example queries and $M(k,\epsilon,\delta)$ black-box queries.
\end{enumerate}
Then, there is a two-sided adaptive algorithm for $\epsilon$-testing (resp., distribution-free $\epsilon$-testing) $C$ by $H$ that makes $$m=\tilde O\left(M(k,\epsilon/12,1/24)+Q(k,\epsilon/12,1/24)+\frac{k}{\epsilon}\right)$$ 
(resp., makes $$m=\tilde O\left(M(k,\epsilon/12,1/24)+k\cdot Q(k,\epsilon/12,1/24)+\frac{k}{\epsilon}\right)$$ 
)
queries and runs in $T(k,\epsilon/12,1/24)+   O(mn)$ time. 
\end{lemma}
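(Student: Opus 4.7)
The plan is to mirror the reduction of \cite{Bshouty20} (Theorem~2), which establishes the same statement in the special case $H=C$, and to observe that its proof only uses: (i) $C$ and $H$ are symmetric and closed under zero-one projection, and (ii) a tester on $k$ variables is available. Since $C \subseteq H$ and both classes share these structural properties, the lift from $C_k$-vs-$H_k$ testing to $C$-vs-$H$ testing goes through with essentially verbatim arithmetic on the query/time bounds; the only conceptual change is that the ``farness transfer'' step now argues about the larger class $H$ rather than $C$.

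Concretely, the algorithm has three phases. First, I would run the {\bf Find-Close} subroutine behind Lemma~\ref{cloose} with parameters $(\epsilon/c, \delta/3)$ for a suitably large constant $c$; this uses $O(k\log n)$ black-box queries and $O((k/\epsilon)\log(k/\delta))$ random example queries, and with probability at least $1-\delta/3$ either returns a set $X$ of size $\geq n-k$ with $g:=f_{|X\gets 0}$ being $(\epsilon/12)$-close to $f$ w.r.t.\ $\D$, or exceeds its budget (in which case we REJECT, which is safe because every $f\in C\subseteq k$-$\junta$ converges w.h.p.). Second, using the symmetry of both $C$ and $H$, relabel the at most $k$ variables outside $X$ to positions $1,\ldots,k$ to obtain a function $\tilde g$; by closure under zero-projection, if $f\in C$ then $\tilde g\in C_k$. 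Third, run $\T$ on $\tilde g$ with parameters $(\epsilon/12, 1/24)$, simulating its black-box queries by one query each to $f$ at the appropriate pre-image, and simulating its random example queries on an induced distribution $\tilde\D$ using the random example oracle for $f$ under $\D$.

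For correctness: if $f\in C$, then with probability at least $1-\delta/3-\delta/3$ both Find-Close and $\T$ succeed, so we accept. Conversely, suppose $f$ is $\epsilon$-far from every $h\in H$ w.r.t.\ $\D$. I claim $\tilde g$ must be $(\epsilon/12)$-far from every $\tilde h\in H_k$ w.r.t.\ $\tilde\D$: otherwise, using symmetry and zero-projection closure of $H$, one could lift such a close $\tilde h$ back to a function in $H$ over $n$ variables that is within $\epsilon/12+\epsilon/12 < \epsilon$ of $f$ under $\D$, contradicting the farness assumption. Hence $\T$ rejects w.h.p.\ in this case too, and the overall error is at most $\delta$.

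The main obstacle — already handled in \cite{Bshouty20} for the distribution-free branch — is the simulation of random examples of $\tilde g$ under $\tilde\D$ from the oracle for $f$ under $\D$: each simulated example may require $\Theta(k)$ raw random examples (via a redistribution argument over the identities of the $\leq k$ surviving coordinates), which is precisely the source of the $k\cdot Q$ factor in the distribution-free bound. In the uniform-distribution branch this blow-up disappears because $\tilde\D$ is again uniform on $\{0,1\}^k$, giving the weaker $M+Q+k/\epsilon$ dependence. Once this simulation lemma is in place, combining it with the three-phase structure above and the budget accounting from \cite{Bshouty20} yields the stated complexity.
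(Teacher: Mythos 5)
Your proposal is correct and takes the same approach as the paper. In fact, the paper's ``proof'' of Lemma~\ref{ThTriv} is nothing more than the one-line remark that the reduction of~\cite{Bshouty20} for $H=C$ carries over verbatim to $C\subseteq H$; you supply the missing details — the Find-Close reduction, the relabeling step using symmetry and zero-one--projection closure, and the farness-transfer argument showing that an $\epsilon$-far $f$ yields a $\tilde g$ that is $\epsilon/12$-far from $H_k$ under $\tilde\D$ — all of which are exactly the steps the paper is implicitly appealing to. Your proof is therefore, if anything, more complete than what appears in the paper itself. The one place I would flag for care is your parenthetical explanation of the $k\cdot Q$ factor in the distribution-free branch (``redistribution over the surviving coordinates''): the overall claim is right and is correctly attributed to~\cite{Bshouty20}, but the stated mechanism is a guess rather than a verified reconstruction of the bookkeeping there, so you should not present it as established unless you have checked it against the source.
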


\subsection{Blanc-Lange-Tan Testers}\label{BLTT}
The following are the non-tolerant versions of Blanc et. al. testers~\cite{BlancLT20}.

\begin{lemma}\label{Blan1}{\rm \cite{BlancLT20}}.
There is a tester that runs in $poly(\log s,1/\epsilon)\tilde O(n)$ time, makes $poly(\log s,1/\epsilon)\cdot \log n$ queries to unknown function $f$, and
\begin{enumerate}
    \item Accepts w.h.p if $f$ is a size-$s$ decision tree.
    \item Rejects w.h.p if $f$ is $\epsilon$-far from size-$s^{\tilde O((\log s)^2/\epsilon^3)}$ decision trees.
\end{enumerate}
\end{lemma}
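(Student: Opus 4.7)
The plan is to build, for any function $f$, a decision tree surrogate $T(d,f)$ of depth $d=\tilde O((\log s)^3/\epsilon^3)$ that agrees well with $f$ whenever $f$ is a size-$s$ decision tree, while being a size-$2^d$ decision tree itself; then test by comparing $f$ with $T(d,f)$ on a few random points. First I would define, for noise parameter $p=\epsilon/\log s$, the noise sensitivity $\mathrm{NS}_p(f)=\Pr_{x,y}[f(x)\ne f(x+y)]$, where $x$ is uniform and each $y_i$ is independently $1$ with probability $p$. For each coordinate $i$, let $\mathrm{Score}_i(f)=\mathrm{NS}_p(f)-\tfrac12(\mathrm{NS}_p(f_{|x_i\gets 0})+\mathrm{NS}_p(f_{|x_i\gets 1}))$, the expected drop in noise sensitivity from branching on $x_i$. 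Define $T(d,f)$ recursively: at the root pick the variable $x_{i^*}$ maximizing $\mathrm{Score}_{i^*}(f)$, recurse on $T(d-1,f_{|x_{i^*}\gets 0})$ and $T(d-1,f_{|x_{i^*}\gets 1})$, and at depth $0$ output $\lfloor \mathbf{E}[g]+1/2\rfloor$ where $g$ is the restriction of $f$ along the path.

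For the completeness direction (Item~1), I would show: if $f$ is a size-$s$ decision tree, then $f$ is $(\epsilon/4)$-close to $T(d,f)$. The heart of the argument is a potential lemma saying that the \emph{expected} noise sensitivity along a uniformly random root-to-leaf path of $T(d,f)$ decreases by a multiplicative factor $1-\Omega(p/\log s)$ per level, using that size-$s$ decision trees have $\mathrm{NS}_p(f)\le O(p\log s)$ and that some coordinate must have score at least $\mathrm{NS}_p(f)/\log s$ (a standard ``influential-variable'' inequality coming from decomposing noise sensitivity into influences). Iterating $d=\tilde O((\log s)^3/\epsilon^3)$ many levels drives the expected noise sensitivity at a random leaf below $\tilde O(p\epsilon/\log s)$, which then lower-bounds by $\Omega(p\cdot \Pr[g\ne \mathrm{maj}(g)])$ the disagreement of $f$ with the majority rounding at the leaf, and yields overall error $\le \epsilon/4$ between $f$ and $T(d,f)$.

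For soundness (Item~2), the key observation is that $T(d,f)$ is always a depth-$d$, hence size-$2^d$, decision tree. So if $f$ is $\epsilon$-far from every size-$s^{\tilde O((\log s)^2/\epsilon^3)}=2^d$-size decision tree, then in particular $f$ is $\epsilon$-far from $T(d,f)$ itself. Drawing $O(1/\epsilon)$ uniform samples $x$ and comparing $f(x)$ to $T(d,f)(x)$ then distinguishes the two cases with high probability via a Chernoff bound.

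The main obstacle, and the place where all the $n$-dependence gets absorbed, is giving an efficient oracle for $T(d,f)(x)$: for each of the $d$ levels, one must identify the score-maximizing variable without reading all $n$ coordinates. I would use the Blanc--Lange--Tan ``influence estimation'' subroutine: approximate $\mathrm{Score}_i(f_{|\pi})$ for the current restriction $\pi$ by sampling $\mathrm{poly}(\log s,1/\epsilon)$ correlated pairs $(x,x\oplus y)$ restricted to $\pi$, then locate heavy coordinates via a binary-search/group-testing argument costing an extra $\log n$ factor. This gives $\mathrm{poly}(\log s,1/\epsilon)\cdot \log n$ queries and $\mathrm{poly}(\log s,1/\epsilon)\cdot \tilde O(n)$ running time per traversal; multiplying by $O(1/\epsilon)$ traversals leaves the claimed bounds. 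The bulk of the technical work is controlling propagation of estimation error across $d$ nested approximations so that the score-greedy tree produced with noisy scores still inherits the potential decrease established for the ideal $T(d,f)$.
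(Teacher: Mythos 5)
Your sketch is a faithful reconstruction of the Blanc--Lange--Tan argument that this lemma simply cites: the paper itself gives no proof, but its Section 2.1 summary of \cite{BlancLT20} uses exactly your surrogate tree $T(d,f)$ built greedily on noise-sensitivity score with noise rate $p=\epsilon/\log s$, depth $d=O(\log^3 s/\epsilon^3)$, majority rounding at the leaves, completeness via $\epsilon/4$-closeness for size-$s$ trees, soundness via $T(d,f)$ itself lying in the size-$2^d$ class, and the oracle for $T(d,f)$ paying a $\log n$ factor to locate the high-score coordinate. No discrepancy in approach or in the bounds.
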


\begin{lemma}\label{Blan2}{\rm \cite{BlancLT20}}.
There is a tester that runs in $poly(d,1/\epsilon)\tilde O(n)$ time, makes $poly(d,1/\epsilon)\cdot \log n$ queries to unknown function $f$, and
\begin{enumerate}
    \item Accepts w.h.p if $f$ is a depth-$d$ decision tree.
    \item Rejects w.h.p if $f$ is $\epsilon$-far from size-$\tilde O(d^3/\epsilon^3)$ decision tree.
\end{enumerate}
\end{lemma}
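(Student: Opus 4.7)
The plan is to follow the Blanc--Lange--Tan construction of~\cite{BlancLT20}, adapted to target depth rather than size. Fix the noise rate $p = \Theta(\epsilon/d)$ and define the $p$-noise sensitivity $\text{NS}_p(f) = \Pr[f(x) \neq f(x \oplus y)]$ where $x$ is uniform on $\{0,1\}^n$ and each coordinate $y_i$ is independently $1$ with probability $p$. For each variable $x_i$, define $\text{Score}_i(f) = \text{NS}_p(f) - \tfrac12\left(\text{NS}_p(f_{|x_i\gets 0}) + \text{NS}_p(f_{|x_i\gets 1})\right)$, the expected drop in noise sensitivity from splitting on $x_i$. Then recursively build an implicit decision tree $T(d',f)$ of depth $d' = \tilde O(d^3/\epsilon^3)$: at every internal node split on the variable of maximum score with respect to the current restricted function, and at every depth-$d'$ leaf output the rounding of the expectation of that restricted function.

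First I would establish completeness: if $f$ is a depth-$d$ decision tree, then $T(d',f)$ is $\epsilon/4$-close to $f$. Depth-$d$ functions satisfy $\text{NS}_p(f) = O(dp) = O(\epsilon)$, and the greedy rule, combined with a Fourier-analytic lower bound relating $\text{Score}_i$ to the squared low-degree influence of $x_i$, guarantees that the weighted-average noise sensitivity across the current frontier of nodes drops by a factor of $1 - \Omega(1/d^2)$ per level. After $d' = \tilde O(d^3/\epsilon^3)$ levels, the leaf-weighted noise sensitivity drops below $\epsilon/4$, and the standard inequality that the rounding error at a leaf is at most twice the local noise sensitivity closes the argument.

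Soundness is then automatic: since $T(d',f)$ is itself a depth-$d'$ decision tree, if $f$ is $\epsilon$-far from every such tree then $f$ disagrees with $T(d',f)$ on at least an $\epsilon$ fraction of uniform inputs, and $\Theta(1/\epsilon)$ independent comparisons suffice to detect the disagreement with high probability.

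Finally, the tester never materializes $T(d',f)$; it only simulates oracle access to it. To evaluate $T(d',f)$ at a requested $x$, traverse from the root, and at each of at most $d'$ internal nodes estimate $\text{Score}_i$ for the relevant variables of the current restriction. Each estimate uses $\text{poly}(d,1/\epsilon)$ fresh queries to $f$; identifying the relevant variables of the restriction takes $\tilde O(2^d)$ queries to detect the existence of relevant variables plus $O(\log n)$ queries per variable via binary search, giving $\text{poly}(d,1/\epsilon)\cdot \log n$ queries per simulated evaluation and a matching running time of $\text{poly}(d,1/\epsilon)\cdot \tilde O(n)$. The main obstacle is the quantitative potential argument in completeness that bounds the depth at $\tilde O(d^3/\epsilon^3)$: this is the technical heart of~\cite{BlancLT20}, and it is what controls exactly how much noise-sensitivity potential is guaranteed to be extracted per greedy split in terms of the current depth and noise rate.
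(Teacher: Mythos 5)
The paper itself does not prove Lemma~\ref{Blan2}: it is stated as the non-tolerant version of a result cited from~\cite{BlancLT20}, so there is no in-paper proof to compare against, only the informal sketch in Section~2.2. Your reconstruction follows the same high-level route as that sketch --- defining $\text{NS}_p$ with $p = \Theta(\epsilon/d)$, greedily growing an implicit tree $T(d',f)$ by maximizing the noise-sensitivity drop, running a potential argument for completeness, and observing that $T(d',f)$ is itself shallow so soundness reduces to estimating $\Pr[f \ne T(d',f)]$. That structure is correct, and you also correctly identify the quantitative potential argument as the technical heart. (One cosmetic note: the statement's ``$\epsilon$-far from \emph{size}-$\tilde O(d^3/\epsilon^3)$'' is evidently a typo for ``depth''; Section~2.2 and Theorem~\ref{BlankTh2} both say depth, and your soundness argument correctly targets depth.)

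However, your query accounting in the simulation step contains a genuine gap. You claim that identifying the relevant variables of the current restriction takes $\tilde O(2^d)$ queries plus $O(\log n)$ per variable, and then assert that this yields $\mathrm{poly}(d,1/\epsilon)\cdot \log n$ queries per simulated evaluation. But $\tilde O(2^d)$ is exponential in $d$, not polynomial, so this arithmetic does not close; as written, your simulator would cost $2^{\Theta(d)}$ queries per node, missing the stated bound by an exponential factor. The relevant-variable-enumeration subroutine is precisely what this paper uses for its \emph{own}, different, distribution-free tester (Lemma~\ref{TimeV}, costing $\tilde O(2^{2d}) + 2^d\log n$ queries) --- it is not what \cite{BlancLT20} do. Their simulator never enumerates relevant variables; it locates the (approximate) max-score variable directly from noise-sensitivity estimates with a Goldreich--Levin/binary-search style overhead of $O(\log n)$, keeping the per-node cost at $\mathrm{poly}(d,1/\epsilon)\log n$. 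Replacing your ``find all relevant variables'' step with that direct max-score localization is what is needed to make the query count come out right.
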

\subsection{Distribution-Free Testers}
\label{DFT}
In this section, we prove the results in the table in Figure~\ref{TABLE3} for the distribution-free model.

\begin{theorem}\label{TH1}
There is a distribution-free tester that runs in $s^{O(d)}/\epsilon+n\cdot\tilde O(s^2/\epsilon)$ time, makes $\tilde O(s^2/\epsilon)$ queries to unknown function $f$, and
\begin{enumerate}
    \item Accepts w.h.p if $f$ is a depth-$d$ size-$s$ decision tree.
    \item Rejects w.h.p if $f$ is $\epsilon$-far from depth-$d$ size-$s$ decision trees.
\end{enumerate}
\end{theorem}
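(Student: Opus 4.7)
The plan is to derive the tester by composing the distribution-free proper learner for $\DT_d^s$ provided by Theorem~\ref{DTds} with the learning-to-testing reduction stated in Lemma~\ref{ThTriv2v2}. To invoke that reduction with $C = H = \DT_d^s$, I first need to check the structural hypotheses. The class $\DT_d^s$ is symmetric because relabelling variables in a decision tree preserves both depth and size; it is closed under zero-one projection because hard-wiring $x_i = \xi$ only prunes nodes and never grows the tree; and it is contained in $s$-$\junta$ since a size-$s$ tree has at most $s$ internal nodes and therefore at most $s$ relevant variables. So the junta bound supplied to Lemma~\ref{ThTriv2v2} is $k = s$.

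Then I would substitute the parameters of Theorem~\ref{DTds} into Lemma~\ref{ThTriv2v2} after shrinking the variable count to $k = s$ and fixing $\epsilon \to \epsilon/12$, $\delta = 1/24$. Theorem~\ref{DTds} uses no black-box queries ($M = 0$), makes $Q(s,\epsilon/12,1/24) = O((s\log s)/\epsilon)$ random example queries, and runs in time $T(s,\epsilon/12,1/24) = s^{O(d)}/\epsilon$. Plugging these into the reduction gives a distribution-free tester with query complexity
\[
m = \tilde O\bigl(M + s\cdot Q + s/\epsilon\bigr) = \tilde O(s^2/\epsilon)
\]
and running time $T + O(mn) = s^{O(d)}/\epsilon + n\cdot\tilde O(s^2/\epsilon)$, matching the bounds asserted in the theorem. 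Completeness (accept with high probability when $f \in \DT_d^s$) and soundness (reject with high probability when $f$ is $\epsilon$-far from every $g \in \DT_d^s$ under the unknown distribution $\D$) are inherited verbatim from the guarantees of Lemma~\ref{ThTriv2v2}.

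There is essentially no novel technical obstacle: the proof is a direct composition of two previously established results. The one point that warrants mild attention is the extra factor of $k = s$ that Lemma~\ref{ThTriv2v2} charges on the random-example complexity (since the reduction must first identify the at most $s$ relevant variables before it can simulate random examples on the compressed $s$-variable instance). This factor is exactly what lifts the learner's $\tilde O(s/\epsilon)$ sample complexity to the tester's $\tilde O(s^2/\epsilon)$, and it is already absorbed in the stated bound.
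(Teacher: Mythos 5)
Your proposal is correct and follows exactly the same route as the paper: invoke the proper distribution-free learner for $\DT_d^s$ from Theorem~\ref{DTds}, observe $\DT_d^s\subseteq s$-$\junta$ so the junta parameter is $k=s$, and feed the learner's query/time bounds into the reduction of Lemma~\ref{ThTriv2v2} to obtain $m=\tilde O(s^2/\epsilon)$ queries and $s^{O(d)}/\epsilon+n\cdot\tilde O(s^2/\epsilon)$ time. The only difference is that you explicitly verify the structural hypotheses (symmetry, closure under zero-one projection) that the paper leaves implicit.
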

\begin{proof}
By Theorem~\ref{DTds}, a depth-$d$ size-$s$ decision tree can be properly learned with  $q=Q(n,\epsilon,\delta)=O((1/\epsilon)(s\log n+\log(1/\delta))$ random example queries in time $T(n,\epsilon,\delta)=qn^{O(d)}$. We have $C=\DT^s_d\subseteq s$-$\junta$ and therefore $k=s$. Then by Lemma~\ref{ThTriv2v2}, there is a  distribution-free two-sided adaptive algorithm for $\epsilon$-testing $\DT_d^s$ that makes $$m=\tilde O\left(s\cdot Q(s,\epsilon/12,1/24)+\frac{s}{\epsilon}\right)=\tilde O\left(\frac{s^2}{\epsilon}\right)$$ queries and runs in time $T(s,\epsilon/12,1/24)+O(mn)=s^{O(d)}/\epsilon +n\cdot\tilde O(s^2/\epsilon)$. This implies Theorem~\ref{TH1}.
\end{proof}

Using the Theorem~\ref{DTdsTPre} with Lemma~\ref{ThTriv2v2}, we get:

\begin{theorem}\label{TH1Uexp}
There is a distribution-free tester that runs in $\tilde O(2^{s})+\tilde O(s^2/\epsilon)n$ time, makes $\tilde O(s^2/\epsilon)$ queries to unknown function $f$, and
\begin{enumerate}
    \item Accepts w.h.p if $f$ is a depth-$d$ size-$s$ decision tree (resp. size-$s$ decision tree).
    \item Rejects w.h.p if $f$ is $\epsilon$-far from depth-$d$ size-$s$ decision trees (resp. size-$s$ decision tree).
\end{enumerate}
\end{theorem}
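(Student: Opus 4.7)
The plan is to apply the testing-from-learning reduction of Lemma~\ref{ThTriv2v2} to the brute-force learner of Theorem~\ref{DTdsTPre}. For the $\DT^s$ case, take $C=H=\DT^s$; this class is a symmetric subclass of $s$-$\junta$ closed under zero-one projection, so the hypotheses of Lemma~\ref{ThTriv2v2} hold with $k=s$. Theorem~\ref{DTdsTPre} gives a distribution-free proper learner with $Q(n,\epsilon,\delta)=O((1/\epsilon)(s\log n+\log(1/\delta)))$ random example queries, $M(n,\epsilon,\delta)=0$ black-box queries, and running time $T(n,\epsilon,\delta)=O(Q^2\cdot 2^n)$.

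Next I would substitute $k=s$ and constant failure probability into the formulas from Lemma~\ref{ThTriv2v2}. The query count becomes
\[
m=\tilde O\!\left(M(s,\epsilon/12,1/24)+s\cdot Q(s,\epsilon/12,1/24)+\frac{s}{\epsilon}\right)=\tilde O\!\left(\frac{s^2}{\epsilon}\right),
\]
since $Q(s,\epsilon/12,1/24)=\tilde O(s/\epsilon)$. The running time becomes $T(s,\epsilon/12,1/24)+O(mn)=\tilde O(2^{s})+\tilde O(s^2/\epsilon)\cdot n$, absorbing the $\mathrm{poly}(s,1/\epsilon)$ multiplier on $2^s$ into the $\tilde O(\cdot)$. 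Correctness of the resulting tester (accept size-$s$ trees, reject functions $\epsilon$-far from $\DT^s$) is immediate from Lemma~\ref{ThTriv2v2}.

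For the $\DT_d^s$ variant, I would carry out the same reduction but first observe that the proof of Theorem~\ref{DTdsTPre} applies verbatim to $\DT_d^s$: Lemma~\ref{LearnDTds} still finds, in time $O(Q^2\cdot 2^n)$, a consistent hypothesis in $\DT_d^s$ when one exists, and Occam's razor (Lemma~\ref{occam}) together with $|\DT_d^s|\le |\DT^s|\le n^{O(s)}$ supplies the required sample complexity $Q=O((s\log n+\log(1/\delta))/\epsilon)$. Thus the same learner parameters plug into Lemma~\ref{ThTriv2v2} with $C=H=\DT_d^s$, giving the identical bounds.

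There is no genuine obstacle here; the whole argument is a mechanical composition. The only thing to double-check is the $\tilde O$ bookkeeping, namely that the learner time $O(Q^2\cdot 2^s)$ at $n=k=s$ with $Q=\tilde O(s/\epsilon)$ collapses to the claimed $\tilde O(2^s)$ under the paper's convention that $\tilde O(\cdot)$ absorbs polylogarithmic and polynomial-in-$(s,1/\epsilon)$ factors multiplying $2^s$ in the $n$-free term, while the $n$-dependent term $O(mn)=\tilde O(s^2/\epsilon)\cdot n$ matches the statement exactly.
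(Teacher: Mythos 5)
Your proposal is correct and matches the paper's own (terse) derivation: the paper obtains Theorem~\ref{TH1Uexp} precisely by plugging the brute-force learner of Theorem~\ref{DTdsTPre} into the reduction Lemma~\ref{ThTriv2v2} with $k=s$, and your bookkeeping of $m$ and the running time is the intended one. Your remark that the $\DT_d^s$ case requires noting that Lemma~\ref{LearnDTds} and the Occam argument also yield a $\DT_d^s$-proper learner is a worthwhile clarification that the paper leaves implicit, but it is the same route, not a different one.
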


The following Theorem follows from Corollary~\ref{DT} and Lemma~\ref{ThTriv2v2}.

\begin{theorem}\label{TH12}
There is a distribution-free tester that runs in $2^{O(d^2)}+n\cdot\tilde O(2^{2d}+2^{d}/\epsilon)$ time, makes $\tilde O(2^{2d}+2^{d}/\epsilon)$ queries to unknown function $f$, and
\begin{enumerate}
    \item Accepts w.h.p if $f$ is a depth-$d$ decision tree.
    \item Rejects w.h.p if $f$ is $\epsilon$-far from depth-$d$ decision trees.
\end{enumerate}
\end{theorem}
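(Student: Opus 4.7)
The plan is to derive Theorem~\ref{TH12} as a direct instantiation of the learning-to-testing reduction of Lemma~\ref{ThTriv2v2}, fed with the exact proper learner for $\DT_d$ given by Corollary~\ref{DT}. Since every depth-$d$ decision tree depends on at most $2^d$ variables, we have $\DT_d\subseteq 2^d$-$\junta$, so the junta parameter in Lemma~\ref{ThTriv2v2} is $k=2^d$. The class $\DT_d$ is visibly symmetric and closed under zero-one projection (substituting a constant for a variable can only shrink the tree), so the two structural preconditions of the reduction hold with $C=H=\DT_d$.

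Next I would plug the complexity of Corollary~\ref{DT} into the formula of Lemma~\ref{ThTriv2v2}. Corollary~\ref{DT} uses only black-box queries, so $Q(n,\epsilon,\delta)\equiv 0$, and $M(n,\epsilon,\delta)=\tilde O(2^{2d})\log(1/\delta)+O(2^d\log n)$ with running time $2^{O(d^2)}\log(1/\delta)+O(Mn)$. After the reduction the effective variable count becomes $k=2^d$, so the $O(2^d\log n)$ term becomes $O(2^d\cdot d)$ and is absorbed in $\tilde O(2^{2d})$. Setting $\delta=1/24$ and $\epsilon\leftarrow \epsilon/12$ and substituting into Lemma~\ref{ThTriv2v2} gives query complexity
\[
m=\tilde O\!\left(M(k,\epsilon/12,1/24)+k\cdot Q(k,\epsilon/12,1/24)+\frac{k}{\epsilon}\right)=\tilde O\!\left(2^{2d}+\frac{2^d}{\epsilon}\right),
\]
and total running time $T(k,\epsilon/12,1/24)+O(mn)=2^{O(d^2)}+O(mn)$, which matches the bounds in the theorem statement.

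Correctness of the accept/reject guarantees follows verbatim from Lemma~\ref{ThTriv2v2}: on $f\in \DT_d$ the learner produces some $g\in \DT_d$ (in fact $g\equiv f$, since the learner is exact) that the verification step in the reduction accepts with high probability, while on $f$ that is $\epsilon$-far from every function in $\DT_d$ with respect to $\D$, no hypothesis in $\DT_d$ can pass the verification, so the tester rejects with high probability. There is essentially no new obstacle beyond checking the junta/symmetry hypotheses and book-keeping the parameters; the two main technical pieces, Corollary~\ref{DT} and Lemma~\ref{ThTriv2v2}, already do all the heavy lifting.
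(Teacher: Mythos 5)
Your proposal is correct and follows exactly the route the paper takes: the paper's one-line justification is ``The following Theorem follows from Corollary~\ref{DT} and Lemma~\ref{ThTriv2v2},'' with $k=2^d$ since $\DT_d\subseteq 2^d$-$\junta$, and your parameter bookkeeping matches. You have merely spelled out the substitution that the paper leaves implicit.
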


\subsection{Uniform Distribution Testers}\label{UD2}
In this section, we prove the results in the table in Figure~\ref{TABLE3} for the uniform distribution model.

\begin{theorem}\label{TH1U}
There is a tester that runs in $s^{O(d)}/\epsilon+n\cdot\tilde O(s/\epsilon)$ time, makes $\tilde O(s/\epsilon)$ queries to unknown function $f$, and
\begin{enumerate}
    \item Accepts w.h.p if $f$ is a depth-$d$ size-$s$ decision tree.
    \item Rejects w.h.p if $f$ is $\epsilon$-far from depth-$d$ size-$s$ decision trees.
\end{enumerate}
\end{theorem}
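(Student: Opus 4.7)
The plan is to mirror the proof of Theorem~\ref{TH1} verbatim, except that we invoke the uniform-distribution reduction (Lemma~\ref{ThTriUv2}) instead of the distribution-free reduction (Lemma~\ref{ThTriv2v2}). The gain is exactly the factor of $k$ that appears in front of $Q$ in the distribution-free bound but is absent in the uniform-distribution bound, which is what turns $\tilde O(s^2/\epsilon)$ queries into $\tilde O(s/\epsilon)$.

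First I would check the hypotheses of Lemma~\ref{ThTriUv2}: the class $C = H = \DT_d^s$ is a symmetric sub-class of $s$-$\junta$ that is closed under zero-one projection, so we may take $k = s$. Then I would cite Theorem~\ref{DTds}, which properly learns $\DT_d^s$ from $Q(n,\epsilon,\delta) = O((1/\epsilon)(s\log n + \log(1/\delta)))$ random example queries and $M(n,\epsilon,\delta) = 0$ black box queries in time $T(n,\epsilon,\delta) = Q \cdot n^{O(d)}$. Since any distribution-free learner is in particular a learner under the uniform distribution, Theorem~\ref{DTds} supplies exactly the learner required by Lemma~\ref{ThTriUv2}.

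Instantiating Lemma~\ref{ThTriUv2} with $k = s$, $\epsilon \mapsto \epsilon/12$, $\delta \mapsto 1/24$, the resulting tester has query complexity
$$m = \tilde O\!\left(M(s,\epsilon/12,1/24) + Q(s,\epsilon/12,1/24) + \frac{s}{\epsilon}\right) = \tilde O\!\left(\frac{s\log s}{\epsilon} + \frac{s}{\epsilon}\right) = \tilde O\!\left(\frac{s}{\epsilon}\right),$$
and runs in time
$$T(s,\epsilon/12,1/24) + O(mn) = \frac{s^{O(d)}}{\epsilon} + n\cdot\tilde O\!\left(\frac{s}{\epsilon}\right),$$
which matches the bounds claimed in Theorem~\ref{TH1U}.

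I do not expect a real obstacle: every non-trivial ingredient (the learner, the junta closure properties, and the reduction itself) is already stated in the paper. The only slightly delicate point is that the $\log n$ factor inside $Q(n,\cdot,\cdot)$ is automatically replaced by $\log s$ after the reduction, because Lemma~\ref{ThTriUv2} applies the learner on $k = s$ variables rather than on the full $n$; this is precisely the feature that makes the final query count independent of $n$.
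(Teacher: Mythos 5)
Your proposal matches the paper's own proof essentially verbatim: both cite Theorem~\ref{DTds} for the proper learner with $Q(n,\epsilon,\delta)=O((1/\epsilon)(s\log n+\log(1/\delta)))$, both set $k=s$ since $\DT_d^s\subseteq s$-$\junta$, and both plug into Lemma~\ref{ThTriUv2} to obtain $m=\tilde O(s/\epsilon)$ queries and time $s^{O(d)}/\epsilon + n\cdot\tilde O(s/\epsilon)$. No gap; the observation that the $\log n$ inside $Q$ becomes $\log s$ after the reduction is exactly the point.
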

\begin{proof}
By Theorem~\ref{DTds}, a depth-$d$ size-$s$ decision tree can be properly learned with  $q=Q(n,\epsilon,\delta)=O((1/\epsilon)(s\log n+\log(1/\delta))$ random example queries in time $T(n,\epsilon,\delta)=qn^{O(d)}$. We have $C=\DT^s_d\subseteq s$-$\junta$ and therefore $k=s$. Then by Lemma~\ref{ThTriUv2}, there is a  uniform distribution two-sided adaptive algorithm for $\epsilon$-testing $\DT_d^s$ that makes $$m=\tilde O\left(Q(s,\epsilon/12,1/24)+\frac{s}{\epsilon}\right)=\tilde O\left(\frac{s}{\epsilon}\right)$$ queries and runs in time $T(s,\epsilon/12,1/24)+O(mn)=s^{O(d)}/\epsilon +n\cdot\tilde O(s/\epsilon)$. This implies Theorem~\ref{TH1U}.
\end{proof}

The following is derived from Theorem~\ref{DTsU} and Lemma~\ref{ThTriUv2}.

\begin{theorem}\label{TH1UB}
There is a tester that runs in $(s/\epsilon)^{O(\log s)}+n\cdot \tilde O(s/\epsilon)$ time, makes $\tilde O(s/\epsilon)$ queries to unknown function $f$, and
\begin{enumerate}
    \item Accepts w.h.p if $f$ is a size-$s$ decision tree.
    \item Rejects w.h.p if $f$ is $\epsilon$-far from size-$s$ decision trees.
\end{enumerate}
\end{theorem}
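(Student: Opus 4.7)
The plan is to derive Theorem~\ref{TH1UB} by directly composing the uniform-distribution proper learner of Theorem~\ref{DTsU} with the generic learning-to-testing reduction in Lemma~\ref{ThTriUv2}.

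First I would verify the hypotheses of Lemma~\ref{ThTriUv2}. Take $C = H = \DT^s$. This class is symmetric (permuting variables just relabels internal nodes of the tree) and closed under zero-one projection (substituting a constant for a variable $x_i$ only collapses the subtrees rooted at nodes labeled $x_i$, which can only decrease the number of leaves). Moreover, any size-$s$ decision tree depends on at most $s$ variables, so $\DT^s \subseteq s\text{-}\junta$, i.e.\ the junta parameter $k$ in Lemma~\ref{ThTriUv2} equals $s$.

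Next I would plug Theorem~\ref{DTsU} into Lemma~\ref{ThTriUv2}. Theorem~\ref{DTsU} is a uniform-distribution proper learner for $\DT^s$ with $Q(n,\epsilon,\delta) = O((1/\epsilon)\, s\log n\log(1/\delta))$ random example queries, $M(n,\epsilon,\delta) = 0$ black-box queries, and running time $T(n,\epsilon,\delta) = ((s/\epsilon)\log n)^{O(\log n)}\log(1/\delta)$. Lemma~\ref{ThTriUv2} instantiates the learner on $k = s$ variables with accuracy $\epsilon/12$ and failure $1/24$, so substituting $n \gets s$ gives
\[
Q(s,\epsilon/12,1/24) = \tilde O(s/\epsilon), \qquad T(s,\epsilon/12,1/24) = \bigl((s/\epsilon)\log s\bigr)^{O(\log s)} = (s/\epsilon)^{O(\log s)}.
\]
The lemma then produces a uniform-distribution two-sided adaptive tester with query complexity
\[
m = \tilde O\!\left(M(s,\epsilon/12,1/24) + Q(s,\epsilon/12,1/24) + \frac{s}{\epsilon}\right) = \tilde O(s/\epsilon),
\]
and running time $T(s,\epsilon/12,1/24) + O(mn) = (s/\epsilon)^{O(\log s)} + n\cdot \tilde O(s/\epsilon)$, exactly matching the bounds in the theorem statement.

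There is essentially no obstacle here beyond the bookkeeping above: the correctness guarantees (accept size-$s$ decision trees, reject functions that are $\epsilon$-far from $\DT^s$) are inherited directly from Lemma~\ref{ThTriUv2}, since that lemma already provides a tester for $C$ by $H$ whenever $C$ is learnable from $H$ in the required sense. The only point worth double-checking is that the $\log n$ factor in the learner's time bound collapses to $\log s$ after the reduction restricts to $k=s$ relevant variables; this is precisely the role of the symmetry/projection-closure hypothesis, which justifies running the learner as if the universe had only $s$ variables. The remaining $O(mn)$ additive term accounts for reading each queried assignment over all $n$ coordinates during the preprocessing step that identifies the relevant variables.
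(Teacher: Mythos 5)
Your proposal is correct and matches the paper's approach exactly: the paper derives Theorem~\ref{TH1UB} precisely by plugging the uniform-distribution proper learner of Theorem~\ref{DTsU} into Lemma~\ref{ThTriUv2} with $k=s$, and your bookkeeping of the query and time bounds is accurate.
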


Theorem~\ref{TH12U} follows from Theorem~\ref{TH1U} and the fact that $\DT_d=\DT_d^{2^d}.$

\begin{theorem}\label{TH12U}
There is a tester that runs in $2^{O(d^2)}/\epsilon+n\cdot\tilde O(2^{d}/\epsilon)$ time, makes $\tilde O(2^{d}/\epsilon)$ queries to unknown function $f$, and
\begin{enumerate}
    \item Accepts w.h.p if $f$ is a depth-$d$ decision tree.
    \item Rejects w.h.p if $f$ is $\epsilon$-far from depth-$d$ decision trees.
\end{enumerate}
\end{theorem}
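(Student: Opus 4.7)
The plan is to derive Theorem~\ref{TH12U} as a direct corollary of Theorem~\ref{TH1U} by exploiting the structural identity $\DT_d = \DT_d^{2^d}$. A depth-$d$ decision tree has at most $2^d$ leaves, so every function in $\DT_d$ is also in $\DT_d^{2^d}$; conversely $\DT_d^{2^d} \subseteq \DT_d$ trivially. Hence testing $\DT_d$ is the same problem as testing $\DT_d^{2^d}$, and we simply plug $s = 2^d$ into the parameters of the distribution-free (and in particular uniform-distribution) tester of Theorem~\ref{TH1U}.

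Concretely, Theorem~\ref{TH1U} yields, for every pair $(s,d)$, a uniform-distribution tester for $\DT_d^s$ making $\tilde O(s/\epsilon)$ queries and running in time $s^{O(d)}/\epsilon + n\cdot \tilde O(s/\epsilon)$. Substituting $s = 2^d$ gives query complexity $\tilde O(2^d/\epsilon)$ and running time
\[
(2^d)^{O(d)}/\epsilon + n\cdot \tilde O(2^d/\epsilon) \;=\; 2^{O(d^2)}/\epsilon + n\cdot \tilde O(2^d/\epsilon),
\]
which matches the parameters claimed in Theorem~\ref{TH12U}.

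The soundness and completeness guarantees transfer verbatim: if $f \in \DT_d = \DT_d^{2^d}$, the tester of Theorem~\ref{TH1U} accepts with high probability; if $f$ is $\epsilon$-far (under the uniform distribution) from every function in $\DT_d^{2^d} = \DT_d$, the same tester rejects with high probability. There is essentially no obstacle here beyond observing that the class equality preserves the ``$\epsilon$-far from'' predicate, since $\DT_d$ and $\DT_d^{2^d}$ are the same set of Boolean functions. Thus the entire proof is a one-line invocation of Theorem~\ref{TH1U} with the parameter substitution $s \gets 2^d$.
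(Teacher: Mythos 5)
Your proof matches the paper's exactly: the paper also derives Theorem~\ref{TH12U} as a one-line corollary of Theorem~\ref{TH1U} via the identity $\DT_d=\DT_d^{2^d}$, substituting $s=2^d$. Your parameter calculation $(2^d)^{O(d)}=2^{O(d^2)}$ and the observation that the class equality preserves both the acceptance and rejection guarantees are precisely what is needed.
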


\subsection{Testers by Larger Classes}\label{TbLC}
In this section, we prove the results in the table in Figure~\ref{TABLE4}.

Our first result follows from Theorem~\ref{DTsOUR} and Lemma~\ref{ThTriv2v2}. 
\begin{theorem}\label{TH1NP}
There is a distribution-free tester that runs in $s^{O(\log s)}n/\epsilon$ time, makes $\tilde O(s^{O(\log s)}/\epsilon)$ queries to unknown function $f$, and
\begin{enumerate}
    \item Accepts w.h.p if $f$ is a size-$s$ decision tree.
    \item Rejects w.h.p if $f$ is $\epsilon$-far from size-$s^{O(\log s)}$ decision trees.
\end{enumerate}
\end{theorem}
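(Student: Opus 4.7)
The plan is a routine application of the learning-to-testing reduction in Lemma~\ref{ThTriv2v2}, instantiated with the non-proper learner for $\DT^s$ from Theorem~\ref{DTsOUR}. No new combinatorial argument is needed; the work is in checking that the hypotheses of the reduction apply and in tracking the parameter substitutions.

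First, I would verify the structural hypotheses of Lemma~\ref{ThTriv2v2} for $C := \DT^s$ and $H := \DT^{s^{O(\log s)}}$. Both classes are symmetric, since a permutation of the variables does not change the size of a decision tree, and both are closed under zero-one projection, since substituting $x_i \gets \xi$ in a size-$t$ tree replaces a subtree by a leaf and can only decrease the size. Both are sub-classes of $k$-$\junta$ with $k = s^{O(\log s)}$, because a size-$t$ decision tree has at most $t$ relevant variables, so this value of $k$ is a valid common junta bound.

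Second, I would read off the complexities of the learner from Theorem~\ref{DTsOUR}: $Q(n,\epsilon,\delta) = (1/\epsilon)(s^{O(\log s)} + s \log(1/\delta))$ random example queries, $M(n,\epsilon,\delta) = O(s \log n)$ black box queries, and $T(n,\epsilon,\delta) = q \cdot s^{O(\log s)} + O((q+m)n)$ time. Substituting $k = s^{O(\log s)}$, accuracy $\epsilon/12$, and failure $1/24$ into the query bound of Lemma~\ref{ThTriv2v2} gives
$$m \;=\; \tilde O\!\left( M(k, \epsilon/12, 1/24) + k \cdot Q(k, \epsilon/12, 1/24) + \frac{k}{\epsilon}\right) \;=\; \tilde O\!\left(\frac{s^{O(\log s)}}{\epsilon}\right),$$
where I use $k \cdot s^{O(\log s)} = s^{O(\log s)}$. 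The running time is $T(k, \epsilon/12, 1/24) + O(mn) = s^{O(\log s)}/\epsilon + \tilde O(s^{O(\log s)}/\epsilon)\cdot n = s^{O(\log s)} n /\epsilon$, matching the claimed bounds.

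The completeness and soundness statements (accept w.h.p.\ if $f \in \DT^s$, reject w.h.p.\ if $f$ is $\epsilon$-far from $\DT^{s^{O(\log s)}}$) come for free from Lemma~\ref{ThTriv2v2}. There is no real obstacle in the argument; the only subtlety is to take the junta bound $k$ large enough to cover the hypothesis class $H$, not just $C$, and to verify that the dominant factor $s^{O(\log s)}$ absorbs all other terms and remains invariant under the inflations $k \mapsto s^{O(\log s)}$ and $\log k \mapsto O(\log^2 s)$ that the substitution introduces.
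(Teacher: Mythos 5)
Your proposal is correct and is essentially the paper's own proof: the paper derives Theorem~\ref{TH1NP} precisely by combining Theorem~\ref{DTsOUR} with Lemma~\ref{ThTriv2v2}. Your bookkeeping (choosing the junta bound $k = s^{O(\log s)}$ so that both $C = \DT^s$ and $H = \DT^{s^{O(\log s)}}$ lie in $k$-$\junta$, and noting that the factor $s^{O(\log s)}$ absorbs the $k\cdot Q$ inflation as well as the $\log k \mapsto O(\log^2 s)$ replacement) is the right way to make the parameter substitution explicit, and it matches the claimed query and time bounds.
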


The following two results are concluded from Lemmas~\ref{ThTriv}, \ref{Blan1} and \ref{Blan2}.

\begin{theorem}\label{BlankTh1}
There is a (uniform distribution) tester that makes $q=poly(1/\epsilon,\log s)+\tilde O(s/\epsilon)$ queries to unknown function $f$, runs in $poly(1/\epsilon)\tilde O(s)+O(qn)$ time and
\begin{enumerate}
    \item Accepts w.h.p if $f$ is a size-$s$ decision tree.
    \item Rejects w.h.p if $f$ is $\epsilon$-far from size-$s^{O((\log^2 s)/\epsilon^3)}$ decision trees.
\end{enumerate}
\end{theorem}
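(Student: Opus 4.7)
The plan is to derive Theorem~\ref{BlankTh1} as a direct corollary of Lemma~\ref{Blan1} composed with the reduction in Lemma~\ref{ThTriv}. First, I would verify the structural hypotheses needed to apply Lemma~\ref{ThTriv} with $C = \DT^s$ and $H = \DT^{s^{O((\log^2 s)/\epsilon^3)}}$: both classes are contained in $s$-$\junta$ (so the parameter $k = s$), are symmetric, and are closed under zero-one projection, since restricting a variable of a size-$s$ decision tree only shrinks a sub-tree and yields another decision tree of size at most $s$ (and likewise for the larger class).

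Next I would instantiate Lemma~\ref{ThTriv} in uniform-distribution mode, taking as the subroutine $\T$ the Blanc-Lange-Tan tester of Lemma~\ref{Blan1}, applied on the restricted class $C_s$ of size-$s$ decision trees over the first $s$ variables. On $s$ variables, Lemma~\ref{Blan1} yields a tester with $M(s,\epsilon,\delta) + Q(s,\epsilon,\delta) = poly(\log s, 1/\epsilon)\cdot \log s = poly(\log s, 1/\epsilon)$ queries and runtime $poly(\log s, 1/\epsilon)\tilde O(s)$ that accepts size-$s$ decision trees with high probability and rejects functions that are uniform-$\epsilon$-far from every decision tree of size $s^{O((\log^2 s)/\epsilon^3)}$. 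Feeding these bounds into the uniform-distribution conclusion of Lemma~\ref{ThTriv} gives a tester for $C$ by $H$ making
\[
m = \tilde O\!\Bigl( M(s, \epsilon/12, 1/24) + Q(s, \epsilon/12, 1/24) + \tfrac{s}{\epsilon} \Bigr) = poly(1/\epsilon, \log s) + \tilde O(s/\epsilon)
\]
queries and running in time $T(s, \epsilon/12, 1/24) + O(mn) = poly(1/\epsilon)\tilde O(s) + O(qn)$, which is exactly the claim.

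No new technical ingredient is required; the argument is pure bookkeeping. The only items that need a careful read are: (i)~Lemma~\ref{ThTriv} is stated so as to allow testing $C$ \emph{by} a possibly larger class $H$ (item~3 of its hypothesis refers to $H_k$, not $C_k$), so taking $H = \DT^{s^{O((\log^2 s)/\epsilon^3)}}$ is legitimate; and (ii)~the exponent $s^{\tilde O((\log s)^2/\epsilon^3)}$ appearing in Lemma~\ref{Blan1} is identified with $s^{O((\log^2 s)/\epsilon^3)}$ after absorbing polylog factors into the outer $O(\cdot)$. The only mild obstacle, if any, is tracking that the $\log n$ factor in the query complexity of the BLT tester collapses to $\log s$ once the reduction step has isolated $s$ relevant variables, which is precisely what Lemma~\ref{ThTriv} is designed to do.
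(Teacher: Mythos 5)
Your proposal is correct and takes exactly the same route as the paper: the paper states that Theorem~\ref{BlankTh1} is concluded from Lemma~\ref{ThTriv} together with the Blanc--Lange--Tan tester of Lemma~\ref{Blan1}, which is precisely the composition you carry out, with $k=s$ and $H$ the class of size-$s^{O((\log^2 s)/\epsilon^3)}$ decision trees. Your bookkeeping of how the $\log n$ factor collapses to $\log s$ after the reduction and how the $\tilde O$ in the BLT exponent is absorbed into the outer $O(\cdot)$ is the intended reading.
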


\begin{theorem}\label{BlankTh2}
There is a (uniform distribution) tester that makes $q=poly(1/\epsilon,d)+\tilde O(2^d/\epsilon)$ queries to unknown function $f$, runs in $poly(1/\epsilon)\tilde O(2^d)+O(qn)$ time and
\begin{enumerate}
    \item Accepts w.h.p if $f$ is a depth-$d$ decision tree.
    \item Rejects w.h.p if $f$ is $\epsilon$-far from depth-${O(d^3/\epsilon^3)}$ decision trees.
\end{enumerate}
\end{theorem}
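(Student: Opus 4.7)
The plan is to combine the Blanc--Lange--Tan uniform-distribution tester of Lemma~\ref{Blan2} with the $n$-independence reduction of Lemma~\ref{ThTriv}. Set $C=\DT_d$ and observe that $C\subseteq k$-$\junta$ for $k=2^d$, since any depth-$d$ decision tree has at most $2^d$ leaves and therefore at most $2^d$ relevant variables; moreover $C$ is symmetric and closed under zero-one projection, as required. The target class $H=\DT_{O(d^3/\epsilon^3)}$ plays the role of the ``larger'' class.

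For the base tester $\T$, I would apply Lemma~\ref{Blan2} to inputs on $k=2^d$ variables with accuracy $\epsilon/12$ and confidence $1/24$. Since Lemma~\ref{Blan2} uses $poly(d,1/\epsilon)\cdot\log k=poly(d,1/\epsilon)$ queries and runs in $poly(d,1/\epsilon)\cdot\tilde O(k)=poly(1/\epsilon)\cdot\tilde O(2^d)$ time on $k$-variable inputs, plugging $\T$ into the uniform-distribution clause of Lemma~\ref{ThTriv} yields the query bound
$$m=\tilde O\bigl(Q(k,\epsilon/12,1/24)+k/\epsilon\bigr)=poly(d,1/\epsilon)+\tilde O(2^d/\epsilon)$$
and running time $poly(1/\epsilon)\tilde O(2^d)+O(qn)$, which matches the statement.

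For correctness, completeness is immediate: if $f$ is a depth-$d$ decision tree, then $f$ is a $2^d$-junta, and the Find-Close preprocessing (Lemma~\ref{cloose}) identifies a set $X$ of size at least $n-2^d$ with $f_{|X\gets 0}=f$, on which the inner BLT tester accepts with high probability. For soundness, suppose $f$ is $\epsilon$-far from every depth-$O(d^3/\epsilon^3)$ tree. Either $f$ is $(\epsilon/2)$-far from every $2^d$-junta, in which case the junta-detection phase of the outer reduction uncovers more than $2^d$ relevant variables and rejects, or Find-Close succeeds with some $X$ and $g:=f_{|X\gets 0}$ is $(\epsilon/2)$-close to $f$; in the latter case the triangle inequality forces $g$ to be $(\epsilon/2)$-far, as a function on the $k$ surviving variables, from every depth-$O(d^3/\epsilon^3)$ tree, so $\T$ rejects $g$.

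The main technical obstacle is that $\DT_{O(d^3/\epsilon^3)}$ is not literally a sub-class of $2^d$-$\junta$, so Lemma~\ref{ThTriv} does not directly apply with $H$ equal to that class. The resolution is the triangle-inequality step in the soundness argument above, which translates ``$f$ is $\epsilon$-far from every depth-$O(d^3/\epsilon^3)$ tree on $n$ variables'' into ``$g$ is $(\epsilon/2)$-far from every depth-$O(d^3/\epsilon^3)$ tree on the $k$ active coordinates'' -- precisely the soundness condition that $\T$ is designed to detect. This is the same junta-bridge maneuver used throughout Section~\ref{RiT}, so once this step is isolated the remainder is a bookkeeping verification of the query, time, and confidence bounds.
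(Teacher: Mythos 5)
Your proposal is correct and takes essentially the same approach as the paper, which derives Theorem~\ref{BlankTh2} simply by citing Lemmas~\ref{ThTriv} and~\ref{Blan2}. You go further and correctly supply the detail the paper leaves implicit, including the observation that $H=\DT_{O(d^3/\epsilon^3)}$ is not a sub-class of $2^d$-$\junta$ as the hypothesis of Lemma~\ref{ThTriv} nominally requires, together with the triangle-inequality bridge (``$f$ far from $H$ over $n$ variables'' $\Rightarrow$ ``$g=f_{|X\gets 0}$ far from $H_k$ over the $k$ active variables'') that makes the reduction sound anyway.
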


\section{A Tester for $\DT_d$}
In this section we prove the following result:
\begin{theorem}\label{TH33N}
There is a distribution-free tester that makes $q=\tilde O(2^{2d}/\epsilon)$ queries to unknown function $f$, runs in $O(qn)$ time and
\begin{enumerate}
    \item Accepts w.h.p if $f$ is a depth-$d$ decision tree.
    \item Rejects w.h.p if $f$ is $\epsilon$-far from depth-$d^2$ decision trees.
\end{enumerate}
\end{theorem}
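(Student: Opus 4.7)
The plan is to follow the outline in subsection~2.2: for each random example $a \sim \D$, implicitly traverse the canonical depth-$d^2$ decision tree $T_f$ that is equivalent to $f$ whenever $f \in \DT_d$, rejecting whenever the traversal runs past depth $d^2$ or whenever one of the auxiliary subroutines detects a structural violation. The first step is the structural lemma underlying $T_f$: if $f \in \DT_d$ with multilinear polynomial representation $f = M_1 + \cdots + M_m$ and $M_i = x_{i_1}\cdots x_{i_t}$ is maximal (i.e.\ not a sub-monomial of any other $M_j$, so in particular $t \le d$), then every restriction $f_{|x_{i_1}\gets\xi_1,\ldots,x_{i_t}\gets\xi_t}$, $\xi \in \{0,1\}^t$, lies in $\DT_{d-1}$. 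Unfolding this recursively---branching on $x_{i_1},\ldots,x_{i_t}$ at the top (a complete subtree of depth $t$) and then attaching, at each of the $2^t$ leaves, the tree built for the restricted function---yields $T_f$ of depth at most $d + (d-1) + \cdots + 1 \le d^2$.

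Next I would develop two black-box subroutines usable under the promise that the current function $g$ lies in $\DT_j$ for some $j \le d$: \textbf{FindRelevant}$(g)$, which returns the (at most $2^d$) relevant variables of $g$ in $\tilde O(2^{2d}) + 2^d \log n$ queries, via the $(n,2d)$-universal-set procedure of~\cite{BshoutyH19} that already powers Theorem~\ref{DTdsnew}; and \textbf{FindMaxMonomial}$(g)$, which returns some maximal monomial of the polynomial representation of $g$ in $\tilde O(2^d)$ queries, exploiting that the relevant variable set $V$ has $|V| \le 2^d$ so a maximal monomial can be located by an incremental search guided by a minimal $1$-witness of $g$ and M\"obius-style coefficient checks. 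Correctness of these subroutines is verified on-the-fly via additional probes; any inconsistency (more than $2^d$ relevant variables, spurious monomial, etc.) triggers rejection.

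With these in hand, the tester draws $m = \Theta(1/\epsilon)$ independent examples $a^{(1)},\ldots,a^{(m)} \sim \D$ and, for each $a^{(j)}$, simulates the traversal of $T_f$ by maintaining a current restriction $g$ (initially $g = f$) and depth counter $D = 0$: while $g$ is non-constant, call \textbf{FindRelevant} and \textbf{FindMaxMonomial} to extract a maximal monomial $M = x_{i_1}\cdots x_{i_t}$ of $g$, update $g \gets g_{|x_{i_1}\gets a^{(j)}_{i_1},\ldots,x_{i_t}\gets a^{(j)}_{i_t}}$ and $D \gets D + t$, and reject if $D > d^2$ or any subroutine reports inconsistency. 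Accept if no rejection fires over the $m$ samples. When $f \in \DT_d$, the structural lemma and w.h.p.\ correctness of the subroutines keep $D \le d^2$, so the tester accepts. When $f$ is $\epsilon$-far from $\DT_{d^2}$: were the per-sample rejection probability below $\epsilon/2$, the implicit traversal truncated at depth $d^2$ (with each truncated branch labeled by the estimated $\D$-majority of $f$ on its cylinder) would produce a depth-$d^2$ decision tree $\epsilon$-close to $f$ under $\D$, contradicting farness, so $O(1/\epsilon)$ samples reject w.h.p.

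The main obstacle, and the source of the $2^{2d}$ factor, is the implementation of \textbf{FindRelevant} on successive restrictions of $f$; the raw accounting depends on $n$ through a $\tilde O(2^d/\epsilon)\log n$ term. Applying the reduction of Lemma~\ref{ThTriv} from~\cite{Bshouty20} (item~\ref{R4}) eliminates the $n$-dependence in the query complexity and yields the claimed $q = \tilde O(2^{2d}/\epsilon)$ queries and $O(qn)$ running time, completing the proof.
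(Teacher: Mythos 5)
Your proposal is correct and takes essentially the same approach as the paper: the maximal-monomial structural lemma (Lemma~\ref{KeyNew}), the canonical tree $T_f$ of depth at most $d(d-1)/2$, the per-sample traversal that rejects beyond depth $d^2$, the truncation argument for soundness against $\epsilon$-far functions, and the final application of the reduction (Lemma~\ref{ThTriv}) to strip the $\log n$ dependence all coincide with the paper's proof of Theorem~\ref{TH33N}. The only cosmetic deviation is that you implement \textbf{FindRelevant} via the deterministic $(n,2d)$-universal-set construction rather than the paper's randomized binary-search routine (Lemma~\ref{Rell}), and you re-run it per sample where the paper finds the relevant variables once up front; both give the stated bound after the reduction, so this does not affect correctness.
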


\subsection{The Key Lemma}
We start with some notations and definitions, and then prove the key Lemma for the tester. 

Recall that {\it monomial} is a conjunction of variables. A $k$-monomial is a monomial with at most $k$ variables.
A polynomial (over the field $F_2$) is a sum (in the binary field $F_2$) of monomials. An $s$-sparse polynomial is a sum of at most $s$ monomials. We say that the polynomial $f$ is of degree-$d$ if its monomials are $d$-monomials. 

It is well known that (see for example Lemma~4 in~\cite{BshoutyH19}):
\begin{lemma}\label{BasDT}
A depth-$d$ decision tree is a $3^d$-sparse degree-$d$ polynomial with at most $2^d$ relevant variables.
\end{lemma}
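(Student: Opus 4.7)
The plan is to prove the lemma by induction on the depth $d$, using the standard disjoint-terms-sum recurrence $T = \overline{x_i}T_0 + x_i T_1$ described in Subsection~\ref{OR}. Rewriting $\overline{x_i} = 1 + x_i$ over $F_2$ gives the key identity $T = T_0 + x_i(T_0 + T_1)$, which drives all three bounds (sparsity, degree, relevant-variable count) simultaneously.

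In the base case $d = 0$, a depth-$0$ decision tree is a constant in $\{0,1\}$, which is trivially a $1$-sparse degree-$0$ polynomial (or the zero polynomial) with $0 \le 1$ relevant variables. For the inductive step, suppose the bounds hold up through depth $d-1$. Given a depth-$d$ tree $T$ with root labeled $x_i$ and subtrees $T_0, T_1$ of depth at most $d-1$, the induction hypothesis gives each $T_j$ as a $3^{d-1}$-sparse degree-$(d-1)$ polynomial. Then $T_0 + T_1$ has at most $2\cdot 3^{d-1}$ monomials and degree at most $d-1$, so $x_i(T_0 + T_1)$ has at most $2\cdot 3^{d-1}$ monomials and degree at most $d$. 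Adding $T_0$ yields at most $3^{d-1} + 2\cdot 3^{d-1} = 3^d$ monomials and overall degree at most $d$, which handles the sparsity and degree claims.

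For the relevant-variable count, I would not proceed purely by induction on the recurrence (since that gives the loose bound $1 + 2\cdot 2^{d-1} = 2^d + 1$); instead, I would observe directly from the structure of $T$ that every relevant variable must appear as a label of some internal node of the tree (in the unique multilinear polynomial representation, a variable is relevant iff it appears in some monomial, and every such monomial is built from internal-node labels along root-to-leaf paths). A binary tree of depth at most $d$ has at most $2^d - 1$ internal nodes, so at most $2^d - 1 \le 2^d$ distinct variables can appear as labels, proving the bound.

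There is no real obstacle here, just a mild bookkeeping point: the clean inductive identity $T_0 + x_i(T_0+T_1)$ over $F_2$ gives the $3^d$ bound immediately, but the relevant-variable bound is best seen by a combinatorial count of internal nodes rather than via the same recursion.
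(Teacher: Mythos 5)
Your proof is correct, and the decomposition $T = T_0 + x_i(T_0 + T_1)$ over $F_2$ is exactly the standard argument; the paper itself does not prove the lemma but cites Lemma~4 of \cite{BshoutyH19}, and a commented-out version of the argument in the source uses this same identity to derive the (slightly tighter) bound of $2\cdot 3^{d-1}$ monomials and $2^d-1$ relevant variables.

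One small correction to your remark about the relevant-variable count: the recurrence does in fact close if you carry the tight induction hypothesis. With $r(d)$ denoting the number of relevant variables, the recurrence is $r(d)\le 1+r_{T_0}+r_{T_1}$ and the base case is $r(0)=0$; assuming $r(d-1)\le 2^{d-1}-1$ gives $r(d)\le 1+2(2^{d-1}-1)=2^d-1$, not $2^d+1$. Your figure $2^d+1$ comes from plugging in the looser $2^{d-1}$ rather than $2^{d-1}-1$, so the recursion is not actually the obstacle you describe. That said, your alternative argument --- every relevant variable must label some internal node, and a depth-$d$ binary tree has at most $2^d-1$ internal nodes --- is correct, gives the same bound, and is arguably more transparent than threading the tighter constant through the induction. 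Either route suffices for the stated $2^d$ bound.
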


Let $f=M_1+M_2+\cdots+M_t$ be a polynomial. We say that $M_i$ is a {\it maximal monomial} of $f$ if $M_i$ is not a sub-monomial of any other monomial $M_j$, i.e., for every other monomial $M_j$, there is a variable in $M_i$ that is not in $M_j$.

We now prove the key Lemma for our tester:
\begin{lemma}\label{KeyNew}
Let $f$ be a depth-$d$ decision tree and $f=M_1+M_2+\cdots+M_t$ be its polynomial representation. Let $M_i=x_{i_1}\cdots x_{i_{d'}}$, $d'\le d$ be a maximal monomial of $f$. For any $\xi_1,\xi_2,\ldots,\xi_{d'}\in\{0,1\}$, we have that $f_{|x_{i_1}\gets \xi_1,\cdots, x_{i_{d'}}\gets \xi_{d'} }$ is depth-$(d-1)$ decision tree.
\end{lemma}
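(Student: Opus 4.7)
The plan is to induct on the depth $d$, with base case $d=1$ handled by direct case analysis on $f\in\{0,1,x_i,1+x_i\}$. For the inductive step, let $x_r$ be the variable at the root of $f$'s decision tree and decompose
\[
f \;=\; \bar{x}_r f_0 + x_r f_1 \;=\; f_0 + x_r(f_0+f_1),
\]
where $f_0,f_1$ are depth-$(d-1)$ decision trees that do not depend on $x_r$. I will split on whether $x_r$ appears in the maximal monomial $M=x_{i_1}\cdots x_{i_{d'}}$.

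Case~1 ($x_r\in M$) is immediate: the single substitution $x_r\gets\xi$ already reduces $f$ to either $f_0$ or $f_1$, which is depth-$(d-1)$, and further substitutions cannot increase the depth. For Case~2 ($x_r\notin M$), I claim $M$ is a maximal monomial of \emph{both} $f_0$ and $f_1$. Granting the claim, the induction hypothesis makes each of $(f_0)_{|x_{i_1}\gets\xi_1,\ldots}$ and $(f_1)_{|x_{i_1}\gets\xi_1,\ldots}$ a depth-$(d-2)$ decision tree; and since $x_r$ is never substituted, $f_{|x_{i_1}\gets\xi_1,\ldots}$ remains rooted at $x_r$ with these two as its subtrees, giving depth at most $d-1$.

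To establish the claim I exploit the polynomial structure of the decomposition. Since $f_0,f_1$ do not involve $x_r$, the coefficient of any $x_r$-free monomial in $f$ equals its coefficient in $f_0$; in particular $M\in f_0$. Also $M\in f_1$, for otherwise $M$ would appear in $g:=f_0+f_1$ and then $x_r M$ would be a monomial of $x_r g\subseteq f$, contradicting maximality. Finally, suppose $M''\supsetneq M$ is a monomial of $f_0$ or of $f_1$. Since neither $f_0$ nor $f_1$ depends on $x_r$, $M''$ is $x_r$-free, so: if $M''\in f_0$ then $M''\in f$ directly; while if $M''\in f_1\setminus f_0$ then $M''\in g$, so $x_r M''\in f$. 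Both outcomes contradict the maximality of $M$ in $f$, proving the claim.

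I expect the chief obstacle to be this last maximality argument for $f_1$: $f_1$ enters $f$ only through the combination $f_0+f_1$, so the contradiction cannot be read off directly as for $f_0$ but must be extracted by playing the two summands against each other, using in particular that a would-be super-monomial of $M$ absent from $f_0$ must persist in $g$ and thereby force an $x_r$-super-monomial of $M$ into $f$.
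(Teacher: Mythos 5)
Your proof is correct and follows essentially the same approach as the paper: peel off the root variable of a witnessing tree, split on whether that variable appears in the maximal monomial $M$, and in the harder case show that $M$ remains a maximal monomial of both subtrees so the induction hypothesis applies, then recombine $f_{|q}=\bar x_r(f_0)_{|q}+x_r(f_1)_{|q}$. The only cosmetic difference is that you induct on depth while the paper inducts on the number of variables; your maximality argument for $f_1$ is in fact stated a bit more carefully than the paper's (which conflates ``removed'' with ``not maximal''), but the underlying mechanism --- a would-be counterexample forces $x_rM$ or $x_rM''$ into $f$, contradicting maximality there --- is identical.
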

\begin{proof}
The proof is by induction on the number of variables $m$ of $f$. The case $m=1$ is trivial. Now assume that the result holds for any $m\le k$.

Let $T$ be any depth-$d$ decision tree with $k+1$ variables that represents $f$. Let $M_i=x_{i_1}\cdots x_{i_{d'}}$, $d'\le d$ be a maximal monomial of $f$. Let $X=\{x_{i_1},\ldots,x_{i_{d'}}\}$. If the variable of the root of $T$ is $x_{i_j}\in X$ then, $T_{|x_{i_j}\gets 0}$ and $T_{|x_{i_j}\gets 1}$ are left and right decision sub-trees of $T$ and are of depth at most $d-1$. Then, $T_{|x_{i_1}\gets \xi_1,\cdots, x_{i_{d'}}\gets \xi_{d'}}$ is of depth at most $d-1$ for any $\xi_1,\ldots,\xi_{d'}\in\{0,1\}$.

If the variable of the root of the tree $T$ is $x_\ell\not\in X$, then the left sub-tree $T_{|x_\ell\gets 0}$ is a depth-$(d-1)$ decision tree and has at most $k$ variables. We now claim that $M_i$ is a maximal monomial of $T_{|x_\ell\gets 0}$. This is because of the fact that substituting $x_\ell=0$ in the polynomial representation only removes monomials in $f$. Since $x_\ell$ is not in $M_i$, it does not remove $M_i$. Therefore, $M_i$ is maximal monomial in $T_{|x_\ell\gets 0}$, and by the induction hypothesis $T_{|x_\ell\gets 0,x_{i_1}\gets \xi_1,\cdots, x_{i_{d'}}\gets \xi_{d'}}$ is depth-$(d-2)$ decision tree. 

The right sub-tree $T_{|x_\ell\gets 1}$ is a depth-$(d-1)$ decision tree that has at most $k$ variables. We now claim that $T_{|x_\ell\gets 1}$ also has $M_i$ as a monomial and it is maximal. For if it is removed or not maximal, then there must be the monomial $M_j=x_{\ell}M_i$ in $T$. Since $M_i$ is sub-monomial of $M_j$ we get a contradiction. Therefore, by the induction hypothesis $T_{|x_\ell\gets 1,x_{i_1}\gets \xi_1,\cdots, x_{i_{d'}}\gets \xi_{d'}}$ is depth-$(d-2)$ decision tree. 

This implies that $$T_{|x_{i_1}\gets \xi_1,\cdots, x_{i_{d'}}\gets \xi_{d'}}=x_\ell\cdot T_{|x_\ell\gets 1,x_{i_1}\gets \xi_1,\cdots, x_{i_{d'}}\gets \xi_{d'}}+\overline{x_\ell}\cdot T_{|x_\ell\gets 0,x_{i_1}\gets \xi_1,\cdots, x_{i_{d'}}\gets \xi_{d'}}$$ is depth-$(d-1)$ decision tree. 
\end{proof}

For every Boolean function $f$ that is $3^d$-sparse degree-$d$ polynomial, we define the following decision tree $T_f$. If $f$ is constant function, then $T_f$ is a leaf labeled with this constant. Let $f=M_1+M_2+\cdots+M_t$. Consider any maximal monomial $M_i$ of $f$. Let $M_i=x_{i_1}\cdots x_{i_{d'}}$ where $i_1<i_2<\cdots<i_{d'}$. The tree $T_f$ has all the variables $x_{i_1},\cdots, x_{i_{d'}}$ at the first $d'$ levels of the tree. That is, the first $d'$ levels of the tree is a complete tree where the label of all the nodes at level $j$ is $x_{i_j}$. So every $\xi_1,\ldots,\xi_{d'}\in\{0,1\}$ leads to a different vertex at level $d'$ in $T_f$ from which we recursively attach the decision tree $T_{|g}$ where $g=f_{|x_{i_1}\gets \xi_1,\cdots, x_{i_{d'}}\gets \xi_{d'}}$.

We now prove:
\begin{lemma} Let $f$ be a sparse-$3^d$ degree-$d$ polynomial. Then, for $h=d(d-1)/2$:
\begin{enumerate}
    \item\label{DTN1} If $f$ is a depth-$d$ decision tree, then $T_f$ is depth $h$-decision tree.
    \item\label{DTN3} If $f$ is $\epsilon$-far from every depth $h+1$ decision tree according to the distribution $\D$ then, for a random assignment $a$ drawn according to the distribution $\D$, with probability at least $\epsilon$, the path that $a$ takes in $T_{f}$ reaches depth $h+1$.
\end{enumerate}
\end{lemma}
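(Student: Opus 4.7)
The plan is to handle the two parts separately. Part (\ref{DTN1}) follows from a straightforward induction on $d$ using Lemma~\ref{KeyNew}, while Part (\ref{DTN3}) follows from a contrapositive truncation argument.

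For Part (\ref{DTN1}), I would induct on $d$. The base case $d=0$ is immediate since $f$ is then constant and $T_f$ is a single leaf of depth $0$. For the inductive step, the construction of $T_f$ chooses a maximal monomial $M_i=x_{i_1}\cdots x_{i_{d'}}$ with $d'\le d$, places a full binary tree on these variables in the top $d'$ levels, and attaches to each of its $2^{d'}$ leaves the recursively built tree $T_g$ for $g=f_{|x_{i_1}\gets\xi_1,\ldots,x_{i_{d'}}\gets\xi_{d'}}$. Lemma~\ref{KeyNew} guarantees that each such $g$ is a depth-$(d-1)$ decision tree, so by the inductive hypothesis $T_g$ has depth at most $h(d-1)$. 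Hence $T_f$ has depth at most $d'+h(d-1)\le d+h(d-1)$. Matching this to the claimed bound $h(d)=d(d-1)/2$ is the delicate combinatorial step: in the extremal case $d'=d$ one must use that the restriction $g$ collapses to a constant (so the attached subtree contributes depth $0$ instead of $h(d-1)$), while for $d'\le d-1$ the straight recurrence already closes.

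For Part (\ref{DTN3}), I would argue by contrapositive. Suppose that
\[
\Pr_{a\sim\D}[\text{the path of } a \text{ in } T_f \text{ reaches depth } h+1] < \epsilon,
\]
and derive a contradiction. Form $T'$ by truncating $T_f$ at depth $h$: every internal node of $T_f$ at depth $h$ is converted into a leaf with an arbitrary label (say $0$), while every existing leaf at depth at most $h$ keeps its original label. Then $T'$ is a depth-$h$ decision tree, hence in particular a depth-$(h+1)$ decision tree. A simple induction on the recursive construction shows that $T_f$ computes $f$, since each leaf of $T_f$ is by definition labeled with the constant to which $f$ collapses under the restriction that names that leaf. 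Consequently, for any assignment $a$ whose path in $T_f$ terminates at depth at most $h$, we have $T'(a)=f(a)$. Therefore
\[
\Pr_\D[T'(a)\neq f(a)] \;\le\; \Pr_\D[\text{path of } a \text{ in } T_f \text{ reaches depth } h+1] \;<\; \epsilon,
\]
so $T'$ is a depth-$(h+1)$ decision tree $\epsilon$-close to $f$ under $\D$, contradicting the assumption that $f$ is $\epsilon$-far from every depth-$(h+1)$ decision tree.

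The main obstacle I expect is the tight arithmetic in Part (\ref{DTN1}) that yields $h=d(d-1)/2$ rather than the looser $d(d+1)/2$ from the naive recurrence; this is where one must exploit the structure of the polynomial representation when the maximal monomial attains full degree $d$. Part (\ref{DTN3}), by contrast, is a routine truncation argument once the functional equivalence between $T_f$ and $f$ has been observed.
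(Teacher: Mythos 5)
Your argument for Part~\ref{DTN3} is the same contrapositive truncation argument the paper uses (the paper forms $T'_f$ by truncating at depth $h+1$ and labelling the new leaves $0$, while you truncate at depth $h$; both versions give the stated probability bound), and your observation that $T_f$ computes $f$, so the leaves of the truncated tree of depth at most $h$ are correct, is exactly the point the paper invokes.

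For Part~\ref{DTN1}, your high-level approach matches the paper's (the paper's entire proof of this part is that it ``follows immediately from Lemma~\ref{KeyNew}''), and you are right that the bare recurrence $D(d)\le d+D(d-1)$, $D(0)=0$ only gives $D(d)\le d(d+1)/2$ rather than $d(d-1)/2$. But your proposed repair is wrong. You assert that when the maximal monomial has full degree $d'=d$, the restriction $g=f_{|x_{i_1}\gets\xi_1,\ldots,x_{i_d}\gets\xi_d}$ must be constant; it need not be. Take $f=\overline{x_1}\,x_2 + x_1 x_3 = x_2 + x_1x_2 + x_1x_3$: this is a depth-$2$ decision tree whose polynomial representation has degree $2$, and $x_1x_2$ is a maximal monomial with $d'=2=d$, yet $f_{|x_1\gets 1,\ x_2\gets 0}=x_3$ is non-constant, so $T_f$ has depth $3$ while $h(2)=1$. (Independently, for $d\le 2$ even a constant restriction does not rescue the bound, since a full-degree maximal monomial already forces $T_f$ to depth $d>d(d-1)/2$.) What Lemma~\ref{KeyNew} actually yields is $D(d)\le d+D(d-1)$, hence $D(d)\le d(d+1)/2$; this suffices for the downstream Theorem~\ref{TH33N}, which only needs $O(d^2)$, but neither your fix nor the paper's one-line proof justifies the coefficient $d(d-1)/2$ as the lemma states it.
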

\begin{proof}
{\it \ref{DTN1}} follows immediately from Lemma~\ref{KeyNew}.

To prove {\it \ref{DTN3}}, let $T'_f$ be the tree $T_f$ where every vertex of depth $h+1$ is changed to a leaf labeled with $0$. Since $f$ is $\epsilon$-far from every depth $h+1$ decision tree according to the distribution $\D$, it is $\epsilon$ far from $T'_f$. Since the leaves of $T'_f$ of depth at most $h$ correctly compute $f$, the probability that a random assignment $a$ chosen according to distribution $\D$ ends up in a leaf of depth at most $h$ is less than $1-\epsilon$. This completes the proof.
\end{proof}

\subsection{The Tester}
In this section, we prove Theorem~\ref{TH33N}. To that end, we start by the following Lemma:

\ignore{In the next subsection~\ref{RVMM}, we prove:}
\begin{lemma} \label{TimeV}We have
\begin{enumerate}
    \item There is an algorithm that for a degree-$d$ polynomial $f$ makes $q=\tilde O(2^{2d})+2^d\log n$ queries, runs in time $O(qn)$ and finds the relevant variables of $f$.
    \item There is an algorithm that for a degree-$d$ polynomial over $2^d$ variables $X$ makes $q'=\tilde O(2^d)$ queries, runs in time $O(q'n)$ and finds a maximal monomial in $f$.
\end{enumerate}
\end{lemma}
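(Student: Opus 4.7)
The plan for both parts is to combine the pair-distinguishing routine (sample two inputs and compare $f$-values) with an $F_2$-Schwartz--Zippel fact: every non-zero polynomial over $F_2$ of degree $\le d$ is non-zero on at least a $2^{-d}$ fraction of its inputs. That makes a single probe succeed with probability $\Omega(2^{-d})$, which is what lets the per-step cost collapse to $\tilde O(2^d)$. Throughout, I use the structural fact (Lemma~\ref{BasDT}) that a depth-$d$ decision tree has at most $2^d$ relevant variables, so that the outer iteration counts are controlled.

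For Part 1, the algorithm maintains a set $R \subseteq [n]$ of already-discovered relevant variables, and in each outer round tries to find one new relevant variable. To do so, it samples independent uniform assignments $r \in \{0,1\}^R$ and $u,v \in \{0,1\}^{[n]\setminus R}$, queries $f(r,u)$ and $f(r,v)$, and whenever $f(r,u)\neq f(r,v)$ does a binary search on the coordinates where $u$ and $v$ disagree to isolate one new relevant variable at cost $O(\log n)$ queries. The key point is that $h(r,u,v):=f(r,u)+f(r,v)$, viewed as an $F_2$-polynomial in its $|R|+2(n-|R|)$ underlying variables, has degree $\le d$, and vanishes identically iff every relevant variable of $f$ lies in $R$. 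So whenever some relevant variable is still missing, $\Pr[f(r,u)\neq f(r,v)]\ge 2^{-d}$ by Schwartz--Zippel, and $O(2^d\log(1/\delta))$ sampled triples find a witness with probability $\ge 1-\delta$. With at most $2^d$ outer rounds this yields $\tilde O(2^{2d})$ queries for the random sampling plus $2^d\cdot O(\log n)$ for the binary searches, matching the claimed bound, all in $O(qn)$ time.

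For Part 2, let $V$ be the at-most-$2^d$ relevant variables of $f$ (produced by Part 1). I would build a maximal monomial $S\subseteq V$ greedily. At each step, consider the discrete derivative $g_S:=\partial_S f$, a polynomial over $V\setminus S$ of degree $\le d-|S|$, evaluable on any input by the Mobius formula $g_S(y)=\sum_{T\subseteq S} f(y,\chi_T,0_{[n]\setminus V})$ using $2^{|S|}$ queries to $f$. The combinatorial observation is that $x_i\in V\setminus S$ is relevant in $g_S$ iff some monomial of $f$ contains $S\cup\{i\}$, so extending $S$ by a relevant variable of $g_S$ preserves the invariant ``$S$ is contained in some monomial of $f$'', and by induction $g_S$ is never the zero polynomial; the process therefore terminates exactly when $g_S$ is the non-zero constant $1$, at which point $S$ itself is a monomial of $f$, and the termination condition makes it maximal. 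To locate one relevant variable of $g_S$ the same pair test is used: by Schwartz--Zippel on $g_S(u)+g_S(v)$ (degree $\le d-|S|$), $O(2^{d-|S|})$ sampled pairs suffice, each costing $O(2^{|S|})$ queries of $f$, and a binary search over $V\setminus S$ of length $O(\log|V|)=O(d)$ uses $O(d\cdot 2^{|S|})$ further queries of $f$. Summing the per-level cost $O(2^d)+O(d\cdot 2^{|S|})$ over $|S|=0,1,\ldots,d-1$ gives $O(d\cdot 2^d)=\tilde O(2^d)$ queries, and no $\log n$ factor appears because every binary search is over $V$, which has size $\le 2^d$.

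The main obstacle is the degree accounting underpinning the two Schwartz--Zippel applications. In Part 1 one must check that $h(r,u,v)=f(r,u)+f(r,v)$ stays of degree $\le d$ in the combined $2n-|R|$-variable polynomial (the monomials of $f$ that are entirely supported in $R$ cancel, and the remaining ones split into two monomials without growing in degree) and that $h\equiv 0$ holds precisely when no relevant variable remains outside $R$. In Part 2 one must check that the greedy extension never stalls at an $S$ with $g_S\equiv 0$ rather than a non-zero constant, and that the sampled pair test on $g_S(u)+g_S(v)$ inherits the right $2^{-(d-|S|)}$ lower bound from Schwartz--Zippel. These two structural facts are exactly what make the detection probability scale with the \emph{current} degree, so that the per-round cost stays $O(2^d)$ (resp.\ $O(2^{d-|S|})$) and the geometric series of costs telescopes into the stated totals.
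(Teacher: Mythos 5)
Your proposal is correct and follows essentially the same strategy as the paper: both parts hinge on the $F_2$ Schwartz--Zippel bound (the paper's Lemma~\ref{BasE}) combined with binary search for relevant variables, and Part~2 works with the iterated discrete derivative $\partial_S f$, which is exactly the paper's function $G+1$ obtained by summing $f$ over all $2^{|S|}$ cube corners on $S$. The only cosmetic difference is that for Part~1 you compare two independent random extensions $u,v$ of a random $r$, whereas the paper compares $f(a)$ against $f(a_{|X'\gets 0})$; the degree and cost accounting come out the same.
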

The proof of Lemma \ref{TimeV} is given in subsection~\ref{RVMM}. This immediately gives the following result that we need for our tester.
\begin{lemma} Let $f$ be a sparse-$3^d$ degree-$d$ polynomial over $2^d$ variables. Let $h=d(d-1)/2$.
 Given the relevant variables of $f$, for any $a\in \{0,1\}^n$, the path that $a$ takes in $T_{f}$ up to depth at most $h+1$ can be computed in $\tilde O(2^{d})h$ time.
\end{lemma}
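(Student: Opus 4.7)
The plan is to simulate the descent of $a$ through $T_f$ one ``block'' at a time, without ever materializing the tree. Maintain a partial assignment $\rho$ representing the current node of $T_f$, initialized to empty. At each step, let $g := f_{|\rho}$ denote the function sitting at the current node. Since restriction cannot increase the degree, the sparsity, or the set of relevant variables, $g$ is a $3^d$-sparse degree-$d$ polynomial on at most $2^d$ variables, so Lemma~\ref{TimeV}(2) is applicable to it.

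The loop is then straightforward. First, invoke the maximal-monomial subroutine of Lemma~\ref{TimeV}(2) on $g$ to obtain a maximal monomial $M = x_{j_1} \cdots x_{j_{d'}}$ with $d' \leq d$; the queries to $g$ required by this subroutine are simulated by queries to $f$ in which the coordinates fixed by $\rho$ are set accordingly and the remaining coordinates are passed through. By the construction of $T_f$, the next $d'$ levels below the current node form a complete sub-tree branching on $x_{j_1}, \ldots, x_{j_{d'}}$, so I route $a$ through this block simply by reading off $a_{j_1}, \ldots, a_{j_{d'}}$ and appending these assignments to $\rho$. If the resulting $f_{|\rho}$ is constant, we have arrived at a leaf and stop; otherwise repeat.

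Counting: each iteration descends at least one level (since $d' \geq 1$ whenever $g$ is non-constant), so we execute at most $h+1$ iterations before either reaching a leaf or crossing depth $h+1$. Each iteration runs the maximal-monomial subroutine on a degree-$d$ polynomial whose ambient space is the already-identified set of at most $2^d$ relevant variables — so the factor of $n$ in the statement of Lemma~\ref{TimeV}(2) is replaced by $2^d$ and absorbed into the $\tilde O$ — costing $\tilde O(2^d)$ per iteration. Multiplying yields the claimed $\tilde O(2^d) \cdot h$ total.

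The only genuine subtlety is correctness rather than accounting: the tree $T_f$ is \emph{defined} by picking, at every recursive step, some maximal monomial of the current function, so the algorithm above is consistent with \emph{some} legal instantiation of $T_f$. Lemma~\ref{KeyNew} ensures that after substituting the variables of a maximal monomial of a depth-$k$ function one lands on a depth-$(k-1)$ function; iterating this observation is what bounds the total depth of $T_f$ (and hence the number of iterations) by $h$. Beyond this, what remains is routine bookkeeping over $\rho$ and the straightforward simulation of oracle calls to $g$ via oracle calls to $f$.
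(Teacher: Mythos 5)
Your proof is correct and supplies exactly the argument the paper implicitly invokes when it says the lemma ``immediately'' follows from Lemma~\ref{TimeV}: maintain the current restriction $\rho$, call the maximal-monomial finder of Lemma~\ref{TimeV}(2) on $f_{|\rho}$ (simulating queries to $f_{|\rho}$ by fixing the coordinates in $\rho$ and querying $f$), append the $d'$ entries of $a$ on the returned monomial to $\rho$, and iterate. The two observations you single out are precisely the load-bearing ones: (i) restriction cannot increase degree, sparsity, or the set of relevant variables, so Lemma~\ref{TimeV}(2) remains applicable at every node of the walk on the already-identified variable set of size at most $2^d$; and (ii) since the paper's $T_f$ is defined by choosing an arbitrary maximal monomial at each level, the walk your algorithm performs is consistent with some legal instantiation of $T_f$, and the subsequent correctness lemma holds for every such instantiation. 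Each iteration consumes at least one level of depth, so $h+1$ iterations suffice to either reach a leaf or certify that the path goes beyond depth $h+1$, giving $\tilde O(2^d)\cdot h$ overall. (The one cosmetic wrinkle — that $\tilde O(2^d)$ is really the per-iteration query count and the actual work includes a factor for the ambient coordinate space — is an imprecision inherited from the paper's own statement, not a gap in your argument.)
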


The tester's paradigm is as follows. First, the tester finds the relevant variables of $f$ (See details in the next subsection). If the number of relevant variables exceeds $2^d$, then the tester rejects. The tester then, for $t=O(1/\epsilon)$ assignments $a^{(1)},\ldots,a^{(t)}$ drawn according to the distribution $\D$, finds the route of each $a^{(i)}$ in $T_f$. If no maximal monomial of size at most $d$ can be found then the tester rejects. 
If one of the routes exceeds depth $h=d(d-1)/2$, the algorithm rejects. Otherwise it accepts. Each route takes time $\tilde O(2^d)$. So the number of queries is $q=\tilde O(2^{2d})/\epsilon+2^d\log n$ and the time is $O(qn)$. By Lemma~\ref{ThTriv} the result follows.

In the Appendix, we give another algorithm that has the same query complexity for $\epsilon>2^{-d^2}$. 

\subsection{Relevant Variables and Maximal Monomial}\label{RVMM}
In this subsection, we prove Lemma~\ref{TimeV}. We show how to find the relevant variables and a maximal monomial of any degree-$d$ polynomial. 

The following is a very well known result:
\begin{lemma}\label{BasE}
For any non-constant degree-$d$ polynomial $f$, we have $\Pr[f(x)\not=f(0)]\ge 1/2^d$.
\end{lemma}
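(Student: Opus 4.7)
The plan is to reduce the statement to the classical fact that every non-zero degree-$d$ multilinear polynomial over $\mathbb{F}_2$ is non-zero on at least a $1/2^d$ fraction of the cube $\{0,1\}^n$, and then prove that fact by induction on $d$. The reduction is immediate: let $g(x) = f(x) + f(0)$ over $\mathbb{F}_2$. Since $f$ is non-constant, $g$ is non-zero, and $g$ is still a multilinear polynomial of degree at most $d$. Moreover $\Pr_x[f(x) \ne f(0)] = \Pr_x[g(x) = 1]$, so it suffices to show $\Pr_x[g(x) = 1] \ge 1/2^d$.

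For the induction, the base case $d = 0$ is trivial: the only non-zero degree-$0$ polynomial is the constant $1$. For the inductive step with $d \ge 1$, I pick a variable $x_i$ appearing in some monomial of $g$ of maximal degree, and decompose $g = x_i g_1 + g_0$, where $g_0, g_1$ do not depend on $x_i$. By the choice of $x_i$, the polynomial $g_1$ is non-zero and has degree exactly $d-1$, so the induction hypothesis gives $\Pr_{x_{-i}}[g_1(x_{-i}) = 1] \ge 1/2^{d-1}$.

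The key identity is $g_1 = g|_{x_i \gets 0} + g|_{x_i \gets 1}$ (as polynomials, hence as functions on $\{0,1\}^{n-1}$). Writing $A = \Pr_{x_{-i}}[g|_{x_i \gets 0} = 1]$ and $B = \Pr_{x_{-i}}[g|_{x_i \gets 1} = 1]$, I split on the value of $g_1(x_{-i})$: when $g_1 = 0$ the two restrictions agree and contribute equally to $A$ and $B$, while when $g_1 = 1$ they differ and exactly one of them equals $1$. Summing gives $A + B \ge \Pr_{x_{-i}}[g_1 = 1] \ge 1/2^{d-1}$, and since $\Pr_x[g(x) = 1] = (A+B)/2$, I conclude $\Pr_x[g(x) = 1] \ge 1/2^d$, completing the induction.

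There is no real obstacle here; the only thing to be careful about is that $g_1$ must be non-zero of degree exactly $d-1$, which is secured by choosing $x_i$ from a top-degree monomial, and that the identity $g_1 = g|_{x_i \gets 0} + g|_{x_i \gets 1}$ is used both to invoke induction on $g_1$ and to relate the probabilities of $g$ being $1$ under the two sub-cube conditionings.
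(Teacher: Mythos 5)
Your proof is correct, and it is a clean instance of the standard inductive argument for what is essentially the Schwartz--Zippel (or Ore--DeMillo--Lipton) bound over $\mathbb{F}_2$. The paper itself offers no proof: it introduces Lemma~\ref{BasE} with the phrase ``the following is a very well known result'' and immediately moves on, so there is no in-paper argument to compare against. Your reduction $g = f + f(0)$, the decomposition $g = x_i g_1 + g_0$ with $x_i$ from a top-degree monomial, the identity $g_1 = g|_{x_i\gets 0} + g|_{x_i\gets 1}$, and the averaging $\Pr[g=1] = (A+B)/2 \ge \tfrac12\Pr[g_1=1]$ are all correct and yield exactly the claimed $1/2^d$ bound. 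One tiny edge case worth a sentence: when $d\ge 1$ but $g$ happens to be the nonzero constant $1$ (degree $0$), there is no variable to pick, but then $\Pr[g=1]=1$ and the bound holds trivially; with that caveat noted, the induction is airtight.
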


The following is a well known result in learning theory. We prove it for completeness.
\begin{lemma}~\label{Rell}
There is an algorithm that given any degree-$d$ polynomial $f$ over $v$ variables and a set $X$ of some of its relevant variables, asks $2^d\log(1/\delta)+\log v$ queries and, with probability at least $1-\delta$, decides if $X$ are all its relevant variables, and if not, finds a new relevant variable of $f$.  
\end{lemma}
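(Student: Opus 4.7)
The plan is to reduce the task to testing whether a simple auxiliary polynomial is identically zero. Let $Y$ denote the set of variables of $f$ outside $X$ (so $|Y|\le v$) and define the witness polynomial
\[
g(x) \;:=\; f(x) \;+\; f(x_{|Y\gets 0}) \qquad\text{over } F_2.
\]
Since substitution cannot raise degree, $\deg g\le d$, and $g(0)=f(0)+f(0)=0$. Directly from the definition, $g\equiv 0$ iff $f(x)=f(x_{|Y\gets 0})$ for every $x$, iff no variable in $Y$ is relevant to $f$, iff $X$ already contains every relevant variable of $f$. So completeness of $X$ is exactly the question of whether $g$ is the zero polynomial.

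For detection I would draw $M=\lceil 2^d\ln(1/\delta)\rceil$ i.i.d.\ uniform points $x^{(1)},\ldots,x^{(M)}\in\{0,1\}^n$ and evaluate each $g(x^{(i)})=f(x^{(i)})\oplus f(x^{(i)}_{|Y\gets 0})$ using two black-box queries. If all $M$ values vanish I declare $X$ complete; otherwise I retain the first witness $x^*$ with $g(x^*)\ne 0$. Correctness uses Lemma~\ref{BasE} applied to the degree-$\le d$ polynomial $g$: if $g\not\equiv 0$ then
\[
\Pr_x[g(x)\neq 0] \;=\; \Pr_x[g(x)\neq g(0)] \;\ge\; 2^{-d},
\]
so the probability of missing every witness is at most $(1-2^{-d})^M\le\delta$. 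This phase spends $2M=O(2^d\log(1/\delta))$ queries.

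For identification, given $x^*$ with $u:=x^*$ and $v_0:=x^*_{|Y\gets 0}$ satisfying $f(u)\neq f(v_0)$, I would binary-search the set $Z\subseteq Y$ of coordinates on which $u$ and $v_0$ disagree. Split $Z=Z_1\sqcup Z_2$ into halves, let $w$ agree with $u$ on $Z_1$ and with $v_0$ on $Z_2$, and query $f(w)$: exactly one of the pairs $(u,w),(w,v_0)$ remains an $f$-disagreement, and its disagreement set has size at most $\lceil|Z|/2\rceil$. After $\lceil\log_2|Z|\rceil\le\log v$ such queries, the disagreement collapses to a pair of assignments differing in a single coordinate $\ell\in Y$, which proves that $x_\ell$ is a relevant variable of $f$; and it is genuinely new since $\ell\in Y=V\setminus X$.

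Summing the two phases yields the claimed $2^d\log(1/\delta)+\log v$ query count (with the constant factor absorbed into the first term). There is no substantive obstacle; the argument is Lemma~\ref{BasE} (to force a witness of $g\not\equiv 0$) followed by textbook binary search. The only item to verify carefully is the equivalence ``$g\equiv 0\Leftrightarrow X$ is complete,'' which is immediate from the definition of $g$ together with the fact that $g$'s degree is inherited from $f$.
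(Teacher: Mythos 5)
Your proposal is correct and follows essentially the same route as the paper's own proof: define the witness polynomial $g=f+f_{|Y\gets 0}$, use Lemma~\ref{BasE} on the degree-$\le d$ polynomial $g$ to detect $g\not\equiv 0$ with $O(2^d\log(1/\delta))$ samples, then binary-search the disagreement to isolate a single new relevant coordinate in at most $\log v$ queries. The only (cosmetic) difference is that you observe $g(0)=0$ up front, whereas the paper phrases the detection step as deciding whether $g$ is constant and then notes $\tau\in\{0,1\}$; your version is the cleaner formulation of the same argument.
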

\begin{proof}
Let $X'=\{x_{i_1},\ldots,x_{i_t}\}$ be the set of variables that are not in $X$. Define $g=f+f_{|X'\gets 0}$. Since $g$ is of degree at most $d$, by Lemma~\ref{BasE}, with $2^d\log(1/\delta)$ queries to $g$, with probability at least $1-\delta$, we can decide if $g$ is a constant function. If not, we get an assignment $a$ such that $g(a)\not=g(0)$. If $g$ is constant function $\tau\in\{0,1\}$, then $f(x)=f_{|X'\gets 0}+\tau$ and $f$ is independent of $X'$. So, the variables in $X$ are all the relevant variables of $f$

If $g(a)\not=g(0)$, then since $g(0)=0$, we have $f(a)\not= f_{|X'\gets 0}(a)=f(a_{|X'\gets 0})$. Now recursively flip half of the entries of $a$ that differ from $a_{|X'\gets 0}$ and ask a query and keep the two assignments that have different values in $f$. Eventually, we get an entry $a_{i_{k+1}}$ that flipping it changes the value of the function. Then, $x_{i_{k+1}}$ is relevant variable in $f$. Now $x_{i_{k+1}}\not\in X$ is a new relevant variable because the entries of each $x_{i_j}\in X$ in $a$ agree with the value of the same entry in $a_{|X'\gets 0}$. 
The number of queries in this procedure is at most $\log v$. This completes the proof.
\end{proof}

Therefore, for degree-$d$ polynomial, choosing confidence $\delta/2^d$ in the above algorithm, with probability at least $1-\delta$, we find the first $2^d$ relevant variables of $f$ using $\tilde O(2^{2d})\log(1/\delta)+2^d\log n$ queries. If $f$ has more than $2^d$ relevant variables, then $f$ is not a depth-$d$ decision tree and the tester rejects.

We now prove

\begin{lemma} Let $f$ be a degree-$d$ polynomial and $X$ be the set of its relevant variables.
Let $M=x_{i_1}\cdots x_{i_k}$ be a sub-monomial of some monomial of $f$ ($M$ is not necessarily a monomial of $f$). There is an algorithm that asks $O(2^d\log(1/\delta)+2^k\log |X|)$ queries and, with probability at least $1-\delta$, decides if $M$ is a maximal monomial of $f$, and if it is not, it finds a new variable $x_{i_{k+1}}$ such that $M'=x_{i_1}\cdots x_{i_k}x_{i_{k+1}}$ is a sub-monomial of some monomial of $f$.
\end{lemma}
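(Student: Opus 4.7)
The plan is to work with the coefficient polynomial of $M$ in $f$. Writing the multilinear expansion as $f=\sum_T c_T\prod_{j\in T}x_j$, the monomials of $f$ containing $M$ as a sub-monomial are exactly the $T\supseteq\{i_1,\ldots,i_k\}$ with $c_T=1$. These are packaged into the polynomial in the remaining variables
$$h(y):=\bigoplus_{S\subseteq\{i_1,\ldots,i_k\}} f(\mathbf{1}_S,y),$$
where $\mathbf{1}_S$ is the assignment to $(x_{i_1},\ldots,x_{i_k})$ that is $1$ on $S$ and $0$ elsewhere. Standard $F_2$ Möbius inversion gives $h(y)=\sum_{T\supseteq\{i_1,\ldots,i_k\}}c_T\prod_{j\in T\setminus\{i_1,\ldots,i_k\}}y_j$, so $h$ has degree at most $d-k$, depends only on $Y:=X\setminus\{x_{i_1},\ldots,x_{i_k}\}$, and $M$ is a maximal monomial of $f$ iff $h\equiv 1$. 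Moreover, any relevant variable of $h$ lies in $Y$ and, by the explicit form of $h$, is a valid choice of $x_{i_{k+1}}$. Each evaluation of $h$ costs $2^k$ queries to $f$.

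In the sampling phase, I would first compute $h(\mathbf{0})$, then draw $N=O(2^{d-k}\log(1/\delta))$ independent uniform $y\in\{0,1\}^{|Y|}$ (fixing variables outside $X$ to $0$) and evaluate $h$ at each. The hypothesis that $M$ is a sub-monomial of some monomial of $f$ gives $h\not\equiv 0$. If $h(\mathbf{0})=1$, the algorithm either sees $h(y)=1$ at every sample---in which case it declares $M$ maximal---or it finds $y$ with $h(y)=0$; by Lemma~\ref{BasE} applied to the degree-$(d-k)$ polynomial $h$, whenever $h$ is non-constant $\Pr_y[h(y)\neq h(\mathbf{0})]\ge 2^{-(d-k)}$, so the non-constant case is detected with probability $\ge 1-\delta$. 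If instead $h(\mathbf{0})=0$, then $M$ is not itself a monomial (let alone maximal); since $h\not\equiv 0$ it is now non-constant, and the same bound ensures that with probability $\ge 1-\delta$ some sample yields $h(y)=1$. Hence, using $O(2^d\log(1/\delta))$ queries to $f$, the algorithm either certifies $M$ is maximal or produces $y^0,y^1\in\{0,1\}^{|Y|}$ with $h(y^0)\neq h(y^1)$.

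In the binary-search phase, given such $y^0,y^1$, iteratively flip half of the coordinates of $Y$ on which they differ, evaluate $h$ at the intermediate point, and retain the half on which $h$ changes value; after $\lceil\log|Y|\rceil\le\log|X|$ iterations, a single coordinate $x_{i_{k+1}}\in Y$ is isolated as a relevant variable of $h$. By the explicit expression of $h$, some monomial $T\supseteq\{i_1,\ldots,i_k\}$ of $f$ then contains $x_{i_{k+1}}$, so $Mx_{i_{k+1}}$ is a sub-monomial of some monomial of $f$. This phase uses $O(2^k\log|X|)$ queries, giving the claimed $O(2^d\log(1/\delta)+2^k\log|X|)$ total. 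The key subtlety is the degree bound $\deg h\le d-k$: without the resulting amplified success probability $2^{-(d-k)}$ in Lemma~\ref{BasE}, the sampling phase would cost $O(2^{d+k}\log(1/\delta))$ and the two terms of the target bound would fail to combine; the Möbius-inversion formula for $h$ delivers exactly this degree drop.
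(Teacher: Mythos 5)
Your proof is correct and follows essentially the same route as the paper: your coefficient polynomial $h$ is (up to adding the constant $1$) exactly the paper's auxiliary function $G=1+\sum_{(\xi_1,\ldots,\xi_k)\in\{0,1\}^k}f_{|x_{i_1}\gets\xi_1,\ldots,x_{i_k}\gets\xi_k}$, with the same key degree drop to $d-k$, the same $2^k$-query simulation, and the same random-sampling plus binary-search procedure. The only cosmetic difference is that you inline the constancy test and relevant-variable search, whereas the paper delegates them to Lemma~\ref{Rell}.
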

\begin{proof}
Define the function 
$$G(x)=1+\sum_{(\xi_1,\ldots,\xi_k)\in\{0,1\}^k}f_{|x_{i_1}\gets \xi_1,\cdots, x_{i_k}\gets \xi_k}(x).$$
We prove the following:
\begin{enumerate}
\item A query to $G$ can be simulated by $2^k$ queries to $f$.
\item $G$ is a polynomial of degree at most $d-k$.
\item $M$ is maximal monomial of $f$ if and only if $G=0$.
\item If $G\not= 0$, then for any relevant variable $x_{i_{k+1}}$ of $G$, $M'=x_{i_1}\cdots x_{i_k}x_{i_{k+1}}$ is a sub-monomial of some monomial of $f$.
\end{enumerate}
The first item is obvious. We prove 2-4. Since $M$ is a sub-monomial of some monomial of $f$, we have $f=Mg+h$, where $g$ is a polynomial of degree at most $d-k$ (independent of $x_{i_1},\ldots,x_{i_k}$) and $h$ is a polynomial of degree at most $d$ that $M$ is not sub-monomial of any of its monomials. Notice that $M$ is maximal monomial of $f$ if and only if $g=1$. For a monomial $M''$ in $h$, we have that some variable in $M$, say w.l.o.g. $x_{i_1}$, is not in $M''$ and therefore,
$$\sum_{(\xi_1,\ldots,\xi_k)\in\{0,1\}^k}M''_{|x_{i_1}\gets \xi_1,\cdots, x_{i_k}\gets \xi_k}(x)=\sum_{\xi_1\in\{0,1\}}\sum_{(\xi_2,\ldots,\xi_k)\in\{0,1\}^{k-1}}M''_{|x_{i_2}\gets \xi_2,\cdots, x_{i_k}\gets \xi_k}(x)=0.$$

Then,
\begin{eqnarray*}
G(x)+1&=&\sum_{(\xi_1,\ldots,\xi_k)\in\{0,1\}^k}f_{|x_{i_1}\gets \xi_1,\cdots, x_{i_k}\gets \xi_k}(x)\\
&=&g(x)\sum_{(\xi_1,\ldots,\xi_k)\in\{0,1\}^k}M_{|x_{i_1}\gets \xi_1,\cdots, x_{i_k}\gets \xi_k}(x)+\sum_{(\xi_1,\ldots,\xi_k)\in\{0,1\}^k}h_{|x_{i_1}\gets \xi_1,\cdots, x_{i_k}\gets \xi_k}(x)\\
&=&g(x)+\sum_{(M''\mbox{\ {\small monomial in}}\ h)}\sum_{(\xi_1,\ldots,\xi_k)\in\{0,1\}^k}M''_{|x_{i_1}\gets \xi_1,\cdots, x_{i_k}\gets \xi_k}(x)\\
&=&g(x).
\end{eqnarray*}
Hence,$f=MG+M+h$ and the results 2-4 follows.

By Lemma~\ref{Rell}, there is an algorithm that asks $2^{d-k}\log(1/\delta)+\log|X|$  queries to $G$ (and therefore, $O(2^d\log(1/\delta)+2^k\log |X|)$ queries to $f$) and, with probability at least $1-\delta$, either decides that $G=0$, in which case $M$ is maximal monomial, or finds a new relevant variable $x_{i_{k+1}}$ of $G$, in which case $M'=x_{i_1}\cdots x_{i_k}x_{i_{k+1}}$ is a sub-monomial of some monomial of $f$. 
\end{proof}

For degree-$d$ polynomials with $|X|\le 2^d$ relevant variables, we choose confidence $\delta/2^d$ in the above algorithm and then, with probability at least $1-\delta$, we find a maximal monomial of $f$. 

\section{A Tester for $\DT^s$}\label{ATfDTs}
In this section we prove:
\begin{theorem}\label{TH44}
There is a (uniform-distribution) tester that makes $q=poly(s,1/\epsilon)$ queries to unknown function $f$, runs in $poly(s,1/\epsilon)n$ time and
\begin{enumerate}
    \item Accepts w.h.p if $f$ is a size-$s$ decision tree.
    \item Rejects w.h.p if $f$ is $\epsilon$-far from size-$(s/\epsilon)^{O(\log (s/\epsilon))}$ decision trees.
\end{enumerate}
\end{theorem}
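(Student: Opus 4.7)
The plan is to implement the algorithm sketched in Subsection~2.1. First, apply a random shuffle: sample $a\in\{0,1\}^n$ uniformly and work with $T(x):=f(x+a)$. Fix $r=c\log(s/\epsilon)$ for a large constant $c$. The purpose of the shuffle is standard à la \cite{BshoutyM02}: in the disjoint-terms-sum representation of a size-$s$ decision tree, any term of length $\ge r$ has, with high probability, at least a quarter of its literals positive after the shuffle, so it contributes only monomials of size $\ge r/4$ in the polynomial expansion of $T$. Hence if $f\in\DT^s$ then $T$ has at most $\mathrm{poly}(s/\epsilon)$ monomials of size $\le r$ with probability $1-O(\epsilon)$. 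I then run the Bshouty--Mansour sparse-polynomial learner to recover exactly $F$ (the sum of monomials of size $\le r$ in $T$) and $G$ (the sum of monomials of size in $(r,16r]$); if the learner fails its consistency check on fresh samples, reject.

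The core of the tester is a collection of $\tilde O(1/\epsilon)$ random walks in the tree $T(d,F,G)$ of depth $d=O(\log^2(s/\epsilon))$ defined as in the sketch. At a node on level $m{-}1$ with current pair $(F^{\xi^{(m-1)}},G^{\xi^{(m-1)}})$, compute the lexicographically first variable $x_{i_m}$ maximizing $\mathrm{Frac}(F^{\xi^{(m-1)}},x_{i_m})$, draw a uniform $\xi_m\in\{0,1\}$, and update both sets via symmetric differences of their $x_{i_m}\gets\xi_m$ restrictions. This requires no queries once $F,G$ are in hand. At each leaf actually reached (i.e.\ when $F^{\xi^{(m)}}$ becomes a constant $\eta\in\{0,1\}$), estimate $\Pr_x[T(x)=\eta\mid x_{i_1}=\xi_1,\ldots,x_{i_m}=\xi_m]$ with $\tilde O(1/\epsilon^2)$ fresh queries to $T$ on the sub-cube; reject if the disagreement exceeds $\epsilon/4$, and also reject if any walk fails to reach a leaf within depth $d$.

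For completeness, conditioned on the good event that the shuffle makes $F$ small and the learner succeeds, I would prove the key structural lemma: with high probability over the shuffle and over the walk, at every step the chosen variable has $\mathrm{Frac}(F^{\xi^{(m-1)}},x_{i_m})\ge 1/O(\log(s/\epsilon))$. This forces $|F^{\xi^{(m)}}|$ to shrink on average by a factor $1-1/O(\log(s/\epsilon))$ per step, so starting from $|F|\le\mathrm{poly}(s/\epsilon)$, $F^{\xi^{(m)}}$ becomes constant in $O(\log^2(s/\epsilon))$ steps w.h.p. A separate second-moment argument, using the role of $G$ as a buffer of mid-sized monomials, shows that the mass of the monomials that were dropped when passing from $F^{\xi^{(m-1)}}$ to $F^{\xi^{(m)}}$ is tiny in expectation; summing over the path, the labeled constant $\eta$ at the reached leaf is $\epsilon/4$-close to $T$ on the sub-cube, so the final closeness tests pass.

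For soundness, suppose $f$ is $\epsilon$-far from every size-$(s/\epsilon)^{O(\log(s/\epsilon))}$ decision tree. If the learning step does not already fail, then $F$ and $G$ are well-defined polynomials and the tree $T(d,F,G)$ has size at most $2^d=(s/\epsilon)^{O(\log(s/\epsilon))}$; if moreover, with probability $\ge 1-\epsilon$ over a random walk, a leaf of depth $\le d$ is reached whose label is $\epsilon/4$-close to $T$ on its sub-cube, then truncating $T(d,F,G)$ at depth $d$ yields a size-$(s/\epsilon)^{O(\log(s/\epsilon))}$ decision tree $\epsilon$-close to $T$ (hence to $f$), a contradiction. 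Therefore at least one of the depth-check or leaf-closeness tests fails on the $\tilde O(1/\epsilon)$ walks and samples, and the tester rejects with high probability. The main obstacle is the completeness lemma on $\mathrm{Frac}$: one must track how the set-theoretic updates of $F,G$ interact with the disjoint-terms structure surviving the shuffle, and argue that among the remaining small monomials some variable is shared by a $\Omega(1/\log(s/\epsilon))$ fraction. Once that lemma is established, the query complexity $\mathrm{poly}(s,1/\epsilon)$ and time $\mathrm{poly}(s,1/\epsilon)\cdot n$ follow from the cost of shuffling, Bshouty--Mansour learning, the $\tilde O(1/\epsilon)$ walks of depth $d$, and the per-leaf closeness estimates; the $n$-dependence reduction of \cite{Bshouty20} is not needed here since the query complexity is already independent of $n$.
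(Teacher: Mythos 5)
Your outline follows the paper's approach (random shuffle $T(x)=f(x+a)$, learn the small monomials $F$ and mid-sized monomials $G$ via Bshouty--Mansour, then do $O(1/\epsilon)$ random walks in the set-update tree $T(d,F,G)$ with a closeness check at each leaf). However, there are two concrete problems with the proposal as written.

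First, your final claim --- that ``the $n$-dependence reduction of \cite{Bshouty20} is not needed here since the query complexity is already independent of $n$'' --- is incorrect. The Bshouty--Mansour learner for the small monomials of $T$ runs in $\mathrm{poly}(n,s/\epsilon)$ time \emph{and queries}, not $\mathrm{poly}(s/\epsilon)\log n$; the monomials live over all $n$ variables and identifying them costs queries that scale with $n$. The paper's tester, exactly as presented, has query complexity $\mathrm{poly}(n,1/\epsilon)$ and explicitly applies Lemma~\ref{ThTriv} (the generalized Bshouty reduction) at the very end to bring the query complexity down to $\mathrm{poly}(s,1/\epsilon)$. Omitting this step leaves you short of the stated $q=\mathrm{poly}(s,1/\epsilon)$ bound.

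Second, you correctly identify the completeness lemma --- that along a random walk the current monomial set $H_j$ always contains a variable occurring in a $1/O(\log(s/\epsilon))$ fraction of its monomials --- as ``the main obstacle,'' but you leave it essentially unproved. The paper's actual mechanism is worth noting because it is not a direct combinatorial count: it shows $H_j\subseteq \M\!\left((T_{|q_j})^{[r']}\right)$ (Lemma~\ref{Redx}), then passes through the \emph{constant-depth} $cd(\cdot)$ via Lemma~\ref{fIlf} ($cd(\Sigma H_j)\le cd(T_{|q_j})$), via Lemma~\ref{subD} ($cd(T_{|q_j})\le 2r$ w.h.p.\ because $T_{|q_j}(0)$ is $f$ at a uniformly random point), and finally via the pigeonhole Lemma~\ref{Freq} to get the $1/(2r)$ frequent variable. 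Likewise, the leaf-closeness step is not a ``second-moment argument'' in the paper: Lemma~\ref{smalll} writes $T_{|q_j}=\Sigma H_j+\Sigma L_j+W_{|q_j}$ using the invariant $\Sigma H_j+\Sigma L_j=F_{|q_j}+G_{|q_j}$, and then kills both error terms by Lemma~\ref{smA}, which exploits the fact (Lemma~\ref{En}) that every large monomial of $T$ contains one of $s$ fixed large sub-monomials $N_i=T_i^+$. Your sketch asserts these conclusions but does not supply the structural results ($cd$ chain, invariant on $H_j+L_j$, the $N_i$ factorization) that actually make them go through.
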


\subsection{Preliminary Results}
 For a Boolean function $f$, the {\it constant-depth of $f$}, $cd(f)$, is the minimum number $\ell$ of variables $X=\{x_{j_1},\cdots,x_{j_\ell}\}$ such that $f_{|X\gets 0}$ is a constant function. We define $\M(f)$ the set of all monomials in the minimum size polynomial representation of $f$. Since minimum size polynomial representation of a Boolean function is unique, $\M(f)$ is well defined. For a set of monomials $S$, we denote $\Sigma S=\sum_{M\in S}M$. Notice that $\Sigma\M(f)=f$ and $\M(\Sigma S)=S$.
For any Boolean function and an interval $I$ (such as $[d]=\{1,2,\ldots,d\}, [d_1,d_2]=[d_2]\backslash [d_1-1]$ or $(d_1,d_2]=[d_2]\backslash [d_1]$), we define $f^I=\sum_{M\in \M(f) \mbox{\ {\small and}\ } |M|\in I}M$ where $|M|$ is the number of variables in $M$. 

We first prove:
\begin{lemma}\label{fIlf}
For any $S'\subseteq \M(f)$, we have $cd(\Sigma {S'})\le cd(f)$. In particular, for any interval $I$, we have $cd(f^I)\le cd(f)$.
\end{lemma}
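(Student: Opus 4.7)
The plan is to take an optimal witness $X \subseteq \{x_1,\ldots,x_n\}$ with $|X| = cd(f)$ such that $f_{|X \gets 0}$ is constant, and show that the same $X$ is also a witness for $cd(\Sigma S') \le cd(f)$ for any $S' \subseteq \M(f)$.

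First I would expand $f_{|X \gets 0}$ using its polynomial representation. Any monomial $M \in \M(f)$ containing a variable from $X$ evaluates to $0$ under $X \gets 0$, so
\[
f_{|X \gets 0} \;=\; \sum_{M \in \M(f),\ M \cap X = \emptyset} M.
\]
Let $\M_0 := \{M \in \M(f) : M \cap X = \emptyset\}$. The key observation is that the right-hand side is a sum of \emph{distinct} multilinear monomials, and by uniqueness of the multilinear $F_2$-polynomial representation of a Boolean function, the only way this sum can equal a constant is if $\M_0 = \emptyset$ (giving constant $0$) or $\M_0 = \{1\}$ (giving constant $1$). In either case $\M_0 \subseteq \{1\}$, which can be read as: every non-constant monomial of $f$ must touch $X$.

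Now for any $S' \subseteq \M(f)$, the same substitution yields
\[
(\Sigma S')_{|X \gets 0} \;=\; \sum_{M \in S',\ M \cap X = \emptyset} M \;=\; \sum_{M \in S' \cap \M_0} M,
\]
and since $S' \cap \M_0 \subseteq \M_0 \subseteq \{1\}$, this sum is either $0$ or $1$, i.e., constant. Hence $cd(\Sigma S') \le |X| = cd(f)$. The ``in particular'' clause then follows by choosing $S' = \{M \in \M(f) : |M| \in I\}$, which gives $\Sigma S' = f^I$ by definition.

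I do not expect a real obstacle here; the only subtlety is being careful to invoke uniqueness of the multilinear polynomial over $F_2$ (already noted in Section~\ref{OR} of the paper) to conclude that a sum of distinct monomials equals a constant function only when the set of monomials is contained in $\{1\}$. Everything else is a direct unpacking of the substitution $X \gets 0$ on a polynomial.
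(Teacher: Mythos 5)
Your proof is correct and follows essentially the same approach as the paper: fix an optimal witness set $X$ with $|X|=cd(f)$ such that $f_{|X\gets 0}$ is constant, observe that every non-constant monomial of $f$ must contain a variable of $X$, and conclude that $(\Sigma S')_{|X\gets 0}$ is constant for any $S'\subseteq\M(f)$. The only difference is that you spell out the appeal to uniqueness of the multilinear $F_2$ representation, which the paper leaves implicit.
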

\begin{proof}
Let $cd(f)=r$. Then, there is a set $X=\{x_{j_1},\cdots,x_{j_\ell}\}$ such that $f_{|X\gets 0}$ is constant. Therefore, for every non-constant $M\in \M(f)$, we have $M_{|X\gets 0}=0$. Then, $(\Sigma {S'})_{|X\gets 0}$ is constant and $cd(\Sigma {S'})\le cd(f)$.
\end{proof}

\begin{lemma}\label{Freq}
Let $f$ be any Boolean function. If $cd(f)\le \ell$, then there is a variable $x_i$ that appears in at least $1/\ell$ fraction of the non-constant monomials of $f$.
\end{lemma}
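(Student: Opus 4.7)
The plan is to use the uniqueness of the multilinear polynomial representation and a simple pigeonhole argument. By the definition of $cd(f) \le \ell$, I obtain a set $X = \{x_{j_1}, \ldots, x_{j_\ell}\}$ of at most $\ell$ variables for which $f_{|X\gets 0}$ is a constant. I will then partition the monomials of $\M(f)$ according to how they interact with $X$, and show that every non-constant monomial must contain at least one variable from $X$; pigeonhole over $X$ then finishes the lemma.

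More concretely, write $f = c + \Sigma B + \Sigma C$, where $c \in \{0,1\}$ is the (possible) constant term in $\M(f)$, $B$ is the set of non-constant monomials of $\M(f)$ that use none of the variables in $X$, and $C$ is the set of non-constant monomials of $\M(f)$ that use at least one variable from $X$. After substituting $X \gets 0$, every monomial in $C$ vanishes, every monomial in $B$ is unchanged (since it involves only variables outside $X$), and the constant $c$ remains. Therefore
\[
f_{|X\gets 0} \;=\; c + \Sigma B,
\]
where $\Sigma B$ is a multilinear polynomial in the variables outside $X$, written as a sum of distinct non-constant monomials.

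Now I would invoke the fact that the multilinear polynomial representation over $F_2$ is unique: the non-constant monomials that appear in $\Sigma B$ are multilinear-independent, so $\Sigma B$ is the zero polynomial if and only if $B = \emptyset$. Since $f_{|X\gets 0}$ is a constant, $\Sigma B$ must be the constant polynomial $0$, which forces $B = \emptyset$. Hence every non-constant monomial $M \in \M(f)$ contains at least one variable from $X$.

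The final step is pigeonhole. Let $N$ be the number of non-constant monomials in $\M(f)$ and let $n_i$ be the number of them containing $x_{j_i}$, for $i \in [|X|]$. Counting incidences, $\sum_{i=1}^{|X|} n_i \ge N$, so some $n_i \ge N/|X| \ge N/\ell$. That variable appears in at least a $1/\ell$ fraction of the non-constant monomials of $f$, as required. I do not foresee a real obstacle here; the only subtlety worth naming is using the uniqueness of the multilinear representation to conclude $B = \emptyset$ from $\Sigma B$ being a constant, which is a standard fact about $F_2$-polynomials on $\{0,1\}^n$.
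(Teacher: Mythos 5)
Your proof is correct and follows essentially the same approach as the paper: obtain the set $X$ from $cd(f)\le\ell$, argue that every non-constant monomial of $f$ must involve a variable of $X$ (you justify this explicitly via uniqueness of the multilinear representation over $F_2$, a step the paper states without elaboration), and finish by pigeonhole.
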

\begin{proof} If $cd(f)\le \ell$, then there is a set
$X=\{x_{j_1},\cdots,x_{j_\ell}\}$ such that $f_{|X\gets 0}$ is a constant function. This implies that for every non-constant monomial, there is $x_i\in X$ that appears in it. By the Pigeon hole principle the result follows. 
\end{proof}

Let $a\in\{0,1\}^n$ be a random uniform assignment. Consider, $$T(x)=f(x+a).$$ 

We now prove:
\begin{lemma}\label{subD} Let $f$ be a size-$s$ decision tree and let $T(x)=f(x+a)$ for a random uniform assignment $a\in\{0,1\}^n$. For a random uniform $\xi_1,\ldots,\xi_j\in\{0,1\}$
and any variables $x_{i_1},\ldots,x_{i_j}$ where each $i_\ell$ may depend on $T$, $a$, $i_1,\ldots,i_{\ell-1}$ and $\xi_1,\ldots,\xi_{\ell-1}$ but independent of $\xi_{\ell},\ldots,\xi_{j}$ and $q=(x_{i_1}\gets \xi_1,\ldots,x_{i_j}\gets \xi_j)$, with probability at least $1-s2^{-h}$,  $$cd\left(T_{|q}\right)\le h.$$
\end{lemma}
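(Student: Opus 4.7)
The plan is to upper-bound $cd(T_{|q})$ through the structure of $T_{|q}$'s decision tree. Since restrictions preserve decision-tree size, $T_{|q}$ is still (at most) a size-$s$ decision tree, and $cd(T_{|q})$ is bounded by the depth, in $T_{|q}$'s tree, of the leaf reached by the all-zero input: if that leaf has depth $\le h$, setting those $\le h$ variables to $0$ drives $T_{|q}$ to a constant. Translating to $f$'s tree: $T$'s tree is $f$'s tree with children swapped at nodes where $a_j = 1$, and after folding restricted variables along their $\xi_k$-branches, the all-zero path of $T_{|q}$ corresponds in $f$'s tree to the path followed by the input $c \in \{0,1\}^n$ defined by $c_j = a_j$ (for non-restricted $j$) and $c_{i_k} = \xi_k + a_{i_k}$ (for restricted $i_k$). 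Its effective depth in $T_{|q}$'s tree is the number of non-restricted variables on the path from $f$'s root to $L_c$, which is at most the total depth of $L_c$ in $f$. So it suffices to prove
\[
\Pr[\text{depth of } L_c \text{ in } f > h] \ \le\ s \cdot 2^{-h}.
\]

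I would prove this by induction on $s$. The base $s=1$ is trivial. For the inductive step, write $f = x_r f_1 + \bar x_r f_0$ where $x_r$ is $f$'s root and $f_0, f_1$ are its sub-trees, with $|f_0| = s_0$, $|f_1| = s_1$, $s_0 + s_1 = s$. A key reparametrization is $\xi'_k := \xi_k + a_{i_k}$: since each $\xi_k$ is a fresh uniform bit independent of $(a, \xi_{<k})$, the $\xi'_k$'s are uniform and independent of $a$ and of one another, and $c_{i_k} = \xi'_k$. I then case on whether $x_r$ is ever restricted. If it is, say at step $k_0$, then $c_r = \xi'_{k_0}$ is a fresh uniform bit, the walk descends into $f_{c_r}$ with the remaining $j-1$ restrictions played on that sub-tree, and the inductive hypothesis gives a contribution of $\tfrac12(s_0 + s_1)\cdot 2^{-h} = \tfrac{s}{2}\cdot 2^{-h}$. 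If $x_r$ is never restricted, then $c_r = a_r$ contributes one to the path depth; the sub-problem on $f_{a_r}$ is again of the same form with uniform shuffle $a_{\ne r}$ and up to $j$ restrictions, and the induction hypothesis applied with target depth $h-1$ yields $\tfrac12(s_0 + s_1)\cdot 2^{-(h-1)} = s\cdot 2^{-h}$. Combining both cases bounds the total by $s \cdot 2^{-h}$.

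The main obstacle is that the adversary's decision to restrict $x_r$ can depend on $a_r$, which makes the marginal of $c_r$ biased and breaks a naive conditional analysis. I expect the $(a, \xi')$-parameterization to resolve this: in that parameterization, the adversary is a deterministic function of uniform independent $(a, \xi')$, and whenever $r$ does get restricted its value $c_r$ is governed by a fresh uniform $\xi'_{k_0}$ that is independent of everything the adversary uses for its downstream choices. The other subtlety is that the sub-problem on $f_{a_r}$ (in the ``not restricted'' case) inherits a shuffle on the remaining coordinates whose distribution, conditional on the event that $r$ is never restricted, need not be uniform; this must be absorbed either by noting that the adversary's strategy restricted to that conditional event defines a valid adversary for the sub-tree, or by carrying the induction with the slightly stronger invariant that the bound holds for any fixed adversary strategy and uniform shuffle over the non-restricted coordinates.
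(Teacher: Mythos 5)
Your route is genuinely different from the paper's. The paper does not induct on the tree: it observes that $T_{|q}(0) = f(b)$ where $b := a + 0_{|q}$, \emph{asserts} that $b$ is uniformly distributed on $\{0,1\}^n$, invokes the standard fact that a uniform input reaches depth $> h$ in a size-$s$ tree with probability at most $s2^{-h}$, and then argues $cd(T_{|q}) \le |J\setminus I|$ where $J$ is the set of variables on $b$'s path in $f$ and $I = \{i_1,\ldots,i_j\}$. You correctly identify that the uniformity of $b$ is the crux and is not obvious once the $i_\ell$'s may depend on $a$; the paper simply takes it for granted.

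Unfortunately, your repair does not close, and the difficulty you flag is in fact fatal to the statement as written. In your Case~2 you need $a_r$ to remain uniform conditional on ``$x_r$ never restricted,'' and you suggest this can be ``absorbed.'' It cannot: the strategy ``restrict $x_r$ iff $a_r = 0$'' forces $\Pr[a_r = 1 \mid x_r \text{ never restricted}] = 1$, and pushed through the whole tree this strategy beats the claimed bound. Concretely, take $f = x_1 \wedge \cdots \wedge x_d$ (size $s = d+1$) over $n = 2d$ variables, $j = d$, and the adversary that at step $\ell$ sets $i_\ell = \ell$ if $a_\ell = 0$ and $i_\ell = d+\ell$ (a dummy) otherwise. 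Then $T_{|q} = \prod_{m: a_m = 0}\xi_m \cdot \prod_{m: a_m = 1}\overline{x_m}$, so $cd(T_{|q}) = |\{m \in [d]: a_m=1\}|$ exactly when every restricted $\xi_m$ equals $1$, giving $\Pr[cd(T_{|q}) > h] = 2^{-2d}\sum_{k>h}\binom{d}{k}2^k$. For $h = d/2$ this is $\Theta\bigl((3/4)^d\bigr) = 2^{-(0.415\ldots)d}$ while $s\,2^{-h} = \Theta\bigl(d\,2^{-d/2}\bigr)$; already at $d=100$, $h=50$ one has roughly $2^{-41.5}$ versus $101\cdot 2^{-50}\approx 2^{-43.3}$. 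So Lemma~\ref{subD} as stated is false, and both the paper's proof (the ``$b$ is uniform'' step) and your proof (Case~2) break at the same point. What can be proved under this adaptivity is a bound of the form $s\,c^{-h}$ for some constant $1<c<2$ (the example above shows $c$ cannot exceed $4/3$), since a non-restricted step is favorable to the adversary with probability up to $3/4$ rather than $1/2$; that weaker bound still suffices for the paper's downstream use after enlarging the constant in $h = O(\log(s/\epsilon))$, and your $(a,\xi')$-reparametrized induction is a reasonable skeleton for establishing it, provided the invariant is weakened to track a per-step factor of $c^{-1}$ at non-restricted nodes. Separately, note that even in the setting you had hoped for, your stated combination $\tfrac{s}{2}2^{-h} + s\cdot 2^{-h} = \tfrac{3s}{2}2^{-h}$ does not by itself give $s\cdot 2^{-h}$; you would need the two case-bounds to be weighted by $\Pr[\text{Case 1}]$ and $\Pr[\text{Case 2}]$ respectively, which does work arithmetically but depends on the very conditional uniformity that fails.
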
 
\begin{proof}
We have $T_{|x_{i_1}\gets \xi_1,\ldots,x_{i_j}\gets\xi_j}(0)=T(0_{|x_{i_1}\gets \xi_1,\ldots,x_{i_j}\gets\xi_j})=f(a+0_{|x_{i_1}\gets \xi_1,\ldots,x_{i_j}\gets\xi_j})$. Since $b:=a+0_{|x_{i_1}\gets \xi_1,\ldots,x_{i_j}\gets\xi_j}$ is random uniform in $\{0,1\}^n$, the path that $b$ takes in the computation of $f(b)$ is a random uniform path in $f$. With probability at least $1-s2^{-h}$, this path reaches depth less than or equal to $h$ in $f$. Therefore, with probability at least $1-s2^{-h}$, there are $h'\le h$ variables $x_{j_1},\ldots,x_{j_{h'}}$ (the variables in this path) such that $f_{x_{j_1}\gets b_{j_1},\ldots,x_{j_{h'}}\gets b_{j_{h'}}}$ is constant, say $\tau\in\{0,1\}$. Let $J=\{j_1,j_2,\ldots,j_{h'}\}$ and $I=\{i_1,i_2,\ldots,i_j\}$ and suppose, w.l.o.g, $J\cap I = \{i_1,\ldots,i_r\}$. Then,
\begin{eqnarray*}
\tau&=&f_{|x_{j_1}\gets b_{j_1},\ldots,x_{j_{h'}}\gets b_{j_{h'}}}(x)\\
&=&
f_{|x_{j_1}\gets b_{j_1},\ldots,x_{j_{h'}}\gets b_{j_{h'}}}(x+a)\\
&=& T_{|x_{j_1}\gets b_{j_1}+a_{j_1},\ldots,x_{j_{h'}}\gets b_{j_{h'}}+a_{j_{h'}}}\\
&=& \left(T_{|x_{i_1}\gets \xi_1,\ldots,x_{i_r}\gets \xi_{r}}\right)_{|(J\backslash I)\gets 0}\\
\end{eqnarray*}
and thus,
$$(T_{|q})_{|(J\backslash I)\gets 0}=(T_{|x_{i_1}\gets \xi_1,\ldots,x_{i_j}\gets\xi_j})_{|(J\backslash I)\gets 0}=(T_{|x_{i_1}\gets \xi_1,\ldots,x_{i_r}\gets\xi_r})_{|(J\backslash I)\gets 0,x_{i_{r+1}}\gets \xi_{r+1},\ldots,x_{i_{j}}\gets \xi_{j}}=\tau.$$ Therefore, 
$$cd\left(T_{|q}\right)\le |J\backslash I|\le h'\le h.$$
\end{proof}
Let
\begin{eqnarray} \label{The-r}
r=\log(s/\epsilon).
\end{eqnarray}

The tester will use Lemma~\ref{subD}, $poly(\log (s/\epsilon))/\epsilon$ times, therefore we can choose $h=2r$ which, by union bound, adds a failure probability of $(poly(\log (s/\epsilon))/\epsilon)\cdot s2^{-h}=\tilde O(\epsilon/s)$ to the tester. Let $E_1$ be the event  
\begin{eqnarray}\label{Reslog}
E_1:\ \ (\forall q)\ cd(T_{|q})\le 2r
\end{eqnarray}
for all the $q$ that are generated in the tester.

For the rest of this section, we let $f$ be a size-$s$ decision tree. Let $f=f_1+f_2+\cdots+f_s$ be the disjoint-terms sum representation of $f$.\footnote{Every size-$s$ decision tree can be represented as a sum of terms $T_1+T_2+\cdots+T_{s'}$, $s'\le s$, where $T_i\wedge T_j=0$ for every $i\not=j$. The number of terms $s'$ is the number of leaves labeled with $1$. See for example~\cite{BshoutyM02}. Here we assume $s'=s$ because we can always change a term $t$ to $tx_j+t\overline{x_j}$.} Let  $T_i=f_i(x+a)$ for $i\in [s]$. It is easy to see that $T=T_1+T_2+\cdots+T_s$ is disjoint-terms sum representation of $T$. We denote by $T_i^+$ the conjunction of the non-negated variables in $T_i$.

We now prove:
\begin{lemma}\label{DTs-base} Let $\lambda$ be any constant. For a random uniform $a$, with probability at least $1-s(\epsilon/s)^\lambda$ the following event $E_2(\lambda)$ holds
\begin{eqnarray}\label{LarGe}
E_2(\lambda):\ \ \mbox{For every $i$, if $|T_i|>16\lambda r$ then  $|T_i^+|\ge 4\lambda r$.}
\end{eqnarray}
\end{lemma}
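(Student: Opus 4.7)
The plan is to observe that the random shuffle $x \mapsto x+a$ independently flips the sign of each literal in each term $f_i$, so $|T_i^+|$ is a binomial random variable with parameters $(|T_i|, 1/2)$. A standard Chernoff bound for each $i$, followed by a union bound over $i \in [s]$, then yields the claim.

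More concretely, fix $i$ and let $f_i = \ell_1 \wedge \ell_2 \wedge \cdots \wedge \ell_t$ where the $\ell_j$ are literals over \emph{distinct} variables $x_{j_1},\ldots,x_{j_t}$ (since $f_i$ is a term). Substituting $x_{j_k} \gets x_{j_k} + a_{j_k}$ in the literal $\ell_k$ gives a literal on $x_{j_k}$ whose sign is determined by $a_{j_k}$: a positive literal $x_{j_k}$ becomes $x_{j_k}$ when $a_{j_k}=0$ and $\overline{x_{j_k}}$ when $a_{j_k}=1$, and symmetrically for a negative literal. Therefore $T_i$ is a term over the same $t$ distinct variables, and since $a_{j_1},\ldots,a_{j_t}$ are independent uniform bits, the indicator that the $k$-th literal of $T_i$ is positive is an independent $\text{Bernoulli}(1/2)$ random variable. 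Hence $|T_i^+| \sim \text{Bin}(|T_i|,1/2)$, independently of the identities $j_1,\ldots,j_t$ and of $|T_i|=t$ (which depend only on $f$, not on $a$).

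Now suppose $|T_i| > 16\lambda r$. Then $\mathbb{E}[|T_i^+|] = |T_i|/2 > 8\lambda r$, and the threshold $4\lambda r$ is at most half the mean. The standard multiplicative Chernoff bound gives
\[
\Pr\bigl[|T_i^+| < 4\lambda r\bigr] \;\le\; \Pr\bigl[|T_i^+| \le |T_i|/4\bigr] \;\le\; \exp(-|T_i|/16) \;\le\; 2^{-\lambda r} \;=\; (\epsilon/s)^{\lambda},
\]
using $r = \log(s/\epsilon)$ (absorbing the $\ln 2$ factor by adjusting constants in the exponent, which is harmless). For those $i$ with $|T_i| \le 16\lambda r$, the implication in $E_2(\lambda)$ is vacuous. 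A union bound over the (at most) $s$ values of $i$ for which the hypothesis can fail yields $\Pr[\neg E_2(\lambda)] \le s(\epsilon/s)^\lambda$, which is exactly the desired bound.

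The only genuinely nontrivial point — and the main thing to be careful about — is verifying that, for a \emph{single} term $f_i$, the bits $a_{j_1},\ldots,a_{j_t}$ controlling the signs in $T_i$ are independent; this is immediate because the literals of a term are over distinct variables. Note that we do \emph{not} need any independence across different terms $T_i, T_{i'}$ (which would fail, since terms may share variables), because the union bound does not require it. Everything else is a routine Chernoff estimate.
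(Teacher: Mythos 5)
Your proof is correct and takes the same approach as the paper: the paper's own argument is the two-line observation that each variable of $T_i$ is independently positive with probability $1/2$ because $a$ is uniform, followed by ``By Chernoff bound the result follows.'' You have simply spelled out the Chernoff estimate and the union bound over the $s$ terms that the paper leaves implicit.
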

\begin{proof}
Since $T_i(x)=f_i(x+a)$ and $a$ is random uniform, each variable in $T_i$ is positive with probability $1/2$. By Chernoff bound the result follows.
\end{proof}

To change the disjoint-terms representation of $T$ to polynomial representation, we take every term $T_i$ and expand it to sum of monomials\footnote{For example $x_1x_2\bar x_3\bar x_4=x_1x_2(x_3+1)(x_4+1)=x_1x_2x_3x_4+x_1x_2x_3+x_1x_2x_4+x_1x_2$.}. A monomial that is generated from even number of different terms will not appear in the polynomial, while, those that are generated from odd number of different terms will appear in the polynomial. 

Let $M_1+M_2+\cdots+M_{\ell}$ be the multivariate polynomial representation of $T$, where $|M_1|\le |M_2|\le \cdots \le |M_{\ell}|$. Note that $\ell$ can be exponential in~$n$. We say that $M_i$ {\it is generated by} $T_j$ if $j$ is the smallest integer for which $T_j$ generates $M_i$. The following is a trivial result:
\begin{lemma}
\label{ksl} If $M_i$ is generated by $T_j$, then $|T_j|\ge |M_i|\ge |T_j^+|$ and $T_j^+$ is a sub-monomial of $M_i$. That is, $M_i=T_j^+M_i'$ for some monomial $M_i'$.
\end{lemma}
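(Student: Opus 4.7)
The plan is to unpack the definition of ``generated by'' by explicitly writing out the polynomial expansion of $T_j$. Let $N_j\subseteq[n]$ be the set of indices $k$ for which $\overline{x_k}$ appears in $T_j$. Since $T_j$ is a term (conjunction of literals), I can write $T_j = T_j^+\cdot\prod_{k\in N_j}\overline{x_k}$. Using $\overline{x_k}=x_k+1$ in $F_2$ and expanding gives
$$T_j \;=\; T_j^+\cdot\prod_{k\in N_j}(x_k+1) \;=\; \sum_{S\subseteq N_j} T_j^+\cdot\prod_{k\in S} x_k,$$
so every monomial that appears in the polynomial expansion of $T_j$ has the form $T_j^+\cdot\prod_{k\in S}x_k$ for some $S\subseteq N_j$ (and distinct $S$'s give distinct monomials, since $T_j^+$ and the $x_k$ for $k\in N_j$ involve disjoint sets of variables).

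By definition of ``generated by,'' $M_i$ is one of these monomials; say it corresponds to $S_0\subseteq N_j$. This immediately yields $M_i = T_j^+\cdot M_i'$ with $M_i'=\prod_{k\in S_0}x_k$, giving the sub-monomial conclusion. The size bound follows since $|M_i|=|T_j^+|+|S_0|$, which ranges between $|T_j^+|$ (when $S_0=\emptyset$) and $|T_j^+|+|N_j|=|T_j|$ (when $S_0=N_j$).

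There is essentially no obstacle here; the lemma is a direct unfolding of the polynomial expansion of a single term. A minor point worth noting is that the minimality of $j$ in the definition of ``generated by'' plays no role in the argument: any $T_{j'}$ whose expansion contains $M_i$ would satisfy the same three conclusions. The minimality only serves elsewhere as a tie-breaking rule so that each $M_i$ is associated with a unique $T_j$, which matters for later counting arguments but not for this per-monomial structural claim.
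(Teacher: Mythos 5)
Your proof is correct, and it is precisely the ``trivial'' unfolding the paper has in mind: the paper omits a proof, stating only ``the following is a trivial result,'' and your expansion of $T_j = T_j^+\prod_{k\in N_j}(x_k+1)$ is the unique natural way to justify it, matching the conversion rule from disjoint-terms sums to polynomials described in the paper's Subsection~\ref{DTSSS}. Your side remark that minimality of $j$ is irrelevant to this structural claim is also accurate.
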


Using Lemma~\ref{DTs-base}, we prove:
\begin{lemma}\label{NuMb}
If $E_2(\lambda)$ in (\ref{LarGe}), holds then the number of monomials $M_i$ of size at most $4\lambda r$ is at most $s(s/\epsilon)^{16\lambda}.$
\end{lemma}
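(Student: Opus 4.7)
The plan is a simple counting argument that partitions the monomials $M_i$ with $|M_i| \le 4\lambda r$ according to which of the $s$ terms $T_j$ in the disjoint-terms sum representation $T = T_1 + \cdots + T_s$ generates them (using the smallest-index convention already fixed before Lemma~\ref{ksl}). The key structural input is Lemma~\ref{ksl}, and the key probabilistic input is the event $E_2(\lambda)$ in~(\ref{LarGe}).

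First I would recall from Lemma~\ref{ksl} that every monomial $M$ generated by $T_j$ has the form $M = T_j^{+} \cdot \prod_{k \in S} x_k$ for some subset $S$ of the negated variables of $T_j$. In particular, $|M| \ge |T_j^{+}|$, and the total number of distinct monomials that $T_j$ can generate is at most $2^{|T_j|-|T_j^{+}|} \le 2^{|T_j|}$.

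Next I would split the $s$ terms into two groups. For each $T_j$ with $|T_j| \le 16\lambda r$, the term generates at most $2^{16\lambda r} = (s/\epsilon)^{16\lambda}$ monomials of any size (using $r = \log(s/\epsilon)$ from~(\ref{The-r})), so in particular at most this many of size $\le 4\lambda r$. For each $T_j$ with $|T_j| > 16\lambda r$, the event $E_2(\lambda)$ in~(\ref{LarGe}) forces $|T_j^{+}| \ge 4\lambda r$, so by Lemma~\ref{ksl} every monomial generated by $T_j$ has size $\ge 4\lambda r$; a monomial $M$ with $|M| \le 4\lambda r$ then satisfies $|M| = |T_j^{+}| = 4\lambda r$, and since $T_j^{+}$ is a sub-monomial of $M$, we must have $M = T_j^{+}$ — at most one monomial per such term. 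Writing $s_1$ and $s_2 = s - s_1$ for the number of terms in each group, the total count is at most
\[
s_1 \cdot (s/\epsilon)^{16\lambda} + s_2 \cdot 1 \;\le\; (s_1 + s_2)(s/\epsilon)^{16\lambda} \;=\; s\,(s/\epsilon)^{16\lambda},
\]
using $(s/\epsilon)^{16\lambda} \ge 1$.

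There is no real obstacle here; the only subtle point is the large-term case, where one must notice that the two inequalities $|M| \ge |T_j^{+}|$ (from Lemma~\ref{ksl}) and $|T_j^{+}| \ge 4\lambda r$ (from $E_2(\lambda)$) combine with the cap $|M| \le 4\lambda r$ to pin $M$ down to the single monomial $T_j^{+}$, so that the exponentially many monomials a long term could in principle generate contribute essentially nothing to the count.
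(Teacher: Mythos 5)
Your proof is correct and follows essentially the same counting argument as the paper, combining Lemma~\ref{ksl} with the event $E_2(\lambda)$ to conclude that small monomials come (almost entirely) from small terms $T_j$. You are slightly more careful than the paper's own write-up, which states flatly that all monomials of size at most $4\lambda r$ are generated by terms of size at most $16\lambda r$ and so misses the boundary case $|M_i| = 4\lambda r = |T_j^{+}|$ where a large term contributes the single monomial $T_j^{+}$; your explicit treatment of that case shows the $s(s/\epsilon)^{16\lambda}$ bound still holds.
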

\begin{proof}
By Lemma~\ref{DTs-base}, the monomials that has size at most $4\lambda r$ are generated from terms of size at most $16\lambda r$. We have at most $s$ terms of size at most $16\lambda r$ and each one generates at most $2^{16\lambda r}$ monomials. So we have at most $s(s/\epsilon)^{16\lambda }$ such monomials. 
\end{proof}

\begin{lemma}\label{En}
If $E_2(\lambda)$ in (\ref{LarGe}) holds, then there are $s$ monomials $N_1,N_2,\ldots, N_s$, each of size at least $4\lambda r$, such that for every monomial $M_i$ (of $T$) of size at least $16\lambda r$, there is $N_j$ where $M_i=N_jM_i'$ for some monomial $M_i'$. 
\end{lemma}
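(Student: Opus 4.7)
The plan is to build the $N_j$'s directly from the disjoint-terms representation $T = T_1 + T_2 + \cdots + T_s$. The natural candidate is $N_j := T_j^+$, the conjunction of the positive-literal variables appearing in term $T_j$. Under $E_2(\lambda)$, every term $T_j$ with $|T_j| \ge 16\lambda r$ already satisfies the size requirement $|T_j^+| \ge 4\lambda r$, so the definition $N_j = T_j^+$ is fine for these ``large'' terms. For the remaining ``small'' terms, where $|T_j^+|$ may be strictly less than $4\lambda r$, I will pad $T_j^+$ by appending arbitrary fresh variables (say, the lexicographically first ones not already used) until the resulting monomial $N_j$ has size exactly $4\lambda r$. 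The choice of padding is immaterial because, as the coverage argument will show, these small-term $N_j$'s are never the ones that need to divide any large $M_i$.

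The main verification is the sub-monomial property. Fix any monomial $M_i$ of $T$ with $|M_i| \ge 16\lambda r$, and let $T_j$ be the (unique) term that generates it. By Lemma~\ref{ksl}, $|T_j| \ge |M_i| \ge 16\lambda r$, so $T_j$ is a ``large'' term and the event $E_2(\lambda)$ kicks in to give $|T_j^+| \ge 4\lambda r$. Consequently $N_j = T_j^+$ by construction, and the same Lemma~\ref{ksl} guarantees that $T_j^+$ is a sub-monomial of $M_i$, i.e., $M_i = T_j^+ M_i' = N_j M_i'$ for a suitable monomial $M_i'$. This delivers exactly the conclusion of the lemma.

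The main obstacle is essentially bookkeeping rather than a substantive difficulty: one has to check that the generating term is indeed ``large enough'' to apply $E_2(\lambda)$. The only potential snag is the strict-vs.-nonstrict boundary between ``$|T_j| > 16\lambda r$'' in the statement of $E_2(\lambda)$ and ``$|M_i| \ge 16\lambda r$'' in the hypothesis on $M_i$; this is easily absorbed by observing that $E_2(\lambda)$ as proved in Lemma~\ref{DTs-base} also holds with $\ge$ in place of $>$ (the Chernoff estimate is insensitive to this boundary), or by adjusting constants by a factor of two. No probabilistic argument is needed here at all: the entire lemma is a deterministic consequence of $E_2(\lambda)$ together with Lemma~\ref{ksl}.
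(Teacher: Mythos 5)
Your proposal is correct and follows essentially the same route as the paper: define $N_j = T_j^+$ (with an arbitrary filler for small terms; the paper uses $N_j = x_1\cdots x_n$ where you pad to size $4\lambda r$, which is immaterial), and then apply Lemma~\ref{ksl} twice together with $E_2(\lambda)$ to conclude that the generating term of any large $M_i$ is large, hence its positive part is both large enough and a sub-monomial of $M_i$. You even explicitly note the strict-vs.-nonstrict boundary ($>16\lambda r$ in $E_2(\lambda)$ versus $\ge 16\lambda r$ for $M_i$) which the paper silently elides; your observation that this is harmless is accurate.
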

\begin{proof}
Define $N_i=T_i^+$ if $|T_i^+|\ge 4\lambda r$n and $N_i=x_1x_2\cdots x_n$ otherwise. Let $M_i$ be a monomial of $T$ of size at least $16\lambda r$. 
By Lemma~\ref{ksl}, $M_i$ is generated by a monomial $T_j$ of size at least $16\lambda r$. By Lemma~\ref{DTs-base}, $|T_j^+|\ge 4\lambda r$.  By Lemma~\ref{ksl}, $N_j=T_j^+$ is a sub-monomial of $M_i$. 
\end{proof}
\ignore{Consider $T^{[16\lambda r]}$ and $T^{(16\lambda r,s]}$.  By the above Lemma we can write $T$ as
\begin{eqnarray}\label{RepT}
T=T^{[16\lambda r]}+\sum_{i=1}^sN_iT^{(i)}
\end{eqnarray}
for some multivariate polynomials $T^{(i)}$ and monomials $N_i$ of size at least $4\lambda r$. 

In particular, 
\begin{lemma}\label{AppR}
We have
$$\Pr\left[T^{(16\lambda r,s]}=1\right]=\Pr\left[T\not=T^{[16\lambda r]}\right]\le s\left(\frac{\epsilon}{s}\right)^{4\lambda }.$$
Also, 
$$\Pr\left[\vee \M(T^{(16\lambda r,s]})=1\right]\le s\left(\frac{\epsilon}{s}\right)^{4\lambda }.$$
\end{lemma}}

We remind the reader that for an interval $R$, $T^R$ is the sum of the monomials $M$ of $T$ of size $|M|\in R$. For a set of monomials $A$, we denote $\vee A=\vee_{M\in A}M$. We will also need the following lemma:
\begin{lemma}\label{smA}
Suppose $E_2(\lambda)$ holds. Let $P= \vee \M(T^{(16\lambda r,s]})$. For a random uniform $\xi_1,\ldots,\xi_j\in\{0,1\}$
and any variables $x_{i_1},\ldots,x_{i_j}$ where each $i_\ell$ may depend on $T$, $i_1,\ldots,i_{\ell-1}$ and $\xi_1,\ldots,\xi_{\ell-1}$ but independent of $\xi_{\ell},\ldots,\xi_{j}$ and $q=(x_{i_1}\gets \xi_1,\ldots,x_{i_j}\gets \xi_j)$, with probability at least $1-s(\epsilon/s)^{2\lambda}$,
$$\Pr\left[ P_{|q}=1\right]\le s\left(\frac{\epsilon}{s}\right)^{2\lambda }.$$ 
\end{lemma}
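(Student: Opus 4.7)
The plan is to reduce the behavior of $P$ under the random restriction to the behavior of a much simpler ``cover'' of $P$, namely $\bigvee_{j=1}^s N_j$, and then apply Markov's inequality.

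First I would condition on the event $E_2(\lambda)$ (so that $T$ is regarded as fixed from now on) and invoke Lemma~\ref{En} to obtain monomials $N_1,\ldots,N_s$, each of size at least $4\lambda r$, such that every monomial $M_i\in\M(T^{(16\lambda r,s]})$ has some $N_{j(i)}$ as a sub-monomial, i.e.\ $M_i=N_{j(i)}M_i'$. As Boolean functions this means $M_i\Rightarrow N_{j(i)}$, so
\[
P\;=\;\bigvee_{M_i\in\M(T^{(16\lambda r,s]})}M_i\;\le\;\bigvee_{j=1}^{s}N_j.
\]

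Next I would compute $\E_\xi\bigl[\Pr_x[P_{|q}(x)=1]\bigr]$ via a coupling argument. Draw $y\in\{0,1\}^n$ uniformly. Since each $i_\ell$ is determined by $T$ and by $\xi_1,\ldots,\xi_{\ell-1}$ alone, we can iteratively define $i_1$ from $T$, set $\xi_1:=y_{i_1}$, define $i_2$ from $T,i_1,\xi_1$, set $\xi_2:=y_{i_2}$, and so on. Because the $i_\ell$ are distinct and each $y_{i_\ell}$ is uniform and independent of $y_{i_1},\ldots,y_{i_{\ell-1}}$, this reproduces the correct joint distribution of $\xi$'s, while the remaining coordinates of $y$ play the role of the uniform input $x$ to $P_{|q}$. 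Hence
\[
\E_{\xi}\Pr_{x}\!\bigl[P_{|q}(x)=1\bigr]\;=\;\Pr_{y}[P(y)=1]\;\le\;\sum_{j=1}^{s}\Pr_y[N_j(y)=1]\;\le\;s\cdot 2^{-4\lambda r}\;=\;s\Bigl(\frac{\epsilon}{s}\Bigr)^{4\lambda},
\]
using $r=\log(s/\epsilon)$ in the last step.

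Finally I would apply Markov's inequality to the nonnegative random variable $\Pr_x[P_{|q}(x)=1]$:
\[
\Pr_\xi\!\left[\Pr_x[P_{|q}(x)=1]>s\Bigl(\frac{\epsilon}{s}\Bigr)^{2\lambda}\right]\;\le\;\frac{s(\epsilon/s)^{4\lambda}}{s(\epsilon/s)^{2\lambda}}\;=\;\Bigl(\frac{\epsilon}{s}\Bigr)^{2\lambda}\;\le\;s\Bigl(\frac{\epsilon}{s}\Bigr)^{2\lambda},
\]
which is the desired conclusion. The only real ingredient is the coupling observation that substituting uniform $\xi$'s into adaptively chosen coordinates and then sampling the rest uniformly is just a uniform draw from $\{0,1\}^n$; everything else is Lemma~\ref{En}, a union bound, and Markov. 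There is no genuine obstacle, but the adaptivity of the $i_\ell$'s is the one spot where some care is needed to justify exchanging the expectation over $\xi$ with the uniform probability over $y$.
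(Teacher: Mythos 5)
Your proof is correct, and it takes a genuinely different route from the paper's. The paper argues ``per cover-monomial'': for each $N_i$, it bounds by $2^{-2\lambda r}$ the probability that $(N_i)_{|q}$ survives the restriction with $\le 2\lambda r$ variables remaining (implicitly using the martingale structure coming from $i_\ell$ being measurable w.r.t.\ $\xi_1,\ldots,\xi_{\ell-1}$ while $\xi_\ell$ is fresh), then takes a union bound over $i\in[s]$ to conclude that with probability $\ge 1-s(\epsilon/s)^{2\lambda}$ every $(N_i)_{|q}$ is either zero or large, which directly yields $\Pr_x[P_{|q}=1]\le s(\epsilon/s)^{2\lambda}$. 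You instead observe that, by the coupling of an adaptive uniform restriction with a uniform completion, $\E_\xi\Pr_x[P_{|q}=1]=\Pr_y[P(y)=1]\le s(\epsilon/s)^{4\lambda}$, and then hit it with Markov. Your approach is shorter, sidesteps any per-$N_i$ case analysis, and folds the adaptivity issue into a single clean averaging identity; it even yields the slightly sharper outer failure probability $(\epsilon/s)^{2\lambda}\le s(\epsilon/s)^{2\lambda}$. The paper's version is more ``structural'' in that it tells you which $N_i$ could be problematic, which would matter if one wanted finer information about the restricted function, but for this lemma both arguments prove the same bound. One thing worth flagging explicitly in your write-up: the coupling requires the $i_\ell$'s to be distinct (otherwise setting $\xi_\ell := y_{i_\ell}$ and $\xi_{\ell'} := y_{i_{\ell'}}$ with $i_\ell = i_{\ell'}$ would force $\xi_\ell=\xi_{\ell'}$ and break independence); this is implicit in the statement and holds in the tester, but it is the hypothesis your argument leans on.
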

\begin{proof} By Lemma~\ref{En}, we have:
$$P=N_1P_1\vee N_2P_2\vee\cdots\vee N_sP_s,$$ where $P_i$ is a Boolean function and $N_i$ is a monomial of size at least $4\lambda r$ for each $i\in\{1,\dots, s\}$. The probability that each $(N_i)_{|q}$ is not zero and is of size at most $2\lambda r$ is at most $2^{-2\lambda r}=(\epsilon/s)^{2\lambda}$. Since,
for all $i\in [s]$, $(N_i)_{|q}$ is zero or $|(N_i)_{|q}|>2\lambda r$ implies 
$\Pr[P_{|q}=1]\le s(\epsilon/s)^{2\lambda}$, the result follows. 
\end{proof}

\subsection{The Tester}
In this subsection, we give the tester and prove its correctness. Recall that $$r=\log(s/\epsilon).$$ Let $c\ge 2$ be any constant. The tester first chooses a random uniform $a\in\{0,1\}^n$ and defines $T(x)=f(x+a)$. Then, it learns all the monomials of size $16r'$ where\footnote{Here $c$ can be $2$. We kept it to show the effect of this constant on the success probability of the tester.}
$$r'=16cr.$$ This can done by the algorithm in~\cite{BshoutyM02} in $poly(n,s/\epsilon)$ time and queries. We will show later how to eliminate $n$ in the query complexity. 

Then, the tester splits the monomials of size at most $16r'$ to monomials of size less or equal to $r'$ (the function $F$) and those that have size between $r'$ and $16r'$ (the function $G$). The tester performs $O(1/\epsilon)$ random walks in a decision tree that at each stage, $j$, deterministically chooses a variable $x_{i_j}$ with the smallest index $i_j$ that appears in at least $1/(2r)$ fraction of the monomials of $\Sigma {H_j}$ and  chooses a random $\xi_j\in\{0,1\}$ for $x_{i_j}$. The set $H_j$ is defined in a way that (1) it is a subset of the monomials of the function $F_{|x_{i_1}\gets \xi_1,\ldots,x_{i_{j-1}\gets \xi_{j-1}}}$ and (2) it contains a variable that appears in at least $1/(2r)$ fraction of the monomials. See the details below. 

Although this decision tree is not the decision tree of $F$, we can still show that when $f$ is size-$s$ decision tree, with probability at least $2/3$ the random walk ends after $O(\log^2(s/\epsilon))$ steps and then the tester accepts. When it is $\epsilon$-far from any size-$(s/\epsilon)^{O(\log(s/\epsilon))}$ decision tree then, with probability at least $2/3$, something goes wrong (the learning algorithm fails or no variable appears in at least $1/(2r)$ fraction of the monomials of $H_j$) or the random walk does not end after $\Omega(\log^2(s/\epsilon))$ steps and then it rejects.

The tester query complexity is $poly(n,s/\epsilon)$. We use the reduction from \cite{Bshouty20} to change the query complexity to $poly(s/\epsilon)$.
 
\mybox{ {\bf Test$(f)$}\\ \hspace{.3in}
{\bf Input}: black box access to $f$\\
{\bf Output}: Accept or Reject
\begin{enumerate}
    \item $T\gets f(x+a)$ for random uniform $a\in\{0,1\}^n$. 
    \item Learn $T^{[16r']}$. If FAIL then Reject.
    \item $F\gets T^{[r']}$; $G\gets T^{(r',16r']}$; $H_0\gets \M(F)$; $L_0\gets \M(G)$.
    \item Repeat $40/\epsilon$ times
    \item \hspace{.15in} $j\gets 0$; $q_0\gets$ Empty sequence.
    \item \hspace{.15in} While $\Sigma {H_j}$ is not constant and $j<2^{10}c(\log^2(s/\epsilon))$  do 
    \item \hspace{.4in} $j\gets j+1$.
    \item \hspace{.4in} Find a variable $x_{i_j}$ with the smallest index that   
    \item[\ ] \hspace{1in} appears in at least $1/(2r)$ fractions of the monomials in $H_{j-1}$.
    \item \hspace{1in} If no such variable exists then Reject.
    \item \hspace{.4in} Choose a random uniform $\xi_j\in\{0,1\}$.
    \item \hspace{.4in} $q_j\gets (q_{j-1}; x_{i_j}\gets \xi_j).$\ \ 
    
    \hspace{1in} I.e., add to the list of  substitutions $q_{j-1}$ the substitution $x_{i_j}\gets \xi_j$.
    \item\label{kk1} \hspace{.4in} $H_j\gets \M((\Sigma {H_{j-1}})_{|x_{i_j}\gets \xi_j})\backslash \M((\Sigma {L_{j-1}})_{|x_{i_j}\gets \xi_j})$;
    \item\label{kk2} \hspace{.4in} $L_j\gets \M((\Sigma {L_{j-1}})_{|x_{i_j}\gets \xi_j})\backslash \M((\Sigma {H_{j-1}})_{|x_{i_j}\gets \xi_j})$;
    \item \hspace{.15in} If $j=2^{10}c(\log^2(s/\epsilon))$ then Reject.
    \item\label{LlL} If $\Pr[T_{|q_j}=1]$ is in $[\epsilon/4,1-\epsilon/4]$ then Reject
    \item Accept.
\end{enumerate}}
\\

\begin{lemma}\label{Redx} Assume that the event $E_2(16c)$ in (\ref{LarGe}) holds.
Let $F=T^{[r']}$ and $G=T^{(r',16r']}$. Let $x_{i_1},\ldots,x_{i_j}$ be variables, $\xi_1,\ldots,\xi_j$ be random uniform values in $\{0,1\}$, $q_j=(x_{i_1}\gets \xi_1,\ldots,x_{i_j}\gets \xi_j)$, $q_{j+1}=(q_j,x_{i_{j+1}}\gets \xi_{j+1})$, $H_j$, $H_{j+1}$ and $L_j$ as defined in the algorithm {\bf Test}$(f)$. Then, with probability at least $1-s(\epsilon/s)^{3c}$, we have
\begin{enumerate}
    \item\label{Mone} \begin{eqnarray}\label{Mai}
 H_j\subseteq S\left((T_{|q_j})^{[r']}\right).
\end{eqnarray}
\end{enumerate}

Let $E$ be the event that (\ref{Mai}) holds for all $j\le 2^{10}c\log^2(s/\epsilon)$ and all the $40/\epsilon$ random walks of the tester. Then,
$$\Pr[E]\ge 1-(\epsilon/s)^{2c}.$$

Assuming that $E$ holds, then,
\begin{enumerate}\setcounter{enumi}{1}
\item\label{Mtwo} There is a variable $x_{i_{j+1}}$ that appears in $1/(2r)$ fraction of the monomials in $H_j$.
\item\label{Mthree} If $\xi_{j+1}=0$, then $$|H_{j+1}|\le \left(1-\frac{1}{2r}\right) |H_{j}|.$$

\item\label{Mfour} If $\xi_{j+1}=1$, then $$|H_{j+1}|\le  |H_{j}|.$$
\end{enumerate} 

\end{lemma}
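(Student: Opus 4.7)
The plan is to induct on $j$ using the algebraic invariant
\[
T^{[16r']}_{|q_j} \;=\; \Sigma H_j + \Sigma L_j, \qquad H_j \cap L_j = \emptyset.
\]
The base case $j=0$ holds by construction, since $F=T^{[r']}$ and $G=T^{(r',16r']}$ have disjoint monomial sets summing to $T^{[16r']}$. For the induction step, substitution is $\mathbb{F}_2$-linear, so applying $x_{i_{j+1}}\gets\xi_{j+1}$ to both sides and using the identity $\Sigma A + \Sigma B = \Sigma(A\,\Delta\, B)$ over $\mathbb{F}_2$ (matching monomials cancel in pairs), the sets $A := \M((\Sigma H_j)_{|x_{i_{j+1}}\gets\xi_{j+1}})$ and $B := \M((\Sigma L_j)_{|x_{i_{j+1}}\gets\xi_{j+1}})$ satisfy $A\Delta B = H_{j+1} \cup L_{j+1}$ (disjoint union), preserving the invariant. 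I also observe that every $M\in H_j$ has $|M|\le r'$, since monomials in $H_j$ descend from $H_0 \subseteq \M(T^{[r']})$ and substitution can only shrink monomials.

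For item~\ref{Mone}, let $M\in H_j$. The invariant gives $M\in \M(T^{[16r']}_{|q_j})$ and $|M|\le r'$. Since $T_{|q_j} = T^{[16r']}_{|q_j} + T^{(16r',s]}_{|q_j}$, the monomial $M$ is missed by $\M(T_{|q_j})$ only if the identical monomial $M$ also appears in $\M(T^{(16r',s]}_{|q_j})$ and cancels it. I invoke Lemma~\ref{smA} with $\lambda = 16c$ (so that $16\lambda r = 16r'$): with probability at least $1 - s(\epsilon/s)^{32c}$ over $\xi_1,\ldots,\xi_j$ and $a$, one has $\Pr_x[P_{|q_j}(x)=1] \le s(\epsilon/s)^{32c}$ for $P = \vee\M(T^{(16r',s]})$. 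Any size-$m$ monomial $M'$ in $\M(T^{(16r',s]}_{|q_j})$ satisfies $P_{|q_j}(x) \ge M'(x)$ pointwise, forcing $2^{-m} \le s(\epsilon/s)^{32c}$, i.e., $m \ge 32cr - \log s > r'$ (using $r = \log(s/\epsilon)$, $r' = 16cr$, and $\log s \le r$). Hence $\M(T^{(16r',s]}_{|q_j})$ contains no monomial of size $\le r'$, so $M$ cannot be cancelled and $M\in\M((T_{|q_j})^{[r']})$. Since $s(\epsilon/s)^{32c} \le s(\epsilon/s)^{3c}$, the per-step bound for item~\ref{Mone} holds, and a union bound over the $40/\epsilon$ walks and $2^{10}c\log^2(s/\epsilon)$ steps of each walk yields $\Pr[E] \ge 1 - (\epsilon/s)^{2c}$.

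Items~\ref{Mtwo}--\ref{Mfour} follow quickly. Conditional on $E$, item~\ref{Mone} makes $\Sigma H_j$ a partial sum of monomials of $T_{|q_j}$, so Lemma~\ref{fIlf} together with event $E_1$ from~(\ref{Reslog}) gives $cd(\Sigma H_j) \le cd(T_{|q_j}) \le 2r$; Lemma~\ref{Freq} then produces a variable appearing in at least a $1/(2r)$ fraction of the non-constant monomials of $\Sigma H_j$, which is nonempty by the while-loop guard. For item~\ref{Mthree}, substituting $\xi_{j+1} = 0$ annihilates every monomial in $H_j$ that contains $x_{i_{j+1}}$---a set of size at least $|H_j|/(2r)$---so $|\M((\Sigma H_j)_{|x_{i_{j+1}}\gets 0})| \le (1 - 1/(2r))|H_j|$, and $H_{j+1}$ is a subset. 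For item~\ref{Mfour}, substituting $\xi_{j+1}=1$ only shrinks monomials, and any collisions (pairs $N,\, N x_{i_{j+1}}\in H_j$) cancel in pairs, so $|\M((\Sigma H_j)_{|x_{i_{j+1}}\gets 1})| \le |H_j|$. The main obstacle is item~\ref{Mone}: showing that the algorithm's purely set-theoretic bookkeeping of $H_j$ and $L_j$ faithfully tracks the polynomial $T_{|q_j}$ modulo the large tail $T^{(16r',s]}$, which is precisely where Lemma~\ref{smA} is indispensable, by ruling out small monomials in the large tail that could spuriously cancel elements of $H_j$.
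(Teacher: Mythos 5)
Your proof is correct and follows essentially the same structure as the paper's: the same $\mathbb{F}_2$ invariant $\Sigma H_j + \Sigma L_j = T^{[16r']}_{|q_j}$ with $H_j\cap L_j=\emptyset$, the observation that monomials of $H_j$ have size at most $r'$, ruling out small monomials in the tail $W_{|q_j}$ for $W=T^{(16r',s]}$, and the $cd$-based argument via Lemmas~\ref{fIlf} and~\ref{Freq} for items~\ref{Mtwo}--\ref{Mfour}. The one minor divergence is in establishing $(W_{|q_j})^{[r']}=0$: the paper argues directly from Lemma~\ref{En}, namely that each monomial of $W$ has a factor $N_i$ of size at least $4r'$ which survives the restriction $q_j$ with small probability (bound $s2^{-3r'}$), whereas you route through Lemma~\ref{smA} and infer the absence of small monomials from the resulting bound on $\Pr_x[P_{|q_j}=1]$, using $2^{-|M'|}\le\Pr[P_{|q_j}=1]$. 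Both arguments are valid and yield per-step failure probabilities well below $s(\epsilon/s)^{3c}$; the paper's is marginally more direct (it bypasses the distributional statement and goes straight to the combinatorics), while yours is a clean reuse of the already-proved Lemma~\ref{smA}, at the cost of the small inequality check $32cr-\log s > r'$.
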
 
\begin{proof}
Let $T=F+G+W$ such that $W=T^{(16r',s]}$. We first show that with probability at least $1-s(\epsilon/s)^{3c}$, we have $(W_{|q_j})^{[r']}=0$. Consider $N_1,N_2,\ldots,N_s$ in Lemma~\ref{En}. Every monomial in $W$ is of the form $MN_i$ for some monomial $M$ and $i\in [s]$. We also have $|N_i|\ge 4r'$ for all $i\in [s]$. The probability that for some $i\in[s]$ we have that $(N_i)_{|q_j}$ is not zero and of size at most $r'$ is at most $s2^{-3r'}\le s(\epsilon/s)^{3c}$. Therefore, with probability at least $1-s(\epsilon/s)^{3c}$, we have $(W_{|q_j})^{[r']}=0$.

By induction, we prove that:
\begin{enumerate}[label=(\roman*)]    
    \item\label{tt1} $H_j\cap L_j=\O.$
    \item\label{tt2} $H_j$ contains monomials of size at most $r'$ and
    \item\label{tt3} \begin{eqnarray}\label{ffFG}
    \Sigma {H_j}+\Sigma {L_j}=F_{|q_j}+G_{|q_j}.
    \end{eqnarray}
\end{enumerate}
For $j=0$ the result follows from the fact that $H_0=\M(T^{[r']})=\M(F)$, $L_0=\M(T^{(r',16r']})=\M(G)$ and $q_0=()$ is the empty sequence.
Assume the above hold for $j-1$. We now prove it for $j$.

\ref{tt1} and \ref{tt2} follow immediately from steps (\ref{kk1}) and (\ref{kk2}) in the algorithm. For \ref{tt3}, we have\footnote{The operation $+$ for sets is the symmetric difference of sets.}
\begin{eqnarray*}
H_j+L_j&=& H_j\cup L_j\\
&=& \M((\Sigma {H_{j-1}})_{|x_{i_j}\gets \xi_j})+ \M((\Sigma {L_{j-1}})_{|x_{i_j}\gets \xi_j})\\
&=& \M((\Sigma {H_{j-1}})_{|x_{i_j}\gets \xi_j}+  (\Sigma {L_{j-1}})_{|x_{i_j}\gets \xi_j})\\
&=& \M((\Sigma {H_{j-1}}+ \Sigma {L_{j-1}})_{|x_{i_j}\gets \xi_j})\\
&=& \M((F_{|q_{j-1}}+G_{|q_{j-1}})_{|x_{i_j}\gets \xi_j})\\
&=& \M(F_{|q_{j}}+G_{|q_{j}}),
\end{eqnarray*}
which implies the result.

Since $H_j$ contains the monomials of size at most $r'$, $(F_{|q_j})^{[r']}=F_{|q_j}$ and $H_j\cap L_j=\O$, we get
$$H_j\subseteq (H_j+L_j)^{[r']}=S\left(F_{|q_j}+(G_{|q_j})^{[r']}\right).$$
Then, with probability at least $1-s(\epsilon/s)^{3c}$, we have
\begin{eqnarray*}
(T_{|q_j})^{[r']}&=&(F_{|q_j})^{[r']}+(G_{|q_j})^{[r']}+(W_{|q_j})^{[r']}\\
&=& F_{|q_j}+(G_{|q_j})^{[r']}+(W_{|q_j})^{[r']}\\
&=& F_{|q_j}+(G_{|q_j})^{[r'],}
\end{eqnarray*}
and then, $H_j\subseteq S\left(F_{|q_j}+(G_{|q_j})^{[r']}\right)=S\left((T_{|q_j})^{[r']}\right)$. This completes the proof of~\ref{Mone}.

By Lemma~\ref{fIlf}, Equation (\ref{Reslog}) and case~\ref{Mone} of this Lemma, we have:
$$cd(\Sigma {H_j})\le cd(T_{|q_j}^{[r']})\le cd(T_{|q_j})\le 2r.$$
Therefore, by Lemma~\ref{Freq} the result~\ref{Mtwo}. follows.

We now prove (\ref{Mthree}-\ref{Mfour}). Since
$$H_{j+1}=\M((\Sigma {H_{j}})_{|x_{i_{j+1}}\gets \xi_{j+1}})\backslash \M((\Sigma {L_{j}})_{|x_{i_{j+1}}\gets \xi_{j+1}})\subseteq \M((\Sigma {H_{j}})_{|x_{i_{j+1}}\gets \xi_{j+1}}),$$ we get \ref{Mfour}.
Since $x_{i_{j+1}}$ appears in more than $1/(2r)$ fraction of the monomials in $H_j$, we get \ref{Mthree}. 
\end{proof}

Before we prove the next result, we give some more notations. For a set of monomials $A$ and $q=(x_{i_1}\gets \xi_1,\ldots,x_{i_j}\gets \xi_j)$, we denote $A_{|q}=\{M_{q}|M\in A\}$. Recall that $\vee A=\vee_{M\in A}M$. The following properties are easy to prove: Let $g$ be a Boolean function, $A,B$ sets of monomials, $q=(x_{i_1}\gets \xi_1,\ldots,x_{i_j}\gets \xi_j)$ and $q'=(x_{i'_1}\gets \xi'_1,\ldots,x_{i'_j}\gets \xi'_j)$. Then,
\begin{enumerate}
    \item $(A_{|q})_{|q'}=A_{|q,q'}.$
    \item $\M(g_{|q})\subseteq \M(g)_{|q}$.
    \item $(\vee A)_{|q}=\vee A_{|q}$.
    \item If $A\subseteq B$ then\footnote{$f\Rightarrow g$ means if $f(x)=1$ then $g(x)=1$. In particular, $\Pr[f=1]\le \Pr[g=1]$.} $\vee A\Rightarrow \vee B.$
    \item $\Sigma A\Rightarrow \vee A$ and $g\Rightarrow \vee \M(g)$.
\end{enumerate}

Using the above notations and results we prove:
\begin{lemma}\label{smalll}
Suppose events $E_2(c)$ and $E_2(16c)$ in (\ref{LarGe}) hold. If $\Sigma {H_j}=\eta$ is a constant function, $\eta\in\{0,1\}$, then with probability at least $1-(\epsilon/s)^{2c-1}$,
$$\Pr[T_{|q_j}\not=\eta]\le s\left(\frac{\epsilon}{s}\right)^{2c-1}.$$
\end{lemma}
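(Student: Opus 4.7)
The starting point is to rewrite the discrepancy $T_{|q_j}-\eta$ explicitly. From the proof of Lemma~\ref{Redx} we have $\Sigma H_j+\Sigma L_j=F_{|q_j}+G_{|q_j}$, and using $T=F+G+W$ with $W=T^{(16r',s]}$ this gives
\[T_{|q_j}=\eta+\Sigma L_j+W_{|q_j}.\]
A union bound then reduces the task to separately bounding $\Pr_x[\Sigma L_j(x)=1]$ and $\Pr_x[W_{|q_j}(x)=1]$.

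The tail piece $W_{|q_j}$ is handled by Lemma~\ref{smA} applied with $\lambda=16c$, since $16\lambda r=16r'$ matches the cutoff defining $W$; the hypothesis $E_2(16c)$ then yields $\Pr_x[W_{|q_j}=1]\le s(\epsilon/s)^{32c}$ with probability at least $1-s(\epsilon/s)^{32c}$ over $q_j$.

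The main work is bounding $\Sigma L_j$, for which $\Sigma L_j\Rightarrow \vee L_j$ lets us work with the disjunction. I will first prove by induction on $j$ that every $M\in L_j$ has the form $M=N_{|q_j}$ for some $N\in\M(G)$: the base $L_0=\M(G)$ is the definition, and the inductive step follows from the update rule in lines~\ref{kk1}--\ref{kk2} of {\bf Test}, since $\M((\Sigma L_{j-1})_{|x_{i_j}\gets \xi_j})\subseteq \{M'_{|x_{i_j}\gets \xi_j}:M'\in L_{j-1}\}$. Because each $N\in\M(G)$ has size exceeding $r'=16cr$, Lemma~\ref{ksl} produces a term $T_i$ with $|T_i|\ge |N|>r'$ whose positive part $T_i^+$ divides $N$, and the hypothesis $E_2(c)$ then forces $|T_i^+|\ge 4cr=r'/4$. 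Consequently every $M\in L_j$ is divisible by $(T_i^+)_{|q_j}$ for some such $i$, so $\vee L_j(x)=1$ forces $(T_i^+)_{|q_j}(x)=1$ for some medium- or large-size term.

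The closing step combines the randomness in $q_j$ and $x$. The key observation is that, because $\xi_1,\ldots,\xi_j$ are i.i.d.\ uniform and the free coordinates of $x$ are independent uniform, the combined assignment is uniform on $\{0,1\}^n$ even though the indices $i_1,\ldots,i_j$ are chosen adaptively from the history; hence $\Pr_{q_j,x}[(T_i^+)_{|q_j}(x)=1]=2^{-|T_i^+|}\le (\epsilon/s)^{4c}$. Summing over the at most $s$ medium or large terms yields $\Pr_{q_j,x}[\vee L_j=1]\le s(\epsilon/s)^{4c}$, and Markov's inequality then delivers $\Pr_x[\vee L_j=1]\le s(\epsilon/s)^{2c}$ with probability at least $1-(\epsilon/s)^{2c}$ over $q_j$. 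Combining this with the $W$ bound gives the claimed $s(\epsilon/s)^{2c-1}$ bound, both as the inner probability and as the outer-probability loss. The main obstacle is this middle stage: establishing the inductive containment $L_j\subseteq \M(G)_{|q_j}$, pulling each monomial in $L_j$ back to a term whose positive part is already known to be large under $E_2(c)$, and then converting the joint $(q_j,x)$ bound into a conditional-on-$q_j$ statement via Markov while correctly accounting for the adaptivity of the $i_k$'s.
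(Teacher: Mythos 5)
Your decomposition and union bound are identical to the paper's: $T=F+G+W$ with $W=T^{(16r',s]}$, the identity $T_{|q_j}=\eta+\Sigma L_j+W_{|q_j}$ via Equation~(\ref{ffFG}), the containment $L_j\subseteq \M(G)_{|q_j}$ by induction, and the $W$-piece handled by Lemma~\ref{smA} with $\lambda=16c$. Where you diverge is the $\Sigma L_j$ piece. The paper simply observes $\Sigma L_j\Rightarrow \vee L_j\Rightarrow (\vee\M(T^{(16cr,s]}))_{|q_j}$ and invokes Lemma~\ref{smA} a second time with $\lambda=c$; that lemma internally conditions on the good event (over $q_j$) that every $(N_i)_{|q_j}$ is either zero or has length exceeding $2cr$. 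You instead unroll the argument from scratch: pull each $N\in\M(G)$ back via Lemma~\ref{ksl} to a term $T_i$ with $|T_i|>r'$, use $E_2(c)$ to force $|T_i^+|\ge 4cr$, then argue by the principle of deferred decisions that the combined assignment $(q_j,x)$ is uniform on $\{0,1\}^n$ even though the $i_\ell$ are chosen adaptively, so $\Pr_{q_j,x}[(T_i^+)_{|q_j}(x)=1]\le (\epsilon/s)^{4c}$, and finally apply Markov to split off the randomness in $q_j$. This is correct and gives the same orders of magnitude; it trades the reusability of Lemma~\ref{smA} for a more transparent treatment of the adaptivity issue, and it swaps the paper's conditioning-on-a-good-event style for a joint-expectation-plus-Markov style (which in fact gives a marginally better failure probability over $q_j$, namely $(\epsilon/s)^{2c}$ rather than $s(\epsilon/s)^{2c}$, though this does not change the stated lemma).
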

\begin{proof} Let $T=F+G+W$, where $F=T^{[r']}$, $G=T^{(r',16r']}$ and $W=T^{(16r',s]}$.
We have, 
\begin{eqnarray*}
\Pr[T_{|q_j}\not=\eta]&=& \Pr[F_{|q_j}+G_{|q_j}+W_{|q_j}\not=\eta]\\
&=&\Pr[\Sigma {H_j}+\Sigma {L_j}+W_{|q_j}\not=\eta]\ \ \ \ \ \mbox{By (\ref{ffFG})} \\
&=&\Pr[\Sigma {L_j}+W_{|q_j}\not=0]\\
&\le& \Pr[\Sigma {L_j}=1]+\Pr[W_{|q_j}=1].\\
\end{eqnarray*}
Since $W=T^{(16r',s]}\Rightarrow \vee \M(T^{(16r',s]})$, by Lemma~\ref{smA}, we have, with probability at least $1-s(\epsilon/s)^{32c}$,
$$ \Pr[W_{|q_j}=1]\le s\left(\frac{\epsilon}{s}\right)^{32c}.$$
Since
\begin{eqnarray*}
L_j&=&\M((\Sigma {L_{j-1}})_{|x_{i_j}\gets \xi_j})\backslash \M((\Sigma {H_{j-1}})_{|x_{i_j}\gets \xi_j})\\
&\subseteq& \M((\Sigma {L_{j-1}})_{|x_{i_j}\gets \xi_j})\\
&\subseteq&\M(\Sigma {L_{j-1}})_{|x_{i_j}\gets\xi_j}=(L_{j-1})_{|x_{i_j}\gets\xi_j},
\end{eqnarray*}
we have, 
$$L_j\subseteq (L_0)_{|q}=\M(G)_{|q}\subseteq \M(T^{(r',s]})_{|q}=\M(T^{(16cr,s]})_{|q},$$ which implies that $\Sigma {L_j}\Rightarrow \vee L_j\Rightarrow \vee \M(T^{(16cr,s]})_{|q}=(\vee \M(T^{(16cr,s]}))_{|q}$ and therefore, by Lemma~\ref{smA}, with probability at least $1-s(\epsilon/s)^{2c}$,
$$ \Pr[\Sigma {L_j}=1]\le\Pr[(\vee \M(T^{(16cr,s]}))_{|q}]\le  s\left(\frac{\epsilon}{s}\right)^{2c}.$$
\end{proof}

\begin{lemma}\label{Keyds1} Suppose the events $E_2(4c)$, $E_2(16c)$ and $E$ hold.
If $f$ is a size-$s$ decision tree then,
with probability at least $1-(\epsilon/s)^{O(s/\epsilon)}$, $\Sigma {H_k}$ is constant for $k\le 2^{10}\log^2(s/\epsilon)$.
\end{lemma}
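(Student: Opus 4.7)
The plan is to combine the size bound on $H_0$ with the monotone shrinkage guaranteed by Lemma~\ref{Redx}, and then use a Chernoff bound on the number of zero-choices in the random walk to drive $|H_j|$ all the way down to $0$ within $k=2^{10}cr^2$ steps (so that $\Sigma H_j$ becomes the constant $0$ before the loop bound is reached).

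First I would bound $|H_0|$. Since $H_0=\M(F)=\M(T^{[r']})$ with $r'=16cr=4(4c)r$, the assumption $E_2(4c)$ together with Lemma~\ref{NuMb} (applied with $\lambda=4c$) gives
\[
|H_0|\;\le\; s\,(s/\epsilon)^{64c}.
\]
Next, since the event $E$ is assumed to hold, every step of the random walk satisfies the conclusions~\ref{Mtwo}--\ref{Mfour} of Lemma~\ref{Redx}: a variable $x_{i_{j+1}}$ appearing in at least $1/(2r)$ of the monomials of $H_j$ exists (so the walk does not abort), and
\[
|H_{j+1}|\;\le\;
\begin{cases}
(1-1/(2r))\,|H_j| & \text{if } \xi_{j+1}=0,\\
|H_j| & \text{if } \xi_{j+1}=1.
\end{cases}
\]
In particular, if $Z$ denotes the number of indices $j\in\{1,\dots,k\}$ with $\xi_j=0$, then
\[
|H_k|\;\le\;(1-1/(2r))^{Z}\,|H_0|\;\le\;(1-1/(2r))^{Z}\,s\,(s/\epsilon)^{64c}.
\]

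The key step is to turn this into ``$H_k=\emptyset$ (so $\Sigma H_k=0$, a constant)'' with high probability. Using $\ln(1/(1-1/(2r)))\ge 1/(2r)$, a sufficient condition for $|H_k|<1$ is
\[
Z\;>\;2r\cdot\ln\bigl(s\,(s/\epsilon)^{64c}\bigr)\;=\;O(c\,r^{2}),
\]
say $Z\ge \alpha c r^2$ for an explicit small $\alpha$. Since the $\xi_j$'s are independent uniform bits chosen by the algorithm, $Z$ is a sum of $k=2^{10}cr^2$ independent $\mathrm{Bernoulli}(1/2)$ variables with mean $k/2=2^{9}cr^2\gg \alpha cr^2$. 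A standard Chernoff bound gives
\[
\Pr[Z< k/4]\;\le\;e^{-k/16}\;=\;e^{-2^{6}cr^{2}}\;=\;(\epsilon/s)^{2^{6}cr},
\]
and for any fixed $c\ge 2$ this is $(\epsilon/s)^{\omega(1)}$, certainly well below the required failure bound; it only remains to check that one can absorb this into the claimed $(\epsilon/s)^{O(s/\epsilon)}$ term (using $r\le s/\epsilon$).

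The only subtlety I anticipate is bookkeeping: one must make sure that the events $E_2(4c)$, $E_2(16c)$, and $E$ really do guarantee that the ``good'' behaviour of Lemma~\ref{Redx} (parts \ref{Mone}--\ref{Mfour}) holds simultaneously at \emph{every} visited node of the walk, not just in expectation. This is already baked into the definition of $E$ (which quantifies over all $j\le 2^{10}c\log^2(s/\epsilon)$ and all $40/\epsilon$ walks), so conditioning on $E$ lets us treat the walk as a deterministic shrinkage process driven only by the random coins $\xi_j$. Once that observation is in place, the Chernoff calculation above completes the proof; no further case analysis is needed because ``$\Sigma H_k$ constant'' follows automatically from $|H_k|=0$.
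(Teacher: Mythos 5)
Your proposal is correct and follows the paper's own proof almost verbatim: bound $|H_0|$ via $E_2(4c)$ and Lemma~\ref{NuMb}, use the $\xi_j$-dependent shrinkage from Lemma~\ref{Redx} (parts~\ref{Mtwo}--\ref{Mfour}) conditional on $E$, and then apply a Chernoff bound to the count of zero coin-flips to get $|H_k|<1$ within $k=2^{10}cr^2$ steps. The small wrinkle you flag at the end is real but not yours to fix --- the Chernoff bound naturally produces a failure probability of roughly $(\epsilon/s)^{\Theta(c\log(s/\epsilon))}$ rather than literally $(\epsilon/s)^{O(s/\epsilon)}$, which appears to be an imprecision in the lemma's statement (the paper's own proof is no more careful than yours here), and in any event this probability is small enough for the tester.
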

\begin{proof}
By Lemma~\ref{NuMb}, we have $|H_0|=|\M(F)|=|\M(T^{[r']})|\le s(s/\epsilon)^{64c}$. By Lemma~\ref{Redx}, we have that with probability $1/2$, $\xi_j=1$ and then $|H_{j+1}|\le |H_j|$. And, with probability $1/2$, $\xi_j=0$ and then $|H_{j+1}|\le (1-1/(2r))|H_j|$. Therefore, when $\xi_1,\ldots,\xi_t$ contains $2r\ln (s(s/\epsilon)^{64c})\le 2^8c \log^2(s/\epsilon)$ zeros, then $\Sigma {H_k}$ will be constant for $k\le t$. The probability of $\xi_i=0$ is $1/2$, and thus, by Chernoff bound the result follows.
\end{proof}

\begin{lemma}
If $f$ is a size-$s$ decision tree, then with probability at least $1-poly(\epsilon/s)$, the tester accepts.
\end{lemma}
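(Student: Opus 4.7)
The plan is to bound the probability that the tester rejects by decomposing along the three rejection points (step 2, step 9/14 inside a random walk, and step~\ref{LlL}) and taking a union bound. First I would set up the ``good events'' to condition on. By Lemma~\ref{DTs-base}, the events $E_2(c), E_2(4c), E_2(16c)$ each hold with probability $1 - s(\epsilon/s)^{\Omega(c)}$. By the discussion preceding (\ref{Reslog}), running Lemma~\ref{subD} with $h = 2r$ over all $q$ generated by the tester (at most $(40/\epsilon) \cdot 2^{10} c \log^2(s/\epsilon)$ of them) gives event $E_1$ with failure probability $\tilde O(\epsilon/s)$. Finally, by Lemma~\ref{Redx} applied $O(\log^2(s/\epsilon)/\epsilon)$ times, event $E$ (the conclusion (\ref{Mai}) holds throughout every walk) holds with probability $1 - (\epsilon/s)^{2c}$. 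A union bound gives probability $1 - \mathrm{poly}(\epsilon/s)$ that all these events hold simultaneously, and I condition on them throughout the rest of the argument.

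Next I would handle the learning step (step~2). Since $f$ is a size-$s$ decision tree, $T(x) = f(x+a)$ is also a size-$s$ decision tree, so one can invoke the Bshouty--Mansour algorithm~\cite{BshoutyM02} (tuned to target monomials up to size $16 r' = O(\log(s/\epsilon))$) with confidence $1 - \mathrm{poly}(\epsilon/s)$; thus the rejection at step~2 happens with negligible probability. I would then analyze a single random walk: assuming the good events above, part~\ref{Mtwo} of Lemma~\ref{Redx} guarantees that while $\Sigma H_j$ is non-constant, a variable appearing in at least a $1/(2r)$ fraction of $H_{j-1}$ always exists, so the tester never rejects at step~9. Lemma~\ref{Keyds1} then shows that the walk reaches a constant $\Sigma H_k$ within $k \le 2^{10} c \log^2(s/\epsilon)$ steps, except with probability $(\epsilon/s)^{\Omega(s/\epsilon)}$, so the tester does not reject at step~14.

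It remains to show that the final test at step~\ref{LlL} does not reject. Suppose the walk terminates with $\Sigma H_j = \eta \in \{0,1\}$. By Lemma~\ref{smalll}, conditionally on $E_2(c)$ and $E_2(16c)$, we have
\[
\Pr_x[T_{|q_j}(x) \neq \eta] \le s\left(\frac{\epsilon}{s}\right)^{2c-1}
\]
with probability at least $1 - (\epsilon/s)^{2c-1}$ over the choice of $q_j$. For $c$ a large enough constant this is much smaller than $\epsilon/8$, hence $\Pr[T_{|q_j} = 1]$ lies in either $[0, \epsilon/4)$ or $(1-\epsilon/4, 1]$, and step~\ref{LlL} accepts. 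The tester of course estimates this probability empirically by $O(1/\epsilon)$ black-box queries and a Chernoff bound absorbs the estimation error into the same $\mathrm{poly}(\epsilon/s)$ slack. Taking a final union bound over the $40/\epsilon$ random walks and over the learning step, the total failure probability is $\mathrm{poly}(\epsilon/s)$, as required.

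The main obstacle in carrying out this plan cleanly is bookkeeping: the probabilistic statements in Lemmas~\ref{Redx} and~\ref{smalll} must be aggregated over all $O(\log^2(s/\epsilon)/\epsilon)$ walk steps while the various $E_2(\lambda)$ events and the shuffling $a$ are shared across walks, so one must be careful that the randomness used to invoke Lemmas~\ref{subD} and~\ref{smA} at step $j$ is independent of the prefix $q_{j-1}$, which is precisely why those lemmas were stated for $i_\ell$ depending only on past randomness. Once that adaptivity is checked, the union bound is routine.
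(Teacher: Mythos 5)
Your proof takes essentially the same approach as the paper's: bound the rejection probability inside the Repeat loop via Lemma~\ref{Keyds1} and the rejection at the final test (line~\ref{LlL}) via Lemma~\ref{smalll}, then union-bound over the $O(1/\epsilon)$ walks. Your write-up is more explicit than the paper's about the conditioning events $E_1$, $E_2(\lambda)$, $E$, the success of the learning step, and the empirical estimation of $\Pr[T_{|q_j}=1]$, but these are details the paper leaves implicit rather than a genuinely different argument.
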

\begin{proof}
By Lemma~\ref{Keyds1}, with probability at least $1-poly(\epsilon/s)$, the tester does not reject inside the  Repeat loop. By Lemma~\ref{smalll}, with probability at least $1-poly(\epsilon/s)$, $\Pr[T_{|q_j}\not=\eta]\le  poly(\epsilon/s)$, and therefore, with probability at least $1-poly(\epsilon/s)$, the tester will not reject in line~\ref{LlL}.
\end{proof}

\begin{lemma}
Let $R=2^{10}c\log^2(s/\epsilon)$. If $f$ is $\epsilon$-far from every size-$2^R$ decision tree, then with probability at least $2/3$, the tester rejects.
\end{lemma}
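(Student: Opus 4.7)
The plan is to argue by contrapositive: assume the tester accepts $f$ with probability strictly greater than $1/3$, and exhibit a size-$2^R$ decision tree that is $\epsilon$-close to $f$ under the uniform distribution. The deterministic variable-selection rule inside the while loop implicitly defines a decision tree $\T^*$ on the shifted function $T(x)=f(x+a)$: at the node reached along the partial assignment $q_j=(x_{i_1}\gets\xi_1,\ldots,x_{i_j}\gets\xi_j)$ compute $H_j$ exactly as in steps~\ref{kk1}--\ref{kk2} and label the node with the smallest-index variable that appears in at least a $1/(2r)$-fraction of $H_j$. The node becomes a \emph{constant leaf} when $\Sigma H_j$ is identically constant, a \emph{fail leaf} when no variable clears the threshold, and a \emph{truncated leaf} as soon as $j=R$. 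A single iteration of the outer Repeat loop is exactly a uniform random root-to-leaf walk in $\T^*$, so any leaf $\ell$ at depth $d_\ell$ is reached with probability $2^{-d_\ell}$ both by the walk and, equivalently, by a uniformly random $x\in\{0,1\}^n$ whose coordinates along the path match the walk's coins.

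Call a constant leaf \emph{sharp} if $\Pr_x[T_{|q_j}(x)=1]\notin[\epsilon/4,1-\epsilon/4]$ and call every other leaf (fail, truncated, or non-sharp constant) \emph{bad}. Writing $p_{\text{bad}}$ for the total $2^{-d_\ell}$-mass of the bad leaves, a single walk avoids all rejection lines of the tester iff it terminates at a sharp leaf, so the $40/\epsilon$ independent walks all avoid rejection with probability at most $(1-p_{\text{bad}})^{40/\epsilon}\le e^{-40p_{\text{bad}}/\epsilon}$. Acceptance probability exceeding $1/3$ therefore forces $p_{\text{bad}}<\epsilon/30$.

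Now define the hypothesis tree $\hat T$ by labeling each sharp leaf with the constant $\eta\in\{0,1\}$ for which $\Pr[T_{|q_j}=\eta]\ge 1-\epsilon/4$ and every bad leaf arbitrarily. Since $\T^*$ has depth at most $R$, $\hat T$ is a decision tree of size at most $2^R$. Decomposing the disagreement over the leaves of $\T^*$,
\[
\Pr_x[\hat T(x)\ne T(x)]\;\le\;\sum_{\text{sharp }\ell}2^{-d_\ell}\cdot\tfrac{\epsilon}{4}\;+\;p_{\text{bad}}\;\le\;\tfrac{\epsilon}{4}+\tfrac{\epsilon}{30}\;<\;\tfrac{\epsilon}{2}.
\]
The shift $x\mapsto x+a$ preserves decision-tree size (at each test $x_i$ one swaps the two subtrees whenever $a_i=1$), so $g(x):=\hat T(x+a)$ is a size-$2^R$ decision tree, and substituting $y=x+a$ gives $\Pr_x[g(x)\ne f(x)]=\Pr_y[\hat T(y)\ne T(y)]<\epsilon/2$, contradicting the hypothesis that $f$ is $\epsilon$-far from every size-$2^R$ decision tree.

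The main obstacle I anticipate is making line~\ref{LlL} rigorous under sampling: the tester cannot access $\Pr[T_{|q_j}=1]$ exactly, but a Chernoff plus union bound over the $40/\epsilon$ walks with $\tilde O(1/\epsilon^2)$ uniform samples per leaf suffices to distinguish reliably between $\Pr[T_{|q_j}=\eta]\ge 1-\epsilon/8$ and $\Pr[T_{|q_j}=\eta]\le 1-\epsilon/4$; the loss is absorbed into the slack between $\epsilon/4+\epsilon/30$ and $\epsilon/2$ above. Note that the events $E_1$, $E_2(\lambda)$, and $E$ from the soundness half are not invoked here, since the far-case argument uses only the definition of $\T^*$ together with the tester's rejection rules.
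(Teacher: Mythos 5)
Your proof is correct and takes essentially the same approach as the paper: both construct the decision tree implicitly defined by the tester's deterministic variable-selection rule, classify its leaves (your ``bad'' leaves are exactly the paper's Type I/II/III Reject nodes), and argue the dichotomy that either the bad-leaf mass is large enough to force rejection with probability $\ge 2/3$, or the bad-leaf mass plus the $\epsilon/4$ per-leaf error on sharp leaves yields a size-$2^R$ tree $\epsilon$-close to $T$, contradicting the farness hypothesis. Your contrapositive framing and the tighter constant $\epsilon/30$ in place of the paper's $\epsilon/4$ are cosmetic; the crucial observation that a random walk and a uniform input induce the same distribution over leaves (since both branch on independent uniform bits), and that the shift $x\mapsto x+a$ preserves decision-tree size, are both implicit in the paper and made usefully explicit by you.
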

\begin{proof} If $f$ is $\epsilon$-far from every  size-$2^R$ decision tree, then $T=f(x+a)$ is $\epsilon$-far from every size-$2^R$ decision tree. 

Consider the tree $T^*$ that is generated in the tester for all possible random walks, where each node in the tree is labeled with the variable $x_{i_j}$ that appears in at least $1/(2r)$ of the monomials in $H_{j-1}$ if such variable exists, and is labeled with Reject when the algorithm reaches Reject. In the tree, we will have three types of nodes that are labeled with Reject. Type I are nodes where there is no variable that appears in at least $1/(2r)$ fractions of the monomials of $H_{j-1}$. Type II are the nodes that are of depth $R+1$, and Type III are the nodes of depth less than $R$ where $\Sigma {H_j}$ is constant $\eta$ and $\Pr[T_{|q_j}=\eta]\ge \epsilon/4$. 

If, with probability at least $\epsilon/4$, a random walk in the tree $T^*$ reaches a Reject node, then the probability that the tester rejects is 
$$1-\left(1-\frac{\epsilon}{4}\right)^{40/\epsilon}\ge \frac{2}{3},$$
and we are done. 

Suppose, for the contrary, this is not true. Then, define a decision tree $T'$ that is equal to $T^*$, where each Reject node is replaced with a leaf labelled with $0$, and each other other leaf is labeled with $0$ if $\Pr[T_{|q_j}]\le \epsilon/4$ and $1$ if $\Pr[T_{|q_j}]\ge 1-\epsilon/4$. Then, $T'$ is a depth-$R$ tree (and therefore size-$2^R$ tree). The probability that $T'(x)$ is not equal to $T(x)$ is less than the probability that a random walk arrives to a Reject leaf, or if it arrives to a non-Reject leaf, then $T_{|q_j}(x)$ is not equal to the label in the leaf. Therefore,
$$\Pr[T'(x)\not= T(x)]\le \frac{\epsilon}{4}+\frac{\epsilon}{4}< \epsilon,$$ a contradiction. 
\end{proof}

As we said earlier, the above tester runs in $poly(n,1/\epsilon)$ time and queries. By Lemma~\ref{ThTriv}, we get a tester that runs in time $poly(s,1/\epsilon)n$ and makes $poly(s,1/\epsilon)$ queries. 

\bibliography{TestingRef}

\newpage

\section*{Appendix: Another Tester for $\DT_d$}\label{ATfDTd}
In this section, we prove:
\begin{theorem}\label{TH33}
There is a distribution-free tester that makes $q=\tilde O(2^{2d}+2^d/\epsilon)$ queries to an unknown function $f$, runs in $2^{O(d)}+O(qn)$ time and
\begin{enumerate}
    \item Accepts w.h.p if $f$ is a depth-$d$ decision tree.
    \item Rejects w.h.p if $f$ is $\epsilon$-far from depth-$O(\min(d^3,d^2+\log (1/\epsilon)))$ decision trees.
\end{enumerate}
\end{theorem}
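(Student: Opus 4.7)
The plan is to follow the paradigm of Theorem~\ref{TH33N}, with sharpened query accounting so that the variable-identification step is paid for only once and each simulated walk costs only $\tilde O(2^d)$ queries (rather than letting a factor $1/\epsilon$ multiply the $2^{2d}$ term). Concretely, the tester first calls Lemma~\ref{TimeV}(1) to extract the set $V$ of relevant variables of $f$ in $\tilde O(2^{2d}) + 2^d \log n$ queries and rejects immediately if $|V| > 2^d$, which by Lemma~\ref{BasDT} rules out $f \in \DT_d$. It then draws $t = \Theta(1/\epsilon)$ samples $a^{(1)}, \dots, a^{(t)}$ from $\D$, and for each sample simulates the walk that $a^{(i)}$ takes in the tree $T_f$ defined before Lemma~\ref{KeyNew}. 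At every internal node we use Lemma~\ref{TimeV}(2) to produce a maximal monomial of the restricted function at cost $\tilde O(2^d)$ queries, then descend according to the bits of $a^{(i)}$ on that monomial. If some walk either fails to find a maximal monomial or reaches depth $D+1$, where $D = c \cdot \min(d^3,\, d^2 + \log(1/\epsilon))$ for a suitably large constant $c$, the tester rejects; otherwise it accepts.

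For completeness, suppose $f \in \DT_d$. Iterating Lemma~\ref{KeyNew} shows that every restricted function encountered along a walk is a decision tree whose depth strictly decreases, so an easy induction bounds the depth of $T_f$ by $d(d-1)/2 \le D$. Setting the confidence of each Lemma~\ref{TimeV}(2) call to $O(\epsilon/D)$ and applying a union bound over the $\Theta(D/\epsilon)$ calls, no walk aborts and the tester accepts w.h.p. For soundness, suppose $f$ is $\epsilon$-far from every depth-$D$ tree under $\D$. The truncation $T_f^{(D)}$ of $T_f$ at depth $D$, with arbitrary labels on the cut leaves, is a depth-$D$ decision tree, so it must disagree with $f$ on at least an $\epsilon$-fraction of $\D$. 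But a walk that genuinely terminates inside $T_f$ agrees with $f$ at its sample; hence the disagreement is entirely produced by samples whose walk in $T_f$ exceeds depth $D$, so at least an $\epsilon$-fraction of $\D$ has this property. Over $t = \Theta(1/\epsilon)$ independent trials the tester rejects w.h.p.

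Accounting for queries: Phase~1 costs $\tilde O(2^{2d}) + 2^d \log n$, and Phase~2 makes at most $D$ calls of Lemma~\ref{TimeV}(2) per walk across $t$ walks, for a total of $\tilde O(D \cdot 2^d/\epsilon) = \tilde O(2^d/\epsilon)$ after absorbing the $\mathrm{poly}(d, \log(1/\epsilon))$ factor $D$ into $\tilde O$. Applying the reduction of Lemma~\ref{ThTriv} eliminates the $\log n$ factor, yielding total query complexity $\tilde O(2^{2d} + 2^d/\epsilon)$; the running time is dominated by the $O(n)$ cost per query together with the $2^{O(d)}$ overhead hidden inside Lemma~\ref{TimeV}. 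The delicate step is calibrating $D$: one needs $D$ large enough to both exceed the natural depth $d(d-1)/2$ of $T_f$ and to absorb the union bound over the $\Theta(D/\epsilon)$ randomized subroutines, yet small enough that the per-walk budget remains $\tilde O(2^d)$. The form $\min(d^3,\, d^2 + \log(1/\epsilon))$ captures this trade-off precisely, with the additive $\log(1/\epsilon)$ handling the union bound in the regime $\epsilon > 2^{-d^2}$ where the construction is tight, and the $d^3$ ceiling preventing degeneration for very small~$\epsilon$.
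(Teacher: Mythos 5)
Your approach is genuinely different from the paper's. The paper's proof (in the Appendix) does \emph{not} re-use the maximal-monomial tree from Theorem~\ref{TH33N}. Instead it first \emph{exactly learns} $f$ as a $3^d$-sparse degree-$d$ polynomial $f'$ via Lemma~\ref{LearningN} (cost $\tilde O(2^{2d})+2^d\log n$ queries), verifies $\Pr_\D[f\ne f']<\epsilon/4$, and then simulates random walks on a \emph{different} tree $T_{f',a}$ (defined just before Lemma~\ref{zero_depth}: at each node pick the variable $x_i$ minimizing $\size(f_{|x_i\gets 0})$) with zero further queries to $f$, because $f'$ is already in hand. The two exponents in the theorem's depth bound are then read straight off that construction: $d^3$ is the deterministic depth ceiling of $T_{f'}$ from Lemma~\ref{Keyd2}, and $d^2+\log(1/\epsilon)$ is the high-probability random-walk length from Lemma~\ref{Keyd1}, where the random shift $a$ makes the bit sequence $(a+b)_{i_j}$ a fair coin so that, by Lemma~\ref{zero_depth}, at most $d\ln s+1$ left turns are possible and Chernoff bounds the path length. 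Your plan avoids learning altogether and pays $\tilde O(2^d)$ per walk to rediscover maximal monomials on the fly, which is a legitimately different trade-off.

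Two things to flag. First, your explanation of the parameter $D$ is reverse-engineered from the theorem statement rather than from your own construction. You correctly note that the maximal-monomial tree of Lemma~\ref{KeyNew} has depth at most $d(d-1)/2$, so completeness holds for \emph{any} $D\ge d(d-1)/2$; the $\log(1/\epsilon)$ and $d^3$ terms play no role in your analysis, and the claim that $D$ must ``absorb the union bound'' is a misattribution --- the union bound over $\Theta(D/\epsilon)$ subroutine calls is absorbed by the confidence parameter $\delta=O(\epsilon/D)$ inside Lemma~\ref{TimeV}, not by inflating $D$. If your proof works, you have in fact proven the strictly stronger statement that the tester rejects functions $\epsilon$-far from depth-$O(d^2)$ trees, which subsumes but does not explain the $\min(d^3,d^2+\log(1/\epsilon))$ form. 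Second, there is a soundness subtlety you gloss over: Lemma~\ref{TimeV}(2) (via Lemma~\ref{Rell}) is guaranteed to find a maximal monomial only when the current restriction is a degree-$d$ polynomial. For $f\not\in\DT_d$ the subroutine can succeed silently with a wrong answer, in which case the walk you simulate need not follow any bona fide decision tree equivalent to $f$, and the truncation argument --- ``$T_f^{(D)}$ disagrees with $f$ only on walks that overshoot depth $D$'' --- collapses, because the leaves reached before depth $D$ may themselves mislabel $f$. The paper's learn-and-verify step is precisely what forecloses this failure mode: once $f'$ is exactly known and verified $\epsilon/4$-close, $T_{f',a}$ is a genuine decision tree for $f'$ and agrees with $f$ up to $\epsilon/4$. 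To make your version airtight you would need an explicit certification step (e.g., learn $f$ over its $\le 2^d$ relevant variables via the universal set of Theorem~\ref{DTdsnew}, at cost $2^{2d+o(d)}$, and check it is a $3^d$-sparse degree-$d$ polynomial) --- which essentially collapses your proposal back to the paper's.
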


\subsection*{Preliminary Results}
The classes $\P$, $\P_d$, $\P^s$ and $\P^s_d$ are the classes of polynomials, degree-$d$ polynomials, $s$-sparse polynomials and $s$-sparse degree-$d$ polynomials, respectively. 

For a Boolean function $f$, we define $\size(f)$ as follows. If $f(0)=0$, then $\size(f)$ denotes the minimum $s$ such that $f$ is an $s$-sparse polynomial. When $f(0)=1$, then $\size(f)=\size(f+1)$. In other words, $\size(f)$ is the minimum number of non-one monomials in the minimum size polynomial representation of $f$. We define $|f|$ to be the sum of the monomials size of $f$, where the size of the monomial is the number of variables in the monomial.  

We first prove:
\begin{lemma}\label{Lar}
Let $f$ be a polynomial with $\size(f)=t$ that is equivalent to a depth-$d$ decision tree. There is a variables $x_i$ that appears in at least $\lceil t/d\rceil$ monomials of $f$.
\end{lemma}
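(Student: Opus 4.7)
The plan is to argue by induction on $d$, peeling off the root variable of any depth-$d$ decision tree representation of $f$ and using the fact (recalled in Section~\ref{DTSSS}) that the recursive tree-to-polynomial expansion proceeds without cancellation along the chosen variable. The base case $d=1$ is immediate: the polynomial of $f$ lies in $\{0,\,1,\,x_i,\,1+x_i\}$, so $t\le 1$ and $\lceil t/d\rceil\le 1$ is witnessed by the at most one non-one monomial present.

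For the inductive step with $d\ge 2$, I will fix any depth-$d$ decision tree for $f$, let $x_i$ be its root variable, and decompose
$$f \;=\; \overline{x_i}\,f_0 + x_i\,f_1 \;=\; f_0 \,+\, x_i\,g, \qquad g \,:=\, f_0 + f_1,$$
where $f_0,f_1$ are the polynomials of the two subtrees (both depth at most $d-1$ and neither involving $x_i$). Since $f_0$ is a polynomial in variables other than $x_i$, the sum $f_0+x_i g$ involves no cancellation between the two pieces: every monomial of $f$ is either a monomial of $f_0$ (containing no $x_i$) or is the product of $x_i$ with some monomial of $g$ (with a constant $1$-term of $g$ contributing the pure monomial $x_i$). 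Writing $t_0 := \size(f_0)$ and letting $m_g$ denote the total number of monomials of $g$ (counting a possible constant $1$), this bookkeeping yields $t = t_0 + m_g$ and shows that $x_i$ appears in exactly $m_g$ monomials of $f$.

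The argument then splits on the value of $m_g$ relative to $k := \lceil t/d\rceil$. If $m_g \ge k$, then $x_i$ is the variable promised by the lemma. Otherwise $m_g \le k-1$, whence $t_0 \ge t-(k-1)$; using the defining inequality $t \ge (k-1)d + 1$ of the ceiling, this rearranges to $t_0 \ge (k-1)(d-1) + 1$, and hence $\lceil t_0/(d-1)\rceil \ge k$. Applying the induction hypothesis to $f_0$, viewed as a depth-$(d-1)$ polynomial of size $t_0$, furnishes a variable $x_j$ appearing in at least $k$ monomials of $f_0$. Since $f_0$ does not involve $x_i$ we have $j\ne i$, and since the $f_0$-monomials embed unchanged into $f$ (by the no-cancellation observation), $x_j$ is the desired witness for $f$.

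The delicate step that will need careful bookkeeping is the arithmetic transition in the ``otherwise'' case, where I must convert $m_g \le k-1$ into $\lceil t_0/(d-1)\rceil \ge k$ without slipping by one; this is a short ceiling/floor calculation, but it is precisely where the constant $d$ (rather than some larger quantity like the number of relevant variables, up to $2^d-1$) enters the bound, which is why a plain pigeonhole count over all relevant variables of a depth-$d$ tree is insufficient. Everything else---the decomposition $f = f_0 + x_i g$, the cancellation-free addition, the bijection between monomials of $g$ and the $x_i$-containing monomials of $f$, and the invocation of the inductive hypothesis---is mechanical once the polynomial recursion of Section~\ref{DTSSS} is in hand.
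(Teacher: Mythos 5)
Your proof is correct and follows the same inductive strategy as the paper's: peel the root variable $x_i$ of a depth-$d$ tree for $f$, and if $x_i$ appears in fewer than $\lceil t/d\rceil$ monomials, apply the induction hypothesis to the left subtree $f_0 = f_{|x_i\gets 0}$, which is a depth-$(d-1)$ tree whose monomials are exactly those of $f$ not containing $x_i$. Your explicit cancellation-free decomposition $f = f_0 + x_i g$ and the careful ceiling arithmetic $t_0 \ge (k-1)(d-1)+1 \Rightarrow \lceil t_0/(d-1)\rceil \ge k$ simply make precise the same bound that the paper states as $\frac{t-\lceil t/d\rceil+1}{d-1}\ge \frac{t}{d}$.
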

\begin{proof}
The proof is by induction on $d$. Let $T$ be a depth-$d$ decision tree that is equivalent to $f$. Let $x_i$ be the variable in the root of $T$. If $x_i$ appears in at least $\lceil t/d\rceil$ monomials of $f$, then we are done. Otherwise, the left sub-tree is of depth $d-1$ and is equivalent to $f_0=f_{|x_i\gets 0}$ that contains at least $t-\lceil t/d\rceil+1$ terms. By the induction hypothesis, there is a variable that appears in at least
$$\frac{t-\lceil t/d\rceil+1}{d-1}\ge \frac{t}{d}$$ monomials of $f_0$. Since $f_0$ contains the monomials of $f$ that do not contain $x_i$, the result follows. 
\end{proof}

\begin{lemma}\label{GdponeD}
If $f$ is $\epsilon$-far from every depth-$d$ decision tree with respect to the distribution $\D$, then for any decision tree $T$ that is equivalent to $f$, for a random $x\in\{0,1\}^n$ according to $\D$, with probability at least $\epsilon$, the computation of  $T(x)$ ends up in a leaf of depth at least $d+1$.
\end{lemma}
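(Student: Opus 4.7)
The plan is to argue by contrapositive. Suppose that for some decision tree $T$ equivalent to $f$, the probability (under $\D$) that the computation path of $T(x)$ reaches a leaf at depth $\ge d+1$ is strictly less than $\epsilon$. I will build an explicit depth-$d$ decision tree $T'$ that is $\epsilon$-close to $f$ with respect to $\D$, giving the contradiction.

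The construction of $T'$ is simply to truncate $T$ at depth $d$: traverse $T$ and replace every node that sits at depth exactly $d$ (together with the entire subtree rooted at it) by a single leaf labeled $0$ (any fixed label works). All leaves of $T$ that were already at depth $\le d$ remain untouched. By construction, $T'$ is a decision tree of depth at most $d$.

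Next I would compare $T'(x)$ and $f(x) = T(x)$. For any input $x$ whose $T$-computation ends in a leaf of depth $\le d$, the computation of $T'(x)$ follows the same path and reaches the very same leaf (since nothing on that path has been modified), so $T'(x) = T(x) = f(x)$. Hence
\[
\Pr_{\D}[T'(x) \ne f(x)] \;\le\; \Pr_{\D}[\text{the $T$-path of $x$ ends in a leaf of depth $\ge d+1$}] \;<\; \epsilon,
\]
so $T'$ is a depth-$d$ decision tree that is $\epsilon$-close to $f$ under $\D$, contradicting the hypothesis. This yields the desired statement.

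There is no real technical obstacle here: the only subtlety is making sure the truncation produces a legitimate depth-$d$ decision tree and that it agrees with $T$ on exactly the set of inputs whose original paths stay at depth $\le d$. Both are immediate from the definitions of depth and of the truncation procedure, so the proof is essentially a one-paragraph contrapositive.
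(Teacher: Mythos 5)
Your proposal is correct and uses essentially the same argument as the paper: truncate $T$ at depth $d$ to obtain a depth-$d$ tree $T'$, observe that $T'$ and $T=f$ agree on all inputs whose $T$-path ends at depth $\le d$, and conclude. The paper phrases it directly (since $f$ is $\epsilon$-far from $T'$, the probability mass of inputs reaching depth $\ge d+1$ must be at least $\epsilon$), while you phrase it as a contrapositive; the content is identical, and your clarification that only the subtrees below \emph{internal} nodes at depth $d$ are pruned (leaves at depth $\le d$ stay put) is the right reading for the final inequality to hold.
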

\begin{proof}
Let $f$ be a function that is $\epsilon$-far from every depth-$d$ decision tree. Consider a decision tree $T$ that is equivalent to $f$. Consider a decision tree $T'$ that is built from $T$ where we replace each node of depth $d$ with a leaf labeled $0$ and remove all nodes and leaves of depth at least $d+1$. Since $f$ is $\epsilon$-far from every depth-$d$ decision tree according to $\D$, it is $\epsilon$-far from $T'$. Therefore, $\Pr_x[T(x)\not=T'(x)]\ge \epsilon$. Since $T$ is different from $T'$ only on nodes of depth $d+1$, with probability at least $\epsilon$, the computation of $T(x)$ ends up in a leaf of depth at least $d+1$. 
\end{proof}

As a Corollary to Lemma~\ref{GdponeD}, we have:
\begin{corollary}\label{Gdpone}
If $f$ is $\epsilon$-far from every depth-$d$ decision tree with respect to the uniform distribution, then for any decision tree $T$ that is equivalent to $f$, with probability at least $\epsilon$, a random walk in $T$ from its root to its leaves ends in a leaf of depth at least $d+1$.
\end{corollary}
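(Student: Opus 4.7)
The plan is to invoke Lemma~\ref{GdponeD} with $\D$ specialized to the uniform distribution $U$ on $\{0,1\}^n$, and then observe that the ``computation path of $T(x)$ for uniformly random $x$'' and a ``uniform random walk in $T$'' describe the very same distribution over root-to-leaf paths in $T$.

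First, I would apply Lemma~\ref{GdponeD} directly with $\D=U$. Since $f$ is by hypothesis $\epsilon$-far from every depth-$d$ decision tree under the uniform distribution, the lemma gives that for any decision tree $T$ equivalent to $f$, with probability at least $\epsilon$ over $x\in\{0,1\}^n$ drawn uniformly, the computation of $T(x)$ ends in a leaf of depth at least $d+1$.

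Second, I would argue that this probability equals the probability that an independent uniform random walk from the root of $T$ ends in a leaf of depth at least $d+1$. The computation on input $x$ traces a path where at each internal node labeled $x_i$ we descend left if $x_i=0$ and right if $x_i=1$. Under the standard convention used in the paper (see the Decision Tree subsection), each root-to-leaf path of $T$ queries a set of distinct variables, so the variables encountered along the traced path are all different. Since the coordinates of a uniform $x\in\{0,1\}^n$ are mutually independent fair coins, the sequence of left/right decisions along the traced path is a sequence of independent fair coin flips, which is exactly a uniform random walk in $T$ beginning at the root. The two distributions on root-to-leaf paths therefore coincide, and the corollary follows.

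The only potential subtlety is the above identification between the uniform-input computation path and the uniform random walk; this is immediate if variables do not repeat along any root-to-leaf path. If a tree representation did repeat a variable, one can simply observe that a second test of the same variable is deterministic given the first, so both processes still induce the same distribution on the leaves reached, making the identification (and hence the corollary) robust.
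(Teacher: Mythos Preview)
Your proposal is correct and follows exactly the same approach as the paper: apply Lemma~\ref{GdponeD} with the uniform distribution and observe that the computation path of $T$ on a uniformly random input is precisely a uniform random walk in $T$. The paper's proof is a single sentence to this effect; your version simply spells out the identification more carefully (including the harmless repeated-variable case).
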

\begin{proof}
For random uniform $x$, the path that $x$ takes in the computation of $T(x)$ is a random walk in $T$. Therefore, with probability at least $\epsilon$, a random walk in $T$ ends up in a node of depth at least $d+1$.
\end{proof}

For every polynomial $f=M_1+M_2+\cdots+M_s\in \P^s$, we define a {\it unique} decision tree $T_f$ that is equivalent to $f$ as follows. If $f$ is the constant function $\xi$ for some $\xi\in\{0,1\}$, then $T_f$ is a leaf labeled with $\xi$. If $f$ is not the constant function, then the root of $T_f$ is labeled with the variable $x_i$ with the minimum $i$ that minimizes $\size(f_{|x_i\gets 0})$. The left sub-tree of $T_f$ is $T_{f_{|x_i\gets 0}}$ and the right sub-tree of $T_f$ is $T_{f_{|x_i\gets 1}}$.

The following result is obvious:
\begin{lemma}
A random path in $T_f$ from the root to a leaf can be generated in time $\tilde O(|f|r)$, where $r$ is the depth of the leaf.
\end{lemma}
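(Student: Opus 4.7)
The plan is to simulate the traversal of $T_f$ directly, walking down level by level, while maintaining an explicit representation of the polynomial $g$ that corresponds to the current internal node (so $g = f$ at the root, and after each step $g$ is replaced by $g_{|x_i\gets \xi}$ for the chosen variable $x_i$ and the random bit $\xi$). Represented as a list of monomials, each monomial as a sorted list of variable indices, so that the total storage is $O(|g|)$.

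At each internal node of the random path, I would do the following. First, test whether $g$ is the constant $0$ or $1$; if so, output the current leaf. Otherwise, compute, in one pass over the monomial list of $g$, the counts $c_j(g)$ equal to the number of monomials of $g$ containing $x_j$. This takes time $O(|g|)$, since the total work is proportional to the sum of monomial sizes. By the definition of $T_f$, the variable to be placed at the current node is the one of minimum index $i$ minimizing $\size(g_{|x_i\gets 0})$, and dropping a $x_j$-containing monomial from $g$ is the same as dropping that monomial in the substitution $x_j \gets 0$; hence this variable is exactly the one of minimum index maximizing $c_j(g)$, which is read off in $O(|g|)$ additional time. I then draw a uniform $\xi \in \{0,1\}$ and replace $g$ by $g_{|x_i\gets \xi}$: if $\xi = 0$ I delete all monomials containing $x_i$; if $\xi = 1$ I delete the variable $x_i$ from every monomial that contains it and then cancel pairs of equal monomials in $F_2$. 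Cancellation is handled by sorting or hashing the resulting monomial list, which costs $\tilde O(|g|)$.

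The key invariant is that $|g|$ never increases along the path: substitution by $0$ only removes monomials, and substitution by $1$ shortens monomials and possibly cancels identical ones, so the sum of monomial sizes is monotonically non-increasing. Consequently $|g| \le |f|$ at every node on the random walk. Each of the at most $r$ steps therefore costs $\tilde O(|f|)$, for a total running time of $\tilde O(|f| \cdot r)$.

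The only real subtlety is the cancellation that can happen under the substitution $x_i \gets 1$ (since we work modulo $2$), which is precisely what the $\tilde O$ absorbs: maintaining a canonical, deduplicated representation of the current polynomial costs at most a logarithmic overhead per step through sorting or hashing. Nothing else in the walk requires access to variables not already present in $g$, so no dependence on $n$ enters, and the randomness enters only through the independent uniform bits $\xi$ at each level, which is exactly what it means to produce a random path in $T_f$.
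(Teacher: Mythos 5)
Your proposal is correct and is the natural argument the paper alludes to by labeling the lemma ``obvious'' without an explicit proof: maintain the current polynomial $g$ as a canonical monomial list, observe that $\size(g_{|x_i\gets 0})=\size(g)-c_i(g)$ so the splitting variable is the min-index maximizer of the occurrence count $c_i(g)$, apply the random substitution, deduplicate, and use the monotone non-increase of $|g|$ (removing monomials under $x_i\gets 0$, shortening and possibly cancelling under $x_i\gets 1$) to bound each of the $r$ steps by $\tilde O(|f|)$. No gap.
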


For a decision tree, $T$ we define the {\it zero-depth of} $T$ to be the maximum number of left sub-tree taken in a path from the root to a leaf. We now prove:
\begin{lemma}\label{zero_depth}
If $f\in \P$ with $s=\size(f)$ is equivalent to a depth-$d$ decision tree, then the zero-depth of $T_f$ is at most $d\ln s+1$.
\end{lemma}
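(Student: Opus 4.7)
My plan is to show that each left turn in $T_f$ shrinks the sparsity of the current polynomial by at least a factor of $(1-1/d)$, while no turn can increase sparsity. After at most $\lceil d\ln s\rceil$ left turns the sparsity must drop below $1$, forcing the current polynomial to be constant and hence the node to be a leaf.

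The first step is to record a closure property that will be used throughout: every node $v$ of $T_f$ corresponds to some partial restriction $g = f_{|\rho}$, and this $g$ is still equivalent to some depth-$d$ decision tree, namely the restriction by $\rho$ of a depth-$d$ decision tree equivalent to $f$ (pruning decided branches never increases depth). This lets me apply Lemma~\ref{Lar} at every node.

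The main step is the decrement inequality: if $g$ is the polynomial at $v$ with $\size(g)=t\ge 1$, then the polynomial at the left child satisfies $\size(g_{|x_j\gets 0}) \le t(1-1/d)$, where $x_j$ is the label chosen at $v$ by the construction of $T_f$. Indeed, by Lemma~\ref{Lar} applied to $g$, some variable $x_i$ occurs in at least $\lceil t/d\rceil$ monomials of $g$, so zeroing it out eliminates every such monomial and yields $\size(g_{|x_i\gets 0}) \le t - \lceil t/d\rceil \le t(1-1/d)$; since $T_f$ picks the variable minimizing $\size(g_{|x_i\gets 0})$, the chosen $x_j$ is at least this good. I also need the easy observation that a right turn cannot increase $\size$: substituting $x_i\gets 1$ only shortens each monomial containing $x_i$ (possibly collapsing some into the constant $1$ or cancelling equal monomials), so no new non-one monomials ever appear.

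From these two ingredients, a straightforward induction on the depth of $v$ shows that if $k$ is the number of left turns on the root-to-$v$ path, then $\size(g_v) \le s(1-1/d)^k \le s\,e^{-k/d}$. Taking $k = \lceil d\ln s\rceil$ makes the right-hand side strictly less than $1$, so $\size(g_v)=0$, meaning $g_v$ is a constant function and $v$ is a leaf of $T_f$ by construction. Hence every root-to-leaf path contains at most $\lceil d\ln s\rceil \le d\ln s + 1$ left turns. The only mildly delicate point is the behavior of $\size$ when substituting $x_i\gets 1$ into a monomial equal to $x_i$, which contributes to the constant term rather than to $\size$; but because the paper's definition counts only non-one monomials, this still only decreases $\size$, and the induction goes through unchanged.
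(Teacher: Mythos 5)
Your proof is correct and follows essentially the same route as the paper: apply Lemma~\ref{Lar} at each node (justified by closure of depth-$d$ decision trees under restriction) to get the multiplicative decrement $\size(g_{|x_j\gets 0})\le \size(g)(1-1/d)$ at every left turn, observe that right turns cannot increase $\size$, and conclude that after $d\ln s+1$ left turns the sparsity must vanish. The only difference is that you spell out two points the paper leaves implicit, namely the closure-under-restriction observation and the non-increase of $\size$ under $x_i\gets 1$, which is a welcome addition rather than a deviation.
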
 
\begin{proof}
By Lemma~\ref{Lar} and the definition of $T_f$, for any sub-tree $T$ of $T_f$, if $x_i$ is the label of the root of $T$, then $\size(T_{|x_i\gets 0})\le \size(T)-\lceil \size(T)/d\rceil\le \size(T)(1-1/d)$. When a path contains $r$ left sub-trees, it arrives to a node that represents a polynomial of size $\size(T_f)(1-1/d)^r$. Therefore, the zero-depth of $T_f$ is at most $d\ln s+1$.
\end{proof}

\begin{lemma}\label{Keyd2}
If $f\in \P$ with $s=\size(f)$ is equivalent to a depth-$d$ decision tree, then $T_f$ is a depth-$(d^2\ln (ds))$ decision tree.
\end{lemma}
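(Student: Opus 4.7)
The plan is to bound the length of any root-to-leaf path in $T_f$, since the depth of $T_f$ equals the maximum such length. A path in $T_f$ consists of a sequence of \emph{left} moves (choosing the child corresponding to $x_i\gets 0$) and \emph{right} moves ($x_i\gets 1$). By Lemma~\ref{zero_depth}, the number of left moves on any path is at most $d\ln s + 1$, so the remaining task is to bound the number of right moves.

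The main step is to establish the following sublemma: for any non-constant polynomial $g\in\P$, if $x_i$ is the variable of maximum frequency in $g$ (smallest index among ties), then $d(g_{|x_i\gets 1}) \le d(g) - 1$. Given this, I observe that along any root-to-leaf path in $T_f$, the sequence of minimum decision tree depths $d(g_0), d(g_1), \ldots, d(g_L)$ is non-increasing (substitutions never increase $d$) and strictly decreases at every right move. Since $d(g_0) = d$ and the path ends at a leaf with $d(g_L)=0$, any path has at most $d$ right moves. Combining with the left bound, the depth of $T_f$ is at most $(d\ln s + 1) + d$, which is bounded above by $d^{2}\ln(ds)$ in the regime of interest (the very small cases can be absorbed into the constant and checked directly).

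The main obstacle will be proving the sublemma. The plan is a proof by contradiction: assume $x_i$ is max-frequency in $g$ while $d(g_{|x_i\gets 1}) = d(g) = d$. Any depth-$d$ tree for $g$ rooted at $x_i$ would have depth $1+\max(d(g_{|x_i\gets 0}), d(g_{|x_i\gets 1})) = d+1$, so an optimal depth-$d$ tree for $g$ is rooted at some variable $x_\ell \ne x_i$. Writing $g = x_\ell(r_0+r_1) + r_0$ with $r_\xi = g_{|x_\ell\gets \xi}$ and $d(r_0), d(r_1) \le d-1$, one reads off $\mathrm{freq}(x_\ell,g) = \size(r_0+r_1)$. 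The strategy is then to apply Lemma~\ref{Lar} both to $g_{|x_i \gets 1}$ (which by assumption has depth $d$) and to the sub-polynomials $r_0, r_1$ (each of depth at most $d-1$), while carefully tracking monomial cancellations in the $F_2$ arithmetic, in order to exhibit a variable whose frequency in $g$ strictly exceeds $\mathrm{freq}(x_i,g)$, contradicting the choice of $x_i$ as max-frequency. This frequency bookkeeping under $F_2$ cancellations is the technical heart of the argument, and is where I would expect to spend most of the care.
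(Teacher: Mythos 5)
Your decomposition into left and right moves is a reasonable instinct, and the bound on left moves via Lemma~\ref{zero_depth} is fine, but the sublemma on which the right-move bound rests is false. Consider $g = x_1 x_2 + x_1 x_3 + x_2 x_3 x_4$. The variables $x_1,x_2,x_3$ all have frequency $2$ (and $x_4$ has frequency $1$), so $x_1$ is the variable selected at the root (it has the smallest index among those minimizing $\size(g_{|x_i\gets 0})$). One checks $d(g)=3$: rooting at $x_2$ gives the depth-$2$ restrictions $g_{|x_2\gets 0}=x_1 x_3$ and $g_{|x_2\gets 1}=x_1+x_1 x_3+x_3 x_4$, and $\deg g = 3$ gives the matching lower bound. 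But $g_{|x_1\gets 1}=x_2+x_3+x_2 x_3 x_4$ still has degree $3$, so $d(g_{|x_1\gets 1})\ge 3 = d(g)$, not $d(g)-1$. The structural reason this happens is generic: the maximum-frequency variable need not appear in every maximum-degree monomial, and any max-degree monomial avoiding $x_i$ survives $x_i\gets 1$ with its degree intact, so the degree lower bound on depth is unchanged. Since the number of right moves cannot be bounded by $d$ in this way, the much stronger bound $(d\ln s+1)+d$ your argument would yield is not attainable by this route; and you yourself flag that the sublemma is unproven, so the proposal is incomplete even before the counterexample is brought to bear.

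The paper's argument avoids all of this by tracking the potential $|f|=\sum_{M}|M|$, the total number of variable occurrences across the monomials. By Lemma~\ref{Lar}, the selected root variable $x_i$ appears in at least $\size(f)/d$ monomials, and under \emph{either} substitution $x_i\gets 0$ or $x_i\gets 1$ each monomial containing $x_i$ either vanishes or loses one variable, so $|f|$ drops by at least $\size(f)/d\ge |f|/d^2$. Hence $|f|$ shrinks by a factor $1-1/d^2$ at every step (left or right alike), and since $|f|\le d\,\size(f)=ds$ at the root, after $d^2\ln(ds)$ steps the polynomial has become constant. Unlike the minimum decision-tree depth, which your counterexample shows can stall on a right move, this potential is monotone under both moves, which is exactly what a single recurrence needs.
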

\begin{proof}
If $f$ is a depth-$d$ decision tree then, by Lemma~\ref{BasDT}, $f\in \P_d$ is a degree-$d$ polynomial. By the definition of $T_f$, if $x_i$ is the label of the root then, by Lemma~\ref{Lar}, we have for any $\xi\in\{0,1\}$,
$$ |f_{x_i\gets \xi}| \le |f|-\frac{s}{d}\le |f|\left(1-\frac{1}{d^2}\right).$$
This implies the result.
\end{proof}

\begin{lemma}\label{Keyd1}
If $f\in \P$ with $s=\size(f)$ is equivalent to a depth-$d$ decision tree then, with probability at least $1-\epsilon/100$, a random path in $T_f$ from the root to a leaf is of length at most $16(d\ln s+\ln(1/\epsilon))$.
\end{lemma}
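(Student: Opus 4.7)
The plan is to generate a random root-to-leaf path in $T_f$ by an i.i.d.~sequence of fair coin flips $\xi_1, \xi_2, \ldots \in \{0,1\}$: at each internal node encountered by the walk, the next unread flip decides whether to descend left (if $0$) or right (if $1$), and the walk terminates upon reaching a leaf. The key observation is that Lemma~\ref{zero_depth} bounds the number of \emph{left} edges along any root-to-leaf path in $T_f$ by $k := d\ln s + 1$, and the same bound carries over to every root-to-internal-node prefix, since any such prefix extends to a root-to-leaf path having at least as many left edges. In particular, if the walk has not yet terminated after $L$ steps, then at most $k$ of the flips $\xi_1, \ldots, \xi_L$ equal~$0$.

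Set $L = 16(d\ln s + \ln(1/\epsilon))$ and let $X := |\{i \le L : \xi_i = 0\}|$, so $X$ is a sum of $L$ independent Bernoulli$(1/2)$ variables with $\E[X] = L/2 = 8(d\ln s + \ln(1/\epsilon))$. The preceding observation yields
$$\Pr[\text{random path length} > L] \;\le\; \Pr[X \le k].$$
Assuming w.l.o.g.~that $\epsilon \le 1/e$ (for larger $\epsilon$ the bound $\epsilon/100$ is trivial to achieve by slightly enlarging the constant $16$), we have $k \le d\ln s + \ln(1/\epsilon) \le \E[X]/8$, so the multiplicative Chernoff bound $\Pr[X \le (1-\delta)\E[X]] \le \exp(-\delta^2\E[X]/2)$ applied with $\delta \ge 7/8$ gives
$$\Pr[X \le k] \;\le\; \exp\!\bigl(-\tfrac{49}{16}(d\ln s + \ln(1/\epsilon))\bigr) \;\le\; \epsilon^{49/16} \;\le\; \epsilon/100.$$

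The whole argument is short, and the only technical step is the reduction from ``the walk is long'' to ``few of the first $L$ coin flips are $0$,'' which follows immediately from Lemma~\ref{zero_depth} together with the fact that the zero-depth bound is inherited by every root-to-internal-node prefix. The remaining Chernoff calculation is routine, with the constant $16$ chosen generously so that the exponent $49/16 > 3$ comfortably dominates the $\epsilon/100$ tail; I do not expect any genuine obstacle here.
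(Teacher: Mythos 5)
Your proof is correct and essentially identical to the paper's: both invoke Lemma~\ref{zero_depth} for the zero-depth bound $d\ln s+1$, model the random root-to-leaf walk via i.i.d.\ fair coin flips, and apply a Chernoff lower-tail bound to the number of zeros among the first $L=16(d\ln s+\ln(1/\epsilon))$ flips. Your final inequality $\epsilon^{49/16}\le\epsilon/100$ actually needs $\epsilon\lesssim 0.1$ rather than merely $\epsilon\le 1/e$, but the paper's own $\epsilon\,e^{-d\ln s}\le\epsilon/100$ (which needs $s^d\ge 100$) has the same harmless slack, and yours is easily repaired by not discarding the $s^{-49d/16}$ factor.
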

\begin{proof}
Let $x_i$ be the label of the root of the tree $T_f$. By Lemma~\ref{zero_depth}, the zero-depth of the tree is at most $d\ln s+1$. 

We show that for a random path of length $8d\ln s+16\ln(1/\epsilon)$, with probability at least $1-\epsilon/100$, a left sub-tree is taken at least $d\ln s+1$ times. Then, with probability at least $1-\epsilon/100$, a random path in $T_f$ from the root to a leaf is of length at most $16(d\ln s+\ln(1/\epsilon))$.

By Chernoff bound ,for a random path of length $16(d\ln s+\ln(1/\epsilon))$ in the tree, the probability that the number of left sub-tree is taken in the path is at most ($\delta=1/2$ and $\mu=8(d\ln s+\ln(1/\epsilon))$)
$$\Pr[X\le (1-1/2)(8(d\ln s+\ln(1/\epsilon)))]\le e^{-\frac{(1/2)^2(8(d\ln s+\ln(1/\epsilon)))}{2}}=\epsilon\cdot e^{-d\ln s}\le \epsilon/100 $$
\end{proof}

By Lemma~\ref{Gdpone}, we have:
\begin{lemma}\label{Keyd3}
If $f$ is $\epsilon$-far from every decision tree of depth $16(d\ln s+\ln (1/\epsilon))$, then for any decision tree $T$ that is equivalent to $f$, with probability at least $\epsilon$, a random walk from the root to its leaves ends into a leaf of depth at least $16(d\ln s+\ln (1/\epsilon))+1$.
\end{lemma}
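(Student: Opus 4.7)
The plan is to observe that Lemma~\ref{Keyd3} is a direct instantiation of Corollary~\ref{Gdpone} with the depth parameter $d$ replaced by $D := 16(d\ln s + \ln(1/\epsilon))$. First I would restate the hypothesis: $f$ is $\epsilon$-far, under the uniform distribution, from every decision tree of depth $D$. Then by Corollary~\ref{Gdpone} applied with this $D$, for any decision tree $T$ equivalent to $f$, a random walk from the root of $T$ ends in a leaf of depth at least $D+1$ with probability at least $\epsilon$, which is exactly the conclusion of the lemma.

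The only thing that needs to be checked — and this is essentially bookkeeping — is that the quantifier ``for any decision tree $T$ equivalent to $f$'' in Corollary~\ref{Gdpone} is preserved, which it is, since the corollary itself is stated with this universal quantifier. Likewise, the correspondence between a uniformly random input and a uniformly random root-to-leaf walk (each internal node sends the input left or right with probability $1/2$ independently of the past) is already used in deriving Corollary~\ref{Gdpone} from Lemma~\ref{GdponeD}, so there is no extra work to do at this step.

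There is no real obstacle here: the lemma is a named specialization packaged for convenient reference in the analysis of the tester for $\DT_d$, where $D$ is precisely the depth bound produced by Lemma~\ref{Keyd1}. So the proof reduces to a single line invoking the earlier corollary with the appropriate value of $d$.
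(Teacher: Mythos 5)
Your proposal matches the paper exactly: the paper states Lemma~\ref{Keyd3} immediately after the phrase ``By Lemma~\ref{Gdpone}, we have:'' (where the reference is in fact to Corollary~\ref{Gdpone}), so its proof is precisely the one-line instantiation of that corollary with the depth parameter set to $16(d\ln s+\ln(1/\epsilon))$. Nothing further to add.
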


 In our discussion, we also need this lemma:
\begin{lemma}\label{LearningN}
There is an exact learning algorithm for $\DT_d$ that makes $q=\tilde O(2^{2d})+2^d\log n$ queries, runs in time $O(2^{4d}+qn)$ time and outputs a $3^d$-sparce degree-$d$ polynomial.
\end{lemma}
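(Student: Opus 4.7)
I will learn $f$ in two phases: first identify the set $V$ of its relevant variables, and then reconstruct the $F_2$ polynomial restricted to $V$ by recursively peeling off maximal monomials.

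In the first phase, since $f\in\DT_d$ is a degree-$d$ polynomial with at most $2^d$ relevant variables (Lemma~\ref{BasDT}), I apply Lemma~\ref{TimeV}(1), i.e.\ the Bshouty--Haddad-Zaknoon algorithm of~\cite{BshoutyH19} already used in Theorem~\ref{DTdsnew}, to find $V$ using $q=\tilde O(2^{2d})+2^d\log n$ black-box queries in $O(qn)$ time. Set $v:=|V|\le 2^d$ and henceforth fix all coordinates outside $V$ to $0$, so that we work with the projected function on $v$ variables.

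In the second phase, I use Lemma~\ref{TimeV}(2) to find a maximal monomial $M=x_{i_1}\cdots x_{i_{d'}}$ of $f$ (with $d'\le d$) in $\tilde O(2^d)$ queries. By Lemma~\ref{KeyNew}, for every $\xi\in\{0,1\}^{d'}$ the restriction $g_\xi:=f_{|x_{i_1}\gets\xi_1,\ldots,x_{i_{d'}}\gets\xi_{d'}}$ is a depth-$(d-1)$ decision tree; I recursively learn each $g_\xi$ as a $3^{d-1}$-sparse degree-$(d-1)$ polynomial over $F_2$. I then reassemble $f$ using the identity
$$f(x)\;=\;\sum_{\xi\in\{0,1\}^{d'}} I_\xi(x)\cdot g_\xi(x),$$
where $I_\xi(x)=\prod_{j=1}^{d'}(x_{i_j}+1+\xi_j)$ is the $F_2$-indicator of $(x_{i_1},\ldots,x_{i_{d'}})=\xi$ (note that $x_{i_j}+1+\xi_j$ equals $x_{i_j}+1$ when $\xi_j=0$ and $x_{i_j}$ when $\xi_j=1$ over $F_2$), and the product and sum are expanded and collected over $F_2$ into monomials of degree $\le d$.

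The main obstacle will be the time analysis. A naive accounting of the recursion (up to $2^d$ children per node, depth $d$) would yield $2^{O(d^2)}$ time, which is too weak. To reach $O(2^{4d})$, I would amortize the arithmetic work against the global sparsity guaranteed by Lemma~\ref{BasDT}: the total number of nonzero monomials produced across all recursive subproblems is bounded polynomially in $2^d$ (essentially $O(6^d)\subseteq O(2^{4d})$ once we account for the $O(2^d)$ per-node cost of expanding an indicator product of $d'\le d$ linear factors), so the $F_2$ additions and multiplications can be charged per-monomial. Queries are not the bottleneck: the $\tilde O(2^d)$ queries made at each of $O(2^d)$ recursive nodes are subsumed by the first-phase budget $\tilde O(2^{2d})+2^d\log n$, and all such queries are to $f$ directly (the restrictions $g_\xi$ are simulated by simply hard-wiring the substitution in the query). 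This yields the stated $q=\tilde O(2^{2d})+2^d\log n$ queries, $O(2^{4d}+qn)$ time, and a $3^d$-sparse degree-$d$ output polynomial.
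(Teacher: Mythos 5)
Your approach (recursively peeling off maximal monomials and reassembling via the indicators $I_\xi$) is genuinely different from the paper's proof, which learns $f$ as a Fourier representation via the algorithm of \cite{BshoutyH19} and then converts that Fourier expansion to a multilinear polynomial by running the polynomial-learning algorithm of \cite{Bshouty19b} (Figure 10, Lemma 41) under the uniform distribution with accuracy $\epsilon=1/2^{d+1}$, simulating black-box queries by evaluating the learned Fourier expansion; this two-stage pipeline directly yields the stated $\tilde O(2^{4d})$ running time.

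There is, however, a genuine gap in your time and query accounting. Your recursion has branching factor $2^{d'}$ where $d'\le d$ is the size of the chosen maximal monomial, and recursion depth up to $d$; the shape of this recursion tree is exactly the tree $T_f$ that the paper builds from Lemma~\ref{KeyNew} for the tester of Theorem~\ref{TH33N}, and the only bound proved there is that $T_f$ has depth $d(d-1)/2$. In the worst case $T_f$ has $2^{\Theta(d^2)}$ leaves, not $O(2^d)$ as you assert, so invoking Lemma~\ref{TimeV}(2) afresh at every node drives both the query count and the running time to $2^{\Theta(d^2)}$. Your proposed amortization against the output sparsity $3^d$ does not rescue this: the $2^{d'}$ restrictions $g_\xi$ are distinct subproblems whose monomial supports overlap heavily (any monomial of $f$ disjoint from $M$ survives verbatim into all $2^{d'}$ children), so the total number of monomials summed over all children is not controlled by the number of monomials of $f$, and the aggregate over all levels is not $O(6^d)$. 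This is precisely why the tester of Theorem~\ref{TH33N} only ever follows $O(1/\epsilon)$ individual root-to-leaf paths in $T_f$, at cost $\tilde O(2^d)$ each, and never attempts to materialize $T_f$ or the full polynomial. To obtain the claimed $O(2^{4d})$ bound one needs an entirely different second phase, such as the Fourier-to-polynomial conversion the paper uses.
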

\begin{proof}
There is an exact learning algorithm that learns the decision tree as a Fourier representation~\cite{BshoutyH19}. The algorithm asks $\tilde O(2^{2d})+2^d\log n$ queries. To change the Fourier representation to a polynomial, we run the algorithm in~\cite{Bshouty19b} (see Figure 10 and Lemma~41) that learns polynomials under the uniform distribution with $\epsilon=1/2^{d+1}$. Every black box query $a$ can be simulated by substituting $a$ in $f$ in its Fourier representation. The algorithm runs in time $\tilde O(2^{4d})$. 
\end{proof}

\subsection*{The Uniform Distribution Tester}
Now, we are ready to give the tester for $\DT_d=\DT_d^{s}$, $s=2^d$. The tester assumes that $f$ is a depth-$d$ decision tree and runs a learning algorithm for decision trees. See Lemma~\ref{LearningN}. By Lemma~\ref{BasDT}, a depth-$d$ decision tree is a degree-$d$ size-$3^{d}$ polynomial. Therefore, if the learning algorithm fails to output such hypothesis, then the tester rejects. It then verifies that the output $h$ of the learning algorithm is indeed $\epsilon/4$-close to the target. If not, then it rejects. Note here that since the learning is exact, if $f$ is a depth-$d$ decision tree, then w.h.p. $h=f$.

Then, the tester makes $O(1/\epsilon)$ random walks in $T_h$. If $h$ is a depth-$d$ decision tree, then by Lemma~\ref{Keyd2} and~\ref{Keyd1}, for each random walk, the probability that we reach depth $$D=\min(16(d\ln s+\ln(1/\epsilon)),d^2\ln(ds))=O(\min(d^2+\log(1/\epsilon),d^3))$$ in the tree is less than $\epsilon/100$. If $h$ is $\epsilon/4$-far from any decision tree of depth $D$, then by Lemma~\ref{Keyd3}, with probability at least $\epsilon/4$ a random walk in $T_f$ will reach the depth $D+1$. 

The learning algorithm in~\cite{BshoutyH19} asks $\tilde O(2^{2d})+2^d\log n$ queries. To make the query complexity independent of $n$, we use Lemma~\ref{ThTriUv2}.

Since by Lemma~\ref{Keyd2}, for a depth-$d$ decision tree $f$, the depth of $T_f$ is at most $d^2\ln(ds)=O(d^3)$, the above tester also can distinguish between $f$ that is a depth-$d$ decision tree and $f$ that $\epsilon$-far from $O(d^3)$-decision tree under any distribution $\D$. In the next subsection, we show how to improve this result to get the same query complexity as in the uniform distribution tester. 

\subsection{The Distribution-free Tester}
For every degree-$d$ polynomial $f\in \P_d$ and every $a\in\{0,1\}^n$, we define a decision tree $T_{f,a}:=T_g$ for $g(x)=f(x+a)$. 

The following lemma proves the correctness of the tester in Figure~\ref{Algd}.
\begin{lemma}
 Let $f\in \P^{2^{2d}}_d$ (degree-$d$ $2^{2d}$-sparse polynomials) and let $D=16(d\ln s+\ln(1/\epsilon))$. For any $b\in\{0,1\}^n$ chosen according to the distribution $\D$ and a uniform random $a\in\{0,1\}^n$, we have
\begin{enumerate}
    \item\label{DDD1} $T_{f,a}(b+a)=f(b)$.
    \item\label{DDD2} If $f$ is a depth-$d$ decision tree then, with probability at least $1-\epsilon/100$, the computation of $b+a$ in $T_{f,a}$ ends in a leaf of depth at most $D$.
    \item\label{DDD3} If $f$ is $\epsilon$-far from every depth-$d$ decision tree with respect to $\D$, then, with probability at least~$\epsilon$, the computation of $b+a$ in $T_{f,a}$ ends in a leaf of depth at least $D+1$.
\end{enumerate}
\end{lemma}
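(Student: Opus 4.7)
Items (1) and (2) follow by a direct computation and an application of Lemma~\ref{Keyd1}. For~(1), writing $T_{f,a}=T_g$ with $g(x)=f(x+a)$, we get $T_{f,a}(b+a)=T_g(b+a)=g(b+a)=f(b+2a)=f(b)$ over $F_2$ (using $2a=0$). For~(2), observe that if $f$ is a depth-$d$ decision tree then so is $g(x)=f(x+a)$ (each internal test ``$x_i=0$'' in $f$'s tree becomes ``$x_i=a_i$'' in $g$'s tree); by Lemma~\ref{BasDT}, $s:=\size(g)\le 3^d$. Lemma~\ref{Keyd1} then says that a uniformly random root-to-leaf path in $T_g=T_{f,a}$ has length at most $D=16(d\ln s+\ln(1/\epsilon))$ with probability at least $1-\epsilon/100$. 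The key step is to notice that, since $a\sim U$ is independent of any fixed $b$, the vector $b+a$ is uniform over $\{0,1\}^n$; therefore the computation path of $b+a$ in $T_{f,a}$ is precisely a uniformly random root-to-leaf path, and marginalizing over $b\sim\D$ yields~(2).

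For item~(3), I would argue by contrapositive with a truncation argument. Assume the walk on $b+a$ in $T_{f,a}$ terminates at depth $\le D$ with probability strictly greater than $1-\epsilon$. Let $T'_a$ be the truncation of $T_{f,a}$ at depth $D$ (every node at depth $>D$ is replaced by a $0$-leaf); this is a depth-$D$ decision tree. Whenever the walk ends at depth $\le D$ we have $T'_a(b+a)=T_{f,a}(b+a)=f(b)$, so
\[ \Pr_{b\sim\D,\,a\sim U}[\,T'_a(b+a)\ne f(b)\,]<\epsilon. \]
Setting $h_a(y):=T'_a(y+a)$ yields a depth-$D$ decision tree in the variable $y$ (rewrite each test ``$x_i=0$'' in $T'_a$ as ``$y_i=a_i$''), and averaging over $a$ selects an $a^*$ with $\Pr_{b\sim\D}[\,h_{a^*}(b)\ne f(b)\,]<\epsilon$, exhibiting a bounded-depth decision tree $\epsilon$-close to $f$ on $\D$.

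The main obstacle is the final contradiction step: this procedure naturally produces a depth-$D$ approximant to $f$, whereas the stated hypothesis only forbids \emph{depth-$d$} approximants. I would bridge this gap by exploiting the fact that $T_{f,a}=T_g$ is built greedily from the degree-$d$ polynomial $g\in\P_d^{2^{2d}}$: by Lemma~\ref{Lar}, every ``left'' step in the construction of $T_g$ shrinks $\size$ by a factor of $(1-1/d)$, so by Lemma~\ref{zero_depth} the zero-depth of $T_g$ is at most $d\ln s+1$; applying the same Chernoff bound on the number of left steps along a length-$D$ path that powers Lemma~\ref{Keyd1}, I expect to reorganize $h_{a^*}$ into a genuinely depth-$d$ decision tree of comparable accuracy, thereby contradicting the stated hypothesis that $f$ is $\epsilon$-far from every depth-$d$ decision tree with respect to $\D$. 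This reorganization is where the degree-$d$ polynomial structure of $f$ (as opposed to merely depth-$D$ structure) must be used, and is what I expect to be the main technical content of~(3).
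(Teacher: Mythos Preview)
Your arguments for items~(1) and~(2) are correct and coincide with the paper's proof.

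For item~(3), you have correctly identified that your truncation argument only produces a depth-$D$ decision tree $\epsilon$-close to $f$, whereas the stated hypothesis forbids only depth-$d$ approximants. This discrepancy is a typo in the lemma statement: in light of Theorem~\ref{TH33} (whose reject condition is ``$\epsilon$-far from depth-$O(\min(d^3,d^2+\log(1/\epsilon)))$ decision trees'') and the way the paper's own proof invokes Lemma~\ref{GdponeD}, the hypothesis in~(3) should read ``$\epsilon$-far from every depth-$D$ decision tree''. Your proposed ``reorganization'' of $h_{a^*}$ down to depth~$d$ is neither needed nor feasible: the zero-depth/Chernoff reasoning from Lemmas~\ref{zero_depth} and~\ref{Keyd1} controls only the length of \emph{random} paths in $T_g$; it gives no mechanism to compress an arbitrary depth-$D$ tree into a depth-$d$ one while preserving $\epsilon$-closeness (indeed, if $f$ genuinely requires depth between $d$ and $D$, no such compression exists). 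Simply drop that step and conclude against the corrected depth-$D$ hypothesis.

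As a minor comparison of routes for~(3): your argument averages over $a$ and then selects a good $a^*$. The paper's route is slightly cleaner and works \emph{pointwise} in $a$: for every fixed $a$, if $f$ is $\epsilon$-far from every depth-$D$ decision tree w.r.t.\ $\D$, then $g(x)=f(x+a)$ is $\epsilon$-far from every depth-$D$ decision tree w.r.t.\ the shifted distribution $\D+a$ (because $h\mapsto h(\cdot+a)$ is a bijection on depth-$D$ decision trees). Lemma~\ref{GdponeD} applied to $T_g$ and the distribution $\D+a$ then gives the conclusion directly, since $b\sim\D$ is equivalent to $b+a\sim\D+a$. Both versions of the argument yield the same bound once the hypothesis is read as depth-$D$.
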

\begin{proof}
For $g(x)=f(x+a)$, we have $T_{f,a}(b+a)=T_g(b+a)=g(b+a)=f(b)$. This implies \ref{DDD1}.

Since $g$ is a depth-$d$ decision tree and $b+a$ is random uniform, the result in \ref{DDD2} is implied from Lemma~\ref{Keyd1}. 

To prove~\ref{DDD3}, we use Lemma~\ref{GdponeD}. If $f(x)$ is $\epsilon$-far from every size-$d$ decision tree according to $\D$, then $g(x)=f(x+a)$ is $\epsilon$-far from every size-$d$ decision tree according to $\D+a$. Here, $\D+a$ is the distribution over $\{0,1\}^n$ where $(\D+a)(b)=\D(b+a)$. By Lemma~\ref{GdponeD}, with probability at least $\epsilon$, the computation of $b+a$ in $T_g=T_{f,a}$ ends in a leaf of depth at least $D+1$.
\end{proof}

\begin{figure}[h!]
\mybox{
{\bf Test-$\DT_d$ $(f)$}\\ \hspace{.3in}
{\bf Input}: black box access to $f$\\
{\bf Output}: Accept or Reject
\begin{enumerate}
    \item Learn $f$ as a sparse-$2^{2d}$ degree-$d$ multivariate polynomial $f'$.
    \item If the learning fails or $\Pr_\D[f\not=f']\ge \epsilon/4$, then Reject.
    \item Repeat $4/\epsilon$ times:
    \item \hspace{.15in} Choose $b$ according to the distribution $\D$ and a uniform random $a$.
    \item \hspace{.15in} Define $g(x)=f'(x+a)$, $c=a+b$ and $j=0$.
    \item \hspace{.15in} While $j\le D$ or $g(x)$ is not constant function do:
    \item \hspace{.4in} $j\gets j+1$.
    \item \hspace{.4in} Find the minimum $i$ that minimizes $\size(g_{|x_i\gets 0})$.
    \item \hspace{.4in} $g\gets g_{|x_i\gets c_i}$. 
    \item \hspace{.15in} If $j=D+1$, then Reject
    \item Accept.
\end{enumerate}}
	\caption{Tester for depth-$d$ decision tree.}
	\label{Algd}
\end{figure}

\end{document}